\documentclass[10pt,a4paper]{article}
\usepackage[utf8]{inputenc}
\usepackage{amsmath, amsthm}
\usepackage{amsfonts}
\usepackage{amssymb}
\usepackage{hyperref}	

\usepackage[left=1cm,right=1cm,top=1cm,bottom=1.2cm]{geometry}

\usepackage{enumerate}

\usepackage{enumerate}

\theoremstyle{plain}
\newtheorem{thm}{Theorem}[section]

\newtheorem{lem}[thm]{Lemma}

\newtheorem{prop}[thm]{Proposition}
\newtheorem*{propunb}{Proposition}

\newtheorem{cor}[thm]{Corollary}
\newtheorem*{corunb}{Corollary}

\newtheorem{conjecture}{Conjecture}[section]
\newtheorem*{thmunb}{Theorem}

\newtheorem{theo}{Theorem}
\newtheorem{theotwo}{Theorem}

\theoremstyle{remark}
\newtheorem{rmk}{Remark}

\newtheorem*{rmks}{Remarks}

\theoremstyle{definition}

\newtheorem{defn}{Definition}[section]
\newtheorem*{defunb}{Definition}

\newcommand{\RR}{\mathbb{R}}

\theoremstyle{plain}

\theoremstyle{remark}

\theoremstyle{definition}

\newcommand{\A}{\mathcal{A}}
\newcommand{\R}{\mathcal{R}}
\newcommand{\T}{\mathcal{T}}
\newcommand{\uA}{u_{\mathcal{A}^+}(v)}

\newcommand{\CH}{\mathcal{CH}_{i^+}}

\newcommand{\uend}{u(\CH \cup \mathcal{S}_{i^+})}

\usepackage{bm}

\usepackage{color}

\usepackage{float}

\addtocounter{tocdepth}{-2}
\usepackage{graphicx}
\usepackage{authblk}

\numberwithin{equation}{section}
\begin{document}
	\title{The breakdown of weak null singularities inside black holes}
	\author[1]{Maxime Van de Moortel \thanks{E-mail : mv715@rutgers.edu}}  
	
	\affil[1]{Rutgers University, Department of Mathematics,
		110 Frelinghuysen Road
		Piscataway, NJ 08854, USA} 	\date{\vspace{-8ex}}
	\maketitle
	\abstract 
	It is widely expected that generic black holes have a non-empty but \textit{weakly singular} Cauchy horizon, due to mass inflation. Indeed this has been proven by the author in the spherical collapse of a charged scalar field, under decay assumptions of the field in the black exterior which are conjectured to be generic. A natural question then arises: can this weakly singular Cauchy horizon close off the space-time, or does the weak null singularity \underline{necessarily} ``break down'', giving way to a different type of singularity? The main result of this paper is to prove that the Cauchy horizon \underline{cannot} ever ``close off'' the space-time. As a consequence, the weak null singularity breaks down and transitions to a stronger singularity for which the area-radius $r$ extends to $0$. 

 	\section{Introduction}
 	
 	The characterization of singularities inside black holes is a fundamental problem in General Relativity, which is related to the fate of in-falling observers and the very validity of the principle of determinism. ``Strong'' singularities, for which the area-radius $r$ extends to $0$, are already present in the interior of the Schwarzschild black hole and raised immense interest, see \cite{spacelike1} and its vast developments, and for instance the review \cite{spacelike2}. 
 	
 	For many years, it was believed that a generic black hole interior is necessarily delimited by a singularity which is everywhere strong and space-like. It is now well-understood that the above belief is \underline{false}. Indeed, it has been proven in \cite{KerrStab} that all dynamical black holes settling down to Kerr possess a Cauchy horizon in the black hole interior, i.e.\ a null boundary spanned by spheres of non-zero radius. Additionally, a Cauchy horizon on which $r>0$  necessarily occurs for dynamical charged black holes settling down to Reissner--Nordstr\"{o}m in spherical symmetry, as proven in  \cite{MihalisPHD,Moi}; at the heuristic level, this phenomenon is explained by the repulsive mechanism provided by the Maxwell field in the charged case, or provided by angular momentum in the non spherically-symmetric case.
 	
 	While  the future \color{black} boundary components on which $r=0$ are (strong) singularities, in contrast Cauchy horizons are  \textit{not always} singular: for instance the Reissner--Nordstr\"{o}m Cauchy horizon is smoothly regular. Nevertheless, the pioneering works \cite{Hiscock,Ori,Poisson,PoissonIsrael} suggested 
 	that the Cauchy horizon of \underline{generic} dynamical black holes is, in fact, a \textit{weak null singularity}. This phenomenon is known as ``mass inflation''. It has been proven indeed c.f.\ \cite{Mihalis1,JonathanStab,Moi,Moi4} that the Cauchy horizon of charged spherically symmetric black holes is weakly singular, under assumptions on the black hole exterior which are conjectured to be generic; retrieving these assumptions remains an important open problem. Weak null singularities have  been constructed in vacuum \cite{JonathanWeakNull} and are conjecturally present in the interior of generic perturbations of Kerr black holes. 

 	In view of the mathematical evidence in favor of the generic character of weak null singularities, the \textit{necessity} of the occurrence in collapse of $r=0$ singularities on part of the boundary, ironically, becomes subject to questioning: do weak null singularities necessarily ``break down'' in finite retarded time, and a new type of (presumably stronger) singularity takes over? \color{black} Or, on the contrary, is it possible in some cases that they subsist up to the center of symmetry and close off the space-time, as depicted in the Penrose diagram of Figure \ref{Fig3} ? This is not a moot point, since in the \textit{two-ended} asymptotically flat case, the Cauchy horizon closes-off the space-time  for a large class of dynamical solutions, as depicted in Figure~\ref{Figure5}, see also section~\ref{twoended}.


 	
 	In the present paper, we carry out the first global study of the black hole interior what can be seen as \color{black} the simplest model in which this question makes sense, namely the gravitational collapse of a charged scalar field, governed by the Einstein--Maxwell--Klein--Gordon equations in spherical symmetry: \begin{equation} \label{1.1}   Ric_{\mu \nu}(g)- \frac{1}{2}R(g)g_{\mu \nu}= \mathbb{T}^{EM}_{\mu \nu}+  \mathbb{T}^{KG}_{\mu \nu} ,    \end{equation} 
 	\begin{equation} \label{2.1} \mathbb{T}^{EM}_{\mu \nu}=2\left(g^{\alpha \beta}F _{\alpha \nu}F_{\beta \mu }-\frac{1}{4}F^{\alpha \beta}F_{\alpha \beta}g_{\mu \nu}\right),
 	\end{equation}
 	\begin{equation} \label{3.1} \mathbb{T}^{KG}_{\mu \nu}= 2\left( \Re(D _{\mu}\phi \overline{D _{\nu}\phi}) -\frac{1}{2}(g^{\alpha \beta} D _{\alpha}\phi \overline{D _{\beta}\phi} + m ^{2}|\phi|^2  )g_{\mu \nu} \right), \end{equation} \begin{equation} \label{4.1} \nabla^{\mu} F_{\mu \nu}= \frac{ q_{0} }{2}i (\phi \overline{D_{\nu}\phi} -\overline{\phi} D_{\nu}\phi) , \; F=dA ,
 	\end{equation} \begin{equation} \label{5.1} g^{\mu \nu} D_{\mu} D_{\nu}\phi = m ^{2} \phi , 	\end{equation} 	which feature a scalar field $\phi$ of charge $q_0 \neq 0$ and of mass $m^2 \geq 0$ and $D_{\mu}= \nabla_{\mu}+iq_0 A_{\mu}$ is the gauge derivative. This model has been extensively studied c.f.\  \cite{HodPiran1,HodPiran2,Kommemi,KonoplyaZhidenko,OrenPiran}. In the present paper, we consider solutions with one-ended asymptotically flat initial data, diffeomorphic to $\RR^3$, as they model black holes arising from gravitational collapse.

 	Our main result in this context is summarized in layman's terms as such:
 	\begin{thmunb} In the spherical  collapse of a charged scalar field, weak null singularities \underline{necessarily} break down.
 	\end{thmunb}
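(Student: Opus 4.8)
The plan is to argue by contradiction: I assume that $\CH$ \emph{closes off}, and I aim to show that this forces the area-radius $r$ to stay bounded below by a positive constant along $\CH$, contradicting the degeneration $r\to 0$ that closing off would require. I work in a double-null gauge $ds^2=-\Omega^2\,\di u\,\di v+r^2\,\di\sigma^2$ on the trapped interior bounded by the event horizon and $\CH$, where $\lambda:=\partial_v r<0$ and $\nu:=\partial_u r<0$. The two ingredients I rely on are the Hawking-mass constraint $-\tfrac{4\lambda\nu}{\Omega^2}=1-\tfrac{2\varpi}{r}+\tfrac{Q^2}{r^2}$ and the wave equation for $r$, which near the center reads
\[
\partial_u\partial_v r=-\frac{\lambda\nu}{r}-\frac{\Omega^2}{4r}\Bigl(1-\frac{Q^2}{r^2}+m^2 r^2|\phi|^2\Bigr).
\]
The decisive structural feature is the repulsive Maxwell term $+\tfrac{\Omega^2 Q^2}{4r^3}$: it dominates the right-hand side as $r\to 0$ precisely when the charge does not fully discharge, and it is the obstruction that should prevent $r$ from ever reaching $0$ along the null boundary.

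First I reduce the statement to a quantitative lower bound. By the boundary characterization of the maximal development, closing off is equivalent to the statement that $R(u):=\lim_{v\to\infty}r(u,v)$ tends to $0$ as $u$ approaches the terminal parameter of $\CH$; hence it suffices to prove $\inf_{\CH}r>0$. Because the repulsion mechanism is useless if $Q\to 0$, the next step is to show $\liminf Q^2>0$ along $\CH$ near the center. Here I use the charge flux estimates $|\partial_u Q|,|\partial_v Q|\lesssim r^2|\phi|\,|D\phi|$: as one approaches the region $r\to 0$, the $r^2$ weight suppresses the flux, and the total discharge is controlled by the scalar-field energy, which I bound in the bootstrap below; this prevents $Q$ from running to $0$ before $r$ could.

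With $Q$ bounded below, I run a continuity argument in $u$ along $\CH$. On any slab where $r\ge\epsilon$ I propagate upper bounds on $\int|\lambda|\,\di v$, $\int|\nu|\,\di u$ and on the scalar-field energy, and I feed them into the displayed $r$-equation. Once $r$ is small the repulsive term $+\tfrac{\Omega^2 Q^2}{4r^3}$ makes $\partial_u\partial_v r>0$, so that $|\lambda|$ is non-increasing in $u$ and $|\nu|$ is non-increasing in $v$; combined with the Raychaudhuri monotonicity of $\kappa:=-\Omega^2/(4\nu)$, this yields a uniform bound on $\int_{\CH}|\nu|\,\di u$ strictly smaller than the radius of $\CH$ at $i^+$, hence a strictly positive barrier $R(u)\ge c>0$. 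This contradicts $R(u)\to 0$ and establishes that $\CH$ must terminate at a point where $r>0$, after which the boundary continues as the genuinely different singularity $\mathcal{S}_{i^+}$ with $r\to 0$ — the asserted breakdown.

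I expect the main obstacle to be the simultaneous degeneration of the two limits. Deep along $\CH$ one is far from $i^+$, so the decay estimates inherited from the event horizon are unavailable, while the field is genuinely singular there ($\varpi\to\infty$, infinite ingoing energy at $\CH$); one must nonetheless control the scalar field — and hence both the focusing term $-\lambda\nu/r$ and the massive term $m^2 r^2|\phi|^2$ — finely enough that it cannot overwhelm the Maxwell repulsion before the lower bound on $r$ is secured. Closing this circular dependence between the energy bound, the charge non-vanishing, and the repulsion-driven lower bound on $r$ in the near-center, near-$\CH$ corner — especially for a massive field $m^2>0$, where $m^2 r^2|\phi|^2$ directly competes with $Q^2/r^2$ — is where the essential difficulty lies, and will require a carefully weighted bootstrap adapted to that corner.
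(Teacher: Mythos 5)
Your proposal contains a structural flaw that precedes any of the estimates: the reduction in your second paragraph is false, so even if every bound you sketch went through, you would not have proved the theorem. The statement to be ruled out (Theorem \ref{roughversion}, i.e.\ Figure \ref{Fig3}, where the paper explicitly allows $\mathcal{S}_{i^+}\neq\emptyset$ as in Figure \ref{Figrect}) is that $\CH\cup\mathcal{S}_{i^+}$ reaches $b_{\Gamma}$ with $\mathcal{S}^1_{\Gamma}\cup\mathcal{CH}_{\Gamma}\cup\mathcal{S}^2_{\Gamma}\cup\mathcal{S}=\emptyset$; this is \emph{not} equivalent to $r\to 0$ along $\CH$. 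A positive barrier $\inf_{\CH}r\geq c$ is perfectly compatible with that scenario: $\CH$ may terminate at positive radius and be followed by the ingoing null segment $\mathcal{S}_{i^+}$ on which $r\to 0$, which then closes off the space-time at $b_{\Gamma}$ --- your closing sentence describes exactly this configuration and calls it ``the asserted breakdown'', but it is one of the configurations the theorem must \emph{exclude}, since all four components $\mathcal{S}^1_{\Gamma},\mathcal{CH}_{\Gamma},\mathcal{S}^2_{\Gamma},\mathcal{S}$ are then still empty. (Even in the sub-case $\mathcal{S}_{i^+}=\emptyset$, Theorem \ref{Kommemi} does not force $r$ to be continuous at the endpoint of $\CH$, so a bifurcation-sphere-like endpoint with positive limiting radius is not excluded a priori either.) Moreover, a barrier mechanism alone is provably incapable of doing the job: in the two-ended case one has precisely ``mass inflation $+$ $r$ bounded below $+$ the Cauchy horizons close off the space-time'' (Theorem \ref{twoendedtheorem}, following \cite{Mihalisnospacelike}, where the lower bound on $r$ is the very mechanism of closing off), so any argument that does not genuinely use the center $\Gamma$ cannot distinguish the two situations. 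Where a lower bound on $r$ \emph{is} the right tool, namely for the non-weakly-singular static Cauchy horizon of Theorem \ref{theoremevent}, the paper needs two additional ingredients your proposal lacks: the bound must hold on a full rectangle $[u_1,u_2]\times[v_1',\infty)$ with $v_1'$ independent of $u_2$ (Proposition \ref{rigidboundsprop}; a cone-by-cone limit $R(u)\geq c$ does not force the timelike center away from $b_{\Gamma}$), and the gluing to $\mathcal{S}_{i^+}$ must be excluded separately (Corollary \ref{Siplusempty}).

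The second gap is the engine of your repulsion mechanism, the claim $\liminf Q^2>0$ along $\CH$ near $b_{\Gamma}$. Regularity of the center imposes $Q_{|\Gamma}=0$ and $\rho_{|\Gamma}=0$ (see \eqref{reg2Gamma}), and in the closing-off scenario every neighborhood of $b_{\Gamma}$ contains center points; integrating the Maxwell equations \eqref{chargeUEinstein}--\eqref{ChargeVEinstein} from $\Gamma$, the $r^2$-weight you invoke pins $Q$ near its central value $0$ --- it works against you, not for you. This is exactly the paper's key mechanism run in the opposite direction: Lemma \ref{chargestimatelemma} integrates from the center, uses $\partial_v\rho\geq\frac{r^2}{2\kappa}|D_v\phi|^2$ in the regular region to get $|Q|\lesssim \rho^{1/2}\cdot(\mathrm{flux})^{1/2}$, and concludes $Q^2/r^2<\frac12$ at apparent-horizon points near $b_{\Gamma}$, the flux smallness being supplied by the weak-null-singularity hypothesis through \eqref{integralkappafinite}; then \eqref{Radius3} forces the ingoing segment after such a point to be trapped, contradicting the regularity forced by the geometry of Figure \ref{Fig3} (Proposition \ref{halfdiamond}). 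Note also that your displayed $r$-equation has the massive term with the wrong sign: by \eqref{Radius} and \eqref{Radius3} it enters the bracket as $-m^2r^2|\phi|^2$, on the same side as $-Q^2/r^2$, so it does not compete with the Maxwell term in the way you fear; both simply require upper bounds (Lemmas \ref{chargestimatelemma} and \ref{massestimatelemma}). Finally, the difficulty you correctly identify --- controlling the field deep along $\CH$ where no decay is inherited and the ingoing energy is infinite --- is one the paper never has to face: all of its estimates live on causal rectangles anchored at the center with top vertex on $\A$, using only integrability of $\kappa(1+|\phi|^2)$ on a single early trapped cone, and no estimate is ever performed near the late portion of $\CH$.
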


 	Therefore, a weakly singular Cauchy horizon can never close off the space-time, so the Penrose diagram of Figure \ref{Fig3} is ruled out in the presence of a weak null singularity.
 		\begin{figure}[H]
 		
 		\begin{center}
 			
 			\includegraphics[width=68 mm, height=60 mm]{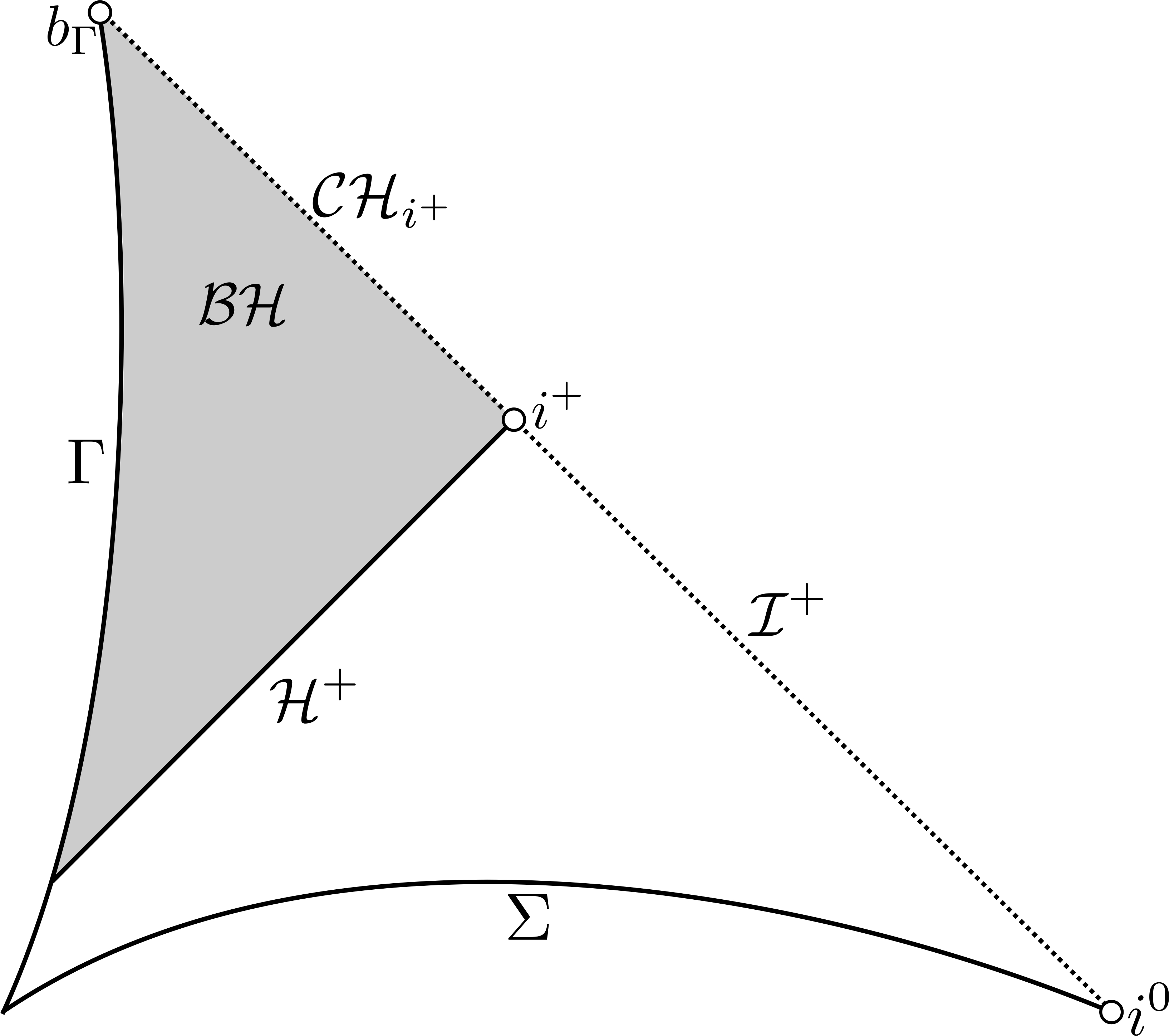}
 			
 		\end{center}
 		
 		\caption{Penrose diagram whose existence we disprove if $\CH$ is weakly singular.}
 		\label{Fig3}
 	\end{figure} 
 	As a consequence of the necessary breakdown of weak null singularities, we obtain a proof of the ``$r=0$ singularity conjecture'' \cite{Kommemi}: an one-ended black hole which does not have a ``locally naked singularity'' always features an ``$r=0$ singularity'', in addition to a weakly singular Cauchy horizon, and its Penrose diagram is given by Figure \ref{Fig2}. To prove this result,  we make use of an earlier classification of possible Penrose diagrams \cite{Kommemi}.  As a consequence, we show there exists a so-called ``first singularity'' where $r=0$, i.e.\ a Terminal Indecomposable Past (associated to a boundary point) whose closure has compact intersection with the Cauchy initial hypersurface.


 	Our approach fundamentally uses a contradiction argument. More precisely, we assume that the Penrose diagram is given by Figure \ref{Fig3}, where $\CH$ is a Cauchy horizon subject to mass inflation; by this, we mean that the Hawking mass blows up when one approaches $\CH$ over (at least) \textbf{one} outgoing light cone. From these two facts, we show a contradiction. As a consequence of our analysis, we prove that a non-trivial boundary component must emanate from the center, with two possibilities (see already the upcoming Theorem~\ref{roughversion}): \begin{enumerate} 	\item either a $r=0$ type singularity $\mathcal{S}$ is present, in addition to the Cauchy horizon, see Figure \ref{Fig2},
 		\item or a null outgoing segment emanates from the endpoint of the center $b_{\Gamma}$ to meet the Cauchy horizon, see Figure~\ref{Fig1} 	\end{enumerate}
 	\begin{figure}[H]
 	
 	\begin{center}
 		
 		\includegraphics[width=77 mm, height=47 mm]{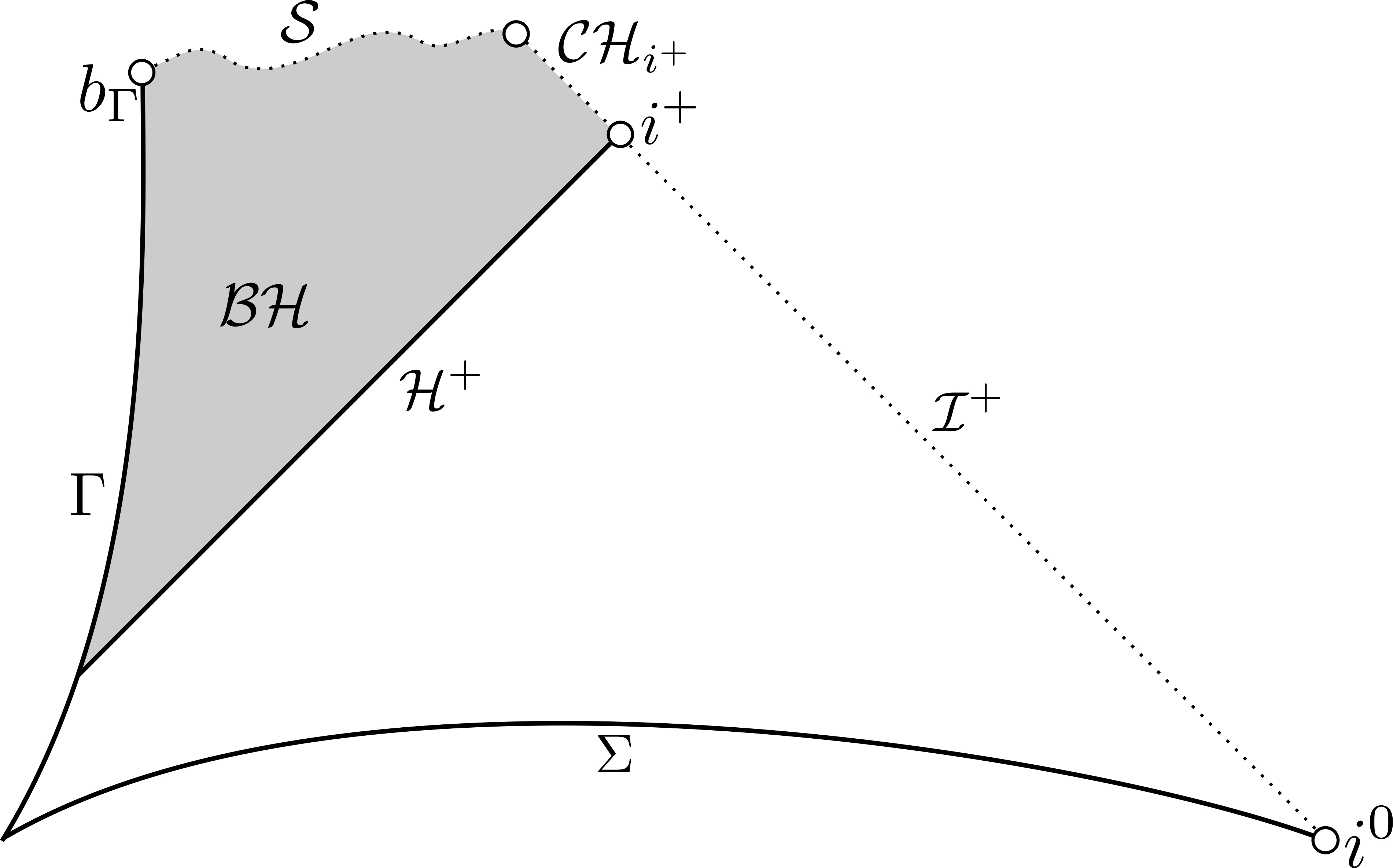}
 		
 	\end{center}
 	
 	\caption{{Generic} Penrose diagram of a one-ended charged black hole under the assumptions of Theorem \ref{spaceliketheorem}.}
 	\label{Fig2}
 \end{figure}
 	
 	The second possibility, which corresponds to a ``locally naked'' singularity emanating from the center, is conjectured to be non-generic. In both cases, we prove that the endpoint of the center $b_{\Gamma}$ is a (central) first singularity. In the first case, the boundary additionally contains a (non-central) first singularity belonging to $\mathcal{S}$, see section \ref{spacelikeconjsection}.

 	We also obtain a second result if, instead of assuming mass inflation, we make assumptions on the event horizon, see the upcoming Theorem~\ref{rough2}. More precisely, we assume that the scalar field decays on the event horizon at a weak polynomial rate which is conjectured to be generic. Then we show, using the main result of \cite{Moi4}, that  \begin{itemize} \item either the Cauchy horizon features mass inflation, hence it cannot close off the space-time by our new result,
 		\item or the Cauchy horizon is an isometric copy of the Reissner--Nordstr\"{o}m Cauchy horizon.  \end{itemize}
 	In the latter case, we prove that the Cauchy horizon cannot close off the space-time \emph{either}, using a different rigidity-type argument. Our assumptions are comparable to those used in \cite{Moi} to prove the non-emptiness of the Cauchy horizon and backed up by multiple numerical studies, see section \ref{hypsection}. Moreover, the decay assumptions that we use are quite weak since we require much less information on the outgoing data than what is conjectured to be generic.

 	We now return to the main result [Theorem~\ref{roughversion}], assuming mass inflation. Our proof relies on quantitative estimates, in which the center of symmetry $\Gamma$ plays a role of utmost importance (recall that it is the presence of a regular center $\Gamma$ that distinguishes gravitational collapse space-times from the two-ended case considered in \cite{Mihalisnospacelike}). The most significant quantity that we control is the Maxwell field, which is dynamical as it interacts with the charged scalar field. As we explained earlier, if this interaction was not present in the equations -- like for the Dafermos model -- no regular one-ended solution could exist, as the center of symmetry $\Gamma$ would be singular, which is impossible. Thus, we emphasize that the quantitative estimates of this Maxwell field, in particular \underline{near the center}, are a fundamental aspect of the problem, which cannot be overlooked by any serious  attempts to establish the generic character of $r=0$ singularities.
 	
 	In section \ref{apriori}, we give a detailed description of the matter model, the Einstein--Maxwell--Klein--Gordon equations and we enumerate all the possible a priori Penrose diagrams, following \cite{Kommemi}. Then, we state our main result and discuss its assumptions in section \ref{hypsection}. We mention previous works on uncharged models in section \ref{ChrisDaf}. In section \ref{spacelikeconjsection}, we explain the $r=0$ singularity conjecture and its relation to first singularities. We continue with \color{black}  a presentation of other major problems, solved or unsolved, in charged collapse in section \ref{additionnal}. The few heuristic and numerical previous works on one-ended black holes are mentioned in section \ref{numerical}. In section \ref{twoended}, we mention prior results on two-ended black holes \color{black} and emphasize the contrast with the one-ended case. Finally in section \ref{outline}, we give a short outline of the proof.

 	\subsection{A priori boundary characterization of one-ended space-times} \label{apriori}
 	
 	We consider the Einstein--Maxwell--Klein--Gordon equations, namely the Einstein equations in the presence of a charged scalar field (which we also allow to be massive if $m^2 \neq 0$ or massless if $m^2=0$) i.e.\ \eqref{1.1}, \eqref{2.1}, \eqref{3.1}, \eqref{4.1}, \eqref{5.1}, where  $D:= \nabla+  iq_{0}A$ is the gauge derivative, $\nabla$ is the Levi-Civita connection of $g$ and $A$ is the potential one-form. We emphasize that the Klein--Gordon mass $m^2 \geq 0$ is allowed to be zero, but not the coupling constant $q_0 \neq 0$. 
 	
 	Some a priori information can be derived for this system in spherical symmetry, from ``soft estimates'' only involving the null condition satisfied by the non-linearity. This work was carried out by Kommemi in \cite{Kommemi}, who gave an inventory of the vast a priori possibilities for the interior structure of the black hole. To determine which boundary components are empty or singular, one must go beyond such ``a priori estimates`'', and a precise analysis of the equations is required, which we undertake in the present work. Note that in the sequel, \textit{Penrose diagrams} themselves will be used to convey important information regarding the spacetime geometry, see Figure~\ref{Fig2},  Figure~\ref{Figure5}, Figure~\ref{Figrect} and see the discussion of spherically symmetric spacetimes in Section~\ref{preliminary} and in \cite{Kommemi}. We now present the preliminary result of Kommemi.

 	\begin{theotwo}[Theorem 1.1\ of \cite{Kommemi}\color{black}] \label{Kommemi}
 		
 		We consider the maximal development $(M=\mathcal{Q}^+ \times_r \mathcal{S}^2,g_{\mu \nu}, \phi,F_{\mu \nu})$ of smooth, spherically symmetric, containing no anti-trapped surface, one-ended asymptotically flat \color{black} initial data satisfying the Einstein--Maxwell--Klein--Gordon system, where $r: \mathcal{Q}^+ \rightarrow [0,+\infty)$ is the area-radius function. Then the Penrose diagram of $\mathcal{Q}^+$ is given by Figure \ref{Fig1}, with boundary $\Sigma \cup \Gamma$ in the sense of manifold-with-boundary --- where $\Sigma$ is space-like, and $\Gamma$, the center of symmetry, is time-like with $r_{|\Gamma}=0$ --- and boundary $\mathcal{B}^+$ induced by the manifold ambient $\RR^{1+1}$: $$ \mathcal{B}^+ = b_{\Gamma} \cup \mathcal{S}^1_{\Gamma} \cup \mathcal{CH}_{\Gamma} \cup  \mathcal{S}^2_{\Gamma} \cup  \mathcal{S}  \cup \mathcal{S}_{i^+}  \cup \CH \cup i^{+} \cup \mathcal{I}^+ \cup i^0,$$ where $i^0$ is space-like infinity, $\mathcal{I}^+$ is null infinity, $i^{+}$ is time-like infinity (see \cite{Kommemi} for details) and 	\begin{figure}
 			
 			\begin{center}
 				
 				\includegraphics[width=83.5 mm, height=60 mm]{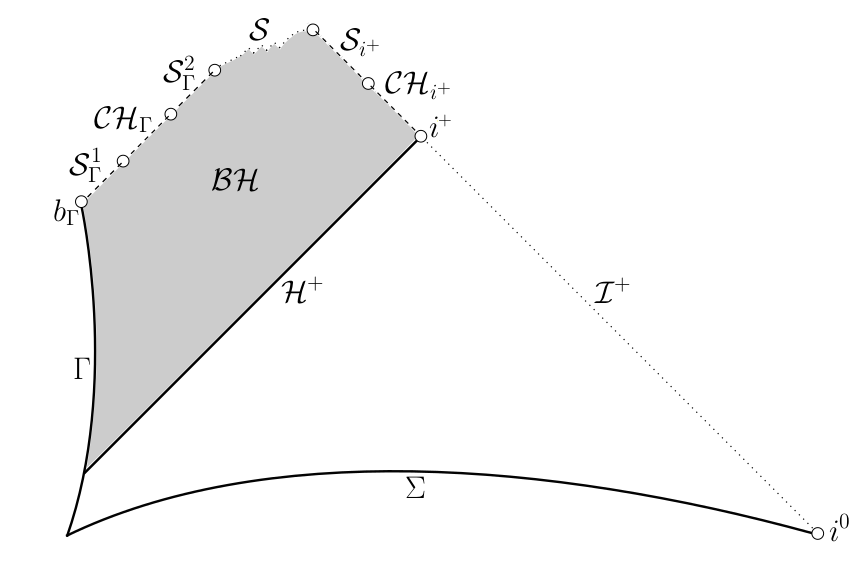}
 				
 			\end{center}
 			
 			\caption{General Penrose diagram of a one-ended charged spherically symmetric black hole, \cite{Kommemi}.} 
 			
 			\label{Fig1}
 		\end{figure}  \begin{enumerate}
 		\item $\CH$ is a connected (possibly empty) half-open null ingoing segment emanating from $i^{+}$. The area-radius function $r$ extends as a strictly positive function on $\CH$, except maybe at its future endpoint.
 		\item $\mathcal{S}_{i^+}$ is a connected (possibly empty) half-open null ingoing segment emanating (but not including) from the end-point of $\CH \cup i^{+}$. $r$ extends continuously to zero on $\mathcal{S}_{i^+}$.
 		\item $b_{\Gamma}$ is the center end-point i.e.\ the unique future limit point of $\Gamma$ in $\overline{\mathcal{Q}^+}-\mathcal{Q}^+$.
 		\item $ \mathcal{S}^1_{\Gamma} $ is a connected (possibly empty) half-open null outgoing segment emanating from $b_{\Gamma}$.\\ $r$ extends continuously to zero on $ \mathcal{S}^1_{\Gamma} $.
 		\item $\mathcal{CH}_{\Gamma}$ is a connected (possibly empty) half-open null outgoing segment emanating from the future end-point of $b_{\Gamma} \cup  \mathcal{S}^1_{\Gamma} $. $r$ extends as a strictly positive function on $\mathcal{CH}_{\Gamma}$, except maybe at its future endpoint.
 		\item $\mathcal{S}^2_{\Gamma}$ is a connected (possibly empty) half-open null outgoing segment emanating from the future end-point of $\mathcal{CH}_{\Gamma}$.  $r$ extends continuously to zero on $\mathcal{S}^2_{\Gamma}$. 	\item $\mathcal{S} $ is a connected (possibly empty) achronal curve that does not intersect null rays emanating from $b_{\Gamma}$ or $i^+$. $r$ extends continuously to zero on $\mathcal{S}$.
 	\end{enumerate} 
 	We also define the black hole region $ \mathcal{BH}:= \mathcal{Q}^+ \backslash J^{-}(\mathcal{I}^+)$, and the event horizon $\mathcal{H}^+ = \overline{J^{-}(\mathcal{I}^+)} \backslash J^{-}(\mathcal{I}^+) \subset~ \mathcal{Q}^+$.
 \end{theotwo}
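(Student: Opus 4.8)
The plan is to argue entirely at the level of the ``soft'' monotonicity structure of the reduced system. First I would fix a double-null gauge on the quotient, writing $g=-\Omega^2\,\di u\,\di v+r^2\gamma_{S^2}$, and reduce \eqref{1.1}--\eqref{5.1} to the evolution system for $(r,\Omega^2,Q,\phi)$ on $\mathcal{Q}^+\subset\RR^{1+1}$. The decisive observation is that in spherical symmetry the Maxwell field carries \emph{no} $uu$ or $vv$ stress, since $g^{uu}=g^{vv}=0$ in double null and $F$ is purely radial; hence the only contribution to the null-null components is from the scalar field, $\mathbb{T}_{uu}=2|D_u\phi|^2\geq 0$ and $\mathbb{T}_{vv}=2|D_v\phi|^2\geq 0$. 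This non-negativity is exactly the sign furnished by the null condition, and it turns the two Raychaudhuri identities $\partial_u(\Omega^{-2}\partial_u r)=-r\,\Omega^{-2}\mathbb{T}_{uu}$ and $\partial_v(\Omega^{-2}\partial_v r)=-r\,\Omega^{-2}\mathbb{T}_{vv}$ into monotonicity statements for $\Omega^{-2}\nu$ and $\Omega^{-2}\lambda$, where $\nu:=\partial_u r$ and $\lambda:=\partial_v r$.

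From here I would propagate the no-anti-trapped-surfaces hypothesis: tracing $\Omega^{-2}\nu$ back along ingoing cones via the $u$-Raychaudhuri monotonicity to the initial data forces $\nu<0$ throughout $\mathcal{Q}^+$, so that the causal classification into the regular region $\{\lambda>0\}$, the apparent horizon $\mathcal{A}=\{\lambda=0\}$ and the trapped region $\{\lambda<0\}$ is governed by the single sign of $\lambda$. The monotonicity of $\Omega^{-2}\lambda$ together with the renormalised Hawking mass then yields the \emph{trapping principle} --- the trapped region is closed under taking the causal future in $\mathcal{Q}^+$ --- which is the engine organising the diagram. Next I would realise $\mathcal{Q}^+$ as a bounded domain in the ambient conformal $\RR^{1+1}$ and study its topological future boundary $\mathcal{B}^+$: along any causal sequence approaching a boundary point, the null-direction monotonicity of $r$ guarantees a limit in $[0,+\infty]$. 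Asymptotic flatness with the lower bound on $\lambda$ in the regular region locates the region $r\to+\infty$, producing null infinity $\mathcal{I}^+$ with endpoints $i^0$ and $i^+$, the event horizon $\mathcal{H}^+=\partial J^-(\mathcal{I}^+)$, and the non-empty black-hole region $\mathcal{BH}$.

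The heart of the argument is to classify the remaining future boundary, where $r$ stays bounded, by its limiting value of $r$ and its causal character. Here I would invoke a continuation criterion for the reduced system: the development extends across any spacelike or timelike boundary arc on which $r$ is bounded below and the solution stays regular on its past domain of dependence; consequently every boundary component with $\lim r>0$ must be a \emph{null} segment. These are exactly the two Cauchy-horizon pieces $\CH$ (emanating from $i^+$) and $\mathcal{CH}_{\Gamma}$ (emanating from the center side). The center $\Gamma=\{r=0\}$ is timelike with a unique future endpoint $b_{\Gamma}$; a local analysis of the wave equation for $r$ near $b_{\Gamma}$ and near $i^+$, combined with the trapping principle, forces the $\{r=0\}$ part of the boundary to organise into the null outgoing segments $\mathcal{S}^1_{\Gamma},\mathcal{S}^2_{\Gamma}$ on the center side, the null ingoing segment $\mathcal{S}_{i^+}$ on the $i^+$ side, and the achronal curve $\mathcal{S}$ in between, precisely in the order of items (1)--(7). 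Achronality of $\mathcal{S}$, and the fact that it meets no null ray issuing from $b_{\Gamma}$ or $i^+$, follow because $r$ is monotone and strictly positive in the interior of each such ray, so it cannot return to $0$ twice along it.

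The step I expect to be the main obstacle is the boundary regularity of $r$: upgrading the one-sided \emph{monotone} limits to genuine \emph{continuity} up to $\mathcal{B}^+$, and in particular across the corner points $i^+$ and $b_{\Gamma}$ where several segments abut. I would attack this through bounded-variation control of $\Omega^{-2}\nu$ and $\Omega^{-2}\lambda$ up to the boundary, using the wave equation for $r$ to exclude oscillation, so that the limiting value of $r$ is well defined and the transition points between the $\{r>0\}$ and $\{r=0\}$ pieces are sharply located. The second delicate point, purely topological on the $1+1$ diagram, is to prove the connectedness of each segment and to rule out \emph{a priori} the exotic interleavings of null and singular components that the soft estimates alone do not immediately forbid; this is carried out by a careful bookkeeping driven entirely by the sign of $\lambda$ and the trapping principle.
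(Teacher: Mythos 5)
First, a point of order: the paper never proves this statement. It is quoted as a preliminary result (``Theorem 0'') directly from \cite{Kommemi} and used as a black box throughout, so there is no internal proof to compare yours against; the only meaningful comparison is with the strategy of \cite{Kommemi} itself (which builds on \cite{MihalisSStrapped}). Measured against that, your skeleton is the right one: double-null reduction, the Raychaudhuri identities as monotonicity statements for $\Omega^{-2}\partial_u r$ and $\Omega^{-2}\partial_v r$, propagation of $\partial_u r<0$ from the no-anti-trapped-surface hypothesis, and a classification of the future boundary according to the degeneration of $r$, with null-ness of the $r>0$ pieces forced by an extension criterion.

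However, two of your load-bearing steps fail or are missing. (a) The ``trapping principle'' you make the engine of the diagram --- that $\T$ is closed under taking causal futures --- is \emph{false} for this matter model. Only the outgoing half survives: by \eqref{RaychV}, $(u,v)\in\T$ implies $\{u\}\times[v,\infty)\subset\T$. Propagation in the $u$-direction would require $\partial_u(r\partial_v r)\leq 0$, and by \eqref{Radius3} this quantity has the sign of $-\left(1-\frac{Q^2}{r^2}-m^2r^2|\phi|^2\right)$, which is indefinite once $\frac{Q^2}{r^2}$ can exceed $1$. Concretely, in the Reissner--Nordstr\"{o}m interior the causal future of the trapped region $\{r_-<r<r_+\}$ contains the regular region $\{0<r<r_-\}$, so the principle fails exactly in the regime this theorem is designed to describe; the paper itself stresses this point in its discussion of Figure \ref{Fig5}, and recovering a \emph{local} substitute for it after hard quantitative estimates is precisely the content of its Theorem \ref{mainestimate} and Lemma \ref{chargestimatelemma}. (b) The ``continuation criterion'' you invoke is Kommemi's generalized extension principle, and for the Einstein--Maxwell--Klein--Gordon system it is the central technical theorem of \cite{Kommemi}, not an off-the-shelf fact: one must prove, by quantitative estimates on the coupled system, that $\phi$, $Q$, $\Omega^2$ and their derivatives extend regularly to any ``first singularity'' at which $r$ is bounded below and which does not lie on the center --- a model-dependent statement (Kommemi's ``strong tameness'') which moreover must be formulated at first singularities (compactly generated TIPs), not along entire spacelike or timelike boundary arcs as you state it. (Your incidental claim that $r$ is monotone along null rays is also inaccurate --- by Raychaudhuri it can switch from increasing to decreasing once --- though that is repairable.) With (a) removed and (b) assumed, what remains of your proposal is the comparatively routine causal-topological bookkeeping; as it stands, it does not contain the actual mathematical core of the result.
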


 \begin{rmk}
 	$\mathcal{S}$ is the only boundary component including ``first singularities'', see section \ref{firstsing} for a discussion.
 \end{rmk}

 In the Penrose diagram, every point in $M$ represents a sphere. At each sphere, one can define the outgoing null derivative of the area-radius function $r$. We define the regular region, denoted $\R$ as the set of points for which the outgoing null derivative of $r$ is strictly positive, the trapped region, denoted $\T$ as the set of points for which the outgoing null derivative of $r$ is strictly negative and the apparent horizon, denoted $\A$ as the set of points for which the outgoing null derivative of $r$ is zero\footnote{The definitions \color{black} of $\R$, $\T$, and $\A$ rely \color{black} on a sign, and thus is independent of the null vectors normalization, see Section~\ref{geometricframework}.}. 
 
 An important feature of charged matter models like the Einstein--Maxwell--Klein--Gordon system is that they permit spacetimes with a regular center $\Gamma\subset \{r=0\}$ [corresponding to constant-time slices \color{black} with one asymptotically flat end]. Indeed, in spherical symmetry, the Maxwell field can be written in terms of a scalar function $Q$ defined on $\mathcal{Q}^+$ as: $$ F_{\mu \nu} = \frac{Q}{r^2} \cdot \Omega^2 du \wedge dv,$$ in any double null coordinate system $(u,v)$ on $\mathcal{Q}^+$, where we defined the null lapse to be $\Omega^2 =-2 g(\partial_u,\partial_v)$.

 In the case $q_0=0$, the right-hand-side of \eqref{4.1} vanishes and this implies that $Q \equiv e$ is a constant function. If $e \neq 0$ then $F_{\mu \nu}$ diverges at the centre $\Gamma=\{r=0\}$: thus, no one-ended smooth solution is possible in the uncharged matter case, e.g. in the Dafermos model. In the setting of charged matter, $Q$ is no longer a constant function and one-ended regular solutions are available, providing we impose boundary conditions at the center $\Gamma$, detailed in section \ref{geometricframework}.

 \subsection{First version of the main results, and discussion of the hypothesis} \label{hypsection}
 In this section, we state our main result, namely that a weakly singular Cauchy horizon cannot close off the space-time. For the consequences on the $r=0$ singularity conjecture and the generic existence of first singularities, see section \ref{spacelikeconjsection}.
 
 \subsubsection{Theorem assuming the existence of a weak null singularity}
 
 We present our main theorem, which does not require any quantitative assumption. Namely, \textit{we only assume the existence of a weak null singularity}, i.e.\ the blow up of the mass and the boundedness of the matter fields over one outgoing trapped cone reaching the Cauchy horizon. We give a first version of the theorem, which will be made precise in Section~\ref{statement}.
 \begin{theo} \label{roughversion}
 	For initial data as in Theorem \ref{Kommemi}, assume there exists a trapped cone $\{u_1\} \times [v_1,v_{max})$ with $(u_1,v_{max}) \in \CH$, on which the Hawking mass $\rho$ blows up, while $\phi$ and $Q$ are bounded: \begin{equation} \label{roughassumption}
 	\lim_{v \rightarrow v_{max}}	\rho(u_1,v)= +\infty, \hskip 5 mm 	\sup_{v \leq v_{max}}	|\phi|(u_1,v)+ |Q|(u_1,v)< +\infty .\end{equation} 	
 	Then $$\mathcal{S}^1_{\Gamma} \cup \mathcal{CH}_{\Gamma} \cup  \mathcal{S}^2_{\Gamma} \cup  \mathcal{S}  \neq \emptyset,$$ i.e.\ $\CH \cup \mathcal{S}_{i^+}$ cannot close off the space-time at $b_{\Gamma}$, i.e.\ the Penrose diagram of Figure \ref{Fig3} is impossible.

 \end{theo}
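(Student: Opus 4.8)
The proof is by contradiction, following the strategy announced in the introduction. Assume the Penrose diagram is that of Figure~\ref{Fig3}, i.e.\ that $\mathcal{S}^1_{\Gamma}\cup\mathcal{CH}_{\Gamma}\cup\mathcal{S}^2_{\Gamma}\cup\mathcal{S}=\emptyset$, so that the ingoing null boundary $\CH\cup\mathcal{S}_{i^+}$ closes off $\Q$ and admits the center end-point $b_{\Gamma}$ among its limit points. Fix a double-null gauge, write $\nu:=\partial_u r<0$ and $\lambda:=\partial_v r$, and recall the mass--radius--charge relation $1-\mu:=-4\nu\lambda/\Omega^2 = 1-\tfrac{2\varpi}{r}+\tfrac{Q^2}{r^2}$, where $\varpi=\rho+\tfrac{Q^2}{2r}$ is the renormalized Hawking mass. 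By \eqref{roughassumption} and $r_{|\CH}(u_1)>0$, the renormalized mass $\varpi$ also blows up along the trapped cone $\{u_1\}\times[v_1,v_{max})$, on which $\lambda<0$.

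The first step is to propagate the weak null singularity off the single cone $\{u_1\}$. Using the Raychaudhuri equations $\partial_v(\lambda/\Omega^2)=-r|D_v\phi|^2/\Omega^2\le 0$ and $\partial_u(\nu/\Omega^2)=-r|D_u\phi|^2/\Omega^2\le 0$, together with the (near-)monotonicity of $\varpi$ toward the future in $\T$, I would show that the region of $\Q$ lying to the future of $\{u_1\}\times[v_1,v_{max})$ and bounded by $\CH\cup\mathcal{S}_{i^+}$ is trapped, and that $\varpi$ blows up on every outgoing cone reaching $\CH$ there. Bootstrapping with the wave equation \eqref{5.1}, the Maxwell equation \eqref{4.1}, and the boundedness of $r$, I would simultaneously propagate the boundedness of $\phi$ and $Q$ up to $b_{\Gamma}$.

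The decisive step exploits the regular center. For $u$ close to $u_{b_{\Gamma}}$ the outgoing cone $\{u\}$ issues from $\Gamma$, where the regularity conditions impose $r=Q=\varpi=0$, and terminates on $\CH$, where $\varpi=+\infty$; hence $\int\partial_v\varpi\,\di v=+\infty$ along each such cone. Since, up to normalization, $\partial_v\varpi=\frac{2|\nu|}{\Omega^2}r^2|D_v\phi|^2+\tfrac12 m^2 r^2|\phi|^2\lambda$, and the second term integrates to a quantity controlled by the (small) range of $r$, the divergence is carried entirely by the blue-shift--weighted flux $\int\frac{|\nu|}{\Omega^2}r^2|D_v\phi|^2\,\di v$, concentrated near $\CH$ where $r\to r_{|\CH}(u)$, itself tending to $0$ as $u\to u_{b_{\Gamma}}$. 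The plan is then to estimate this flux quantitatively by integrating \eqref{5.1} and the charge law $|\partial_v Q|\le|q_0|\,r^2|\phi|\,|D_v\phi|$ out of $\Gamma$, tracking the degeneration of $\Omega^2$ in terms of $r$ and $Q$, and thereby to exhibit the incompatibility between the blow-up of $\rho$ on $\CH$ and the vanishing of $r$ and $Q$ at the regular center.

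The main obstacle is precisely this quantitative control of the Maxwell field and of the blue-shift near the center. Close to $b_{\Gamma}$ one simultaneously approaches the weak null singularity --- where $\varpi\to+\infty$, $1-\mu\to-\infty$ and $\Omega^2$ degenerates --- and the regular center --- where $r,Q,\varpi\to 0$ and $1-\mu\to 1$ --- so the estimates must hold \underline{uniformly up to $\Gamma$}, in a regime where $\varpi$ has no a priori sign of monotonicity coming from the massive term and where the charge $Q$, which \emph{sources} the entire weak-null-singularity mechanism, is itself small. It is exactly these estimates on $Q$ near the center --- unavailable in the uncharged Dafermos model, where no regular center exists --- that I expect to be the crux, and from which the desired contradiction with \eqref{roughassumption} follows, ruling out Figure~\ref{Fig3}.
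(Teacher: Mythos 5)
Your overall framework --- a contradiction argument assuming Figure \ref{Fig3}, the propagation of trapping off the single cone $\{u_1\}$, and the identification of the Maxwell charge near the regular center (integrated out of $\Gamma$ via $|\partial_v Q|\le |q_0| r^2|\phi||D_v\phi|$, with $Q_{|\Gamma}=0$) as the crux --- matches the paper's strategy. But the decisive step is missing, not merely deferred: you never state what the contradiction actually is. Your proposed incompatibility, namely that $\int \partial_v\varpi \,\di v=+\infty$ along cones emanating from the center (where $r=Q=\varpi=0$) and terminating on a mass-inflated $\CH$, is not contradictory by itself: mass inflation on outgoing cones issuing from a regular center is exactly what occurs in the legitimate spacetimes of Figure \ref{Fig2}, so no amount of "tracking the degeneration of $\Omega^2$" along a single cone can rule it out. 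Relatedly, your first step contains an error that hides the real mechanism: the full future of $\{u_1\}\times[v_1,v_{max})$ bounded by $\CH\cup\mathcal{S}_{i^+}$ is \emph{not} trapped under the contradiction hypothesis, because it contains points arbitrarily close to $\Gamma$, and $\Gamma\subset\R$. It is precisely this tension --- trapped near $\CH$, regular near $\Gamma$ --- that the paper exploits and that your sketch erases.

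The paper's actual mechanism is the structure of the \emph{apparent horizon}. Under the contradiction hypothesis, the trapped neighborhood of $\CH$ (obtained from the blow-up propagation of \cite{Moi4}, your step 1 done correctly) together with the regular center forces, for all late $v$, a last apparent-horizon crossing $u=\uA$ with $(\uA,v)\in\A$, whose \emph{ingoing continuation toward the center is regular}, and with $\uA\to u(b_\Gamma)$ (Proposition \ref{halfdiamond}). The quantitative input is then a focusing estimate on the causal rectangle $[u_1,\uA]\times[v_{\Gamma}(\uA),v]$ (Theorem \ref{mainestimate}): integrating $\partial_v\rho\ge \frac{r^2}{2\kappa}|D_v\phi|^2$ and the charge law from $\Gamma$, and --- this is the step your plan has no substitute for --- using that $2\rho/r=1$ \emph{exactly at the apparent-horizon vertex}, one gets $Q^2/r^2<\tfrac12$ and $m^2r^2|\phi|^2<\tfrac12$ there (Lemmas \ref{chargestimatelemma}, \ref{massestimatelemma}), provided the flux $\int\kappa(u_1,\cdot)(1+|\phi|^2(u_1,\cdot))$ is small, which the mass blow-up on $\{u_1\}$ guarantees at late times. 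Then \eqref{Radius3} gives $\partial_u(r\partial_v r)(\uA,v)<0$, so the ingoing segment past $(\uA,v)$ is trapped --- contradicting its regularity. Without this apparent-horizon localization (which converts control of $Q$ by $\rho$ into smallness of $Q^2/r^2$, since $\rho$ is \emph{not} large at $\A$ near $b_\Gamma$), your estimates on $Q$ near the center have nothing to collide with. Two further technical points: your formula for $\partial_v\varpi$ omits the interaction terms $\frac{Q^2}{2r^2}\partial_v r + q_0 Q r \Im(\phi\overline{D_v\phi})$; and propagating boundedness of $\phi$ and $Q$ "up to $b_\Gamma$" is neither done nor needed in the paper --- only the integral condition \eqref{integralkappafinite} on the one cone $\{u_1\}$ enters, precisely because the rectangle estimates transfer it inward via the Raychaudhuri equation and the a priori bound \eqref{ingoingphi}.
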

 
 The theorem relies on two assumptions: the first one is the blow-up of the Hawking mass towards (at least) \textbf{one} sphere on the Cauchy horizon. This blow-up, which is conjectured to be generic, results from the blue-shift effect of radiation at the Cauchy horizon. This effect is localized near $i^+$ inside the black hole, in the sense that no knowledge of the global structure of space-time is necessary to obtain it, as it only depends on the asymptotic structure of the event horizon, see \cite{MihalisSStrapped}, \cite{Moi4}. Of course, the mass blow up of \eqref{roughassumption} is not satisfied for the Reissner--Nordstr\"{o}m solution: this is simply due to the absence of radiation in the Reissner--Nordstr\"{o}m space-time, which is static. The other assumption is the boundedness of $\phi$ and $Q$, again over the same outgoing light cone, which is expected to hold generically as well, due to local stability estimates near time-like infinity, see \cite{Moi} and \cite{Moi3Christoph}.
 
 \subsubsection{Theorem with assumptions on the event horizon}
 
 Instead of making assumptions on the behavior of one given light cone in the black hole interior, one can also make assumptions on the event horizon $\mathcal{H}^+$ to prove that the Cauchy horizon $\CH$ does not close the space-time. We give a first version of the theorem, which will be made precise later in Section~\ref{statement}. \begin{theo} \label{rough2}
 	We assume that the black hole event horizon $\mathcal{H}^+$ settles quantitatively towards a sub-extremal Reissner--Nordstr\"{o}m event horizon. Then $\mathcal{S}^1_{\Gamma} \cup \mathcal{CH}_{\Gamma} \cup  \mathcal{S}^2_{\Gamma} \cup  \mathcal{S}  \neq \emptyset.$
 \end{theo}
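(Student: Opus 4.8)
The plan is to derive Theorem \ref{rough2} from Theorem \ref{roughversion} wherever a weak null singularity is present, and to eliminate the remaining (exceptional, non-radiating) case by a rigidity argument at the center. The quantitative convergence of the exterior to a sub-extremal Reissner--Nordstr\"om background is exactly the input under which the main result of \cite{Moi4} controls the solution near time-like infinity $i^+$. Invoking it, I would first establish a dichotomy along the Cauchy horizon $\CH$: \textbf{either} the Hawking mass $\rho$ blows up on an outgoing cone terminating at $\CH$ while $\phi$ and $Q$ stay bounded there (mass inflation), \textbf{or} the solution in a neighborhood of $\CH$ is an isometric copy of the sub-extremal Reissner--Nordstr\"om interior. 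These alternatives are mutually exclusive precisely because the Reissner--Nordstr\"om Cauchy horizon is static and radiation-free.

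In the mass-inflation branch the conclusion is immediate: the blow-up of $\rho$ together with the boundedness of $\phi$ and $Q$ over one outgoing cone are exactly the hypotheses \eqref{roughassumption} of Theorem \ref{roughversion}. The requisite boundedness of the matter fields over that cone is furnished by the local stability estimates near $i^+$ of \cite{Moi}, \cite{Moi3Christoph}, which are packaged into the conclusion of \cite{Moi4}. Applying Theorem \ref{roughversion} then yields $\mathcal{S}^1_{\Gamma} \cup \mathcal{CH}_{\Gamma} \cup \mathcal{S}^2_{\Gamma} \cup \mathcal{S} \neq \emptyset$, ruling out Figure \ref{Fig3}.

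The second branch is where the real work lies, and I would treat it by contradiction, assuming $\CH \cup \mathcal{S}_{i^+}$ closes off the space-time at the center endpoint $b_{\Gamma}$. In the isometric-copy regime one has $\phi \equiv 0$, the charge $Q \equiv e \neq 0$ constant, and the area-radius pinned at the inner-horizon value $r \equiv r_-(M,e) > 0$ near $\CH$. The heart of the matter is to propagate this static structure from a neighborhood of $i^+$ along the entire Cauchy horizon down to $b_{\Gamma}$. Since the charge current on the right-hand side of \eqref{4.1} is quadratic in $\phi$, the vanishing of $\phi$ forces $\partial_u Q = \partial_v Q = 0$, so $Q \equiv e \neq 0$ persists into a neighborhood of the center endpoint, where by hypothesis $r \to 0$. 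But then $F_{\mu \nu} = \tfrac{Q}{r^2}\,\Omega^2\, du \wedge dv$ diverges as $r \to 0$ --- this is exactly the obstruction noted after \eqref{4.1} which forbids a regular center in the constant-charge situation --- contradicting the regularity boundary conditions at $\Gamma$ of section \ref{geometricframework}. Equivalently, a null segment along which $r \equiv r_- > 0$ cannot terminate at a point of $\{r = 0\}$, so $\mathcal{S}_{i^+}$ cannot form and $\CH$ cannot reach $b_{\Gamma}$; hence the claimed boundary components are non-empty.

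I expect the rigidity step in the second branch to be the main obstacle: one must show that the Reissner--Nordstr\"om structure established near $i^+$ \emph{persists} along all of $\CH$ up to $b_{\Gamma}$, rather than degenerating at some intermediate point (which would either reopen the mass-inflation possibility or manufacture an intermediate $r \to 0$ singularity). I would attack this with a characteristic unique-continuation argument for $\phi$ off the Cauchy horizon, combined with the monotonicity of $r$ from the Raychaudhuri equations to bound $r$ below by $r_- > 0$ and thereby exclude a transition to $\mathcal{S}_{i^+}$. Once this persistence is in place, the divergence of the Maxwell field at the center closes the argument and completes the reduction to Theorem \ref{roughversion}.
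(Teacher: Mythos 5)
Your first branch is sound and coincides with the paper's route: the dynamical/mixed alternative of \cite{Moi4} feeds into Theorem \ref{roughversion} (the paper actually channels it through the more general Theorem \ref{maintheorem} together with the trapped neighborhood of Corollary \ref{trappedcor}, but this is the same mechanism). The genuine gap is in your static branch, and it begins with a misreading of the dichotomy. ``Static type'' in \cite{Moi4} (Theorem \ref{classificationprevioustheorem}, alternative \ref{alternative2}) means that $\phi$, $D_u\phi$, $Q-e$, $r-r_-(M,e)$, $\varpi-M$ extend continuously to \emph{zero on} $\CH$, i.e.\ the induced data on the Cauchy horizon itself agrees with Reissner--Nordstr\"{o}m; it does \emph{not} mean the space-time is isometric to Reissner--Nordstr\"{o}m on an open neighborhood of $\CH$. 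You are therefore not entitled to $\phi \equiv 0$ and $Q \equiv e$ near $\CH$: generically the scalar field is non-trivial in the interior and merely decays (at rate $v^{-s}$) toward the Cauchy horizon, so the Maxwell equations do not force $Q$ to be constant, and your contradiction via the divergence of $F_{\mu\nu}=\frac{Q}{r^2}\Omega^2\,du\wedge dv$ at the center cannot be run as stated.

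The repair you propose does not close this gap. A ``characteristic unique-continuation argument for $\phi$ off the Cauchy horizon'' fails in principle: $\CH$ is a null hypersurface, hence characteristic for the gauge-covariant wave operator, and vanishing of a solution on a single null hypersurface does not propagate to an open neighborhood (one needs data on two transversally intersecting null hypersurfaces); this is precisely why the paper never attempts to upgrade ``static type'' to ``exactly Reissner--Nordstr\"{o}m near $\CH$.'' Likewise, ``monotonicity of $r$ from the Raychaudhuri equations'' gives the wrong sign of information: in the trapped region both $\partial_u r<0$ and $\partial_v r<0$, so monotonicity yields \emph{upper} bounds on $r$ toward the future, never the lower bound $r \geq r_-/2$ you need. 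What the paper does instead is a quantitative bootstrap (Proposition \ref{rigidboundsprop}) propagating the \emph{approximate} Reissner--Nordstr\"{o}m structure, with errors $O(v^{1-2s})$, over rectangles $[u_1,u_2]\times[v_1',+\infty)$ whose opening time $v_1'$ is independent of $u_2$ --- this uniformity in $u_2$ is the crux, since it yields $r\geq r_-/2$ and trappedness all the way up to the endpoint of $\CH$. Two separate gluing impossibilities then finish the argument: a static $\CH$ cannot meet a non-trivial $\mathcal{S}_{i^+}$ (Corollary \ref{Siplusempty}: integrating the wave equation \eqref{Radius3} for $r^2$ over the trapped neighborhood shows $r$ extends continuously, with value at least $r_-/2$, to the endpoint of $\CH$, contradicting $r\to0$ on $\mathcal{S}_{i^+}$), and a static $\CH$ cannot terminate at $b_{\Gamma}$, since the rectangle on which $r\geq r_-/2$ would then have to approach $\Gamma$, where $r=0$. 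Your closing remark that ``a null segment along which $r\equiv r_->0$ cannot terminate at a point of $\{r=0\}$'' is the correct intuition, but it only becomes a proof once the uniform-in-$u_2$ lower bound on $r$ near all of $\CH$ has been established by such estimates, which your proposal does not supply.
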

 
 This theorem uses the following result of the author: assuming the quantitative exterior stability, we proved in \cite{Moi4} \begin{enumerate}
 	\item \label{roughtalt1} that either the Hawking mass blows up on the Cauchy horizon, while $\phi$ and $Q$ are controlled\footnote{We cannot work under the assumption  that $(\phi,Q)$ are bounded, since this is not true in general, see \cite{Moi3Christoph}. However, we have proved in \cite{Moi4}  that $(\phi,Q)$ do not blow-up too fast, which is what we mean by ``controlled''} 
 	\item \label{roughtalt2} Or the Cauchy horizon $\CH$ is an isometric copy of the Reissner--Nordstr\"{o}m one.
 \end{enumerate}
 
 If the first option is true, then the assumptions of Theorem \ref{roughversion} are satisfied which implies that the Cauchy horizon cannot close the space-time: $\mathcal{S}^1_{\Gamma} \cup \mathcal{CH}_{\Gamma} \cup  \mathcal{S}^2_{\Gamma} \cup  \mathcal{S}  \neq \emptyset$. If the second option is true, then, we prove that it is impossible to connect an isometric copy of the Reissner--Nordstr\"{o}m Cauchy horizon to a space-time for which $\mathcal{S}^1_{\Gamma} \cup \mathcal{CH}_{\Gamma} \cup  \mathcal{S}^2_{\Gamma} \cup  \mathcal{S}  = \emptyset$. To do this, we make use of an argument showing that focusing cannot occur in the vicinity of the Cauchy horizon, together with geometric properties, see section \ref{outline}.
 

 \subsubsection{Conjectured decay rates on the event horizon} \label{sectionrates}
 In this sub-section, we discuss the conjectured decay rates at which a black hole is expected to settle towards a sub-extremal Reissner--Nordstr\"{o}m black hole for large times, and the previous  works on the subject in the Physics literature. In our upcoming Theorem~\ref{rough2}, we will assume these rates are satisfied on the event horizon of the dynamical black hole.
 
 In \cite{HodPiran1}, Hod and Piran provided a heuristic argument, based on asymptotic matching, to formulate the correct decay of charged scalar fields on charged spherically symmetric black holes. The main difference with uncharged fields is that the decay rate now depends on the black hole charge $e$, as opposed to the universal rate prescribed by Price's law \cite{PriceLaw,Schlag,JonathanStabExt,Tataru} in the uncharged case. The results of \cite{HodPiran1} are confirmed by the numerics of Oren and Piran \cite{OrenPiran}, providing further evidence that the rate depends on $q_0 e$, the adimensional black hole charge.
 \begin{conjecture} [Decay of charged scalar fields, Hod and Piran \cite{HodPiran1}, Oren and Piran \cite{OrenPiran}] \label{chargedconj} Among all the data admissible and sufficiently regular and decaying from Theorem \ref{Kommemi}, there exists a generic sub-class for which if the maximal future development has $\mathcal{Q}^+ \cap J^{-}(\mathcal{I}^+)\neq \emptyset$, we have, in the charged massless case $q_0 \neq 0$, $m^2=0$:
 	$$ |\phi|_{|\mathcal{H}^+}(v) \sim v^{-2+\delta(q_0e)}, \hskip 5 mm |D_v \phi|_{|\mathcal{H}^+}(v) \sim v^{-2+\delta(q_0e)},$$ where $e$ is asymptotic charge of the black hole at time-like infinity, $\delta(q_0e):=1- \Re(\sqrt{1-4(q_0e)^2}) \in [0,1)$ and $v$ is a null coordinate defined by the gauge choice \eqref{gauge1}.
 \end{conjecture}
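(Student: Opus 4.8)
\emph{This is a long-standing open problem; I outline the strategy one would pursue to establish it rigorously.} Since the conjectured rate is determined by linear mechanisms, the plan is first to reduce the sharp late-time behaviour of $\phi$ to the analysis of the charged massless wave equation on a fixed sub-extremal Reissner--Nordstr\"om exterior, treating the nonlinear back-reaction perturbatively: the assumption that the solution settles to Reissner--Nordstr\"om furnishes the background, and one closes the argument by a bootstrap in which the decay of $\phi$ feeds back into the smallness of the geometric and Maxwell perturbations. In the chosen gauge the potential carries a Coulomb tail $q_0 A_v \sim -q_0 e/r$, so the relevant equation for the radiation field $\psi := r\phi$ is a wave equation with a \emph{long-range}, critically decaying $O(1/r)$ first-order perturbation. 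This feature is precisely what is responsible for the non-universal, charge-dependent exponent, in contrast with the universal Price's law rate of the uncharged case.

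The technical engine would be the modern vector-field method for waves on black-hole backgrounds. First I would establish uniform boundedness and integrated local energy decay (a Morawetz estimate) for the charged massless Klein--Gordon equation on the sub-extremal Reissner--Nordstr\"om exterior. The charge coupling renders the natural energy current indefinite --- the charged field can be superradiantly amplified --- so one cannot rely on a conserved positive energy; instead one must combine a red-shift estimate at $\mathcal{H}^+$ with a carefully chosen Morawetz multiplier, necessarily degenerating at the trapped photon sphere, to control the bounded superradiant contribution. Second, I would run the $r^p$-weighted energy hierarchy of Dafermos and Rodnianski to convert this into polynomial decay of the energy flux through $\mathcal{H}^+$, and hence pointwise decay of $|\phi|$ and $|D_v\phi|$ along the event horizon.

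The heart of the matter, and the source of the exponent $\delta(q_0 e)$, lies near null infinity $\mathcal{I}^+$. Because the Coulomb term decays only like $1/r$, the commutator vector field $r^p\partial_v$ used in the hierarchy picks up a borderline contribution and the hierarchy no longer closes at the classical rate: the effective radial problem near $\mathcal{I}^+$ acquires indicial roots $\tfrac12\big(1\pm\sqrt{1-4(q_0 e)^2}\big)$, and the resulting defect from the uncharged Price exponent is exactly $\delta(q_0 e)=1-\Re\sqrt{1-4(q_0 e)^2}$, giving the modified decay $v^{-2+\delta(q_0 e)}$. To obtain the \emph{sharp}, two-sided rate implied by the symbol $\sim$ --- rather than a mere upper bound --- I would compute the precise late-time asymptotics along the lines of the conserved Newman--Penrose-type quantity and matched-asymptotics method of Angelopoulos, Aretakis and Gajic, adapted to the long-range potential. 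The genericity encoded in the ``generic sub-class'' is then exactly what guarantees that the leading-order asymptotic coefficient does not vanish, so that $|\phi|$ decays no faster than $v^{-2+\delta(q_0 e)}$ and the lower bound is sharp.

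The main obstacle I anticipate is the interaction of the long-range Coulomb tail with both the $r^p$ hierarchy and the asymptotic analysis at $\mathcal{I}^+$: the borderline $O(1/r)$ coupling sits exactly at the threshold where decay rates become sensitive to the precise coefficient, so the standard weighted multipliers and conserved quantities must be replaced by $q_0 e$-dependent versions whose weights are read off from the indicial roots above. Controlling the nonlinear error terms uniformly along this modified hierarchy, while keeping the bootstrap for the geometric convergence to Reissner--Nordstr\"om consistent, is where the genuine difficulty lies; it is for this reason that the statement remains, at present, a conjecture rather than a theorem.
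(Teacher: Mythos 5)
The statement you were asked about is not a theorem of the paper: it is Conjecture \ref{chargedconj}, which the paper explicitly presents as an open problem supported only by the heuristic asymptotic-matching argument of Hod and Piran \cite{HodPiran1} and the numerics of Oren and Piran \cite{OrenPiran}. The paper offers no proof to compare against; the only rigorous statement it records in this direction is that the \emph{upper bound} was obtained in \cite{Moi2}, on a \emph{fixed} sub-extremal Reissner--Nordstr\"om exterior, for \emph{small} charge $q_0 e$, with a rate $p = 2-\delta(q_0e)+o(\sqrt{|q_0e|})$ as $q_0 e \to 0$. Your proposal, which you honestly frame as a strategy rather than a proof, is therefore in the same epistemic position as the paper itself: neither establishes the conjecture.

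As a strategy sketch, what you write is sound and consistent with the known partial results: the reduction to the charged wave equation on a fixed background, the identification of the long-range Coulomb tail $q_0 A_v \sim -q_0 e/r$ as the source of the charge-dependent exponent $\delta(q_0 e)=1-\Re\sqrt{1-4(q_0e)^2}$, and the need for an Angelopoulos--Aretakis--Gajic-type analysis (\cite{Latetime} in the paper's bibliography) to get sharp two-sided asymptotics. But the items you defer are exactly the open core of the problem, not routine technicalities: (i) the conjecture asserts a \emph{generic lower bound} (the symbol $\sim$, not $\lesssim$), and no mechanism is currently known that guarantees the non-vanishing of the leading asymptotic coefficient for the charged, nonlinearly coupled problem; (ii) the existing rigorous upper bound holds only for small $q_0 e$ and on a fixed background, whereas the conjecture concerns the dynamical Einstein--Maxwell--Klein--Gordon system with arbitrary non-zero sub-extremal asymptotic charge; (iii) closing the nonlinear bootstrap with the weak, charge-dependent decay (which degrades towards $v^{-1}$ as $\delta(q_0e)\to 1$) is precisely what makes the exterior stability problem hard, since such slow rates sit at or below the threshold usually needed to control backreaction. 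So your proposal is a reasonable research program and correctly locates the difficulties, but it neither proves the statement nor can be measured against a paper proof, since none exists.
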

 
 Note that the upper bound corresponding to conjecture \ref{chargedconj} was retrieved rigorously in \cite{Moi2}, on a fixed Reissner--Nordstr\"{o}m background, for small charge $q_0e$ and for a rate $p= 2-\delta(q_0e)+o( \sqrt{|q_0e|})$ as $q_0e \rightarrow 0$.
 
 Now we turn to the case of a massive scalar field. In \cite{KoyamaTomimatsu}, Koyama and Tomimatsu considered the case of a massive, uncharged scalar field and provided a heuristic argument, also based on asymptotic matching, to support that massive fields decay polynomially, at a very weak rate and with oscillations. Their tails were later confirmed by the numerics of Burko and Khanna \cite{BurkoKhanna}. For the case of a massive, charged scalar field, it was argued by Konoplya and Zhidenko \cite{KonoplyaZhidenko} that the late-time tail must be identical, as they claim that the asymptotic behavior of massive scalar field is universal.
 
 \begin{conjecture} [Decay of massive scalar fields, \cite{BurkoKhanna,KoyamaTomimatsu,KonoplyaZhidenko}] \label{conjecturemassive} Among all the data admissible, sufficiently regular and decaying from Theorem \ref{Kommemi}, there exists a generic sub-class for which if the maximal future development has $\mathcal{Q}^+ \cap J^{-}(\mathcal{I}^+)\neq \emptyset$, we have, in the massive uncharged case $m^2 \neq 0$, $q_0=0$ \cite{BurkoKhanna,KoyamaTomimatsu}, and in the massive charged case $m^2 \neq 0$, $q_0 \neq 0$ \cite{KonoplyaZhidenko}:
 	$$ |\phi|_{|\mathcal{H}^+}(v) \sim |\sin|( mv + o(v) )\cdot v^{-\frac{5}{6}}, \hskip 5 mm |D_v \phi|_{|\mathcal{H}^+}(v) \sim |\sin|( mv + o(v) )\cdot v^{-\frac{5}{6}},$$ where $v$ is a null coordinate defined by the gauge choice \eqref{gauge1}.
 \end{conjecture}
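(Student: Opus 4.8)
Since the stated decay is a conjecture grounded in heuristic asymptotic matching and numerics rather than a theorem established in this paper, the honest goal of a proof program is to prove the two-sided asymptotic on a \emph{fixed} sub-extremal Reissner--Nordstr\"om exterior (Schwarzschild when $q_0=0$), the linearised dynamics being expected to dictate the tail. The plan is to pass from the evolution to a frequency-domain representation, isolate the contribution of the mass threshold, and recover both the amplitude exponent $v^{-5/6}$ and the oscillation $\sin(mv+o(v))$ by a stationary-phase/Laplace analysis.

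First I would reduce the spherically symmetric massive Klein--Gordon equation to a radial Regge--Wheeler-type ODE in a tortoise coordinate $r_*$, whose potential obeys $V(r)=m^2-2Mm^2/r+O(r^{-2})$ near infinity. The decisive feature is that $V$ no longer tends to $0$ but to the constant $m^2$, with a \emph{long-range} correction that is Coulombic in the variable conjugate to the shifted energy $\omega^2-m^2$. Writing the field on the horizon as an inverse transform $\phi_{|\mathcal{H}^+}(v)\sim\int e^{-i\omega v}\,\mathcal G(\omega)\,d\omega$, the massive case differs from the massless one in that the spectrum has a gap: the branch point sits at $\omega=\pm m$ rather than at $\omega=0$, and it is the behaviour of the scattering data $\mathcal G(\omega)$ as $\omega\to m^+$ that produces the polynomial tail. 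In the charged case the coupling $q_0A$ enters as a long-range phase built from $A\sim e/r$, shifting the effective frequency but leaving the threshold at $\omega=m$ fixed --- this is precisely why the rate is conjectured to be identical in the charged and uncharged massive cases, in contrast with the $\delta(q_0e)$-dependence of the massless charged rate.

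The core steps are then: (i) construct Jost solutions and transmission coefficients, analytically continued across the cut at $\omega=m$, and derive the \emph{threshold expansion} in the momentum $k=\sqrt{\omega^2-m^2}$; here the $-2Mm^2/r$ tail forces Coulomb-modified Jost solutions, so $\mathcal G$ carries Coulomb phase and amplitude factors that are nonanalytic in $k$; (ii) deform the contour onto the cut and extract the late-$v$ behaviour by a saddle-point computation, in which the self-consistent saddle balancing the oscillation $e^{-i\omega v}$ against the Coulomb phase $\sim Mm^2/k$ sits at $\omega-m\sim v^{-2/3}$, yielding \emph{simultaneously} the amplitude $v^{-5/6}$ and a phase correction of order $v^{1/3}$ inside the sine, i.e.\ the announced $o(v)$ term; (iii) upgrade $\lesssim$ to $\sim$ by showing the leading saddle coefficient is nonzero for the generic sub-class, which is where genericity enters.

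The hard part will be the threshold analysis in (i)--(ii). Because the effective potential is long-range, short-range scattering theory does not apply and one needs modified (Coulombic) wave operators together with resolvent bounds that are \emph{uniform} as $\omega\to m$ on a full two-sided neighbourhood of the branch point; such uniform threshold estimates for massive fields are not available even on Schwarzschild, and turning the formal stationary-phase answer into a rigorous two-sided asymptotic with controlled error is exactly the missing ingredient. A further, presumably currently intractable, layer is to transfer the fixed-background result to the genuine nonlinear Einstein--Maxwell--Klein--Gordon system, in which the asymptotic charge $e$ and the metric are themselves dynamical; for this reason the statement is recorded as a conjecture rather than proved here.
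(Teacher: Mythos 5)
The statement you were assigned is Conjecture \ref{conjecturemassive}: the paper does not prove it and does not claim to. It is recorded purely as a conjecture, supported by the heuristic asymptotic-matching argument of Koyama--Tomimatsu, the numerics of Burko--Khanna in the massive uncharged case, and the universality claim of Konoplya--Zhidenko for the massive charged case; the only role it plays in the paper is to motivate the \emph{weaker} hypotheses \eqref{hypupperbound}--\eqref{hyplowerbound} of Theorem \ref{theoremevent} (note in particular the $L^2$-averaged lower bound there, chosen precisely because the oscillatory factor $|\sin|(mv+o(v))$ in this conjecture defeats any pointwise lower bound). So there is no paper proof to compare yours against, and your honest framing of the task --- a proof program rather than a proof --- is the correct reading of the situation.

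Within that framing, your sketch faithfully reproduces the accepted heuristic: the branch point at the mass threshold $\omega=\pm m$ rather than $\omega=0$, the long-range $-2Mm^2/r$ Coulombic correction forcing modified Jost solutions nonanalytic in $k=\sqrt{\omega^2-m^2}$, the saddle at $\omega-m\sim v^{-2/3}$ balancing $e^{-i\omega v}$ against the Coulomb phase $\sim Mm^2/k$, yielding the amplitude $v^{-5/6}$ together with an $O(v^{1/3})$ phase correction consistent with the $o(v)$ inside the sine, and the observation that the $q_0 A$ coupling shifts phases but not the threshold, explaining why the charged and uncharged massive rates are conjectured to coincide. But be clear that nothing in your steps (i)--(iii) is established, and you concede as much: uniform two-sided resolvent estimates at the threshold are unavailable even on Schwarzschild, and the two-sided asymptotic $\sim$ requires nonvanishing of the leading saddle coefficient, which is exactly where the unproven genericity lives. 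One further caveat you should state more forcefully: even a complete fixed-background linear theorem would not prove the conjecture \emph{as stated}, since the statement concerns the full nonlinear Einstein--Maxwell--Klein--Gordon evolution of a generic subclass of the data of Theorem \ref{Kommemi}, with $v$ normalized by the gauge condition \eqref{gauge1} on the dynamical event horizon and with the asymptotic charge $e$ itself dynamical; the passage from the linearized tail to the nonlinear setting is an independent open problem, not merely a technical ``transfer''. In short: your proposal is a reasonable and accurate research roadmap, it agrees with the heuristics the paper cites, and it correctly identifies why the statement remains a conjecture --- but it proves nothing, and should not be presented as more than a program.
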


 \subsubsection{Comments on the assumptions of Theorem \ref{roughversion} and Theorem \ref{rough2}}
 
 The assumptions we make for Theorem \ref{rough2} (see Theorem \ref{theoremevent} for the details) are compatible with the tails of Conjecture \ref{chargedconj} and Conjecture \ref{conjecturemassive} but require much less information: we only assume, for  some $s>\frac{3}{4}$, \begin{equation*} 
 |\phi|_{|\mathcal{H}^+}(v) + |D_v \phi|_{|\mathcal{H}^+}(v) \lesssim v^{-s},	\end{equation*} \begin{equation*} 
 \int_{v}^{+\infty} |D_v \phi|^2_{|\mathcal{H}^+}(v')dv' \gtrsim v^{1-2s}.	\end{equation*}
 
 \begin{rmks}
 Notice that $1>\frac{5}{6}>\frac{3}{4}$ so this decay is compatible with both Conjecture  \ref{chargedconj} and Conjecture \ref{conjecturemassive}.
 \end{rmks}
 Note that we require an $L^2$-averaged (energy) polynomial lower bound (which is weaker than a point-wise bound) to also account for the potential oscillations prescribed by Conjecture \ref{conjecturemassive}.
 
 
 As for Theorem \ref{roughversion}, assumption \eqref{roughassumption} can be weakened, if we allow global, but soft assumptions. More precisely, the mildest assumption that will suffice for our result, instead of the mass blow up, is some integrability condition \eqref{mainassumption} over (only) \textbf{one} outgoing cone, in addition to the assumption that the entire Cauchy horizon is trapped, see Theorem \ref{maintheorem} for a precise statement. In turn, we prove that this scenario occurs if one only assumes mass blow up on one cone, due to the propagation of blow up, see section \ref{propagationsection}. Thus, we emphasize that the blow up of the Hawking mass is just a \underline{sufficient} condition but it is not necessary 
 to prove that the Cauchy horizon cannot close the space-time. In the asymptotically flat setting, mass inflation occurs \cite{Ori,Moi4} so the sufficiency of these weaker assumptions does not matter. However, in other settings such as the Einstein equations with a positive cosmological constant, mass inflation is not generically expected \cite{Costa,Harvey}, but the weaker assumption may be still satisfied; thus we expect our argument to be useful in this  setting as well.  
 \subsection{The models of Christodoulou and Dafermos and their generalization} \label{ChrisDaf}
 
 In this section, we present two sub-models of the Einstein--Maxwell--Klein--Gordon equations. The first one is the uncharged spherically symmetric model studied by Christodoulou \cite{Christo1,Christo2,Christonaked,Christo3}, governed by the Einstein-scalar-field equations, i.e.\ the system  \eqref{1.1}, \eqref{2.1}, \eqref{3.1}, \eqref{4.1}, \eqref{5.1} in the special case $F \equiv 0$, $m^2=0$. While this model is suitable to study gravitational collapse, as one-ended solutions are allowed, it does not permit the formation of Cauchy horizons, due to the absence of any repulsive mechanism such as angular momentum or charge. Therefore, this model is ill-suited to understand weak null singularities, as there is no Cauchy horizon emanating from time-like infinity $i^+$. 
 
 The second model, featuring a Maxwell field with \underline{uncharged} matter was studied by Dafermos \cite{MihalisPHD,Mihalis1,MihalisSStrapped,Mihalisnospacelike}, and is governed by the Einstein--Maxwell-(uncharged)-scalar-field equations, i.e.\ the system  \eqref{1.1}, \eqref{2.1}, \eqref{3.1}, \eqref{4.1}, \eqref{5.1} in the special case $q_0=0$, $m^2=0$. This model allows Cauchy horizons to form and provides a good setting to understand the formation of weak null singularities and their \textit{local} aspects. Yet, the Dafermos model is in turn restricted by the topology of its initial data, necessarily \textbf{two}-ended. This is because the Maxwell field, which is static due to the absence of charged matter, is singular in the one-ended case. Therefore, the Dafermos model is inappropriate to study the \textit{global} aspects of gravitational collapse, including the necessary breakdown of weak null singularities, due to the absence of a center $\Gamma$.

 
 The Einstein--Maxwell--Klein--Gordon equations in spherical symmetry that we study in the present paper generalize both the Christodoulou and the Dafermos model, and are free from the above restrictions, as one-ended black holes with Cauchy horizon are allowed in principle. In fact, the Einstein--Maxwell--Klein--Gordon system is one\footnote{It is also possible to study charged dust, but dust on its own propagates by transport. The dust model is simpler than the  Einstein--Maxwell--Klein--Gordon system, but  is thus not expected to reflect as accurately the dynamics of the vacuum Einstein equations \cite{violation}.\color{black}} of the  only \color{black} spherically symmetric model which is elaborate enough to formulate the breakdown of weak null singularities in a non-trivial way. 
 
 An important preliminary step, before proving the result of our present paper, is to establish that the Cauchy horizon is indeed \underline{always} non-empty\footnote{Providing, of course, that the asymptotic charge $e$ of the black hole is non-zero, which is conjecturally the generic case.} in the Einstein--Maxwell--Klein--Gordon model \cite{Moi}. As the dynamics of charged scalar fields in the exterior are more intricate than their uncharged counterparts, new difficulties arise. These difficulties were  overcome by the author in \cite{Moi,Moi4}, where it was also shown that the Cauchy horizon $\CH$ is weakly singular.
 
 We now briefly present this result, after mentioning previous works for context.
 \subsubsection{The Einstein-scalar-field equations in spherical symmetry}
 The uncharged gravitational collapse has been analysed in great detail by Christodoulou. Recall that in this case, no Cauchy horizon is allowed to emanate from time-like infinity $i^+$ due to the absence of charge or angular momentum. We sum up Christodoulou's main results on the black hole interior.
 \begin{thm}[Christodoulou, Einstein-scalar-field in spherical symmetry \cite{Christo1,Christo2,Christo3}] \label{Christotheorem}
 	For initial data as in Theorem \ref{Kommemi} in the more general BV class, assume that the Maxwell field is trivial: $F_{\mu \nu} \equiv 0$ and that the field is massless $m^2=0$. Then: \begin{enumerate}
 		\item There is no Cauchy horizon emanating from time-like infinity: $\CH = \emptyset$.
 		
 		\item There is no secondary outgoing null segment emanating from $b_{\Gamma}$ and where $r=0$: $\mathcal{S}_{\Gamma}^{2} =\emptyset$.
 		
 		\item \label{trappedsurfaceconsequence} Among all the data admissible from Theorem \ref{Kommemi}, in the BV class, with $F_{\mu \nu} \equiv 0$, $m^2=0$, there exists a generic sub-class for which if the maximal future development has $\mathcal{Q}^+ \cap J^{-}(\mathcal{I}^+)\neq \emptyset$, then $\mathcal{S}$ is the only non-trivial component of the boundary: $\mathcal{S}_{\Gamma}^{1}= \mathcal{CH}_{\Gamma} =\emptyset$.
 	\end{enumerate}
 \end{thm}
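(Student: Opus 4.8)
The plan is to work in a double-null gauge $g=-\Omega^2\,du\,dv+r^2\,d\sigma^2$ on $\mathcal{Q}^+$, setting $\nu=\partial_u r$, $\lambda=\partial_v r$ and defining the Hawking mass $m$ through $\nu\lambda=\tfrac{\Omega^2}{4}\big(\tfrac{2m}{r}-1\big)$, so that the trapped region $\mathcal{T}=\{\nu<0,\lambda<0\}$ is exactly $\{2m>r\}$. The decisive structural input — and the one place where the hypotheses $F_{\mu\nu}\equiv0$, $m^2=0$ enter — is the equation obtained by combining $r\,\partial_u\partial_v r+\nu\lambda=-\tfrac{\Omega^2}{4}$ with the mass relation, namely $\partial_u\partial_v r=-\tfrac{\Omega^2 m}{2r^2}$. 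Since the dominant energy condition forces $m\ge0$ on regular one-ended data, this yields the \emph{one-sided} sign $\partial_u\lambda=\partial_v\nu=\partial_u\partial_v r\le0$. I emphasize that the charged system would instead contribute a repulsive term of the form $+\tfrac{\Omega^2 Q^2}{2r^3}$, and it is precisely this term — absent here — that produces the lower bound $r\ge r_->0$ behind a Cauchy horizon in Reissner--Nordström-type interiors. Alongside this I would record the Raychaudhuri monotonicities $\partial_u(\nu/\Omega^2)\le0$, $\partial_v(\lambda/\Omega^2)\le0$ and the fact that $m$ is non-decreasing towards the future inside $\mathcal{T}$, all standard in this framework.

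For part (1), I would argue by contradiction. Assuming $\CH\neq\emptyset$, I would first use the decay of $\phi$ along $\mathcal{H}^+$ (a Price-law-type statement controlling $m$ and $\Omega^2$ up to $i^+$) to place a full interior neighbourhood of $i^+$ inside $\mathcal{T}$, so that $\nu,\lambda<0$ and $m\ge0$ there. I would then integrate the sign relation $\partial_u\partial_v r\le0$ against the mass monotonicity: in the charged problem the radius is prevented from collapsing by the repulsive term, but with that term gone there is nothing to arrest the decrease of $r$, and one obtains $r\to0$ on the entire future boundary adjacent to $i^+$. In particular no ingoing segment on which $r$ remains positive can reach $i^+$, so $\CH=\emptyset$.

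For part (2), I would rule out a \emph{second} $r=0$ segment $\mathcal{S}^2_{\Gamma}$ issuing from the future endpoint of the positive-radius segment $\mathcal{CH}_{\Gamma}$. Moving along the future-left boundary, the area-radius would have to rise from $0$ (across $\mathcal{S}^1_{\Gamma}$ and $\mathcal{CH}_{\Gamma}$) and then return to $0$ on $\mathcal{S}^2_{\Gamma}$. But $\partial_u\partial_v r\le0$ forces $\lambda$ non-increasing in $u$ and $\nu$ non-increasing in $v$, so once $r$ has become positive on $\mathcal{CH}_{\Gamma}$ the monotone null derivatives cannot turn $r$ back down to $0$ on the boundary without the mass or lapse degenerating in a manner incompatible with $\mathcal{CH}_{\Gamma}$ extending as a regular positive-radius null segment. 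This is the same obstruction as in part (1), now localised at the junction point, and again it is fatal only because the charge term is missing.

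Part (3) is the genuinely hard statement, and I expect it to be the main obstacle, since it is weak cosmic censorship for this model. Here $\mathcal{S}^1_{\Gamma}\cup\mathcal{CH}_{\Gamma}\neq\emptyset$ is exactly the assertion that the central endpoint $b_{\Gamma}$ is visible — that a future-directed null ray issues from the first singularity — i.e.\ a locally naked singularity. The strategy I would follow is Christodoulou's: first characterise such configurations through the asymptotic, approximately self-similar scaling of $r$, $m$ and $\phi$ near $b_{\Gamma}$; then show this scaling is \emph{unstable} in the BV class by exhibiting a two-parameter family of perturbations and running a continuity/intermediate-value argument, so that data arbitrarily close to the naked one either restore a regular center or form a trapped surface censoring the singularity first. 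The conclusion is that the naked set is non-generic, whence generically $\mathcal{S}^1_{\Gamma}=\mathcal{CH}_{\Gamma}=\emptyset$ and, combined with parts (1)--(2), $\mathcal{S}$ is the only non-trivial boundary component. The delicate self-similar asymptotics near the first singularity and the instability mechanism are where essentially all the difficulty lies; parts (1) and (2), by contrast, are comparatively soft consequences of the sign $\partial_u\partial_v r\le0$.
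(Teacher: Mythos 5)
First, a point of order: the paper does not prove Theorem \ref{Christotheorem} at all. It is quoted as background, attributed to Christodoulou \cite{Christo1}, \cite{Christo2}, \cite{Christo3} (and merely translated into Kommemi's boundary language of Theorem \ref{Kommemi}), so your proposal can only be measured against that literature, not against an in-paper argument. Your structural setup is the correct one: with $F_{\mu\nu}\equiv 0$, $m^2=0$, equation \eqref{Radius3} degenerates to $\partial_u(r\partial_v r)=-\Omega^2/4<0$, equivalently $\partial_u\partial_v r=-\Omega^2\rho/(2r^2)\le 0$, and this sign, absent in the charged model, is indeed the hallmark of the Christodoulou system; likewise your identification of part (3) with the instability of naked singularities, proved by a BV-instability argument around (approximately self-similar) naked data, is the correct reading of \cite{Christonaked}, \cite{Christo3}.

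The genuine gap is your closing claim that parts (1) and (2) are ``comparatively soft consequences of the sign $\partial_u\partial_v r\le0$'': they are not, and the deduction you sketch for (1) fails. Integrating $\partial_u(r\partial_v r)=-\Omega^2/4$ together with mass monotonicity and the Raychaudhuri signs only yields \emph{relative} (four-point) inequalities: for instance, in a trapped neighborhood of a putative $\CH$, one gets for $u_1<u_2$ that $r_{\CH}^2(u_2)-r_{\CH}^2(u_1)\le r^2(u_2,v_1)-r^2(u_1,v_1)$, i.e.\ the decrease of $r^2$ along the boundary dominates its decrease along an interior ingoing cone. This is perfectly consistent with $\inf_{\CH} r>0$; no contradiction with positivity of the limiting radius can be extracted from signs alone. ``Nothing to arrest the decrease of $r$'' is not an argument --- monotone decrease to a \emph{positive} limit is precisely the scenario that must be excluded, and excluding it is the actual quantitative content of \cite{Christo1}, \cite{Christo2}. (The present paper is itself indirect evidence that no soft argument exists: even \emph{granted} mass inflation on a cone, ruling out the closing-off of the Cauchy horizon in the charged case occupies all of its section \ref{generaltheoremsection}, with flux estimates anchored at the center.) Two further defects: parts (1)--(2) are asserted for \emph{every} admissible BV datum, so your appeal to Price-law-type decay on $\mathcal{H}^+$ to trap a neighborhood of $i^+$ is not an available hypothesis (and trapping such a neighborhood is itself a nontrivial step); and your part (2) rests on the phrase ``incompatible with $\mathcal{CH}_{\Gamma}$ extending as a regular positive-radius null segment,'' which is an assertion, not an estimate --- note that the region adjacent to $\mathcal{S}^2_{\Gamma}$ is not in the causal future of $\mathcal{CH}_{\Gamma}$, so nothing propagates there for free.

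For part (3) your diagnosis is right (it is weak cosmic censorship for this model, and it is where the difficulty lies), but what you write is a table of contents for \cite{Christonaked} and \cite{Christo3} rather than a proof: the self-similar reduction, the characterization of naked data, and the BV perturbation/continuity argument are each major theorems, and none of the steps as stated could be checked or completed from the proposal. So the proposal correctly locates the mechanisms but proves none of the three assertions; in particular the one concrete mechanism it does offer (softness of (1)--(2)) is the step that would fail.
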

 Notice that the statement  $\mathcal{S} \neq \emptyset$ is immediate for the Christodoulou model, where very special monotonicity properties dominate, in the absence of any repulsive mechanism such as angular momentum or charge. This is in contrast with the model considered in the present paper, where the (non-trivial) presence of a Cauchy horizon $\CH$ (see section \ref{mywork}) could, in principle, allow for $\CH$ to be the only non-empty boundary component.
 \subsubsection{The Einstein--Maxwell-scalar-field equations in spherical symmetry}
 
 The breakthrough of Dafermos \cite{MihalisPHD} was to realize and prove that the Cauchy horizon is non-empty for dynamical black holes, in a model where a Maxwell field plays the role of angular momentum. This early insight, gained from a spherically symmetric model, paved the way to the monumental work of Dafermos and Luk \cite{KerrStab} who recently proved the stability of the Cauchy horizon of Kerr black holes for the vacuum Einstein equations, remarkably in the absence of any symmetry.
 \begin{thm}[Dafermos, Einstein--Maxwell-scalar-field in spherical symmetry \cite{MihalisPHD,Mihalis1}] 
 	Assume that the black hole event horizon $\mathcal{H}^+$ settles quantitatively towards a sub-extremal Reissner--Nordstr\"{o}m event horizon. Then $\CH \neq \emptyset$, i.e.\ there exists a non-empty Cauchy horizon emanating from $i^+$. Moreover, $\CH$ is weakly singular, and yet $C^0$ extendible.
 \end{thm}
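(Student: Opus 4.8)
The plan is to formulate the problem as a characteristic initial value problem in the black hole interior, in double null coordinates $(u,v)$ where $g = -\Omega^2\, \di u\, \di v + r^2\, \di\sigma_{\mathcal{S}^2}$. Since the matter is uncharged ($q_0=0$) and massless ($m^2=0$), equation \eqref{4.1} forces the charge function to be constant, $Q\equiv e$, and $\phi$ satisfies a wave equation coupled to the geometry only through $r$ and $\Omega^2$. In the two-ended setting the data live on the event horizons $\mathcal{H}^+$ emanating from $i^+$; the hypothesis that the exterior settles quantitatively to a sub-extremal Reissner--Nordstr\"{o}m solution of parameters $(M,e)$, with inner-horizon radius $r_- = M - \sqrt{M^2-e^2}>0$, furnishes polynomial (Price-law) decay of $\phi$ and $\partial_v\phi$ along $\mathcal{H}^+$. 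I would record the Raychaudhuri equations together with the evolution equations for the renormalized lapse $\kappa := -\Omega^2/(4\,\partial_u r)$ and the Hawking mass $\rho$, noting that in the interior $\partial_u r<0$ and $\partial_v r<0$ (trapped region $\T$), so that $r$ decreases monotonically towards the future in both null directions.

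The heart of the argument is a bootstrap/continuity argument in a characteristic rectangle $\mathcal{D}$ anchored at $i^+$. Within $\mathcal{D}$ I would propagate three bootstrap assumptions: that $r$ stays in a small neighborhood of $r_-$ (in particular $r\geq r_-/2>0$), that $\kappa$ remains bounded, and that the ingoing and outgoing fluxes $\int(\partial_v\phi)^2$ and $\int(\partial_u\phi)^2$ are controlled by their values on $\mathcal{H}^+$. The decisive quantitative input is the decomposition of $\mathcal{D}$ into a red-shift region near $\mathcal{H}^+$, where the horizon decay is converted into pointwise control, an intermediate region, and a blue-shift region near $\CH$, where $\Omega^2$ is exponentially damped at a rate governed by the (nonzero, by sub-extremality) surface gravity $K_-$ of the inner horizon. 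This damping makes $|\partial_v r|$ integrable in $v$, so $r(u,v)$ converges as $v\to+\infty$ to a positive limit $r_{\CH}(u)$. Consequently $\CH=\{v=+\infty\}$ is a nonempty ingoing null boundary on which $r>0$, establishing the first assertion.

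For $C^0$ extendibility I would change to a coordinate $V=V(v)$ with $V\to V_\infty<+\infty$, rescaling away the exponential degeneration of $\Omega^2$, and verify that $r$, $\phi$ and the rescaled metric coefficients extend continuously across $\{V=V_\infty\}$ while keeping $r$ bounded below by a positive constant; this yields a continuous Lorentzian extension of the metric through $\CH$. Finally, for the weak-singularity statement I would propagate a \emph{lower} bound on the ingoing flux from $\mathcal{H}^+$ and feed it into the evolution equation for $\rho$: the same blue-shift amplification that was harmless for $r$ forces the transverse derivative of $\phi$, and with it the Hawking mass $\rho$, to blow up along $\CH$, so the continuous extension fails to be $C^1$ (indeed its Christoffel symbols are not locally square-integrable). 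This is the mass-inflation phenomenon, and it shows $\CH$ is weakly singular.

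I expect the main obstacle to be the simultaneous reconciliation of the stability and instability mechanisms in the blue-shift region. One must show that the sub-extremal damping is precisely strong enough that $r$ stays bounded away from $0$ and the metric extends continuously, yet weak enough that the generic horizon flux still drives $\rho\to+\infty$. Closing this requires sharp two-sided control of the scalar-field flux as it is transported from $\mathcal{H}^+$ to $\CH$, and it is exactly here that the quantitative sub-extremal Reissner--Nordstr\"{o}m convergence of the exterior is indispensable: a soft argument would not distinguish the borderline rates at which $r\to r_->0$ while $\rho\to+\infty$.
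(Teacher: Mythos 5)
You should first be aware that the paper you are working against does not prove this statement at all: it is quoted as background, attributed to Dafermos \cite{MihalisPHD}, \cite{Mihalis1}, and the paper's own machinery (focusing estimates on causal rectangles anchored at the center of symmetry, contradiction with the Cauchy horizon closing off the space-time) is entirely disjoint from it --- indeed that machinery cannot even be set up in the Dafermos model, which has no center $\Gamma$. So the only meaningful comparison is with the proof in the cited references, and against that your outline is essentially faithful: the characteristic initial value problem in the interior with $Q\equiv e$ constant, the bootstrap in a rectangle anchored at $i^+$ split into a red-shift region near $\mathcal{H}^+$ and a blue-shift region near $\CH$, the exponential damping of $\Omega^2$ at rate governed by the inner-horizon surface gravity making $|\partial_v r|$ integrable so that $r\to r_{\CH}(u)\geq r_->0$ and $\CH\neq\emptyset$, the $C^0$ extension by passing to a finite coordinate $V(v)$, and the mass-inflation instability driven by a lower bound on the event-horizon flux are all the actual ingredients of \cite{MihalisPHD}, \cite{Mihalis1}.

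Two caveats on precision. First, the instability half of the theorem is not a consequence of upper-bound (Price-law) decay alone: in \cite{Mihalis1} it requires a polynomial \emph{lower} bound on $\partial_v\phi$ along $\mathcal{H}^+$, so this must be read into the hypothesis ``settles quantitatively''; you do invoke such a lower bound, which is the right logic, but you should state it as a hypothesis rather than as something furnished automatically by Price's law (whose rigorous form in \cite{PriceLaw} is an upper bound). Second, your concluding claim that mass inflation makes the extension fail to be $C^1$, with non-square-integrable Christoffel symbols, is stronger than what \cite{MihalisPHD}, \cite{Mihalis1} prove and stronger than what the statement asserts: blow-up of the Hawking mass yields blow-up of a curvature component, hence $C^2$-inextendibility, while the sharper weak-null-singularity formulation in terms of the regularity of the connection belongs to later work \cite{JonathanWeakNull}, \cite{JonathanStab}. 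Since the surrounding paper explicitly interprets ``weakly singular'' as blow-up of a curvature component, your argument proves the statement as intended, but the $C^1$/Christoffel claim should either be removed or attributed to the later references.
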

 The $C^0$ extendibility of $\CH$, together with a result on the exterior by Dafermos and Rodnianski \cite{PriceLaw}, also falsifies the $C^0$ version of Strong Cosmic Censorship  in spherical symmetry\color{black}, see section \ref{additionnal} for a discussion of this important conjecture related to determinism.
 
 We emphasize, however, that the theorem of Dafermos is for uncharged scalar fields, and thus does not apply to the Einstein--Maxwell--Klein--Gordon model -- featuring a \emph{charged}, massive scalar field -- considered in the present paper.
 \subsubsection{Non-emptiness of $\CH$ for the Einstein--Maxwell--Klein--Gordon model} \label{mywork}
 In the charged and massive case, the model becomes more intricate and the proof of Dafermos does not carry over. Additionally, the scalar field obeys different dynamics, in particular a weaker decay than in the uncharged case, see Conjecture \ref{chargedconj} and Conjecture \ref{conjecturemassive} and compare with the Price's law governing uncharged fields \cite{PriceLaw}. This weak decay of charged/massive fields renders non-linear stability harder and requires new estimates, established by the author in \cite{Moi}:
 \begin{thm}[Einstein--Maxwell--Klein--Gordon in spherical symmetry \cite{Moi}]  
 	Assume that the event horizon $\mathcal{H}^+$ settles quantitatively towards a sub-extremal Reissner--Nordstr\"{o}m event horizon. Then $\CH \neq \emptyset$, and $\CH$ is weakly singular. 
 \end{thm}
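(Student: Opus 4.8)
The plan is to formulate the problem as a characteristic initial value problem in double-null gauge $(u,v)$, with data posed on the event horizon $\mathcal{H}^+$ --- where the quantitative approach to a sub-extremal Reissner--Nordstr\"{o}m background supplies decay for $\phi$ and $D_v\phi$ together with control of the geometry --- and on a transversal ingoing cone, and then to propagate estimates towards $i^+$ and into the interior. Writing the spherically symmetric Einstein--Maxwell--Klein--Gordon system in the unknowns $(r,\Omega^2,\phi,Q)$, the driving relations are the two Raychaudhuri equations, the wave equation $\partial_u\partial_v r=\ldots$, the transport equation for the Hawking mass $\rho$, and the charge transport equation $\partial_v Q=-q_0\,r^2\,\Im(\overline{\phi}\,D_v\phi)$ (schematically, with its $u$-analogue) derived from \eqref{4.1}. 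The whole argument is organized as a continuity/bootstrap argument over regions adapted to the red-shift and blue-shift effects. The first step is to establish that in a neighbourhood of $\mathcal{H}^+$ inside the black hole the solution stays quantitatively close to Reissner--Nordstr\"{o}m: the red-shift effect damps ingoing radiation, so the decay transferred from $\mathcal{H}^+$ is preserved, $|Q-e|$ is controlled by integrating the charge transport against the decaying flux, and one secures $\partial_u r<0$ with $-\partial_u r$ bounded above and below.

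Next, to prove $\CH\neq\emptyset$, I would propagate these estimates across the ``no-shift'' and early ``blue-shift'' regions and establish the a priori lower bound $r(u,v)\geq r_--\varepsilon>0$ for $v$ large, where $r_-$ is the inner horizon radius of the limiting Reissner--Nordstr\"{o}m solution. The point is that, as long as $\phi$ remains small, the source in the wave equation for $r$ is integrable in the relevant null directions, so $r$ cannot collapse to $0$ before $v=+\infty$. The resulting upper null boundary $\{v=+\infty\}$ then carries a strictly positive limit of $r$, which is precisely the non-empty Cauchy horizon $\CH$ emanating from $i^+$.

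To prove that $\CH$ is weakly singular, i.e.\ that the Hawking mass blows up (mass inflation), I would exploit the blue-shift in the late region. The Raychaudhuri equations force the null lapse to degenerate exponentially, and this is coupled with the non-vanishing $L^2$-averaged lower bound on the scalar field flux propagated from $\mathcal{H}^+$, namely $\int_v^{+\infty}|D_v\phi|^2(u,v')\,dv'\gtrsim v^{1-2s}$, which amplifies the ingoing energy. Tracking $\rho$ through its transport equation, the (logarithmically) divergent blue-shift factor multiplied by the polynomially small but non-zero flux yields $\rho\to+\infty$ as one approaches $\CH$. Since $r$ nonetheless extends as a strictly positive function (by the previous step) and the metric remains continuous, the resulting singularity is null and weak.

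The hardest part, and the genuinely new difficulty relative to the uncharged Dafermos model, is handling the \emph{dynamical} charge $Q$ together with the \emph{weak} decay of the charged (and possibly massive) field. Because Price's law fails and the decay rate degrades to $v^{-s}$ with $s$ possibly close to $\tfrac{3}{4}$, the red-shift and blue-shift estimates are considerably more delicate: one must close estimates for $\phi$, $D_v\phi$ and the gauge-dependent potential $A$ simultaneously, while ensuring that the slow decay is still strong enough for the flux to be integrable (so that $r$ stays bounded below in the non-emptiness step) yet quantitatively large enough, via the $L^2$ lower bound, to force mass inflation. Threading this quantitative needle --- keeping the charge nearly constant and the geometry near Reissner--Nordstr\"{o}m while the field decays only slowly --- is the crux of the argument, and is exactly where the estimates of \cite{Moi} are needed.
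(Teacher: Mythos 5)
First, a point of logic: the present paper does not prove this theorem at all --- it is quoted from \cite{Moi} (together with \cite{Moi4}) as background in section \ref{mywork} --- so your proposal can only be measured against the strategy of \cite{Moi} as it is described and used here. For the non-emptiness statement $\CH \neq \emptyset$ your outline is essentially the right one: a characteristic problem in double-null gauge with data on $\mathcal{H}^+$ and an ingoing cone, a bootstrap over red-shift/no-shift/blue-shift regions, gauge-invariant control of $\phi$, $D_v\phi$ and of the dynamical charge $Q$ through the transport equations, leading to a positive lower bound on $r$ along $\{v=+\infty\}$; the weak decay $s>\frac{3}{4}$ and the charge coupling are indeed where the new work relative to Dafermos lies. (One small correction: all estimates are closed in gauge-invariant quantities; no estimate on the potential $A$ is needed --- the paper explicitly never involves $A_\mu$ --- and ``closing estimates for $A$'' is not even a well-posed goal, being gauge-dependent.)

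The genuine gap is in your treatment of ``weakly singular''. You read this as mass inflation, $\rho \to +\infty$, and propose to obtain it by multiplying the exponential blue-shift degeneration of the lapse against the $L^2$-averaged lower bound \eqref{hyplowerbound} transported from the event horizon. That is precisely the step that is not available, and the present paper says so explicitly: under exactly these event-horizon assumptions, the classification of \cite{Moi4}, recalled here as Theorem \ref{classificationprevioustheorem}, yields only a \emph{dichotomy} --- either the Hawking mass blows up on late trapped cones, or $\CH$ is of \emph{static type}, i.e.\ an isometric copy of the Reissner--Nordstr\"{o}m Cauchy horizon on which $\varpi - M$, $Q-e$, $r-r_-$ all extend to zero and nothing blows up. Flux lower bounds on $\mathcal{H}^+$ do not propagate into the interior by any soft blue-shift amplification, because the charged (and possibly massive) field can oscillate and cancel; excluding the static scenario is the entire purpose of the rigidity arguments of section \ref{finalsection}, and even there it is excluded only for the closing-off question, not by proving mass inflation. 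Moreover, the statement you would need to prove is not the statement asserted: per the Remark immediately following the quoted theorem, the weak null singularity of \cite{Moi} means blow-up of \emph{some curvature component} (hence $C^2$-inextendibility), which in \cite{Moi} is obtained by propagating pointwise polynomial lower bounds on $|D_v\phi|_{|\mathcal{H}^+}$ to a blow-up of transversal derivative quantities at $\CH$, not by tracking the Hawking mass. So the non-emptiness half of your proposal follows the correct route, but the singularity half proves the wrong statement by a mechanism that cannot be justified under these hypotheses.
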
 \begin{rmk}
 In this context, the weak null singularity of $\CH$ is to be understood as a blow up of a curvature component.
\end{rmk} Under the same assumptions, it is also proven in \cite{Moi} that $\CH$ is $C^0$ extendible, in the charged massless case. Therefore, the $C^0$ version of Strong Cosmic Censorship (see section \ref{additionnal}) is also false in this more general setting. The same conclusion was reached for the massive and charged case by the author and Kehle \cite{Moi3Christoph}, exploiting a novel mechanism based on the oscillations of the perturbations.

\subsection{The $r=0$ singularity conjecture in charged gravitational collapse} \label{spacelikeconjsection}

Theorem \ref{roughversion} has consequences on space-time singularities: under some reasonable additional assumptions, one can prove that a $r=0$ singularity exists generically in the black hole interior. We discuss this question in the present section.

\subsubsection{First singularities}
We define the notion of spherically symmetric ``first singularities'', a concept introduced by Dafermos in \cite{MihalisSStrapped} and further formalized by Kommemi in \cite{Kommemi}. First singularities are the boundary points from which non-trivial components emanate. Thus, most of the investigation of the black hole interior relies on the precise understanding of those singularities.
\begin{defunb}
	Let  $p \in \overline{\mathcal{Q}^+}$: we say that $J^{-}(p) \subset \overline{\mathcal{Q}^+} $ is compactly generated if there exists a compact set $X \subset \mathcal{Q}^+$ such that $ J^{-}(p) \subset \Pi( \mathcal{D}^+_M(\Pi^{-1}(X)))  \cup J^{-}(X),$ where $\mathcal{D}^+_M(A)$ is the domain of dependence of $A$, and $\Pi: M \rightarrow \mathcal{Q}^+$ is the projection onto the Penrose diagram $ \mathcal{Q}^+$ (see the definition in Section~\ref{geometricframework}).
\end{defunb}
\begin{defn} [\cite{Kommemi}] \label{firstsing}
	With the conventions of Theorem \ref{Kommemi}, we say that $p \in \mathcal{B}^+$ is a first singularity if $J^-(p)$ is compactly generated  and if any compactly generated proper causal subset of $J^-(p)$ is of the form $J^-(q)$, $q \in \mathcal{Q}^+$. Then \begin{enumerate}
		\item If $p \in \mathcal{B}^+ -\{b_{\Gamma}\}$, $p$ is a first singularity if and only if there exists $q \in I^{-}(p) \cap \mathcal{Q}^+ \backslash  \{p\}$, such that $J^+(q) \cap J^{-}(p) \backslash \{p\} \subset  \mathcal{Q}^+$: we say that $p$ is a non-central first singularity,
		\item $b_{\Gamma}$ is a first singularity if and only if $\mathcal{S} \cup \mathcal{S}_{\Gamma}^{1} \cup \mathcal{CH}_{\Gamma} \cup\mathcal{S}_{\Gamma}^{2} \neq \emptyset$. In that case, we say $b_\Gamma$ is a central first singularity.
	\end{enumerate}

\end{defn}
Theorem \ref{Kommemi} then has an immediate application on the location of the first singularities:
\begin{corunb}[Corollary of Theorem \ref{Kommemi}, \cite{Kommemi}]
	If $\mathcal{S} \neq \emptyset$, there exists at least one non-central first singularity $p\in \mathcal{S}$ and $b_{\Gamma}$ is a central first singularity.
	Moreover, there are \textbf{no} non-central first singularities in $\mathcal{B}^+ - \mathcal{S}$: therefore, if $p$ is a non-central first singularity, then $r(p)=0$.
\end{corunb}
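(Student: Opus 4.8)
The plan is to treat the three assertions separately, using the characterisation of first singularities in Definition \ref{firstsing} together with the boundary structure of Theorem \ref{Kommemi}. I would fix a global double-null system $(u,v)$ on $\mathcal{Q}^+$ with both coordinates increasing towards the future, so that $I^-(p)=\{u<u_p,\ v<v_p\}$ and the causal diamond with tips $q\ll p$ is the rectangle $[u_q,u_p]\times[v_q,v_p]$. By Theorem \ref{Kommemi} the segments $\mathcal{S}^1_{\Gamma},\mathcal{CH}_{\Gamma},\mathcal{S}^2_{\Gamma}$ all lie on a single outgoing null ray $N_{\Gamma}=\{u=u_b\}$ issuing from $b_{\Gamma}=(u_b,v_b)$, while $\CH$ and $\mathcal{S}_{i^+}$ lie on a single ingoing null ray $N_{i^+}$ issuing from $i^+$; since $\mathcal{S}$ is achronal and disjoint from both rays, it lies in the open wedge $\{u>u_b\}$ between them. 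With this setup the claim that $b_{\Gamma}$ is a central first singularity is immediate: by Definition \ref{firstsing}(2) this holds precisely when $\mathcal{S}\cup\mathcal{S}^1_{\Gamma}\cup\mathcal{CH}_{\Gamma}\cup\mathcal{S}^2_{\Gamma}\neq\emptyset$, which is true because $\mathcal{S}\neq\emptyset$.

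To rule out non-central first singularities in $\mathcal{B}^+-\mathcal{S}$, I would first note that the asymptotic components $i^+,\mathcal{I}^+,i^0$ have non-compactly-generated causal past (it contains a non-compact portion of the event horizon or of $\Sigma$), so they fail Definition \ref{firstsing}. For any other point $p\in\mathcal{B}^+-\mathcal{S}-\{b_{\Gamma}\}$, $p$ lies in the relative interior of one of the null rays $N_{\Gamma},N_{i^+}$, so one past-directed null ray from $p$ remains inside $\mathcal{B}^+$ near $p$. Then for every $q\in I^-(p)\cap\mathcal{Q}^+$ the open set $I^+(q)$ is a neighbourhood of $p$ and hence contains points of that null ray arbitrarily close to $p$, i.e. points of $(J^+(q)\cap J^-(p)\setminus\{p\})\cap\mathcal{B}^+$; thus $J^+(q)\cap J^-(p)\setminus\{p\}\not\subset\mathcal{Q}^+$ for all such $q$, and $p$ is not a non-central first singularity. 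Since $b_{\Gamma}$ is excluded from the non-central case by definition, no non-central first singularity lies in $\mathcal{B}^+-\mathcal{S}$; as $r$ extends continuously to $0$ on $\mathcal{S}$ by Theorem \ref{Kommemi}(7), this also yields $r(p)=0$ for every non-central first singularity.

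For existence, I would exploit that, being achronal, $\mathcal{S}$ is a non-increasing graph $v=\mathsf{v}(u)$: if $p,p'\in\mathcal{S}$ with $p'\in J^-(p)$, then $u_{p'}\le u_p$ forces $v_{p'}\ge v_p$, so $p'=p$ unless $p,p'$ share a null sub-segment of $\mathcal{S}$. I would then select $p_*\in\mathcal{S}$ that is not the future portion of a null sub-segment (a strictly spacelike point, or the past endpoint of a maximal null sub-segment), which gives $J^-(p_*)\cap\mathcal{S}=\{p_*\}$. Because $u_{p_*}>u_b$, I may choose $q\in I^-(p_*)\cap\mathcal{Q}^+$ with $u_b<u_q<u_{p_*}$ and $q$ close to $p_*$; the diamond $[u_q,u_{p_*}]\times[v_q,v_{p_*}]$ then has $u\ge u_q>u_b$, hence misses $N_{\Gamma}$ and $b_{\Gamma}$, misses $N_{i^+}$ for $q$ close enough, and meets $\mathcal{S}$ only at $p_*$ (being contained in $J^-(p_*)$). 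Therefore $J^+(q)\cap J^-(p_*)\setminus\{p_*\}$ contains no ideal boundary point and lies in $\mathcal{Q}^+$, so $p_*$ is a non-central first singularity in $\mathcal{S}$.

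The main obstacle is the last step: one must guarantee that a causally minimal point $p_*$ genuinely belongs to $\mathcal{S}$ — rather than being only a limit point on $N_{\Gamma}$ or $N_{i^+}$, where the argument of the second paragraph would instead apply — and that a small causal diamond beneath it is free of ideal boundary. This is exactly where the hypotheses that $\mathcal{S}$ is achronal and disjoint from the null rays emanating from $b_{\Gamma}$ and $i^+$ enter: they confine all remaining singular components to $N_{\Gamma}\cup N_{i^+}$, while the non-increasing-graph property keeps the rest of $\mathcal{S}$ out of $J^-(p_*)$. The only genuinely delicate technical point is the control of null sub-segments of $\mathcal{S}$, needed to select $p_*$ with $J^-(p_*)\cap\mathcal{S}=\{p_*\}$ attained inside $\mathcal{S}$.
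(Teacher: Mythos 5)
Your second and third assertions are handled correctly: the statement about $b_{\Gamma}$ is indeed immediate from Definition \ref{firstsing}(2) once $\mathcal{S}\neq\emptyset$, and your argument that any $p$ on the null components $\CH\cup\mathcal{S}_{i^+}\cup\mathcal{S}^1_{\Gamma}\cup\mathcal{CH}_{\Gamma}\cup\mathcal{S}^2_{\Gamma}$ has boundary points of the same null ray in $J^{+}(q)\cap J^{-}(p)\setminus\{p\}$ for \emph{every} $q\in I^{-}(p)\cap\mathcal{Q}^+$ (openness of $I^{+}(q)$), while $i^+,\mathcal{I}^+,i^0$ fail compact generation, is sound; combined with item 7 of Theorem \ref{Kommemi} it gives $r(p)=0$ for any non-central first singularity. (For what it is worth, the paper offers no proof of this corollary — it is quoted from \cite{Kommemi} — so there is no in-paper argument to compare routes with.) However, your first assertion, the existence of a non-central first singularity \emph{inside} $\mathcal{S}$, is not actually proved: you reduce it to selecting $p_*\in\mathcal{S}$ with $J^{-}(p_*)\cap\mathcal{S}=\{p_*\}$, and then you explicitly flag, without resolving, the possibility that every candidate for $p_*$ is only a limit point of $\mathcal{S}$ rather than a point of $\mathcal{S}$ (e.g.\ if $\mathcal{S}$ were a single null segment open at its past end). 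Since this is precisely the content of the first assertion, the proof has a genuine gap at its only nontrivial step.

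Two remarks on closing it. First, your orientation is backwards: with the paper's conventions ($u$ increases toward the centre along ingoing rays, $i^+=(u_{-\infty},v_{\infty})$, $b_{\Gamma}=(u(b_{\Gamma}),v(b_{\Gamma}))$), one has $\mathcal{S}\subset\{u<u(b_{\Gamma})\}\cap\{v<v_{\infty}\}$, not $\{u>u_b\}$; consequently $J^{-}(p_*)$ for $p_*\in\mathcal{S}$ \emph{automatically} misses both null rays, $b_{\Gamma}$ and $i^+$, so your careful choice of $q$ with $u_b<u_q<u_{p_*}$ addresses a phantom difficulty — while in your flipped picture, where the rays lie to the past of $\mathcal{S}$, the delicate point you worry about (minimisers escaping onto $N_{\Gamma}$ or $N_{i^+}$) would be a real obstruction. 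Second, the missing lemma is exactly where the exhaustive decomposition of $\mathcal{B}^+$ in Theorem \ref{Kommemi} must be used: every component of $\mathcal{B}^+$ other than $\mathcal{S}$ lies at $u=u(b_{\Gamma})$, at $v=v_{\infty}$, or in the exterior region, so every limit point of $\mathcal{S}$ with $u<u(b_{\Gamma})$ and $v<v_{\infty}$ must itself belong to $\mathcal{S}$; that is, $\mathcal{S}$ is relatively closed in the open wedge. Granting this, fix $p_0\in\mathcal{S}$, let $v_*=\inf\{v:(u_{p_0},v)\in\mathcal{S}\}$ (attained in $\mathcal{S}$, since the limit point has $u_{p_0}<u(b_{\Gamma})$ and $v_*<v_{\infty}$), then $u_*=\inf\{u:(u,v_*)\in\mathcal{S}\}$ (attained likewise); achronality then forces $J^{-}((u_*,v_*))\cap\mathcal{S}=\{(u_*,v_*)\}$, because a point of $\mathcal{S}$ with $u=u_*$, $v<v_*$ would be chronologically related to $(u_{p_0},v_*)\in\mathcal{S}$ (or contradict the definition of $v_*$ if $u_*=u_{p_0}$), and points with $v=v_*$, $u<u_*$ are excluded by the definition of $u_*$. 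Without this closedness argument — which is the idea your proposal is missing — the selection of $p_*$, and hence the first claim of the corollary, does not go through.
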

In view of Theorem \ref{Kommemi}, one can also define a reasonable notion of space-like portion of the boundary: \begin{defn} \label{spacelikedef}
	Let $\mathcal{S}' \subset \mathcal{S}$. We say that $\mathcal{S}'$ is space-like if every $p\in \mathcal{S}'$ is a first singularity.
	
\end{defn}
\begin{rmk}
	Notice, even if $\mathcal{S}$ is space-like in the sense of Definition \ref{spacelikedef}, it is not clear a priori whether one can, from $\mathcal{S}$, construct and attach a space-like boundary to the $3+1$ space-time $\mathcal{Q}^+ \times_r \mathbb{S}^2$, as subtle considerations may be important.
\end{rmk}

For the uncharged model of Christodoulou, i.e.\ \eqref{1.1}, \eqref{2.1}, \eqref{3.1}, \eqref{4.1}, \eqref{5.1} in the special case $F_{\mu \nu } \equiv 0$, $m^2=0$ the analysis of Christodoulou \cite{Christo1,Christo2,Christo3} leading to Theorem \ref{Christotheorem} also impacts the structure of first singularities: \begin{propunb}[First singularities for Einstein-scalar-field, Christodoulou \cite{Christo1}, \cite{Christo2}, \cite{Christo3}, Dafermos \cite{MihalisSStrapped}]
	Among all the data admissible from Theorem \ref{Christotheorem}, there exists a generic sub-class for which if the maximal future development has $\mathcal{Q}^+ \cap J^{-}(\mathcal{I}^+)\neq \emptyset$, then every point $p\in \mathcal{B}^+$ is a first singularity. In particular, $\mathcal{S}$ is space-like.
	
\end{propunb}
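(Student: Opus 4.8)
The plan is to assemble the previously established emptiness results with the formal definition of a first singularity (Definition \ref{firstsing}), reducing the whole statement to a single geometric assertion: that the $r=0$ boundary component $\mathcal{S}$ carries no null generator, i.e.\ that it is genuinely spacelike rather than merely achronal. First I would fix the generic sub-class furnished by Theorem \ref{Christotheorem} and restrict to data for which $\mathcal{Q}^+\cap J^-(\mathcal{I}^+)\neq\emptyset$ and a black hole forms, so that $\mathcal{S}\neq\emptyset$. Parts (1)--(3) of that theorem give $\CH=\mathcal{S}^1_{\Gamma}=\mathcal{CH}_{\Gamma}=\mathcal{S}^2_{\Gamma}=\emptyset$, and the assertion that $\mathcal{S}$ is the only non-trivial component additionally forces the ingoing null component $\mathcal{S}_{i^+}$ emanating from $i^+$ to be empty. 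Hence the singular part of $\mathcal{B}^+$ collapses to $\mathcal{S}\cup\{b_{\Gamma}\}$, and the content of the proposition concerns only these.

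Next I would unwind Definition \ref{firstsing} for a point $p\in\mathcal{S}$. By the Corollary of Theorem \ref{Kommemi}, $\mathcal{S}$ is the only component able to contain non-central first singularities, and any such $p$ automatically has $r(p)=0$. The criterion is the existence of $q\in I^-(p)\cap\mathcal{Q}^+\setminus\{p\}$ with $J^+(q)\cap J^-(p)\setminus\{p\}\subset\mathcal{Q}^+$. Since all other singular components are empty, the sole way this can fail is if $\mathcal{S}$ possesses a null generator with future endpoint $p$: in that case every backward causal triangle based at $p$ meets points of $\mathcal{S}$ strictly to the past of $p$, which lie on $\mathcal{B}^+$ rather than in $\mathcal{Q}^+$. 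Conversely, if $\mathcal{S}$ is spacelike near $p$ then $\mathcal{S}\cap J^-(p)=\{p\}$, and a sufficiently small triangle lies in the regular interior. Thus ``every $p\in\mathcal{S}$ is a first singularity'' is equivalent to ``$\mathcal{S}$ contains no null segment'', i.e.\ to $\mathcal{S}$ being spacelike in the sense of Definition \ref{spacelikedef}. I stress that the soft Kommemi structure only yields that $\mathcal{S}$ is achronal, and an achronal curve may still contain null pieces; likewise the Corollary of Theorem \ref{Kommemi} only produces \emph{one} first singularity in $\mathcal{S}$.

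The heart of the argument, and the step I expect to be the main obstacle, is therefore the upgrade from achronal to spacelike on the generic sub-class, i.e.\ excluding null generators of $\mathcal{S}$. This is not a consequence of the a priori (dominant-energy-condition) structure and requires Christodoulou's quantitative $BV$ analysis \cite{Christo1}, \cite{Christo2}, \cite{Christo3}. The mechanism is that a null piece of $\{r=0\}$ is precisely a locally naked singularity, sustainable only through a fine-tuned, essentially self-similar focusing of the scalar field; Christodoulou's instability-of-naked-singularities theorem \cite{Christo3} removes exactly such configurations after passage to a $BV$-generic sub-class, relying on the monotonicity of $r$ and of the Hawking mass in the trapped region together with sharp control of $\phi$ as one approaches the boundary. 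The delicate point is that this genericity must be compatible with the sub-class already selected in Theorem \ref{Christotheorem}, so that both the emptiness statements and the spacelike character hold simultaneously on one common generic set.

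Granting this, every point of $\mathcal{S}$ is a non-central first singularity, so $\mathcal{S}$ is spacelike by Definition \ref{spacelikedef}; and since $\mathcal{S}\neq\emptyset$, the Corollary of Theorem \ref{Kommemi} (equivalently item (2) of Definition \ref{firstsing}) gives that $b_{\Gamma}$ is a central first singularity. As all remaining components of the singular boundary are empty, every point $p$ of the singular part of $\mathcal{B}^+$ is a first singularity, which is the assertion of the proposition, the ``in particular'' being immediate from Definition \ref{spacelikedef}.
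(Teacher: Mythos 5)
Your reduction of the proposition to the exclusion of null segments of $\mathcal{S}$ (given the emptiness statements of Theorem \ref{Christotheorem}) is correct and carefully done, but the step you yourself flag as the heart of the argument is resolved by the wrong mechanism, and the theorem you invoke cannot do the job. A null segment lying in the interior of $\mathcal{S}$ is \emph{not} a locally naked singularity in any sense covered by \cite{Christo3}: since $\mathcal{S}$ is part of the future boundary of the globally hyperbolic development, no point of $\mathcal{Q}^+$ lies in the causal future of such a segment (a spacetime point in the chronological future of a boundary point would have a causal past leaving $\mathcal{Q}^+$), so such a configuration is invisible to every observer, does not threaten the completeness of $\mathcal{I}^+$, and is perfectly compatible with Christodoulou's generic picture of trapped-surface formation. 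Christodoulou's instability theorem concerns singularities emanating from the center, i.e.\ precisely the components $\mathcal{S}^1_{\Gamma}$, $\mathcal{CH}_{\Gamma}$, which is what Theorem \ref{Christotheorem} already hands you; it says nothing about the causal character of the non-central part of $\mathcal{S}$. So after your reduction the key step remains unproven, and passing to a further generic sub-class (your ``compatibility'' worry) is not the right fix, because the exclusion of null pieces of the non-central $\mathcal{S}$ is in fact \emph{unconditional} in this model.

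The actual mechanism --- and the reason the proposition is credited to Dafermos \cite{MihalisSStrapped} alongside Christodoulou --- is the special monotonicity of the uncharged massless model combined with the extension principle, not genericity. For $F_{\mu\nu}\equiv 0$, $m^2=0$, equation \eqref{Radius3} reads $\partial_u\partial_v(r^2/2)=-\Omega^2/4<0$, with no dangerous charge term. If $\mathcal{S}$ contained an ingoing null segment, integrating this identity over a rectangle $[u',u'']\times[v_*,v]$ abutting the segment and letting $v$ tend to it (so that $r\rightarrow 0$ at both upper corners) yields $r(u',v_*)<r(u'',v_*)$, contradicting $\partial_u r<0$. For an outgoing null segment, its past endpoint is a non-central first singularity, which by the extension-principle argument of \cite{MihalisSStrapped} must be preceded by trapped points; trapping then propagates to the future in both null directions (again by the uncharged monotonicity together with Raychaudhuri), so the strip adjacent to the segment is trapped, and the same integration contradicts $\partial_v r<0$ there. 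Note that this is exactly the argument that collapses for the charged model of the present paper, where $\partial_u\partial_v(r^2/2)=-\frac{\Omega^2}{4}\left(1-\frac{Q^2}{r^2}\right)$ loses its sign as $r\rightarrow 0$; locating the difficulty in genericity, as you do, misses where the real obstruction lies both in the uncharged case (where there is none) and in the charged case (where it is analytic, not generic).
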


\subsubsection{The $r=0$ singularity conjecture}

Now, we return to the charged gravitational collapse case, i.e.\ the full system \eqref{1.1}, \eqref{2.1}, \eqref{3.1}, \eqref{4.1}, \eqref{5.1} where $F_{\mu \nu} \neq 0$, $q_0 \neq 0$ and we mention important conjectures formulated in \cite{Kommemi}. In view of Definition \ref{firstsing}, our main result directly implies: \begin{cor}
	Under the assumptions of Theorem \ref{roughversion} or Theorem \ref{rough2}, $b_{\Gamma}$ is a central first singularity, therefore the set of first singularities is non-empty.
\end{cor}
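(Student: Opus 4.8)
The plan is to observe that this corollary is a pure consequence of combining the conclusions of Theorem \ref{roughversion} and Theorem \ref{rough2} with the characterization of central first singularities recorded in part (2) of Definition \ref{firstsing}. First I would invoke whichever of the two theorems is applicable under the hypotheses at hand; each yields the identical conclusion
$$\mathcal{S}^1_{\Gamma} \cup \mathcal{CH}_{\Gamma} \cup \mathcal{S}^2_{\Gamma} \cup \mathcal{S} \neq \emptyset.$$
Since this union is merely a reordering of the set $\mathcal{S} \cup \mathcal{S}_{\Gamma}^{1} \cup \mathcal{CH}_{\Gamma} \cup \mathcal{S}_{\Gamma}^{2}$ appearing in Definition \ref{firstsing}, the non-emptiness of the one is exactly the non-emptiness of the other.

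Next I would appeal directly to part (2) of Definition \ref{firstsing}, which asserts that $b_{\Gamma}$ is a (central) first singularity \emph{if and only if} $\mathcal{S} \cup \mathcal{S}_{\Gamma}^{1} \cup \mathcal{CH}_{\Gamma} \cup \mathcal{S}_{\Gamma}^{2} \neq \emptyset$. Having just established the right-hand side of this equivalence, I conclude that $b_{\Gamma}$ is a central first singularity. The set of first singularities then contains $b_{\Gamma}$ and is therefore non-empty, which is the remaining assertion.

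I expect no genuine obstacle here: all the analytic content — the statement that a weakly singular Cauchy horizon cannot close off the space-time at $b_{\Gamma}$ — is already packaged into Theorem \ref{roughversion} and Theorem \ref{rough2}, and the passage to this corollary is an unraveling of the definition of a central first singularity. The only point requiring the slightest care is to confirm that the union appearing in the theorems' conclusions is literally the same set, up to the order of its constituents, as the one in the ``if and only if'' criterion of Definition \ref{firstsing}; this is immediate from the conventions fixed in Theorem \ref{Kommemi}.
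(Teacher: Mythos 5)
Your proposal is correct and is precisely the paper's argument: the paper derives this corollary directly from the conclusion $\mathcal{S}^1_{\Gamma} \cup \mathcal{CH}_{\Gamma} \cup \mathcal{S}^2_{\Gamma} \cup \mathcal{S} \neq \emptyset$ of Theorem \ref{roughversion} or Theorem \ref{rough2} together with the ``if and only if'' criterion for $b_{\Gamma}$ in part (2) of Definition \ref{firstsing}. There is nothing missing; the observation that the two unions coincide as sets is the only point to check, and it is immediate.
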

Now, we want to investigate the behavior of the area-radius $r$ at the interior boundary $\mathcal{B}^+$. We introduce an important conjecture stating the existence of a crushing singularity $\mathcal{S}=\{r=0\}$.		\begin{conjecture}[$r=0$ singularity conjecture, as formulated in \cite{Kommemi}] \label{spacelikeconj}
	Among all the data admissible from Theorem \ref{Kommemi}, there exists a generic sub-class for which if the maximal future development has $\mathcal{Q}^+ \cap J^{-}(\mathcal{I}^+)\neq \emptyset$, then the Penrose diagram is given by Figure \ref{Fig2} i.e.\ $\mathcal{S} \neq \emptyset$, $\CH \neq \emptyset$ and $\mathcal{S}_{\Gamma}^{1}= \mathcal{CH}_{\Gamma} =\mathcal{S}_{\Gamma}^{2} =\emptyset$.
\end{conjecture}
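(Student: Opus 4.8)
The plan is to obtain the conjecture's conclusion as a logical consequence of the breakdown theorem (Theorem \ref{roughversion}, or its event-horizon version Theorem \ref{rough2}), the non-emptiness of $\CH$ established in \cite{Moi}, and Kommemi's a priori classification (Theorem \ref{Kommemi}), once the generic sub-class and the exclusion of locally naked singularities are pinned down. First I would fix the generic sub-class to consist of those data of Theorem \ref{Kommemi} whose exterior settles quantitatively to a sub-extremal, non-zero-charge Reissner--Nordstr\"{o}m background — so that \cite{Moi} gives $\CH \neq \emptyset$ and Theorem \ref{rough2} applies — intersected with the sub-class on which no locally naked singularity forms, i.e.\ on which no outgoing null ray emanates from the center endpoint $b_\Gamma$. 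The genericity of the first condition is supplied by the conjectured decay rates of Conjecture \ref{chargedconj} and Conjecture \ref{conjecturemassive} together with the exterior stability results; the genericity of the second is a weak-cosmic-censorship-type statement in this model and must be carried as an additional hypothesis.

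Granting this, the deduction proceeds in three steps. First, non-emptiness of $\CH$ already furnishes the ingoing null boundary component of Figure \ref{Fig2}. Second, Theorem \ref{rough2} (which internally resolves the \cite{Moi4} dichotomy: mass blow-up on late trapped cones feeds Theorem \ref{roughversion}, while the rigid Reissner--Nordstr\"{o}m branch is handled by the separate focusing/rigidity argument) yields $\mathcal{S}^1_{\Gamma} \cup \mathcal{CH}_{\Gamma} \cup \mathcal{S}^2_{\Gamma} \cup \mathcal{S} \neq \emptyset$, so at least one central boundary component is non-trivial. Third, I would identify $\mathcal{S}^1_{\Gamma}, \mathcal{CH}_{\Gamma}, \mathcal{S}^2_{\Gamma}$ with exactly the outgoing null segments emanating from $b_\Gamma$ in Theorem \ref{Kommemi}; their non-emptiness is precisely the locally naked singularity excluded by the generic-class assumption, so all three vanish and the non-empty union forces $\mathcal{S} \neq \emptyset$. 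Feeding $\CH \neq \emptyset$, $\mathcal{S} \neq \emptyset$, and $\mathcal{S}^1_{\Gamma} = \mathcal{CH}_{\Gamma} = \mathcal{S}^2_{\Gamma} = \emptyset$ back into the classification collapses the general diagram of Figure \ref{Fig1} to exactly Figure \ref{Fig2}.

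The genuinely hard input, which I would not reprove here but import, is the breakdown theorem itself: the step $\mathcal{S}^1_{\Gamma} \cup \mathcal{CH}_{\Gamma} \cup \mathcal{S}^2_{\Gamma} \cup \mathcal{S} \neq \emptyset$ rests on quantitative control of the dynamical Maxwell field $Q$ near the center $\Gamma$, where any purely soft argument fails — as the uncharged Dafermos model demonstrates. The second delicate point is conceptual rather than computational: excluding the locally naked components $\mathcal{S}^i_\Gamma$ lies outside the present soft framework and enters as a genericity hypothesis, so the conjecture is established only conditionally on it. I expect the main obstacle in a self-contained proof to be the former, namely propagating the mass blow-up and the boundedness of $\phi$ and $Q$ from a single outgoing cone down to a neighborhood of $b_\Gamma$ while retaining enough quantitative control to force $r \to 0$ on a non-trivial central component — precisely the regime in which the center plays its decisive role.
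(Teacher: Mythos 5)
Your proposal is correct and takes essentially the same route as the paper: there, Conjecture \ref{spacelikeconj} is established only conditionally, in Theorem \ref{spaceliketheorem}, by assuming Conjecture \ref{trappedsurfaceconj} (whose second clause is precisely your no-locally-naked-singularity hypothesis $\mathcal{S}^1_{\Gamma}=\mathcal{CH}_{\Gamma}=\mathcal{S}^2_{\Gamma}=\emptyset$) and combining it with the breakdown conclusion $\mathcal{S}^1_{\Gamma}\cup\mathcal{CH}_{\Gamma}\cup\mathcal{S}^2_{\Gamma}\cup\mathcal{S}\neq\emptyset$ of Theorem \ref{roughversion}/Theorem \ref{rough2} to force $\mathcal{S}\neq\emptyset$, with $\CH\neq\emptyset$ supplied by \cite{Moi} and the diagram of Figure \ref{Fig2} then read off from Theorem \ref{Kommemi}. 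Your identification of the two inputs — the imported breakdown theorem as the hard quantitative ingredient, and the exclusion of locally naked singularities as an irreducible genericity hypothesis — matches the paper's logic exactly.
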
 The main content of the conjecture is the statement $\mathcal{S} \neq \emptyset$, which implies that there exists a non-trivial boundary component where $r=0$. Additionally, $\mathcal{S} \neq \emptyset$ implies the existence of a (non-central) first singularity, by Definition \ref{firstsing}. The additional assumption we need to prove Conjecture \ref{spacelikeconj} in the present paper is formulated as another conjecture, which has important connections with the Weak Cosmic Censorship conjecture (see section \ref{additionnal}):
\begin{conjecture}[Spherical trapped surface conjecture, formulated in \cite{Kommemi}] \label{trappedsurfaceconj}
	Among all the data admissible from Theorem \ref{Kommemi}, there exists a generic sub-class for which, if the maximal future development has $\mathcal{Q}^+  -\color{black} J^{-}(\mathcal{I}^+) \neq \emptyset$, then the apparent horizon $\A$ has a limit point on $b_{\Gamma}$. Moreover, if that is the case then $\mathcal{S}_{\Gamma}^{1}= \mathcal{CH}_{\Gamma} =\mathcal{S}_{\Gamma}^{2}=\emptyset$.
\end{conjecture}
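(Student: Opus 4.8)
The statement splits into two logically independent halves: the \emph{genericity} assertion that the apparent horizon $\A$ accumulates at the central endpoint $b_{\Gamma}$, and the \emph{structural implication} that, once this occurs, the three outgoing null boundary segments $\mathcal{S}^1_{\Gamma}$, $\mathcal{CH}_{\Gamma}$, $\mathcal{S}^2_{\Gamma}$ emanating from $b_{\Gamma}$ must all be empty. The plan is to dispatch the implication first by soft monotonicity arguments in the spirit of Kommemi's framework, and then to indicate why the genericity half is the genuine obstacle, requiring the quantitative control of the Maxwell and scalar fields near the center that underlies Theorem \ref{roughversion}.

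For the implication I would work in a double-null gauge adapted to the center, with $\Gamma=\{r=0\}$ and $b_{\Gamma}$ its future corner, and exploit the two fundamental sign facts: the absence of anti-trapped surfaces propagates to $\nu:=\partial_u r<0$ throughout $\Q$, while the Raychaudhuri equation gives that $\lambda/\Omega^2$ is non-increasing along each outgoing cone (where $\lambda:=\partial_v r$). Since the regular-center boundary conditions force $\lambda=-\nu>0$ on $\Gamma$, each outgoing cone starts regular and, by this monotonicity, can cross the apparent horizon $\A=\{\lambda=0\}$ at most once, becoming and staying trapped ($\lambda<0$) thereafter. Thus there is a regular wedge between $\Gamma$ and $\A$, and the hypothesis that $\A$ accumulates at $b_{\Gamma}$ says exactly that this wedge pinches to zero width there, so that the outgoing cone $C_{out}$ issuing from $b_{\Gamma}$ lies in the closure of $\T$, with $\lambda\le 0$ (and strictly $\lambda<0$ just to its future) emanating from $b_{\Gamma}$.

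I would then combine $\nu<0$ with $\lambda<0$: in a trapped neighborhood both null derivatives of $r$ are strictly negative, so $r$ decreases strictly along every future-directed null segment and the emerging boundary $\{r=0\}$ is forced to be achronal. This is incompatible with all three candidate segments at once. A putative $\mathcal{CH}_{\Gamma}$ would require $r$ to extend to a strictly positive limit, but $\nu,\lambda<0$ drive $r\searrow 0$ along $C_{out}$; and the null $\{r=0\}$ segments $\mathcal{S}^1_{\Gamma}$, $\mathcal{S}^2_{\Gamma}$ would require $\lambda\equiv 0$ along their full length, i.e.\ $\A$ coinciding with the null cone $C_{out}$, which the strict inequality $\lambda<0$ excludes; the achronal $\{r=0\}$ boundary produced instead is precisely $\mathcal{S}$. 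Hence $\mathcal{S}^1_{\Gamma}=\mathcal{CH}_{\Gamma}=\mathcal{S}^2_{\Gamma}=\emptyset$, and, read together with the conclusion $\mathcal{S}^1_{\Gamma}\cup\mathcal{CH}_{\Gamma}\cup\mathcal{S}^2_{\Gamma}\cup\mathcal{S}\neq\emptyset$ of Theorem \ref{roughversion}, this forces $\mathcal{S}\neq\emptyset$ --- the content of the $r=0$ singularity conjecture.

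The hard part will be the genericity half, which no soft argument reaches: one must show that, for a generic sub-class, the trapped region really does propagate all the way down to $b_{\Gamma}$ rather than leaving a regular channel open. I would attack this with the same quantitative ingredients as Theorem \ref{roughversion}: the trapped region guaranteed near $\CH$ by mass inflation must be shown to survive down to the center, which demands lower bounds forcing $\lambda<0$ up to $\Gamma$ and hinges on delicate estimates for the dynamical charge $Q$ and the field $\phi$ near $\Gamma$. Controlling the Maxwell field near the center is exactly the step I expect to be the main obstacle, and it is the reason this half remains conjectural; one also has to exclude the degenerate, non-generic configuration in which $\A$ merely osculates $C_{out}$ without a trapped interior opening up, which is the loophole that would permit the null segments to persist in principle.
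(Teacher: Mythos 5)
There is no ``paper's own proof'' to compare against here: the statement you were given is an \emph{open conjecture} (formulated by Kommemi \cite{Kommemi}), and the paper never proves any part of it. It is used only as a hypothesis --- Theorem \ref{spaceliketheorem} reads ``Assume Conjecture \ref{trappedsurfaceconj}; then Conjecture \ref{spacelikeconj} is true'' --- and the paper explicitly describes proving it as ``the last step'' remaining in spherical collapse, one that would imply Weak Cosmic Censorship (Conjecture \ref{WCC}); its only known resolution is Christodoulou's in the uncharged case \cite{Christo1}, \cite{Christo2}, \cite{Christo3}, which is a deep body of work, not a soft argument. You correctly leave the genericity half as conjectural, but your claim to ``dispatch'' the structural implication (``if $\A$ has a limit point on $b_\Gamma$ then $\mathcal{S}^1_\Gamma=\mathcal{CH}_\Gamma=\mathcal{S}^2_\Gamma=\emptyset$'') by monotonicity alone does not hold up, and the paper nowhere derives this implication either --- it is assumed as part of the conjecture.

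Concretely, three steps fail. First, the hypothesis gives only a \emph{sequence} $(u_n,v_n)\in\A$ converging to $b_\Gamma$; your ``pinching of the regular wedge'', by which the limiting outgoing null line would lie in $\overline{\T}$ with $\lambda<0$ just to its future, does not follow: Raychaudhuri (\ref{RaychV}) propagates trappedness only along outgoing cones (in $v$), never in the ingoing ($u$) direction, and the a priori structure of $\A$ near $b_\Gamma$ is wild (Figure \ref{Fig5}). Second, to exclude $\mathcal{CH}_\Gamma$ you need $r\to 0$ along the sequence $(u_n,v_n)$: monotonicity does give $r\le r(u_n,v_n)$ on $J^+((u_n,v_n))$, but nothing soft forces $r(u_n,v_n)\to 0$, because neither $r$ nor $\rho$ is known a priori to extend continuously to the corner $b_\Gamma$ --- ``$r$ decreases along $C_{out}$'' does not mean it decreases \emph{to zero}. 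Third, and most seriously, your exclusion of $\mathcal{S}^1_\Gamma,\mathcal{S}^2_\Gamma$ rests on the claim that such segments ``would require $\lambda\equiv 0$ along their full length''; this mischaracterizes Theorem \ref{Kommemi}: these boundary components are defined by $r$ extending \emph{continuously to zero} on them, which is entirely compatible with being approached through a trapped region where $\lambda<0$ and $r\searrow 0$, so no contradiction with strict trappedness arises. Such null $r=0$ segments are precisely the locally naked singularities realized in Christodoulou's explicit examples \cite{Christonaked}, and ruling them out --- even granting nearby trapped surfaces --- is the hard content tied to Weak Cosmic Censorship, not a two-line consequence of $\nu<0$, $\lambda<0$; your own closing caveat about the ``osculating'' configuration concedes exactly this loophole and contradicts your opening claim. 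Finally, your proposed attack on the genericity half (propagating the trapped region ``guaranteed near $\CH$ by mass inflation'' down to $b_\Gamma$) conflicts with the paper's own causality remark: the geometry near $b_\Gamma$ cannot be influenced by the late-time behavior on the event horizon that produces mass inflation, which is precisely why the paper states that the machinery of Theorem \ref{roughversion} cannot, even in principle, reach Conjecture \ref{trappedsurfaceconj}.
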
 \begin{rmk}
Note that the statement $\mathcal{S}_{\Gamma}^{1}= \mathcal{CH}_{\Gamma} =\mathcal{S}_{\Gamma}^{2}=\emptyset$ corresponds to the absence of ``locally naked'' singularity emanating from the center $b_{\Gamma}$, a slightly stronger statement than Weak Cosmic Censorship (Conjecture \ref{WCC}).
\end{rmk}

Notice that Conjecture \ref{trappedsurfaceconj} is related to the behavior of space-time in the vicinity of $b_{\Gamma}$, therefore, by causality, this behavior cannot be influenced by the late time tail on the event horizon\footnote{Except, of course, in the case where $\CH$ closes off the spacetime as in Figure~\ref{Fig3}, but it is precisely the case we disprove in Theorem~\ref{roughversion}}. In contrast, our results start from data on a fixed outgoing trapped cone, itself related to asymptotic behavior on the event horizon ultimately responsible for the existence of a weak null singularity. In fact, by the same principle, the result of \cite{Moi} and Theorem \ref{roughversion} are the only emptiness/non-emptiness statements which can be non trivially obtained from the late time behavior on an outgoing cone, as all the other possible statements would result from purely local considerations.

In the uncharged case $F_{\mu \nu}=m^2=0$, Christodoulou proved the validity of Conjecture \ref{trappedsurfaceconj}, which directly implies Statement \ref{trappedsurfaceconsequence} of Theorem \ref{Christotheorem} and is also the key ingredient of his proof of the Weak Cosmic Censorship Conjecture. 

One immediate consequence of Theorem \ref{roughversion} is the following fact, which was not previously recorded:

\begin{theo} \label{spaceliketheorem}
	Assume Conjecture \ref{trappedsurfaceconj}. Then Conjecture \ref{spacelikeconj} is true.
\end{theo}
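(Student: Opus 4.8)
The plan is to derive the three conclusions of Conjecture \ref{spacelikeconj} — namely $\CH \neq \emptyset$, $\mathcal{S} \neq \emptyset$, and $\mathcal{S}^1_{\Gamma} = \mathcal{CH}_{\Gamma} = \mathcal{S}^2_{\Gamma} = \emptyset$ — by combining the breakdown result Theorem \ref{rough2} with the assumed Conjecture \ref{trappedsurfaceconj}, rather than by any new analysis. I would first fix the generic sub-class: take it to be the intersection of the (conjecturally generic) sub-class on which the black hole settles quantitatively towards a sub-extremal Reissner--Nordström background — the standing hypothesis under which both \cite{Moi} and Theorem \ref{rough2} operate — with the generic sub-class supplied by Conjecture \ref{trappedsurfaceconj}. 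Since a finite intersection of generic sub-classes is again generic, this is a legitimate witness for the generic sub-class whose existence Conjecture \ref{spacelikeconj} asserts, and I would restrict throughout to maximal developments with $\mathcal{Q}^+ \cap J^{-}(\mathcal{I}^+) \neq \emptyset$.

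On this sub-class, the two non-emptiness statements are immediate imports. The non-emptiness of the Cauchy horizon, $\CH \neq \emptyset$, follows at once from the theorem of \cite{Moi}, since the black hole settles to a sub-extremal Reissner--Nordström background. For the remaining boundary components I would invoke Theorem \ref{rough2}, whose hypothesis (quantitative convergence of the exterior to a sub-extremal Reissner--Nordström background) holds on our sub-class by construction; it yields $\mathcal{S}^1_{\Gamma} \cup \mathcal{CH}_{\Gamma} \cup \mathcal{S}^2_{\Gamma} \cup \mathcal{S} \neq \emptyset$.

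The final step is a purely set-theoretic combination. Conjecture \ref{trappedsurfaceconj}, assumed in the hypothesis of the theorem, asserts (on its generic sub-class, hence on ours) that the apparent horizon $\A$ limits onto $b_{\Gamma}$ and, consequently, that $\mathcal{S}^1_{\Gamma} = \mathcal{CH}_{\Gamma} = \mathcal{S}^2_{\Gamma} = \emptyset$. Substituting this into the non-emptiness $\mathcal{S}^1_{\Gamma} \cup \mathcal{CH}_{\Gamma} \cup \mathcal{S}^2_{\Gamma} \cup \mathcal{S} \neq \emptyset$ of the previous step forces $\mathcal{S} \neq \emptyset$. Together with $\CH \neq \emptyset$ and the vanishing of $\mathcal{S}^1_{\Gamma}$, $\mathcal{CH}_{\Gamma}$, $\mathcal{S}^2_{\Gamma}$, this is exactly the Penrose diagram of Figure \ref{Fig2}, which is the content of Conjecture \ref{spacelikeconj}.

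The substance of the argument is entirely front-loaded into Theorem \ref{rough2} (and through it into Theorem \ref{roughversion} and the estimates of \cite{Moi4}): the present statement is a corollary, and the only point demanding genuine care is the genericity bookkeeping — verifying that the generic class of Conjecture \ref{trappedsurfaceconj} is compatible with, and may be intersected against, the class on which the settling hypothesis required by \cite{Moi} and Theorem \ref{rough2} is available. I do not expect any obstacle beyond making this compatibility precise.
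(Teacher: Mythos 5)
Your proposal is correct and follows essentially the same route as the paper: the paper's own (one-sentence) proof is exactly this combination of the breakdown theorem (Theorem \ref{roughversion}/\ref{rough2}, giving $\mathcal{S}^1_{\Gamma} \cup \mathcal{CH}_{\Gamma} \cup \mathcal{S}^2_{\Gamma} \cup \mathcal{S} \neq \emptyset$) with Conjecture \ref{trappedsurfaceconj} (emptying the three central components) to force $\mathcal{S} \neq \emptyset$, together with the non-emptiness $\CH \neq \emptyset$ from \cite{Moi}. Your treatment is in fact slightly more careful than the paper's, since you make explicit the genericity bookkeeping (intersecting the generic class of Conjecture \ref{trappedsurfaceconj} with the conjecturally generic class on which the exterior settles to sub-extremal Reissner--Nordstr\"{o}m) that the paper leaves implicit.
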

The breakdown of weak null singularities -- a global property -- combined with Conjecture \ref{trappedsurfaceconj} -- a statement on $b_{\Gamma}$ -- implies there exists a non-central first singularity $p\in \mathcal{S}$, and that $b_{\Gamma}$ is a central first singularity. Therefore, the last step to obtain a full geometric understanding of spherical collapse is to prove Conjecture \ref{trappedsurfaceconj}, which would also imply the instability of naked singularities, a statement known as the Weak Cosmic Censorship conjecture, see section \ref{additionnal}.

\subsection{Additional related questions in charged gravitational collapse} \label{additionnal}

In this section, we give a brief review of the past works, conjectures and open problems related to the black hole interior. Ironically, the most prominent subsisting problem in gravitational collapse is related to the existence of singularities which form in the absence of a black hole. Such ``naked'' singularities are conjectured to be non generic. This statement -- the Weak Cosmic Censorship Conjecture -- can be formulated in modern terms as such: \begin{conjecture}[Weak Cosmic Censorship Conjecture for the Einstein--Maxwell--Klein--Gordon--Model] \label{WCC}
	Among all the data admissible from Theorem \ref{Kommemi}, there exists a generic sub-class for which $\mathcal{I}^+$ is complete.
\end{conjecture}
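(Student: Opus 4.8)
The plan is to follow the strategy of Christodoulou's proof of Weak Cosmic Censorship for the Einstein-scalar-field system, adapted to the dynamical Maxwell field of the Einstein--Maxwell--Klein--Gordon model. I first reduce completeness of $\mathcal{I}^+$ to a statement localized at the center endpoint $b_{\Gamma}$. Recall that $\mathcal{I}^+$ is incomplete precisely when there exists a (globally) naked singularity, i.e.\ a first singularity $p \in \mathcal{B}^+$ with $J^+(p) \cap \mathcal{I}^+ \neq \emptyset$, so that light from $p$ escapes to infinity and truncates the affine length of $\mathcal{I}^+$. By the boundary classification of Theorem \ref{Kommemi} and the fact that the event horizon is $\mathcal{H}^+ = \overline{J^-(\mathcal{I}^+)} \backslash J^-(\mathcal{I}^+)$, such a $p$ must be causally connected to $\mathcal{I}^+$; as in the uncharged theory, the \emph{first} such singularity of the collapse is the central endpoint $b_{\Gamma}$, emitting an outgoing null ray in $\mathcal{S}^1_{\Gamma} \cup \mathcal{CH}_{\Gamma} \cup \mathcal{S}^2_{\Gamma}$ that reaches $\mathcal{I}^+$ before being captured. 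This is exactly why the hypothesis $\mathcal{S}^1_{\Gamma} = \mathcal{CH}_{\Gamma} = \mathcal{S}^2_{\Gamma} = \emptyset$ of Conjecture \ref{trappedsurfaceconj} (absence of a ``locally naked'' singularity) is slightly stronger than Conjecture \ref{WCC}. Hence it suffices to show that, for a generic sub-class of data, no outgoing null ray escaping to $\mathcal{I}^+$ emanates from $b_{\Gamma}$.

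The genericity would then be established by proving that naked singularities are \emph{unstable}, i.e.\ destroyed by generic perturbations of the characteristic initial data. Concretely, I would first derive a quantitative description of an arbitrary solution possessing a globally naked singularity at $b_{\Gamma}$, in a double-null neighborhood of $b_{\Gamma}$: controlling $r$, the Hawking mass $\rho$, the scalar field $\phi$ and -- crucially -- the charge $Q$, using the boundary conditions at the regular center (where $Q$ vanishes to the appropriate order so that $F_{\mu\nu}$ stays regular) together with the null structure of \eqref{4.1} and \eqref{5.1}. I would then introduce a suitable finite-parameter family of perturbations of the incoming data and show that, for an open and dense (or full-measure) subset of parameters, a trapped surface forms in the causal past of the would-be naked ray: the corresponding outgoing cone enters the trapped region $\T$, equivalently the mass-to-radius ratio $\frac{2\rho}{r}$ is driven across $1$ on an outgoing cone meeting $J^-(b_{\Gamma})$. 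This censors the singularity behind an apparent horizon, hence behind $\mathcal{H}^+$, restoring completeness of $\mathcal{I}^+$. The conclusion is that data producing a globally naked singularity form a nowhere-dense, positive-codimension set in the relevant topology, so the complementary generic sub-class has complete $\mathcal{I}^+$.

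The decisive obstacle is the trapped-surface-formation step in the charged, possibly massive, setting. Christodoulou's argument rested on two features absent here: the exact scale invariance of the massless uncharged Einstein-scalar-field system, which permits a self-similar analysis of the naked singularity, and strong monotonicity of the Hawking mass. In the Einstein--Maxwell--Klein--Gordon model the coupling $q_0 \neq 0$ renders $Q$ dynamical, and the Klein--Gordon mass $m^2 \geq 0$ introduces a length scale, so neither exact scale invariance nor the full monotonicity survives. One must therefore obtain quantitative control of the Maxwell field near the center -- precisely the theme emphasized in the introduction -- and verify that the repulsive charge contribution $\frac{Q^2}{r^2}$ to the Hawking mass does not obstruct (and plausibly assists) the formation of a trapped surface under perturbation. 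A further technical subtlety is that the perturbation family must remain inside the one-ended, anti-trapped-surface-free, regular-center data class of Theorem \ref{Kommemi}, and genericity must be phrased in a topology compatible with the $BV$ / characteristic well-posedness theory. I expect the uniform trapped-surface-formation estimate, incorporating the dynamical charge, to be the crux; the remaining ingredients would be a substantial but structurally standard adaptation of the Christodoulou framework.
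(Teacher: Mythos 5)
There is a fundamental mismatch here: the statement you were asked about is a \emph{conjecture}, and the paper offers no proof of it --- on the contrary, it states explicitly that for the Einstein--Maxwell--Klein--Gordon model ``naked singularities exist and Conjecture \ref{WCC} is still open.'' The only rigorous statements the paper makes in this direction are soft: in spherical symmetry, a non-empty black hole region implies completeness of $\mathcal{I}^+$ (\cite{MihalisSStrapped}, \cite{Kommemi}), and Conjecture \ref{trappedsurfaceconj} implies Conjecture \ref{WCC} (\cite{Kommemi}); the paper's own contribution (Theorem \ref{spaceliketheorem}) is \emph{conditional} on Conjecture \ref{trappedsurfaceconj}, which remains unproven. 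So there is no ``paper proof'' to match your proposal against, and your proposal cannot be counted as a proof either.

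Concretely, the gap in your text is that the decisive step is asserted as a plan rather than carried out: the instability argument --- quantitative control of $r$, $\rho$, $\phi$ and the dynamical charge $Q$ near $b_{\Gamma}$ for an arbitrary solution with a locally naked singularity, followed by a uniform trapped-surface-formation estimate under a finite-parameter family of perturbations --- is precisely the open content of Conjecture \ref{trappedsurfaceconj}, and you supply no substitute for the two structural pillars of Christodoulou's proof that you correctly identify as absent (exact scale invariance, permitting the self-similar reduction at the naked singularity, and the strong monotonicity of the Hawking mass available when $F \equiv 0$, $m^2 = 0$). With $q_0 \neq 0$ the charge term $\frac{Q^2}{r^2}$ enters the mass relation \eqref{murelation} with a sign that \emph{opposes} trapping (it is the repulsive mechanism that makes $\CH \neq \emptyset$ possible in the first place), so the claim that it ``plausibly assists'' trapped surface formation is unsupported and likely backwards; note that the focusing estimates of Theorem \ref{mainestimate} in this paper go in the opposite logical direction (they show $Q$ is \emph{dominated} by $\rho$ near the center under a smallness hypothesis \eqref{mainassumption}, in a contradiction argument, and give no mechanism for driving $\frac{2\rho}{r}$ across $1$ from generic perturbed data). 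In short: your reduction of completeness of $\mathcal{I}^+$ to the absence of a null ray escaping from $b_{\Gamma}$ is consistent with the Kommemi framework, but everything after that reduction is the unsolved problem itself, so the proposal has a genuine, and in fact total, gap at its core.
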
 Notice that in spherical symmetry, it can be proven that if the black hole region is non empty, then $\mathcal{I}^+$ is future-complete \cite{MihalisSStrapped,Kommemi}. One can also immediately show, c.f.\ \cite{Kommemi}, that Conjecture \ref{trappedsurfaceconj} implies Conjecture \ref{WCC}.


We now turn to another important problem, the Strong Cosmic Censorship Conjecture, which broadly states that General Relativity is a deterministic theory: \begin{conjecture}[Strong Cosmic Censorship Conjecture for the Einstein--Maxwell--Klein--Gordon--Model] \label{SCC}	Among all the data admissible from Theorem \ref{Kommemi}, there exists a generic sub-class for which the maximal globally hyperbolic development $(M,g)$ is future inextendible as a suitably regular Lorentzian manifold.
\end{conjecture}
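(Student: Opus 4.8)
The plan is to deduce the Strong Cosmic Censorship Conjecture \ref{SCC} from the \emph{breakdown of weak null singularities} by showing that, for data in a generic sub-class, \emph{every} point of the future boundary $\mathcal{B}^+$ obstructs any suitably regular extension. The guiding principle is a division of $\mathcal{B}^+$ into a null part --- the Cauchy horizon $\CH$ --- and a spacelike $r=0$ part --- the component $\mathcal{S}$ --- each of which carries its own inextendibility mechanism. The role of the main theorem is precisely to guarantee that \emph{both} mechanisms are simultaneously active: the mass-inflation mechanism alone controls only $\CH$, and it is the breakdown result that forces a genuinely spacelike singular component to be present as well, so that no portion of $\mathcal{B}^+$ is left through which one could extend.

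First I would fix the meaning of ``suitably regular''. Since the $C^0$ version of the conjecture is known to be \emph{false} in this model (the Cauchy horizon is $C^0$-extendible, see \cite{Moi}), one must work at the level of Christodoulou's formulation, namely inextendibility of the metric as an $H^1_{loc}$ Lorentzian manifold, equivalently the non-square-integrability of the Christoffel symbols across the boundary. The next step is to certify the generic Penrose diagram. Under the conjectured event-horizon decay of Conjecture \ref{chargedconj} and Conjecture \ref{conjecturemassive}, the hypotheses of Theorem \ref{rough2} hold, so either mass inflation occurs on late trapped cones --- in which case Theorem \ref{roughversion} forbids $\CH \cup \mathcal{S}_{i^+}$ from closing off the spacetime --- or $\CH$ is an isometric Reissner--Nordstr\"{o}m copy, which is again excluded by the rigidity argument behind Theorem \ref{rough2}. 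Adjoining Conjecture \ref{trappedsurfaceconj} to rule out a locally naked singularity, Theorem \ref{spaceliketheorem} yields the generic structure of Figure \ref{Fig2}: $\CH \neq \emptyset$ and $\mathcal{S} \neq \emptyset$, with $\mathcal{S}^1_{\Gamma}=\mathcal{CH}_{\Gamma}=\mathcal{S}^2_{\Gamma}=\emptyset$.

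Second, I would establish inextendibility across each boundary component separately. Across $\CH$, the weak null singularity provides a curvature component (equivalently the renormalized Hawking mass $\rho$) that blows up along every ingoing cone; following the mechanism exploited in the two-ended case and for Kerr in \cite{KerrStab}, this forces the Christoffel symbols of any candidate extension to fail local square-integrability, so no $H^1_{loc}$ extension exists across any interior point of $\CH$. Across $\mathcal{S}$ and $\mathcal{S}_{i^+}$, where $r\to 0$, I would show that the Hawking mass stays bounded below while $r\to 0$, so that the Kretschmann scalar --- dominated near $r=0$ by the charge $Q$ and mass $\rho$ contributions of order $r^{-6}$ --- diverges; this obstructs even a $C^2$ (hence a fortiori an $H^1_{loc}$-metric) extension near any such point. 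A covering argument over $\mathcal{B}^+ \setminus \{b_\Gamma,\, i^+\}$ then patches these local statements into global future inextendibility.

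The hard part will be twofold. The genuine logical gap is Conjecture \ref{trappedsurfaceconj}: without it one cannot exclude the ``locally naked'' scenario $\mathcal{S}^1_{\Gamma}\cup\mathcal{CH}_{\Gamma}\cup\mathcal{S}^2_{\Gamma}\neq\emptyset$, so the argument above proves Conjecture \ref{SCC} only \emph{conditionally}, exactly as Theorem \ref{spaceliketheorem} proves Conjecture \ref{spacelikeconj} conditionally. The genuine analytic difficulty lies at the regularity threshold: because a $C^0$ extension across $\CH$ does exist, one cannot argue by mere discontinuity and must instead pin down the precise two-sided blow-up rate of $\rho$ and of the scalar field near $\CH$ in order to defeat square-integrability, while simultaneously controlling the degenerate ``corners'' where $\CH$ meets $\mathcal{S}$ and the endpoints $b_\Gamma$ and $i^+$, at which the quantitative estimates underlying Theorem \ref{roughversion} degenerate. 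Establishing the sharp bounds needed to close these corners, uniformly up to the boundary, is where I expect the bulk of the work to lie.
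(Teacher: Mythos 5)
You set out to prove a statement that the paper itself records as an \emph{open conjecture}: the paper contains no proof of Conjecture \ref{SCC}, and the closest result it records is a conditional theorem quoted from \cite{Moi}, \cite{Moi4}, namely that \emph{if} Conjecture \ref{trappedsurfaceconj} holds, then under the event-horizon assumptions of Theorem \ref{rough2} the $C^2$ version of Conjecture \ref{SCC} is true. The first half of your proposal essentially reproduces this reduction (generic diagram via Theorem \ref{rough2} plus Conjecture \ref{trappedsurfaceconj}, then Theorem \ref{spaceliketheorem}), so to that extent you have rediscovered what the paper already states --- but it is a reduction, not a proof. It is in fact doubly conditional: besides Conjecture \ref{trappedsurfaceconj}, which you flag, the event-horizon tails of Conjectures \ref{chargedconj} and \ref{conjecturemassive} that you invoke to verify the hypotheses of Theorem \ref{rough2} are themselves open, and establishing them for a \emph{generic} sub-class of Cauchy data is precisely the unresolved genericity content of \ref{SCC} that no part of your argument supplies.

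The analytic steps you propose beyond this reduction contain genuine gaps. Across $\mathcal{S}$ and $\mathcal{S}_{i^+}$ you assert that the Hawking mass stays bounded below as $r \to 0$, so that the Kretschmann scalar blows up at rate $r^{-6}$; nothing in the paper, or in the cited literature, provides any such lower bound. The author emphasizes that the proof of Theorem \ref{roughversion} is a contradiction argument designed exactly to \emph{avoid} estimates near a hypothetical $r=0$ singularity, and explicitly notes that this approach yields no control even of the causal character of $\mathcal{S}$; moreover, near $b_{\Gamma}$ the regularity conditions \eqref{reg2Gamma} force $\rho$ and $Q$ to vanish along $\Gamma$, so a uniform mass lower bound on $\mathcal{S}$ up to its central endpoint cannot be taken for granted and your covering argument over $\mathcal{B}^+$ fails at the corners it is supposed to close. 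Across $\CH$, what is known (Lemma \ref{massblowuplemma}, Theorem \ref{classificationprevioustheorem}) is blow-up of $\rho$ propagated along late \emph{outgoing} cones under assumptions; blow-up ``along every ingoing cone'' and, more seriously, failure of $H^1_{loc}$ extendibility across a one-ended charged Cauchy horizon are established nowhere in the works you cite --- the available inextendibility statement is the $C^2$ one via curvature blow-up, which is exactly why the paper's recorded theorem is the $C^2$ version. In short: the statement remains a conjecture, your proposal is a conditional program whose unconditional-looking analytic ingredients (mass lower bounds near $r=0$, $H^1_{loc}$ obstruction at $\CH$, uniform corner estimates) are not available, and the honest content of your sketch is already contained in the paper's quoted conditional $C^2$ theorem.
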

The main obstruction to Strong Cosmic Censorship is the existence of Cauchy horizons, which can be smoothly extendible, e.g.\ for the Kerr stationary metric or for the Reissner--Nordstr\"{o}m static metric. Nevertheless, it was expected \underline{generic} dynamical Cauchy horizons feature a weak null singularity \cite{Ori} and, therefore, are $C^2$ inextendible. In the case of gravitational collapse additional obstructions related to the center appear, such as the existence of $\mathcal{CH}_{\Gamma}$, which is however empty generically if Conjecture \ref{trappedsurfaceconj} is true. Modulo these issues which are unrelated to weak null singularities, and assuming the quantitative stability of the black hole exterior, the author has obtained a version of Conjecture \ref{SCC}: \begin{thm}[$C^2$ inextendibility of the Cauchy horizon for the Einstein--Maxwell--Klein--Gordon model, \cite{Moi,Moi4}]
	Assume that Conjecture \ref{trappedsurfaceconj} is true. Then, under the assumptions of Theorem \ref{rough2}, the $C^2$ version of Conjecture \ref{SCC} holds i.e.\ $(M,g)$ is future inextendible as a $C^2$ Lorentzian manifold.
\end{thm}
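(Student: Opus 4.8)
\emph{Proposed approach.} The plan is a contradiction argument that exhausts the boundary components of Theorem \ref{Kommemi}: assuming a nontrivial future $C^2$ extension exists, I would show that across every component of $\mathcal{B}^+$ that a timelike geodesic can reach, some curvature invariant diverges, contradicting $C^2$ regularity of the extension. First I would set up the reduction. Suppose $(M,g)$ admits a nontrivial future $C^2$ extension $(\tilde{M},\tilde{g})$. By the standard localisation of Lorentzian extensions there exist a point $p$ of the future boundary $\mathcal{B}^+$, an open set $U\subset\tilde{M}$ with $U\cap M\neq\emptyset$ containing a neighbourhood of $p$, and a future-directed timelike geodesic $\gamma\subset M$ whose future endpoint lies in $U$ and projects to $p$ under $M=\mathcal{Q}^+\times_r\mathbb{S}^2\to\mathcal{Q}^+$. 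Since $\tilde{g}$ is $C^2$ its Riemann tensor is continuous, hence bounded on $U$, so \emph{every} curvature invariant of $g$ stays bounded along $\gamma$ as $\gamma\to p$. By Theorem \ref{Kommemi}, $p$ lies in one of $\CH$, $\mathcal{S}_{i^+}$, $\mathcal{S}^1_{\Gamma}$, $\mathcal{CH}_{\Gamma}$, $\mathcal{S}^2_{\Gamma}$, $\mathcal{S}$, or $p=b_{\Gamma}$; the pieces $i^+,\mathcal{I}^+,i^0$ are at infinity and are not finite future endpoints of timelike geodesics. It then suffices to exclude each remaining possibility.

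Next I would dispose of the components issuing from the centre. Under the hypotheses of Theorem \ref{rough2} we may invoke Conjecture \ref{trappedsurfaceconj}, so the apparent horizon $\A$ accumulates at $b_{\Gamma}$ and hence $\mathcal{S}^1_{\Gamma}=\mathcal{CH}_{\Gamma}=\mathcal{S}^2_{\Gamma}=\emptyset$. This is the decisive step specific to the one-ended setting: it annihilates the secondary Cauchy horizon $\mathcal{CH}_{\Gamma}$, which is the \emph{only} center-related boundary component on which $r>0$ and which could therefore be smooth, i.e.\ a genuine $C^2$ extension locus corresponding to a ``locally naked'' singularity. Theorem \ref{rough2} moreover yields $\mathcal{S}\neq\emptyset$. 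After this reduction, $p\in\CH\cup\mathcal{S}_{i^+}\cup\mathcal{S}\cup\{b_{\Gamma}\}$.

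I would then exclude $p\in\CH$. The energy lower bound built into the assumptions of Theorem \ref{rough2} places us in the mass-inflation alternative \ref{roughtalt1} of the \cite{Moi4} dichotomy rather than in the Reissner--Nordstr\"{o}m-copy alternative \ref{roughtalt2} (the latter is radiation-free, hence incompatible with the lower bound, and is in any case the single non-generic case in which $\CH$ \emph{is} smoothly extendible). In alternative \ref{roughtalt1}, by \cite{Moi},\cite{Moi4}, the Hawking mass $\rho$ blows up along every sufficiently late outgoing cone reaching $\CH$ while $\phi$ and $Q$ stay bounded, so $\CH$ is a weak null singularity. Expressing the relevant null curvature component through the Einstein--Maxwell--Klein--Gordon system in terms of $\rho$, $r$ and the bounded matter fields, and using that $r$ extends to a \emph{strictly positive} value on $\CH$ by Theorem \ref{Kommemi}, the divergence of $\rho$ forces a curvature invariant to blow up as $\gamma\to p\in\CH$, contradicting boundedness along $\gamma$. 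Hence $p\notin\CH$.

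Finally, if $p\in\mathcal{S}\cup\mathcal{S}_{i^+}$ or $p=b_{\Gamma}$, then $r\to0$ along $\gamma$. Since $r$ is the \emph{geometric} area-radius (the orbit spheres carry the metric $r^2\gamma_{\mathbb{S}^2}$), a $C^2$ extension would make $r$ a $C^2$ function vanishing at $p$; combining the Raychaudhuri equation with the Einstein constraints, the curvature contains terms comparable to $\rho^2 r^{-6}$ and $Q^4 r^{-8}$, which diverge because on the trapped cones abutting these $r=0$ components $2\rho/r>1$, whence $\rho\gtrsim r$ and $\rho^2 r^{-6}\gtrsim r^{-4}\to\infty$ (and near $i^+$, $Q\to e\neq0$ gives the stronger $Q^4 r^{-8}\to\infty$). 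This again contradicts bounded curvature, so every component is excluded and $(M,g)$ is future $C^2$-inextendible. The hardest analytic input is the curvature blow up at the weak null singularity $\CH$, which is precisely the content of \cite{Moi},\cite{Moi4} and which I would cite rather than reprove; the genuinely new ingredient in assembling the statement is the invocation of Conjecture \ref{trappedsurfaceconj} to remove the secondary Cauchy horizon $\mathcal{CH}_{\Gamma}$, the one additional obstruction to $C^2$-inextendibility that is specific to gravitational collapse and that would otherwise furnish a smooth extension.
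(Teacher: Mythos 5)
Your overall architecture is the right one, and it essentially reconstructs what the paper does implicitly: the paper never proves this theorem internally, but quotes it as an assembly of \cite{Moi}, \cite{Moi4} (inextendibility across $\CH$) with Conjecture \ref{trappedsurfaceconj} serving exactly the role you give it, namely emptying $\mathcal{S}^1_{\Gamma}\cup\mathcal{CH}_{\Gamma}\cup\mathcal{S}^2_{\Gamma}$, which is the only center-related component on which $r>0$ and hence the only one a smooth extension could cross. Your localisation step and your treatment of the $r=0$ components via $2\rho/r>1$ in the trapped region (hence a Kretschmann-type invariant $\gtrsim \rho^2 r^{-6}\geq \tfrac14 r^{-4}$) are also in the spirit of the cited works.

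The genuine gap is your disposal of the static alternative \ref{roughtalt2}. You claim the $L^2$ lower bound \eqref{hyplowerbound} forces alternative \ref{roughtalt1} because a Reissner--Nordstr\"{o}m-copy Cauchy horizon is ``radiation-free, hence incompatible with the lower bound.'' This does not follow: the lower bound lives on the \emph{event} horizon, staticity is a statement about the \emph{Cauchy} horizon, and deducing fast event-horizon decay from Cauchy-horizon staticity is a global rigidity/scattering problem --- it is precisely the hard content of \cite{Moi4}, and it is why the exponent restriction $p\leq\min\{2s,6s-3\}$ appears in Theorem \ref{theoremevent} at all. The paper's own recollection of \cite{Moi4}, Theorem \ref{classificationprevioustheorem}, states the dichotomy with \emph{both} alternatives live under the full assumptions of Theorem \ref{rough2} (lower bound included), and section \ref{finalsection} treats the static case as a genuine possibility, handled not by curvature blow-up but by the rigidity estimates of Proposition \ref{rigidboundsprop} and Corollary \ref{Siplusempty}. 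This matters for your proof because in the static case (alternative \ref{alternative2}) the quantities $r-r_-$, $\phi$, $D_u\phi$, $Q-e$, $\varpi-M$, $\partial_v\log\Omega^2-2K_-$ all extend continuously to their Reissner--Nordstr\"{o}m values, so no curvature invariant diverges at $\CH$ and your exclusion of $p\in\CH$ collapses --- as you yourself concede by calling this the one case in which $\CH$ \emph{is} smoothly extendible. Nor does the paper's rigidity argument rescue you: it only shows a static $\CH$ cannot be glued to $\mathcal{S}_{i^+}$ or to $b_{\Gamma}$, i.e.\ cannot close off the space-time; it says nothing against the scenario of a static $\CH$ coexisting with $\mathcal{S}\neq\emptyset$, which is consistent with everything you establish before that point. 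The only correct way to close this hole is to invoke, as a named black box, the static-exclusion/inextendibility theorem of \cite{Moi4} under the lower bound --- that is what the attribution ``[\cite{Moi}, \cite{Moi4}]'' in the statement refers to --- rather than to present it as an easy parenthetical.

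A secondary, fixable gap is the case $p=b_{\Gamma}$. The inequality $\rho\geq r/2$, hence $\rho^2r^{-6}\gtrsim r^{-4}$, is only available along approaches through $\T\cup\A$; but $b_{\Gamma}$ is the endpoint of the regular center $\Gamma\subset\R$, where \eqref{reg2Gamma} gives $\rho_{|\Gamma}=0$ and the solution is smooth, so a timelike geodesic can approach $b_{\Gamma}$ through the regular region with your invariant bounded. The standard repair is openness of extensions: any $C^2$ extension reached at $b_{\Gamma}$ is also an extension across neighbouring points of $\mathcal{B}^+$, which after your reduction lie on $\CH\cup\mathcal{S}_{i^+}\cup\mathcal{S}$ and are excluded by the other cases.
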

We also mention the remarkable work of Luk and Oh \cite{JonathanStab,JonathanStabExt} who provide a comprehensive proof of the $C^2$ version of Strong Cosmic Censorship, for the Einstein--(uncharged)--scalar--field model. Note that in their case, the data are two-ended, thus there is no obstruction coming from the center of symmetry $\Gamma$, in contrast with our model. Note also that the decay of uncharged fields  in the exterior is well understood and governed by Price's law \cite{PriceLaw,Schlag,JonathanStabExt,Tataru}. However, Price's law does not apply to charged scalar fields, which obey more complicated dynamics, and decay is only known in the small charge case, see \cite{Moi2} for the proof of upper bounds that are sharp according to Conjecture \ref{chargedconj}.

We now return to the characterization of the interior boundary. In addition to the Weak  and Strong Cosmic Censorship conjectures, one problem is left unexplored: the causal character of $\mathcal{S}$. In particular, one can wonder whether $\mathcal{S}$ is space-like, in the sense of Definition \ref{spacelikedef}. The following ``space-like singularity'' conjecture appears reasonable:
\begin{conjecture}
	Among all the data admissible from Theorem \ref{Kommemi}, there exists a generic sub-class for which if the maximal future development has $\mathcal{Q}^+ \cap J^{-}(\mathcal{I}^+) \neq \emptyset $, then $\mathcal{S} \neq \emptyset$ is space-like.
\end{conjecture}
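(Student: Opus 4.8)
The statement combines two assertions: non-emptiness $\mathcal{S}\neq\emptyset$, and the space-like character of $\mathcal{S}$ in the sense of Definition \ref{spacelikedef}. The plan is to obtain the first essentially for free and to concentrate on the second. Indeed, under Conjecture \ref{trappedsurfaceconj} the paper's Theorem \ref{spaceliketheorem} already yields Conjecture \ref{spacelikeconj}, hence $\mathcal{S}\neq\emptyset$ together with $\mathcal{S}^1_{\Gamma}=\mathcal{CH}_{\Gamma}=\mathcal{S}^2_{\Gamma}=\emptyset$. So, modulo the (generically expected) trapped-surface conjecture, only the upgrade from ``$\mathcal{S}\neq\emptyset$'' to ``every $p\in\mathcal{S}$ is a first singularity'' requires new work.

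First I would reduce space-likeness to the \emph{absence of null portions} in $\mathcal{S}$. Unwinding Definition \ref{firstsing} for $p\in\mathcal{B}^+-\{b_{\Gamma}\}$, the point $p$ is a first singularity exactly when some causal diamond $J^+(q)\cap J^-(p)$ with $q\in I^-(p)\cap\mathcal{Q}^+$ has regular interior, $J^+(q)\cap J^-(p)\setminus\{p\}\subset\mathcal{Q}^+$. If $\mathcal{S}$ contained a non-degenerate null segment $\gamma$, then for any interior point $p$ of $\gamma$ every such diamond would necessarily capture earlier points of $\gamma\subset\mathcal{S}\subset\mathcal{B}^+$, so the regularity condition fails and $p$ is not a first singularity; conversely, since $\mathcal{S}$ is achronal by Theorem \ref{Kommemi}, the absence of null portions makes it a genuinely space-like curve along which each point admits a regular past diamond. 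Thus the conjecture reduces, generically, to ruling out null segments of $\mathcal{S}$.

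The heart of the matter is then quantitative. Arguing by contradiction, suppose $\mathcal{S}$ contains a non-degenerate null segment $\gamma$, on which $r\to 0$. Along $\gamma$ I would exploit the Raychaudhuri monotonicities $\partial_u(\nu/\Omega^2)\le 0$ and $\partial_v(\lambda/\Omega^2)\le 0$ (with $\nu=\partial_u r$, $\lambda=\partial_v r$; note the Maxwell field does not enter these null components) together with the mass relation $-\tfrac{4\nu\lambda}{\Omega^2}=1-\tfrac{2\rho}{r}=1-\tfrac{2\varpi}{r}+\tfrac{Q^2}{r^2}$, where $\varpi=\rho+\tfrac{Q^2}{2r}$ is the renormalized charged mass, to analyse the focusing as $r\to0$. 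In the uncharged model this is precisely how Christodoulou established (Theorem \ref{Christotheorem} and the associated first-singularity proposition) that $\mathcal{S}$ is generically space-like: the scalar field's focusing drives $r\to0$ transversally rather than along a null ray. The plan is to transport that argument by first proving that the enclosed charge obeys $Q^2/r\to0$ along $\mathcal{S}$, so that the repulsive term $Q^2/r^2$ is subdominant to the scalar-field contribution and Christodoulou-type monotonicity is recovered. Controlling $Q$ would rely on the Maxwell equation $\partial_v Q=q_0 r^2\,\Im(\bar\phi\, D_v\phi)$ and its $u$-analogue, combined with boundedness of $\phi$ near the singular boundary.

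The hard part will be exactly this control of the dynamical Maxwell field as $r\to0$ near $\mathcal{S}$: the repulsive term $Q^2/r^2$ blows up unless $Q$ decays strictly faster than $\sqrt{r}$, and it is this term that destroys the monotonicity underlying the uncharged proof, so establishing $Q^2/r\to0$ demands genuinely new estimates on the Maxwell field near the singular boundary --- the very phenomenon the paper stresses as unavoidable. A second, more technical obstacle is that Christodoulou's first-singularity theory lives in the BV class with a mature local theory that would have to be re-developed for the charged and possibly massive model. Finally, the word \emph{generic} cannot be dropped: fine-tuned data may produce null portions of $\mathcal{S}$ --- analogous to the locally-naked components at $b_{\Gamma}$ that are excluded only generically by Conjecture \ref{trappedsurfaceconj} --- so the concluding step would be to characterise these exceptional configurations and confine them to a non-generic set, most plausibly by linking them to the same trapped-surface mechanism already governing $\mathcal{S}^1_{\Gamma}$, $\mathcal{CH}_{\Gamma}$ and $\mathcal{S}^2_{\Gamma}$.
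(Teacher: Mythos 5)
You are attempting a statement that the paper itself does not prove: it is recorded as an open conjecture in section \ref{additionnal}, and the paper remarks immediately after it that, with the contradiction argument used throughout, one has \emph{no} control over the causal character of $\mathcal{S}$, so that ``a different perspective is required''. There is therefore no proof in the paper to compare against, and the only question is whether your proposal itself constitutes a proof. It does not: it is a research plan whose decisive steps are left open, as you yourself concede. Even the part you present as ``essentially for free'' is conditional: you derive $\mathcal{S}\neq\emptyset$ from Theorem \ref{spaceliketheorem}, which assumes Conjecture \ref{trappedsurfaceconj}; resting a proof of one conjecture on another unproven conjecture does not establish anything unconditionally, and the genericity required in the statement is precisely what Conjecture \ref{trappedsurfaceconj} postulates rather than proves.

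The central gap is the analytic one. Your strategy hinges on showing $Q^2/r\to 0$ along $\mathcal{S}$, so that the repulsive term $Q^2/r^2$ in \eqref{Radius3} and \eqref{murelation} is subdominant and Christodoulou-type monotonicity is recovered; no estimate, or even a mechanism for one, is given. Be aware that the paper's own charge estimate (Lemma \ref{chargestimatelemma}) cannot be transported to this setting: it is proven on causal rectangles whose past vertex sits where $Q=0$ and $\rho \geq 0$ can be used after integrating \eqref{ChargeVEinstein} from the centre, whose outgoing side lies in $\R\cup\A$, and whose top vertex lies on $\A$ --- the normalization $2\rho/r=1$ at the top vertex is used crucially in the last step of \eqref{Q3}. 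A hypothetical null portion of $\mathcal{S}$ is approached from within the trapped region $\T$, where none of these structural features are available, so the control of the Maxwell field near $r=0$ would indeed require genuinely new ideas; it is not even clear the claimed decay $Q^2/r\to 0$ is true. Two further items are also left unresolved: the converse half of your reduction (that absence of null segments makes every $p\in\mathcal{S}$ a first singularity) needs an argument that nearby boundary components do not intrude into the causal diamond, not just achronality of $\mathcal{S}$; and the confinement of null portions to a non-generic set of data is asserted by analogy rather than proved. In short, the proposal identifies the right obstruction --- quantitative control of the dynamical charge near the $r=0$ boundary --- but does not overcome it, and the statement remains, as in the paper, a conjecture.
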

Note that, with the approach adopted in the present paper, we do not have any control over the causal character of $\mathcal{S}$, as we use a contradiction argument. Thus, it seems that a different approach is required to investigate this issue.

 We conclude this discussion with an interesting open problem: what happens in the interior of the black hole for the Einstein--Maxwell--Klein--Gordon equations in the presence of a positive cosmological constant? Based on the works \cite{Costa,Harvey}, the Cauchy horizon always exists, but is not weakly singular for a certain range of black hole parameters, i.e.\ the Hawking mass is finite. While it seems reasonable that our approach could be adapted to the cosmological setting, for parameters such that a weak null singularity is present, it would be interesting to see whether the Cauchy horizon may in some cases close off the space-time for parameters such that the Hawking mass is finite.

\subsection{Numerical and heuristic previous studies on $r=0$ singularities} \label{numerical}

The presence of $r=0$ singularities during the process of gravitational collapse  received a lot of attention from the numerical relativity community. In view of the work of Christodoulou, it is not the existence of those singularities which requires evidence, but the statement that they are generic, for models which allow for the formation of Cauchy horizons. However, it is difficult to validate generic statements numerically, as it requires to explore the whole moduli space of initial data.

The pioneering works of Brady and Smith \cite{BradySmith}, Burko and Ori \cite{BurkoOri} and Burko \cite{Burko} first provided evidence for the occurrence of $r=0$ singularities inside spherically symmetric Einstein--Maxwell-uncharged-scalar-field black holes. While the uncharged character of the scalar field does not significantly modify the asymptotic behavior, it forces the black hole to be \textit{two-ended} (section \ref{ChrisDaf}). In the two-ended case, the occurrence of $r=0$ singularities is \textbf{not} generic  \cite{Mihalisnospacelike}, see section \ref{twoended}. 

In contrast, to study the genericity of $r=0$ singularities in spherical collapse we must consider charged matter in \textit{one-ended} space-times, as in the present paper. One of the only numerical studies of charged gravitational collapse was carried out by Hod and Piran \cite{HodPiran2}, who considered the Einstein-charged-scalar-field system (i.e.\ \eqref{1.1}-\eqref{5.1} with $m^2=0$). For a particular choice of (global) initial data, including the center, they exhibited a Cauchy horizon, which is singular due to mass inflation, and at later times a singularity towards which the area-radius $r$ extends to zero. 

We also mention some heuristics of \cite{Chesler}, which attempt to argue in favor of the emergence of a $r=0$ singularity. It is not clear, however, what is the role of the center $\Gamma$ and of the Maxwell field in their work.

\subsection{Contrast with two-ended black holes, for charged/uncharged matter} \label{twoended}
\begin{figure}

	\begin{center}
		
		\includegraphics[width=89 mm, height=48 mm]{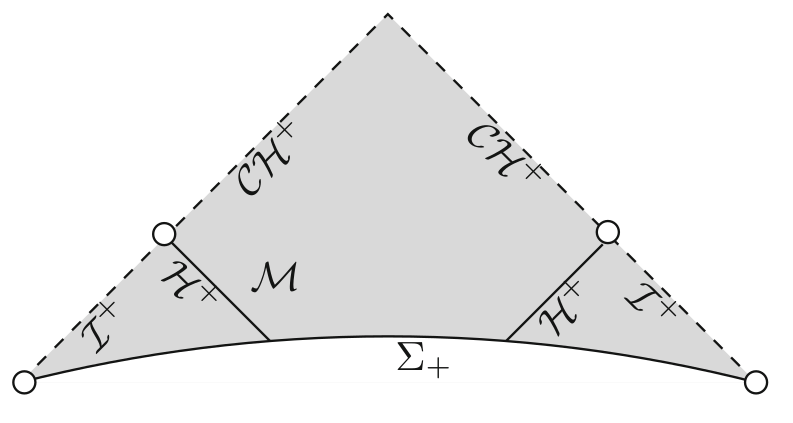}
		
	\end{center}
	
	\caption{Bifurcating Cauchy horizons in the two-ended case, for small data. Figure from \cite{Mihalisnospacelike}.}
	\label{Figure5}
\end{figure}

In this section, we describe what happens in the case of so-called \textit{two-ended} initial data that have  $\RR \times \mathbb{S}^2$ topology  (as opposed to the one-ended case $\RR^3$ considered in the present paper). Two-ended spacetimes are historically important, in that they have the same topology as Reissner--Nordstr\"{o}m or Kerr black holes. In such spacetimes, however, there is no center, i.e.\ $\Gamma=\emptyset$, and thus the two-ended data assumption \emph{does not model the global structure of gravitational collapse}. 
The two-ended setting is radically different from the one-ended case already for the Einstein--Maxwell-scalar-field model: in \cite{Mihalisnospacelike}, Dafermos proves that the two-ended analogue of Conjecture \ref{spacelikeconj} is \underline{false}:  there exists an open set of small, regular initial data for which there are no $r=0$ singularity and, an outgoing Cauchy horizon branches with an ingoing Cauchy horizon to close off the space-time at a bifurcation sphere as in Figure~\ref{Figure5}, just like for the Reissner--Nordstr\"{o}m solution. For those solutions constructed by Dafermos, the Hawking mass blows up everywhere, at least for a generic sub-class inside the open set of data. Thus, weak null singularities \textbf{do not necessarily} break down, in contrast with the one-ended case.\color{black}

We present a result which generalizes \cite{Mihalisnospacelike} to the more elaborate Einstein--Maxwell--Klein--Gordon model. The argument of \cite{Mihalisnospacelike} is easily transposable. We provide a sketch of the proof in Appendix \ref{twoendedA}.

\begin{thm} \label{twoendedtheorem}[Small scalar field data give rise a bifurcate Cauchy horizon in the two-ended case]
	Consider $(M,g,\phi,F)$ a solution of Einstein--Maxwell--Klein--Gordon system \eqref{1.1}, \eqref{2.1}, \eqref{3.1}, \eqref{4.1}, \eqref{5.1} arising from \underline{two-ended}, spherically symmetric regular initial data. 
	
	Assume moreover that the event horizon $\mathcal{H}^+$ settles quantitatively towards a sub-extremal Reissner--Nordstr\"{o}m event horizon and that the scalar field is \underline{small}, then there are no $r=0$ singularities: $\mathcal{S}=\emptyset$ and the Penrose diagram is given by Figure \ref{Figure5}, i.e.\ a bifurcate Cauchy horizon $\mathcal{CH}_{i_1^+} \cup \mathcal{CH}_{i_2^+}$ closes off the space-time. \end{thm}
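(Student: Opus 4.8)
The plan is to transpose the argument of Dafermos \cite{Mihalisnospacelike} essentially verbatim, treating the black hole interior as a characteristic initial value problem and checking that the charge and mass of the scalar field contribute only quadratically small terms. First I would fix a double-null gauge $(u,v)$ covering the interior, normalized so that the two event horizons are $\mathcal{H}_1^+ = \{v=0\}$ and $\mathcal{H}_2^+ = \{u=0\}$, meeting at the event-horizon bifurcation sphere $(0,0)$ where $r=r_+>0$; the interior is then the rectangle $\{u\geq 0,\ v\geq 0\}$, with the two Cauchy-horizon branches sitting at the limits $u\to\infty$ and $v\to\infty$. The hypothesis that the exterior settles quantitatively to a sub-extremal Reissner--Nordstr\"{o}m black hole furnishes, on each $\mathcal{H}_i^+$, data for $(r,\Omega^2,\phi,Q)$ that are $C^0$-close to the Reissner--Nordstr\"{o}m values, with $\phi$ and $\partial_v\phi$ (resp.\ $\partial_u\phi$) small and decaying. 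Applying the local stability and non-emptiness analysis near each of $i_1^+$, $i_2^+$ (as in \cite{Moi}) already produces the two non-empty ingoing branches $\mathcal{CH}_{i_1^+}$, $\mathcal{CH}_{i_2^+}$ of the Cauchy horizon, on which $r$ extends as a strictly positive function.

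The core is a continuity/bootstrap argument showing that, for sufficiently small scalar field, the interior solution remains pointwise close to the sub-extremal Reissner--Nordstr\"{o}m interior throughout the rectangle; in particular $r$ stays within a thin tube around the Reissner--Nordstr\"{o}m profile, which interpolates monotonically between $r_+$ at the event horizon and the inner radius $r_->0$ at the Cauchy horizon (sub-extremality is essential precisely because it is what renders $r_->0$). I would run the bootstrap on $(r,\Omega^2,Q,\phi)$ and their first derivatives: assuming $|\phi|$, $|\partial_u\phi|$, $|\partial_v\phi|$ small and $r$ confined to the tube, I improve these bounds. The charged and massive couplings enter only through $\partial_u Q,\ \partial_v Q \sim q_0 r^2\,\Im(\bar{\phi}\, D\phi)$ and through terms of size $m^2|\phi|^2$, both quadratically small, so $Q$ remains close to the asymptotic charge $e$ and the leading-order evolution is governed by the same Reissner--Nordstr\"{o}m mechanism as in the uncharged case --- this is the precise content of ``almost no modification is necessary''. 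Here the renormalized mass $\varpi=\rho+\tfrac{Q^2}{2r}$ may blow up at the Cauchy horizon (mass inflation), but the relation $1-\tfrac{2\varpi}{r}+\tfrac{Q^2}{r^2}=-\tfrac{4}{\Omega^2}\,\partial_u r\,\partial_v r$, together with the Raychaudhuri equations controlling $\partial_v r$ and $\partial_u r$, shows this is compatible with $r$ decreasing by only a small total amount and converging to a positive limit. The bootstrap then closes, yielding $r\geq r_-/2>0$ on the whole rectangle.

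With the uniform lower bound in hand, $(r,\Omega^2,Q,\phi)$ extend continuously to the two Cauchy-horizon branches, with $r\geq r_-/2$ there, which already excludes any $r=0$ component reached to the future of a single event horizon. The decisive step is the double limit $u,v\to\infty$: I would show that $\mathcal{CH}_{i_1^+}$ and $\mathcal{CH}_{i_2^+}$ meet at a single bifurcation sphere of radius $\geq r_-/2>0$, so that the future boundary of the interior is exactly the bifurcate Cauchy horizon $\mathcal{CH}_{i_1^+}\cup\mathcal{CH}_{i_2^+}$ and no spacelike component $\mathcal{S}$ can intervene; hence $\mathcal{S}=\emptyset$. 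Feeding this into the two-ended analogue of Kommemi's a priori classification pins down the Penrose diagram of Figure \ref{Figure5}.

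The genuine obstacle --- inherited from \cite{Mihalisnospacelike} rather than introduced by the charged coupling --- is precisely this uniform control up to and across the corner. A priori, even with $r$ bounded below along each branch separately, the two ingoing Cauchy horizons could fail to meet and a spacelike $r=0$ singularity could intervene; and since the Hawking mass is expected to blow up everywhere on the Cauchy horizon, one cannot simply invoke boundedness of the mass to bound $r$ from below. The content of the argument is rather that closeness to the sub-extremal Reissner--Nordstr\"{o}m interior --- where $r\geq r_->0$ by sub-extremality --- propagates uniformly as both null coordinates tend to infinity simultaneously, so that the lower bound $r\geq r_-/2$ survives the double limit and forces the two branches to cross at a sphere of positive radius. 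Verifying this uniform survival using only the smallness of $\phi$ and sub-extremality is the crux; the charged and massive corrections, being quadratically small, are absorbed exactly as the corresponding scalar-field terms are in Dafermos's uncharged analysis, which is why the transposition requires almost no modification.
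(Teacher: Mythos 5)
Your skeleton coincides with the paper's: import the local non-emptiness and stability of each branch $\mathcal{CH}_1^+$, $\mathcal{CH}_2^+$ from \cite{Moi}, adapt the Cauchy stability argument of \cite{Mihalisnospacelike} (Proposition 9.1 there), and reduce everything to a global lower bound on $r$ (the analogue of Theorem 7 of \cite{Mihalisnospacelike}), which forces the bifurcate Penrose diagram. The gap is in the mechanism you propose for that lower bound. You want to run a bootstrap keeping $(r,\Omega^2,Q,\phi)$ \emph{and their first derivatives} pointwise close to the sub-extremal Reissner--Nordstr\"{o}m interior on the whole rectangle. But that is false for the very solutions under consideration: as the paper itself notes, the Hawking mass generically blows up everywhere on the Cauchy horizon for Dafermos's solutions, i.e.\ $1-\frac{2\rho}{r}=\frac{-4\,\partial_u r\,\partial_v r}{\Omega^2}\rightarrow-\infty$, whereas the corresponding Reissner--Nordstr\"{o}m quantity stays bounded and tends to $0$ at the Cauchy horizon. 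So $(\partial_u r,\partial_v r,\Omega^2)$ cannot stay multiplicatively close to their Reissner--Nordstr\"{o}m values near the Cauchy horizon; and if your closeness is only additive (vacuous there, since all these quantities are exponentially small), the bootstrap is too weak to be propagated through the equations, whose error terms --- e.g.\ $r|D_v\phi|^2/\Omega^2$ in the Raychaudhuri equation \eqref{RaychV} --- are exactly the quantities that blow up and drive mass inflation. Your remark that mass inflation ``is compatible with $r$ decreasing by only a small total amount'' is not an argument: proving the lower bound on $r$ inside the mass-inflated region is precisely the content of Theorem 7, and closeness to the background cannot be the tool there.

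What the paper does in that region is structural rather than perturbative. The only point where the Einstein equations enter Dafermos's proof of the $r$ lower bound is the inequality $\partial_u\log|\partial_v r|\leq \frac{|\partial_u r|}{r}$ in the trapped region (equation (36) of \cite{Mihalisnospacelike}); integrating it shows $r|\partial_v r|=-\partial_v(r^2/2)$ is non-increasing in $u$, so the drop of $r^2$ along any late outgoing cone is dominated by its drop on a fixed initial cone --- no comparison with Reissner--Nordstr\"{o}m is needed where the mass has inflated. In the charged, massive model, \eqref{Radius} shows this inequality holds provided $\frac{1}{4r}-\frac{Q^2}{4r^3}-\frac{m^2r|\phi|^2}{4}<0$, i.e.\ essentially $Q^2/r^2>1$ (the massive term only reinforces the sign). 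This is a \emph{sign} condition, not a smallness condition, and it is supplied by Cauchy stability together with the local theory, which place the relevant region at $r\approx r_-$, $Q\approx e$, with $r_-<|e|$ because $r_+r_-=e^2$ in the sub-extremal range; note this is where sub-extremality truly enters, not merely through $r_->0$ as you suggest. Identifying that single sign condition is what makes the transposition of \cite{Mihalisnospacelike} ``almost verbatim''; absorbing the charged and massive terms into a closeness-to-background scheme would not close even for the uncharged Dafermos model.
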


This result proves that it is imperative to consider the global structure of the space-time for Theorem \ref{roughversion} to be valid; and, a fortiori, that there is no possible local approach to proving Conjecture \ref{spacelikeconj}. 

\subsection{Method of the proof and outline of the paper} \label{outline}

The proof of our main result, Theorem \ref{roughversion}, is by contradiction: we assume that the Penrose diagram is given by Figure~\ref{Fig3} (additionally, we may also have $\mathcal{S}_{i^+} \neq \emptyset$ as in Figure \ref{Figrect}, see Theorem~\ref{Kommemi} for a definition of  $\mathcal{S}_{i^+}$), where $\CH$ features a weak null singularity at an early time and we derive a contradiction from these two facts. 
In our approach, the main mechanism is the breakdown of weak null singularities;  the necessary occurrence of $r=0$ singularities (i.e.\ $\mathcal{S} \neq \emptyset$) is only obtained as an indirect consequence (assuming the absence of ``locally naked singularities'' i.e.\ $\mathcal{S}^1_{\Gamma} \cup \mathcal{CH}_{\Gamma} \cup  \mathcal{S}^2_{\Gamma} = \emptyset$), using the a priori result of Theorem \ref{Kommemi}.

Our methods are based on quantitative estimates involving the center of symmetry $\Gamma$ and the control of the Maxwell field by the Hawking mass. These estimates, based on the non-linear focusing properties of the Einstein equations in the presence of a weak null singularity, are proven on a causal rectangle with a top vertex $p=(u,v) \in \A$ ($\A$ being the apparent horizon),  a bottom vertex  $q \in \T$ ($\T$ being the trapped region) \color{black} and a left vertex on the center $\Gamma$, c.f.\ Figure \ref{Figrect}. As a consequence, we prove there exists a trapped ingoing null segment emanating from the top vertex $p$\color{black}.

Due to the geometry of the Penrose diagram given by Figure \ref{Fig3}, there exist such rectangles where $p=(u,v) \in \A$ is followed by an ingoing \underline{regular} segment. This is in contradiction with the consequence of the focusing estimates, which proves that the Penrose diagram of Figure \ref{Fig3}, where $\CH$ is a weak null singularity, was impossible in the first place.

\begin{figure}[H]
	
	\begin{center}
		
		\includegraphics[width=57 mm, height=50 mm]{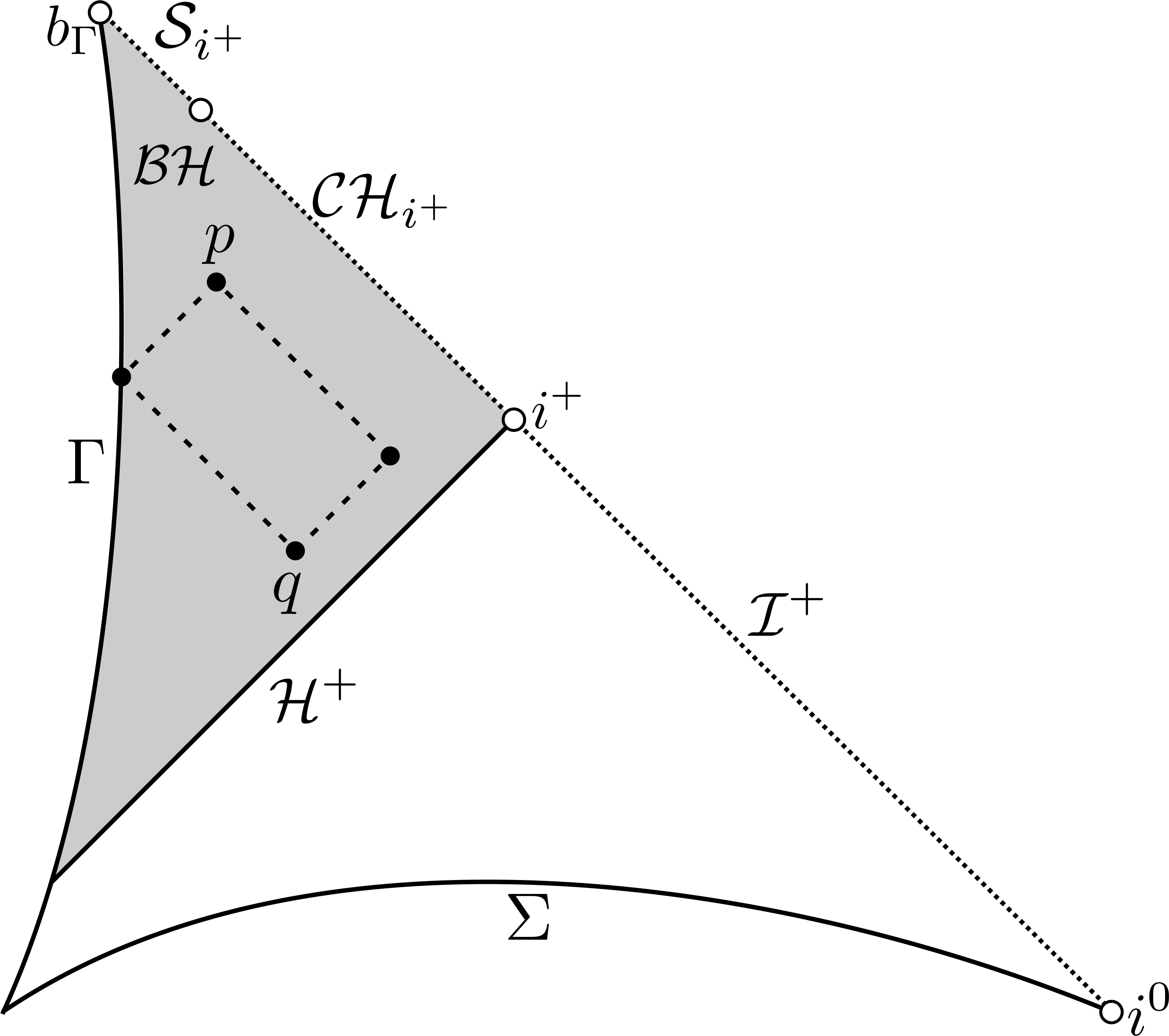}
		
	\end{center}
	
	\caption{The causal rectangle $J^+(q) \cap J^{-}(p)$ on which we prove focusing estimates, where $p \in \A$, $q \in \T$.}
	\label{Figrect}
\end{figure}

We outline the rest of the paper: in section \ref{geometricframework}, we lay out the geometric framework and the equations in double null coordinates. Then, we state precisely our results in section \ref{statement}. In section \ref{generaltheoremsection}, we provide the proof of Theorem \ref{roughversion}, i.e. Theorem \ref{maincorollary} and Theorem \ref{maintheorem}. In section \ref{finalsection}, we give the proof of Theorem \ref{rough2}, i.e. Theorem \ref{theoremevent}, using the main result of \cite{Moi4} that we recall. Finally, in Appendix \ref{twoendedA}, we prove Theorem \ref{twoendedtheorem}, implying that the two-ended case analogue of Conjecture \ref{spacelikeconj} is false.

\subsection*{Acknowledgments}I am very grateful to Mihalis Dafermos for the stimulating discussions we had about this problem, and for valuable comments on the manuscript. I warmly thank the anonymous referees for valuable suggestions. I also thank John Anderson for his interest and useful discussions.

\section{Geometric framework} \label{geometricframework}

The purpose of this section is to provide the precise setup, together with the definition of various geometric quantities, the coordinates and the equations that we will use throughout the paper.

\subsection{Spherically symmetric solution, as given by Theorem \ref{Kommemi}} \label{preliminary}
From Theorem \ref{Kommemi}, we obtain $(M,g,\phi,F)$, a regular solution of the system \eqref{1.1}, \eqref{2.1}, \eqref{3.1}, \eqref{4.1}, \eqref{5.1}, where $(M,g)$ is a Lorentzian manifold of dimension $3+1$, $\phi$ is a complex-valued function on $M$ and $F$ is a real-valued 2-form on $M$. 

$(M,g,\phi,F)$ is related to a quadruplet of scalar functions $ \{\Omega^2(u,v), r(u,v), \phi(u,v), Q(u,v)\}$, with $(u,v) \in \mathcal{Q} \subset \RR^{1+1}$ by \begin{equation} \label{gdef}\begin{split}
&g= g_{\mathcal{Q}}+ r^{2} \cdot  (d\theta^2 + \sin(\theta)^2 d\varphi^2)\\=\ & -\Omega^2(u,v) du dv+ r^{2}(u,v) \cdot  (d\theta^2 + \sin(\theta)^2 d\varphi^2), \end{split}
\end{equation} 
$$ F(u,v)= \frac{Q(u,v)}{2r^{2}(u,v)}  \Omega^{2}(u,v) du \wedge dv.$$  The domain $\mathcal{Q} \subset \RR^{2}$, called the Penrose diagram, is depicted in Figure \ref{Fig1} (for a choice of $(u,v)$, c.f.\ Remark \ref{choice}, such that $\mathcal{Q}$ is a bounded subset of $\RR^2$). The boundary components of Theorem \ref{Kommemi} can be identified with intervals in $\RR^2$ as $$i^+=\{(u_{-\infty}, v_{\infty})\},$$ $$\CH= (u_{-\infty}, u_{\CH}) \times \{v_{\infty}\},$$ $$ \mathcal{S}_{i^+}= [u_{\CH}, u(\CH \cup \mathcal{S}_{i^+} )) \times \{v_{\infty}\},$$ $$ b_{\Gamma}=\{(u(b_{\Gamma}), v(b_{\Gamma}))\}$$ $$\mathcal{S}_{\Gamma}^1= \{u(b_{\Gamma})\} \times (v(b_{\Gamma}), v(\mathcal{S}_{\Gamma}^1)),$$ $$\mathcal{CH}_{\Gamma}= \{u(b_{\Gamma})\} \times [v(\mathcal{S}_{\Gamma}^1), v( \mathcal{S}_{\Gamma}^1 \cup \mathcal{CH}_{\Gamma})),$$ $$\mathcal{S}_{\Gamma}^2= \{u(b_{\Gamma})\} \times [v(\mathcal{S}_{\Gamma}^1 \cup \mathcal{CH}_{\Gamma}), v(\mathcal{S}_{\Gamma}^1 \cup \mathcal{CH}_{\Gamma} \cup \mathcal{S}_{\Gamma}^2)),$$ where all these intervals are bounded, and possibly degenerate or empty (recall Theorem~\ref{Kommemi} for the precise definition of these boundary components). \begin{rmk}
	Note that the statement $\mathcal{S}  \neq \emptyset$ (obtained under the assumptions of Theorem \ref{spaceliketheorem}) is equivalent to \\$u(\CH \cup \mathcal{S}_{i^+})<u(b_{\Gamma})$. The statement $\mathcal{S}^1_{\Gamma} \cup \mathcal{CH}_{\Gamma} \cup  \mathcal{S}^2_{\Gamma} \cup  \mathcal{S}  \neq \emptyset$ appearing in Theorem \ref{roughversion} is equivalent to $v(b_{\Gamma})<v_{\infty}$.
\end{rmk}

One can now formulate the Einstein equations \eqref{1.1}, \eqref{2.1}, \eqref{3.1}, \eqref{4.1}, \eqref{5.1} as a system of non-linear PDEs on $\Omega^2$, $r$, $\phi$ and $Q$ expressed in the double null coordinate system $(u,v) \in \mathcal{Q}  $: \begin{equation}\label{Omega}
\partial_{u}\partial_{v} \log(\Omega^2)=-2\Re(D_{u} \phi \overline{D_{v}\phi})+\frac{ \Omega^{2}}{2r^{2}}+\frac{2\partial_{u}r\partial_{v}r}{r^{2}}- \frac{\Omega^{2}}{r^{4}} Q^2,
\end{equation} \begin{equation}\label{Radius}\partial_{u}\partial_{v}r =\frac{- \Omega^{2}}{4r}-\frac{\partial_{u}r\partial_{v}r}{r}
+\frac{ \Omega^{2}}{4r^{3}} Q^2 +  \frac{m^{2}r }{4} \Omega^2 |\phi|^{2} , \end{equation} 	\begin{equation}\label{Field}
D_{u} D_{v} \phi =-\frac{ \partial_{v}r \cdot D_{u}\phi}{r}-\frac{\partial_{u}r \cdot  D_{v}\phi}{r} +\frac{ iq_{0} Q \Omega^{2}}{4r^{2}} \phi
-\frac{ m^{2}\Omega^{2}}{4}\phi,\end{equation} 	\begin{equation} \label{chargeUEinstein}
\partial_u Q = -q_0 r^2 \Im( \phi \overline{ D_u \phi}),
\end{equation}	\begin{equation} \label{ChargeVEinstein}
\partial_v Q = q_0 r^2 \Im( \phi \overline{D_v \phi}),
\end{equation}
\begin{equation} \label{Rayching}\partial_{u}(\frac {\partial_{u}r}{\Omega^{2}})=\frac {-r}{\Omega^{2}}|  D_{u} \color{black}\phi|^{2}, \end{equation} 
\begin{equation} \label{RaychV}\partial_{v}(\frac {\partial_{v}r}{\Omega^{2}})=\frac {-r}{\Omega^{2}}|D_{v}\color{black}\phi|^{2},\end{equation} where the gauge derivative is defined by $D_{\mu}:= \partial_\mu+iq_0 A_{\mu}$, and the electromagnetic potential $A_{\mu}=A_u du + A_v dv$ satisfies 	\begin{equation} \partial_u A_v - \partial_v A_u = \frac{Q \Omega^2}{2r^2}.\end{equation}		
\begin{rmk}
	\eqref{Rayching} and \eqref{RaychV} are the Raychaudhuri equations, and we shall use this terminology in the paper.
\end{rmk} \begin{rmk} \label{choice}
The double null coordinates $(u,v)$ are not unique and can be re-parametrized as $du'=f_1(u) du$,  $dv'=f_2(v) dv$ for functions $f_1>0$, $f_2>0$. This procedure takes $\mathcal{Q}$ to a different domain  $\mathcal{Q}'$ but does not change the system of equations.
\end{rmk}

Subsequently, we define the Lorentzian gradient of $r$, and introduce the mass ratio $\mu$ by the formula $$ 1-\mu:=g_{\mathcal{Q}}(\nabla r,\nabla r),$$ where we recall that $g_{\mathcal{Q}}$ is the radial part of $g$ defined in \eqref{gdef}. We can also define the Hawking	
mass:

$$ \rho := \frac{\mu \cdot r}{2} =\frac{r}{2} \cdot(1- g_{\mathcal{Q}} (\nabla r, \nabla r )).$$	

Notice that the $(u,v)$ coordinate system, we have $g_{\mathcal{Q}} (\nabla r, \nabla r )= \frac{-4 \partial_u r \cdot \partial_v r}{\Omega^2}$. We can then define $\kappa$:

\begin{equation} \label{kappa}
\kappa = \frac{\partial_v r}{ 1-\frac{2\rho}{r}}=\frac{-\Omega^2}{4\partial_u r}\in \mathbb{R} \cup \{ \pm \infty\} .
\end{equation}

Now we introduce the modified mass $\varpi$ which involves the charge $Q$:

\begin{equation} \label{electromass}
\varpi := \rho + \frac{Q^2}{2r}= \frac{\mu r}{2} + \frac{Q^2}{2r} .
\end{equation}	An elementary computation relates the previously quantities : 	\begin{equation} \label{murelation}
1-\frac{2\rho}{r} = 1-\frac{2\varpi}{r}+\frac{Q^2}{r^2}=\frac{-4 \partial_u r \cdot \partial_v r}{\Omega^2}= \kappa^{-1} \cdot \partial_v r.
\end{equation}
On the sub-extremal Reissner--Nordstr\"{o}m spacetime $g_{RN}$ of mass $\varpi \equiv M>0$, charge $Q \equiv e$ with $0<|e|<M$, we denote $r_+=M+\sqrt{M^2-e^2}$, the area-radius of the event horizon $\mathcal{H}^+$, its surface gravity $2K_+:= \frac{2}{r^2_+}(M- \frac{e^2}{r_+})>0$, and $r_-=M-\sqrt{M^2-e^2}$, the radius of the Cauchy horizon, its surface gravity $2K_-:= \frac{2}{r^2_-}(M- \frac{e^2}{r_-})<0$.

In view of the definition of new quantities $\rho$, $\kappa$, one can derive new PDEs governing these quantities, as reformulations of the system previously written: the first two are variants of the ingoing Raychaudhuri equation \eqref{Rayching}:
\begin{equation} \label{RaychU}
\partial_u (\kappa^{-1}) =\frac {4r}{\Omega^{2}}|  D_{u} \color{black}\phi|^{2},
\end{equation} \begin{equation}\label{RaychU2}\partial_{u}(\log(\kappa^{-1}))=\frac {r}{|\partial_u r|}|  D_{u} \color{black}\phi|^{2}. \end{equation} Next, we can derive two transport equations for the Hawking mass using \eqref{Radius}, \eqref{RaychU}, \eqref{RaychV}: 
\begin{equation} \label{massUEinstein}
\partial_u \rho = \frac{-2r^2 \Omega^2}{\partial_v r}| D_u \phi |^2  +  \left(\frac{m^2}{2} r^2 |\phi|^2+ \frac{Q^2}{r^2} \right) \cdot \partial_u r ,
\end{equation}
\begin{equation} \label{massVEinstein} 
\partial_v \rho = \frac{r^2}{2\kappa}| D_v \phi |^2+ \left(\frac{m^2}{2} r^2 |\phi|^2+ \frac{Q^2}{r^2} \right) \cdot \partial_v r.
\end{equation}

Now we can reformulate our former equations to put them in a form that is more convenient to use. For instance, the Klein-Gordon wave equation \eqref{Field} can be expressed in different ways, using the commutation relation $[D_u,D_v]=\frac{ iq_{0} Q \Omega^{2}}{2r^{2}}$: \begin{equation}\label{Field2}
D_{u}(rD_{v} \phi) =-\partial_{v}r \cdot D_u\phi +\frac{ \Omega^{2} \cdot \phi}{4r} \cdot ( i q_{0} Q-m^2 r^2)
, \end{equation}\begin{equation}\label{Field3}
D_{v}(rD_{u} \phi) =-\partial_{u}r \cdot D_{v}\phi 
- \frac{ \Omega^{2} \cdot \phi}{4r} \cdot ( i q_{0} Q+m^2 r^2)
. \end{equation} We can also re-write  \eqref{Omega} and \eqref{Radius}: 	\begin{equation}\label{Omega3}
\partial_{u}\partial_{v} \log(r\Omega^2)=  \frac{ \Omega^{2}}{4r^{2}} \cdot \left(1-   \frac{3Q^2}{r^{2}} +m^{2}r^2 |\phi|^{2} - 8 r^2\Re( \frac{D_{u}\phi}{\Omega^2} \cdot  D_{v}\bar{\phi})  \right),
\end{equation}	\begin{equation} \label{Radius3}
-\partial_u  \partial_v (\frac{r^2}{2})	=\partial_u (-r \partial_v r) =	\partial_v (r | \partial_u r |) = \frac{\Omega^2}{4}\cdot (1- \frac{Q^2}{r^2}-m^2 r^2 |\phi|^2).
\end{equation}

\subsection{One-ended smooth solutions and regularity conditions on $\Gamma$} In view of \eqref{gdef}, we define $\Gamma:= \{ (u,v) \in \mathcal{Q}, r(u,v)=0 \}$, the center of symmetry. From Theorem \ref{Kommemi}, $\Gamma \neq \emptyset$ is time-like.

The smoothness of the solution $(M,g,\phi,F)$ imposes the following boundary conditions for the geometric quantities. \begin{equation} \label{reg1Gamma}
|g_{\mathcal{Q}} (\nabla r, \nabla r )_{|\Gamma}| <+\infty, \hskip 5 mm	|\phi|_{|\Gamma} <+\infty, \hskip 5 mm |F_{\mu \nu}|_{|\Gamma} <+\infty,  
\end{equation}

The reader can check that the regularity condition \eqref{reg1Gamma} imposes in particular the following boundary conditions, which are crucial in the present paper, and follow from Theorem \ref{Kommemi}: \begin{equation} \label{reg2Gamma}
\rho_{|\Gamma}=0 \hskip 5 mm	r\phi_{|\Gamma}=0, \hskip 5 mm Q_{|\Gamma} =0.
\end{equation}


Notice (see Figure \ref{Fig1}) also that every future directed ingoing ray must intersect $\Gamma$: for a fixed $v$, we denote $u_{\Gamma}(v)$, the $u$ coordinate of the intersection point: $(u_{\Gamma}(v),v) \in \Gamma$. Notice also that every past directed outgoing ray inside the black hole must 
intersect $\Gamma$: for a fixed $u$, we denote $v_{\Gamma}(u)$, the $v$ coordinate of the intersection point: $(u,v_{\Gamma}(u)) \in \Gamma$.


\subsection{Trapped region and apparent horizon}

We recall that it is assumed in Theorem \ref{Kommemi} that there is no anti-trapped surface in the initial data i.e.\ $\partial_u r_{|\Sigma_0} <0$ , hence $\partial_u r <0$ in the whole space-time, as a consequence of \eqref{RaychU}. In particular, $\kappa>0$ and $\partial_v r$ has the same sign as $	1-\frac{2\rho}{r} $.

We define the trapped region $\mathcal{T}$, the regular region $\R$ and the apparent horizon $\A$ as (c.f. \cite{Kommemi}):
\begin{enumerate}
	\item  \label{characttrapped}$(u,v) \in \T$ if and only if $\partial_v r(u,v)<0$ if and only if $1-\frac{2\rho(u,v)}{r(u,v)}<0$,
	\item $(u,v) \in \R$ if and only if $\partial_v r(u,v)>0$ if and only if $1-\frac{2\rho(u,v)}{r(u,v)}>0$,
	\item $(u,v) \in \A$ if and only if $\partial_v r(u,v)=0$ if and only if $1-\frac{2\rho(u,v)}{r(u,v)}=0$.
\end{enumerate}	
Note that the no anti-trapped surface assumption of Theorem \ref{Kommemi} implies, as $r_{|\Gamma}=0$, that $\Gamma \subset \R$.
\subsection{Double null coordinate choice}

We renormalize the coordinate $v$ by the condition $\partial_v r_{|\mathcal{H}^+}= 1-\frac{2\rho_{|\mathcal{H}^+}}{r_{|\mathcal{H}^+}}$, which is equivalent, by \eqref{murelation} to
\begin{align} \label{gauge1}
& \kappa_{|\mathcal{H}^+} \equiv 1, \\ & v(p)=0 \text{ where }  \{p\} = \mathcal{H}^+ \cap \Gamma.\label{gauge1.5}
\end{align}
Note that the second condition is simply a normalization of $v$ up to an additive constant, a choice which leaves \eqref{gauge1} invariant.\color{black}

As for the choice of $u$ coordinate, it is much less important since we will always write estimates which are independent of the $u$ coordinate choice. For concreteness, we introduce the following $u$-gauge \color{black} 
\begin{equation} \label{gauge2}
u_{\Gamma}(v)=v.
\end{equation}  Note that by \eqref{gauge1.5} we have \color{black} $\mathcal{H}^+= \{u=0\}$.

For convenience, we will keep the notations of section \ref{preliminary} for the boundary components. Note however that under the coordinate choice \ref{gauge1}, \ref{gauge2} we now have $u_{-\infty}=0$, $v_{\infty}=+\infty$,$u(b_{\Gamma})=+\infty$ \color{black} and $(u,v) \in \mathcal{Q}'$ where $\mathcal{Q}' \subset\RR^2$ is an \underline{unbounded} set in $\RR^2$.

\subsection{Electromagnetic gauge choice, and gauge invariant estimates}

The system of equations \eqref{1.1}, \eqref{2.1}, \eqref{3.1}, \eqref{4.1}, \eqref{5.1} is invariant under the gauge transformation : $$ \phi \rightarrow  e^{-i q_0 f } \phi ,$$
$$ A \rightarrow  A+ d f. $$
where $f$ is a smooth real-valued function. By an easy computation, one can show that the quantities $|\phi|$ and $|D_{\mu}\phi|$ are gauge invariant. We can then derive a gauge invariant estimate (see Lemma 2.1 in \cite{Moi2}): for all $u_1<u_2$, $v_1<v_2$: $$ |\phi|(u_2,v) \leq |\phi|(u_1,v)+ \int_{u_1}^{u_2} |D_u \phi|(u',v) du',$$
$$ |\phi|(u,v_2) \leq |\phi|(u,v_1)+ \int_{v_1}^{v_2} |D_v \phi|(u,v') dv'.$$

In the present paper, we only use such gauge invariant estimates and we will not involve $A_{\mu}$ in any computation.

\section{Statement of the main results} \label{statement}
We now give a precise statement of our theorems from section \ref{hypsection}, for space-times as in Theorem \ref{Kommemi}.
\subsection{Precise version of Theorem \ref{roughversion}}

We start with a precise version of Theorem \ref{roughversion}, the main result of the paper.
 \begin{thm} \label{maincorollary}
	For initial data as in Theorem \ref{Kommemi}, assume there exists an outgoing future cone emanating from $(u_1,v_1) \in \T$ and reaching $\CH$ on which $\phi$ and $Q$ obey the following upper bounds: for all $v \geq v_1$, \begin{equation} \label{corollaryassumption}
	|\phi|(u_1,v)+ |Q|(u_1,v) \leq C \cdot |\log(\rho)|,
	\end{equation} for some $C>0$ and the Hawking mass $\rho$ blows up \begin{equation} \label{corollaryassumption2}
	\lim_{v \rightarrow+\infty} \rho(u_1,v) =+\infty.
	\end{equation} Then $$\mathcal{S}^1_{\Gamma} \cup \mathcal{CH}_{\Gamma} \cup \mathcal{S}^2_{\Gamma} \cup  \mathcal{S}  \neq \emptyset.$$
\end{thm}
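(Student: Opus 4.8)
The plan is to argue by contradiction. By the remark following the coordinate identifications of section~\ref{preliminary}, the conclusion $\mathcal{S}^1_{\Gamma} \cup \mathcal{CH}_{\Gamma} \cup \mathcal{S}^2_{\Gamma} \cup \mathcal{S} \neq \emptyset$ is equivalent to $v(b_\Gamma) < +\infty$, so I would instead assume $v(b_\Gamma) = v_\infty = +\infty$: the center $\Gamma$ extends to the future all the way up to $v = +\infty$ and $\CH \cup \mathcal{S}_{i^+}$ closes off the space-time at $b_\Gamma = (u(b_\Gamma), +\infty)$. I would first record two geometric facts. Since $(u_1,v_1) \in \T$ and the Raychaudhuri equation \eqref{RaychV} makes $\partial_v r / \Omega^2$ non-increasing in $v$, the entire cone $\{u_1\} \times [v_1, +\infty)$ stays trapped; moreover, using the mass blow-up \eqref{corollaryassumption2} and its propagation, the whole Cauchy horizon $\CH$ is approached through $\T$, so the trapped region fills the bulk below $\CH$. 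On the other hand, $\Gamma \subset \R$ together with the boundary conditions \eqref{reg2Gamma} (so that $1 - 2\rho/r \to 1$ as $r \to 0$) gives $\partial_v r > 0$ in a neighborhood of the center, i.e.\ a regular ingoing segment abuts $\Gamma$ on every late ray.

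Next I would exploit the tension between a trapped region near $\CH$ and a regular region near $\Gamma$. Fix a large $v_p$ and travel inward (increasing $u$, the direction of $\partial_u$) along $\{v = v_p\}$ from the trapped value $u_1$ towards the center $u_\Gamma(v_p)$: since $\partial_v r < 0$ at $u_1$ and $\partial_v r > 0$ near $u_\Gamma(v_p)$, the last zero $u_* := \sup\{u < u_\Gamma(v_p) : \partial_v r(u,v_p) = 0\}$ defines a point $p = (u_*, v_p) \in \A$ immediately followed by a regular ingoing segment. At any point of $\A$ one has $\partial_v r = 0$, so \eqref{Radius} collapses to
$$ \partial_u \partial_v r = \frac{\Omega^2}{4r}\left( \frac{Q^2}{r^2} + m^2 r^2 |\phi|^2 - 1\right), $$
and the sign of the right-hand side decides whether the point is followed, as $u$ increases, by a trapped segment (sign $<0$, the \emph{outer} apparent horizon) or a regular one (sign $>0$, the \emph{inner} apparent horizon). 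As $p$ is followed by a regular segment, necessarily $\frac{Q^2}{r^2} + m^2 r^2 |\phi|^2 \geq 1$ at $p$.

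The heart of the argument is a \emph{focusing estimate} contradicting this inequality. On the causal rectangle $R = J^+(q) \cap J^-(p)$ of Figure~\ref{Figrect}, whose top vertex is $p \in \A$, whose remaining corner $(u_*, v_\Gamma(u_*))$ sits on the center, and whose bottom vertex $q$ is taken in $\T$ (available for $v_p$ large, because $u_\Gamma$ is continuous, strictly increasing, and tends to $u(b_\Gamma) > u_1$, while $\T$ fills the bulk), I would integrate the system of section~\ref{geometricframework}. Expressing $Q(p)$ as the integral of \eqref{ChargeVEinstein} up the edge $\{u = u_*\}$ starting from the center --- where $Q = 0$, $r = 0$, $r\phi = 0$, $\rho = 0$ by \eqref{reg2Gamma} --- and bounding the resulting field integrals $\int r^2 |\phi|\,|D_v \phi|$ via the energies appearing in \eqref{RaychV}, \eqref{RaychU} and the mass equation \eqref{massVEinstein}, one should obtain, using the hypotheses \eqref{corollaryassumption}--\eqref{corollaryassumption2}, the quantitative bound $\frac{Q^2}{r^2} + m^2 r^2 |\phi|^2 < 1$ at $p$. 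This contradicts the previous paragraph, so $v(b_\Gamma) = +\infty$ is impossible and the theorem follows.

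The main obstacle is exactly this focusing estimate, namely proving $|Q| < r$ (and $m\,r|\phi| < 1$) at the apparent-horizon point $p$. The a priori hypothesis \eqref{corollaryassumption} only controls $|Q|$ by $C|\log \rho|$, which is far too crude once $r(p) \to 0$ near $b_\Gamma$; one must genuinely produce the much stronger bound $|Q(p)| < r(p)$ by extracting smallness from the vanishing of $Q$ and of $r\phi$ on $\Gamma$ and transporting it across $R$. The delicate competition, near the center, is between the blow-up of the Hawking mass \eqref{corollaryassumption2} --- which through \eqref{massVEinstein} forces large field energy --- and the smallness of $r$; it is here that the precise control of the Maxwell field $Q$ near $\Gamma$, emphasized in the introduction, is indispensable.
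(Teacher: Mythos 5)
Your outline reproduces the paper's skeleton faithfully: the contradiction setup, the trapped neighborhood of $\CH$ via propagation of the mass blow-up, the construction of the last apparent-horizon point $p=(u_*,v_p)$ followed in $u$ by a regular segment (Proposition \ref{halfdiamond}), the sign argument at $\A$ via \eqref{Radius3}, and the reduction of everything to the bound $\frac{Q^2}{r^2}+m^2r^2|\phi|^2<1$ at $p$ (Theorem \ref{mainestimate}). But the proposal stops exactly where the paper's work begins: the focusing estimate is asserted (``one should obtain''), not proven, and you concede as much in your final paragraph. This is a genuine gap, not a routine verification. Integrating \eqref{ChargeVEinstein} up the edge $\{u=u_*\}$ from the center and applying Cauchy--Schwarz together with \eqref{massVEinstein} gives
\begin{equation*}
|Q|(p)\;\le\; |q_0|\,\rho^{1/2}(p)\left(\int_{v_\Gamma(u_*)}^{v_p} 2r^2\kappa\,|\phi|^2(u_*,v')\,dv'\right)^{1/2},
\end{equation*}
and the vanishing of $Q$, $\rho$, $r\phi$ on $\Gamma$ only supplies the starting point of this integration; it gives no smallness whatsoever for the flux on the right, since on the edge $\{u=u_*\}$ neither $\kappa$ nor $|\phi|$ is controlled by any hypothesis of the theorem.

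The ideas that close this gap in the paper are absent from your sketch. First, the smallness is \emph{not} extracted near the center: it is the tail of the integral $\int\kappa(u_1,\cdot)(1+|\phi|^2(u_1,\cdot))\,dv$ on the \emph{data} cone, whose finiteness is precisely what \eqref{corollaryassumption}--\eqref{corollaryassumption2} purchase (Corollary \ref{easycoro}: by \eqref{murelation}, $\kappa=|\partial_v r|/(\tfrac{2\rho}{r}-1)$, so mass blow-up plus the logarithmic bound on $\phi$ make the integrand $\lesssim|\partial_v r|$, hence integrable). In particular the blow-up is the friend of the argument, not the adversary described in your closing paragraph; at $p\in\A$ the mass is not large at all, since $2\rho(p)=r(p)\le r_1$. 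Second, this control must be transported from $\{u_1\}$ to $\{u_*\}$: this is the a priori estimate \eqref{ingoingphi}, derived from the ingoing Raychaudhuri equation \eqref{RaychU2}, combined with the monotonicity of $\kappa^{-1}$ in $u$ and the elementary bounds $x\log(x^{-1})\le e^{-1}$, $\log(x)/x\le e^{-1}$ (Lemma \ref{easylemma}), which together yield $\int 2r^2\kappa|\phi|^2(u_*,\cdot)\,dv'\le 4r_1\, r(p)\,\delta$ with $\delta$ the data-cone tail, see \eqref{ingoingphi2}. Third, the leftover factor $\rho^{1/2}(p)$ from Cauchy--Schwarz is absorbed using the exact identity $\tfrac{2\rho(p)}{r(p)}=1$ (definition of $\A$), giving $\tfrac{Q^2}{r^2}(p)\le 2q_0^2 r_1\delta$; the massive term requires the separate Lemma \ref{massestimatelemma}. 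Finally, a smaller but real omission: you need $v_\Gamma(u_*)\to+\infty$ as $v_p\to+\infty$, so that the rectangle sits in the tail region where $\delta$ is below the threshold $\delta_1$; this follows from the monotonicity of $v\mapsto\uA$ and the limit argument in Proposition \ref{halfdiamond}, not merely from continuity of $u_\Gamma$. Without these ingredients the focusing estimate, and with it the theorem, does not follow from your sketch.
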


In reality, Theorem \ref{maincorollary} will be realized as a consequence of a more general theorem, for which we replace the mass blow-up by the more relaxed condition \eqref{integralkappafinite}, and we make the soft, but global assumption that there exists a trapped neighborhood of the Cauchy horizon, at least for sufficiently late times. While these assumptions can seem obscure at first, we prove that they are both satisfied if \eqref{corollaryassumption} and \eqref{corollaryassumption2} hold, so the following theorem is more general than Theorem \ref{maincorollary}:
\begin{thm} \label{maintheorem}
	For initial data as in Theorem \ref{Kommemi}, assume that $\CH \neq \emptyset$ and there exists $(u_1,v_1) \in \T$, with $(u_1,+\infty)\in~ \CH$ such that:
	\begin{equation} \label{integralkappafinite}
	\int_{v_1}^{+\infty} \kappa(u_1,v) (1+|\phi|^2(u_1,v)) dv	< +\infty.
	\end{equation} Denoting $\uend$, the $u$ coordinate of the  future \color{black} end-point of $\CH \cup \mathcal{S}_{i^+}$, assume that there exists $u_0 < \uend$ such that, for all $u_0  \leq u < \uend$, there exists $v(u)$ such that $(u,v(u)) \in \T$. Then $$\mathcal{S}^1_{\Gamma} \cup \mathcal{CH}_{\Gamma} \cup  \mathcal{S}^2_{\Gamma} \cup  \mathcal{S}  \neq \emptyset.$$

\end{thm}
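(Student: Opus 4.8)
The plan is to argue by contradiction. Suppose the conclusion fails, so $\mathcal{S}^1_\Gamma \cup \mathcal{CH}_\Gamma \cup \mathcal{S}^2_\Gamma \cup \mathcal{S} = \emptyset$; then, by the remark identifying the boundary components with intervals, $v(b_{\Gamma}) = +\infty$, i.e. the Cauchy horizon closes off the space-time at $b_{\Gamma}$, with $\uend = u(b_{\Gamma})$, and the center curve $u_{\Gamma}(v)$ increases to $u(b_{\Gamma})$ as $v \to +\infty$. In the gauge of the paper ($\partial_u r < 0$, with $u$ increasing towards $\Gamma$), I will exhibit a family of apparent-horizon points crossed along ingoing rays that accumulate at $b_{\Gamma}$, and confront the sign of $\partial_u \partial_v r$ there — which by \eqref{Radius3} is governed by the charge-to-area ratio $Q^2/r^2$ — with an upper bound on $|Q|$ forced by the regularity condition $Q_{|\Gamma} = 0$.

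First I would exploit the integrability assumption \eqref{integralkappafinite} to obtain quantitative control of the matter on, and in a trapped neighborhood of, the cone $\{u_1\}$: combining Raychaudhuri \eqref{RaychV}, the gauge-invariant transport estimates, and the charge equation \eqref{ChargeVEinstein}, I expect $|\phi|$ and $|Q|$ to be bounded and convergent as $v \to +\infty$ while $r$ stays bounded below there (the cone reaches $\CH$, where $r > 0$ by Theorem \ref{Kommemi}). Together with the hypothesis that every cone $u_0 \le u < u(b_{\Gamma})$ carries a trapped point $(u,v(u)) \in \T$, this produces the causal rectangles of Figure \ref{Figrect}, each with top vertex on $\A$, a trapped past vertex, and a vertex on $\Gamma$, on which the estimates of the next step are run.

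The main step is the focusing/charge estimate near the center. Integrating \eqref{chargeUEinstein} ingoingly from $\Gamma$, where $Q_{|\Gamma} = 0$, and using $r(u',v) \le r(u,v)$ for $u' \ge u$, I obtain $|Q|(u,v) \le |q_0|\, r(u,v)^2 \int_u^{u_{\Gamma}(v)} |\phi|\,|D_u\phi|\, du'$, whence $|Q|/r \lesssim r \cdot \int |\phi|\,|D_u\phi|\, du'$ near the center. The crux is to bound the ingoing integral uniformly: with $\phi$ controlled, Cauchy--Schwarz reduces this to bounding $\int \frac{r}{\Omega^2}|D_u\phi|^2\,du' = \frac{1}{4}\,\Delta(\kappa^{-1})$ (via \eqref{RaychU}) and $\int \frac{\Omega^2}{r}\,du'$ across the rectangle, i.e. to propagating the control of $\kappa$, $\Omega^2$ and the mass from $\{u_1\}$ into the near-center region while keeping $r$ bounded below away from $\Gamma$. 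This propagation is the hard part of the argument, and it is precisely where the quantitative input of the weak null singularity (the blow-up encoded in \eqref{integralkappafinite}) and the control of the Maxwell field by the mass enter; the target is $Q^2/r^2 \to 0$ as one approaches $b_{\Gamma}$.

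Finally I would close the contradiction geometrically. For $u$ approaching $u(b_{\Gamma})$, take the trapped point $(u,v(u)) \in \T$; along the ingoing ray $\{v = v(u)\}$ the quantity $\partial_v r$ is negative at $(u,v(u))$ and positive at the center $(u_{\Gamma}(v(u)),v(u))$ (since $\Gamma \subset \R$), so it vanishes at some intermediate $p = (u^\ast, v(u)) \in \A$ with $\partial_u(\partial_v r)(p) \ge 0$. Evaluating \eqref{Radius3} at $p$ (where $\partial_v r = 0$) gives $\partial_u \partial_v r = -\frac{\Omega^2}{4r}\left(1 - \frac{Q^2}{r^2} - m^2 r^2 |\phi|^2\right)$, so the crossing forces $Q^2/r^2 + m^2 r^2 |\phi|^2 \ge 1$ at $p$. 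As $u \to u(b_{\Gamma})$ one has $u^\ast \to u(b_{\Gamma})$ and $p \to b_{\Gamma}$, so $r(p) \to 0$ and, $\phi$ being bounded, $Q^2/r^2 \ge 1 - m^2 r^2 |\phi|^2 \to 1$ at $p$; this contradicts $Q^2/r^2 \to 0$ from the previous step. Hence the Penrose diagram of Figure \ref{Fig3} is impossible and $\mathcal{S}^1_\Gamma \cup \mathcal{CH}_\Gamma \cup \mathcal{S}^2_\Gamma \cup \mathcal{S} \neq \emptyset$, which proves Theorem \ref{maintheorem} and, as a special case, Theorem \ref{maincorollary}.
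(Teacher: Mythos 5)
Your proposal reproduces the paper's high-level skeleton (contradiction, causal rectangles anchored at $\Gamma$, control of $Q$ via $Q_{|\Gamma}=0$, sign of $\partial_u\partial_v r$ at an apparent-horizon point via \eqref{Radius3}), but the step you explicitly defer as ``the hard part'' is in fact the entire content of the theorem, and the route you sketch for it would not work. You integrate the \emph{ingoing} Maxwell equation \eqref{chargeUEinstein} from $\Gamma$ and then need, via Cauchy--Schwarz, uniform bounds on $\|\phi\|_{\infty}$, on $\int \frac{r}{\Omega^2}|D_u\phi|^2\,du'=\frac14\Delta(\kappa^{-1})$, and on $\int\frac{\Omega^2}{r}\,du'$ across rectangles accumulating at $b_{\Gamma}$ --- i.e.\ you must ``propagate control of $\kappa$, $\Omega^2$ and the mass into the near-center region.'' But in the regime of the theorem these are exactly the quantities that degenerate (this is what mass inflation does), there is no smallness to bootstrap on near $b_\Gamma$, and the hypothesis \eqref{integralkappafinite} lives only on the single cone $\{u_1\}$; nothing in the assumptions lets you propagate pointwise bounds on $\kappa^{-1}$ or $\Omega^2$ into the bulk. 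The paper's proof (Theorem \ref{mainestimate}, Lemma \ref{chargestimatelemma}, Lemma \ref{massestimatelemma}) is engineered precisely to avoid such propagation: it integrates the \emph{outgoing} Maxwell equation \eqref{ChargeVEinstein} from the center, bounds the outgoing flux $\int\frac{r^2}{2\kappa}|D_v\phi|^2$ by $\rho(u,v)$ using the monotonicity \eqref{massVEinstein} valid in $\R\cup\A$, transfers $|\phi|^2(u,\cdot)$ back to the data cone via the a priori Raychaudhuri estimate \eqref{ingoingphi} (with the calculus Lemma \ref{easylemma} and the $\log x/x\le e^{-1}$ trick absorbing the potentially divergent $\log\kappa^{-1}$ ratios), and then cancels the Hawking-mass factor exactly using $2\rho=r$ at the top vertex $(u,v)\in\A$. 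The only quantitative input is \eqref{mainassumption} on $\{u_1\}$, and the smallness threshold $\delta_1$ depends only on $r_1=r(u_1,v_\Gamma(u))\le r(u_1,v_1)$, never on quantities at the top vertex.

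Your closing step also contains an unjustified claim: from $u^\ast\to u(b_\Gamma)$ and $v(u)\to+\infty$ you infer $r(p)\to 0$ and invoke boundedness of $\phi$ at $p$. Neither is available. Continuity of $r$ at $b_\Gamma$ with value $0$ from within $\mathcal{Q}^+$ is not a priori true --- Theorem \ref{Kommemi} explicitly allows $r$ to fail to extend (positively or at all) at the future endpoint of $\CH$, and monotonicity ($\partial_u r<0$) only gives $r(p)>0$, not smallness; likewise pointwise boundedness of $\phi$ near $b_\Gamma$ does not follow from the flux assumption \eqref{integralkappafinite}. The paper sidesteps both issues: Proposition \ref{halfdiamond} selects the \emph{last} apparent-horizon point on each ingoing cone, so that the segment towards $\Gamma$ is regular, while Theorem \ref{mainestimate} shows the segment just after that point is trapped --- a contradiction requiring no knowledge of $r$ or $\phi$ at $b_\Gamma$ (the massive term is handled at the top vertex by Lemma \ref{massestimatelemma}, again by outgoing integration from the center, not by a sup bound on $\phi$). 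So the geometric shell of your argument is sound and mirrors the paper's, but the two analytic pillars it rests on --- $Q^2/r^2\to 0$ and $r\to 0$, $\phi$ bounded at $b_\Gamma$ --- are respectively unproven-by-a-problematic-route and not true as stated.
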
 \begin{rmk} \label{nomassblowuprmk}
Theorem \ref{maintheorem} is a result of independent interest, as its assumptions are a priori uncorrelated with the blow up (or the boundedness) of the Hawking mass, therefore we may hope that they hold in various settings, in particular in the cosmological case where the blow up of the mass is not expected generically.
\end{rmk}

Because the integral of \eqref{integralkappafinite} is the crucial quantity governing the problem, we will first prove Theorem \ref{maintheorem} in section \ref{generaltheoremsection}, and then deduce Theorem \ref{maincorollary} in subsection \ref{propagationsection}, using the propagation of the Hawking mass blow up proven in \cite{Moi4}.

\subsection{Precise version of Theorem \ref{rough2}}
In the next theorem, we replace the integrability assumption \eqref{integralkappafinite} and the trapped neighborhood assumption of Theorem~\ref{maintheorem} by a  decay assumption on the event horizon $\mathcal{H}^+$  (at the expected rates, see section \ref{hypsection}).
\begin{thm} \label{theoremevent}
	We normalize $v$ by the gauge condition \eqref{gauge1}. For some $s>\frac{3}{4}$ we assume on $\mathcal{H}^+$, for all $v \geq v_0$ \begin{equation} \label{hypupperbound}
	|\phi|_{|\mathcal{H}^+}(v) + |D_v \phi|_{|\mathcal{H}^+}(v) \lesssim v^{-s},	\end{equation} \begin{equation} \label{hyplowerbound}
	\int_{v}^{+\infty} |D_v \phi|^2_{|\mathcal{H}^+}(v')dv' \gtrsim v^{-p},	\end{equation}
	for some $2s-1 \leq p \leq \min\{2s, 6s-3\}$. On the ingoing cone, we assume a red-shift estimate: \begin{equation} \label{RS} |D_u \phi|(u,v_0) \lesssim |\partial_u r|(u,v_0),
	\end{equation} for all $u \leq u_0$. Additionally, assume that a sub-extremal Reissner--Nordstr\"{o}m event horizon is approached, i.e. on $\mathcal{H}^+$ \begin{equation} \label{subextr}
	0<	 \limsup_{v \rightarrow +\infty} \frac{ |Q|_{|\mathcal{H}^+}(v)}{r_{|\mathcal{H}^+}(v)} <1.
	\end{equation}
	
	Then $$\mathcal{S}^1_{\Gamma} \cup \mathcal{CH}_{\Gamma} \cup \mathcal{S}^2_{\Gamma} \cup  \mathcal{S}  \neq \emptyset.$$

\end{thm}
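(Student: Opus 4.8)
The plan is to reduce the proof to the dichotomy established by the author in \cite{Moi4} and then treat the two resulting alternatives separately, both of which will yield the desired conclusion $\mathcal{S}^1_\Gamma \cup \mathcal{CH}_\Gamma \cup \mathcal{S}^2_\Gamma \cup \mathcal{S} \neq \emptyset$. First I would check that the hypotheses \eqref{hypupperbound}, \eqref{hyplowerbound}, \eqref{RS} and \eqref{subextr} are exactly the quantitative exterior stability assumptions needed to invoke the main result of \cite{Moi4}; the sub-extremality \eqref{subextr} moreover guarantees, through \cite{Moi}, that $\CH \neq \emptyset$. The theorem of \cite{Moi4} then provides the dichotomy: \emph{either} (i) the Hawking mass $\rho$ blows up along every sufficiently late outgoing trapped cone reaching $\CH$, with $\phi$ and $Q$ bounded there, \emph{or} (ii) $\CH$ is an isometric copy of the Reissner--Nordström Cauchy horizon of radius $r_-$.

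In alternative (i) I would simply pick one late trapped point $(u_1,v_1) \in \T$ whose outgoing cone $\{u_1\} \times [v_1,+\infty)$ reaches $\CH$ --- such a cone exists because \cite{Moi} and \cite{Moi4} furnish a trapped neighborhood of $\CH$ near $i^+$. Along it $\rho(u_1,\cdot) \to +\infty$ while $\phi$ and $Q$ are bounded, hence \emph{a fortiori} they obey the logarithmic bound \eqref{corollaryassumption}. The hypotheses of Theorem \ref{maincorollary} are thus met, and its conclusion is precisely the statement we want. This branch therefore requires no argument beyond the main theorem.

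Alternative (ii) demands a genuinely different, rigidity-type argument, which I expect to be the main obstacle. I would argue by contradiction, supposing $\mathcal{S}^1_\Gamma \cup \mathcal{CH}_\Gamma \cup \mathcal{S}^2_\Gamma \cup \mathcal{S} = \emptyset$; by the coordinate description of section \ref{preliminary} this means $v(b_\Gamma) = +\infty$, i.e.\ the center endpoint $b_\Gamma$ sits on the terminal null line $\{v=+\infty\}$ and $\CH \cup \mathcal{S}_{i^+}$ closes off the space-time as in Figure \ref{Fig3}. On the isometric Reissner--Nordström Cauchy horizon one has $r \equiv r_- > 0$ and the incoming matter flux degenerates, so by the Raychaudhuri equation \eqref{RaychV} and the reformulated radius equation \eqref{Radius3} no focusing occurs in a neighborhood of $\CH$, with $\partial_v r$, $\partial_u r$ and $\Omega^2$ controlled by their Reissner--Nordström values. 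From this I would first deduce that $\mathcal{S}_{i^+} = \emptyset$: absence of focusing prevents $r$, once it has reached $r_-$, from decreasing back to $0$, so $\CH$ must reach $b_\Gamma$ directly, i.e.\ $u_\infty(\CH) = u(b_\Gamma)$. But then, for $u$ slightly below $u(b_\Gamma)$, the ingoing cone $\{u\} \times [v_\Gamma(u),+\infty)$ starts at the center with $r = 0$ and must climb to $r = r_-$ on $\CH$, while $v_\Gamma(u) \uparrow +\infty$ forces its $v$-length to shrink to zero; climbing from $0$ to $r_-$ over a vanishing $v$-interval forces $\partial_v r$ to blow up, contradicting the no-focusing control near $\CH$.

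The principal difficulty lies entirely in alternative (ii): one must propagate the isometric Reissner--Nordström structure off the null line $\CH$ into a genuine two-dimensional neighborhood, both to exclude an intervening segment $\mathcal{S}_{i^+}$ and to convert ``no focusing'' into a quantitative lower bound on $r$ and an upper bound on $\partial_v r$ that are incompatible with $r(b_\Gamma)=0$. By contrast, alternative (i) is immediate once \cite{Moi4} is applied. I would therefore concentrate the work on making the rigidity and no-focusing estimates precise near the \emph{entire} Cauchy horizon, relying on \eqref{Radius3}, \eqref{RaychV} and the static character of the Reissner--Nordström geometry; combining the two alternatives then establishes the theorem.
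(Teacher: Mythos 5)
Your overall architecture coincides with the paper's: invoke the classification of \cite{Moi4} (Theorem \ref{classificationprevioustheorem}), dispatch the dynamical/mixed alternative through Theorem \ref{maincorollary} (the paper in fact verifies \eqref{integralkappafinite} and the trapped-neighborhood hypothesis and applies Theorem \ref{maintheorem}, but this is the same step), and treat the static alternative by a rigidity argument showing that a static $\CH$ can be glued neither to $\mathcal{S}_{i^+}$ nor to $b_{\Gamma}$. The first branch of your proposal is correct and essentially identical to the paper's.

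The gap lies entirely in the static branch, and it is exactly the part you defer. First, everything hinges on promoting the static-type estimates, which \cite{Moi4} provides only on outgoing cones approaching $\CH$, to a genuine two-dimensional rectangle $[u_1,u_2]\times[v_1',+\infty)$ with $v_1'$ \emph{independent of} $u_2<u_{\CH}$; this is the paper's Proposition \ref{rigidboundsprop}, a full bootstrap on $\Omega^2$, $\phi$, $D_u\phi$, $Q$, $r$, $\partial_v r$, whose uniformity in $u_2$ is achieved via the gauge $\partial_u r(\cdot,v_1)=-1$ (which caps the $u$-extent by $2r_-$). Without that uniformity the estimates cannot be brought arbitrarily close to $b_{\Gamma}$ and no contradiction with the center can be extracted; you correctly identify this as ``the principal difficulty'' but do not supply it, and it cannot simply be quoted from \cite{Moi4}. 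Second, your concrete contradiction --- ``$v_{\Gamma}(u)\uparrow+\infty$ forces the $v$-length to shrink to zero, so climbing from $0$ to $r_-$ forces $\partial_v r$ to blow up'' --- is gauge-dependent and false as stated: in the gauge \eqref{gauge1}, in which any no-focusing bounds would actually be proven, the outgoing cone $\{u\}\times[v_{\Gamma}(u),+\infty)$ has \emph{infinite} $v$-length, so no blow-up of $\partial_v r$ is forced. The idea can be repaired gauge-invariantly: with the uniform bound $|\partial_v r|\lesssim v^{-2s}$ of the paper's bootstrap one gets $\int_{v_{\Gamma}(u)}^{+\infty}|\partial_v r|(u,v')\,dv'\lesssim v_{\Gamma}(u)^{1-2s}\rightarrow 0$, so $r$ cannot climb from $0$ to approximately $r_-$ (the paper argues even more simply: $r\geq \frac{r_-}{2}$ on the rectangle, which must contain points of $\Gamma$ since $v_{\Gamma}(u)\rightarrow+\infty$ as $u\rightarrow u(b_{\Gamma})=u_{\CH}$) --- but the repair requires precisely the missing proposition. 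Third, your exclusion of $\mathcal{S}_{i^+}$ (``absence of focusing prevents $r$ from decreasing back to $0$'') overlooks that $\mathcal{S}_{i^+}$ lies at $u\geq u_{\CH}$, beyond the reach of any estimate established below $\CH$; the paper needs a separate argument here (Corollary \ref{Siplusempty}): extend the near-Reissner--Nordstr\"{o}m bounds to $u=u_{\CH}$ by continuity, use openness of $\T$ to obtain a trapped strip $[u_{\CH},u_{\CH}+\eta]\times[v_1,+\infty)$, and integrate the wave equation \eqref{Radius3} over it to show that $r$ extends continuously (hence stays near $r_-$) at the endpoint of $\CH$, contradicting $r\rightarrow 0$ on $\mathcal{S}_{i^+}$.
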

\begin{rmk}
	The decay rates that we assume \eqref{hypupperbound}, \eqref{hyplowerbound} are conjectured to hold in the black hole exterior for generic Cauchy data, see section \ref{sectionrates}. Red-shift bounds such as \eqref{RS} are also conjectured to hold and reflect the fact that the event horizon $\mathcal{H}^+$ is a regular hyper-surface for the black hole metric and that $\phi$ is also regular across $\mathcal{H}^+$.
\end{rmk}


\section{Proof of Theorem \ref{maincorollary}  and Theorem \ref{maintheorem}} \label{generaltheoremsection}

\subsection{The strategy to prove  Theorem \ref{maincorollary}  and Theorem \ref{maintheorem}}
Theorem \ref{maincorollary}, with its stronger version Theorem \ref{maintheorem}, corresponds to Theorem \ref{roughversion}. Before starting their proof, we give in this sub-section an account of the strategy that we use in the rest of section \ref{generaltheoremsection}.

\subsubsection{The logic of the proof of Theorem \ref{maincorollary}  and Theorem \ref{maintheorem}}
In this section, we outline the proof that, under the assumptions of Theorem \ref{maincorollary}  and Theorem \ref{maintheorem}, $\mathcal{S}^1_{\Gamma} \cup \mathcal{CH}_{\Gamma} \cup  \mathcal{S}^2_{\Gamma} \cup  \mathcal{S}  \neq \emptyset$. The proof is divided in three steps, and follows a contradiction argument. Nonetheless, it relies on constructive focussing estimates (see section \ref{estimatesmethod}) which are valid independently and subsist, even after the contradiction has been established.

\begin{enumerate}
	\item \textit{Focusing properties on \textbf{a large class of} \color{black} causal rectangles with a vertex on the center, section \ref{keyestimatesection}} \label{stepone}
	
	We consider any causal rectangle of the form $[u_1,u] \times  [v_{\Gamma}(u),v]\subset \mathcal{Q}^+$, with $p=(u,v) \in \A$ and $q=(u_1,v_{\Gamma}(u)) \in \T$, as in Figure \ref{Figrect}.
	
	Assuming that the following focusing condition holds on the \\$\{u_1\} \times [v_{\Gamma}(u),v]$ side of the rectangle, for a small $\delta>0$, \begin{equation} \label{estimproof}
	\int_{v_{\Gamma}(u)}^{v} \kappa(u_1,v') \cdot (1+|\phi|^2(u_1,v')) dv' \leq \delta,
	\end{equation} we prove that the ingoing segment emanating from $(u,v)$ is trapped, i.e.\ there exists $ \eta>0$ with $(u,u+\eta) \times \{v\} \subset \T$.
	
	\item \textit{Construction of \textbf{one} rectangle with a vertex on the center, using the trapped neighborhood assumption, section \ref{halfdiamondsection}}
	
	For this step, we assume for now that there exists a neighborhood of the Cauchy horizon $\CH$ in the trapped region.
	
	Then, we work by contradiction and assume that  $\mathcal{S}^1_{\Gamma} \cup \mathcal{CH}_{\Gamma} \cup  \mathcal{S}^2_{\Gamma} \cup  \mathcal{S}  =\emptyset$. Then, we essentially establish that a connected component of $\A$ must terminate on $b_{\Gamma}$, via a soft argument using the trapped neighborhood of $\CH$ and the geometry resulting from $\mathcal{S}^1_{\Gamma} \cup \mathcal{CH}_{\Gamma} \cup  \mathcal{S}^2_{\Gamma} \cup  \mathcal{S}  =\emptyset$, c.f.\ the Penrose diagram of Figure \ref{Fig3}.
	
	As a result, we can construct one causal rectangle of the form $[u_1,u] \times  [v_{\Gamma}(u),v]$, with $(u,v) \in \A$ and $(u_1,v_{\Gamma}(u)) \in \T$, such that \eqref{estimproof} holds -- consequence of the blow up of the Hawking mass \eqref{roughassumption} -- and yet $(u,u+\epsilon) \times \{v\} \subset \R$ for some small $\epsilon>0$. This constitutes a \underline{contradiction}, using the result of Step \ref{stepone} that $(u,u+\eta) \times \{v\} \subset \T$ for $\eta>0$.
	
	\item \textit{Showing the existence of a trapped neighborhood of $\CH$ from the blow-up of the Hawking mass, section \ref{propagationsection}}
	
	In this final step, we prove, independently of the contradiction argument, and using \eqref{roughassumption} that there exists a trapped neighborhood of $\CH$. This follows from the propagation of the blow up of the Hawking mass over $\CH$ proven in \cite{Moi4}. Therefore, every outgoing cone over $\CH$ eventually satisfies $\frac{2\rho}{r}>1$, hence is trapped.
\end{enumerate}

\subsubsection{Novel focusing estimates, the key ingredient in the proof of  Theorem \ref{maincorollary}} \label{estimatesmethod}
The proof of Theorem \ref{maincorollary}  and Theorem \ref{maintheorem}, in particular step \ref{stepone}, relies on focusing estimates near the center, which are valid in any circumstance, independently of the proof of Theorem \ref{roughversion}. As a consequence of these estimates, the Hawking mass controls a very large flux of radiation. Thus, the charge, which is controlled by a smaller flux, is dominated by the Hawking mass.

\begin{enumerate}
	\item \textit{The Hawking mass $\rho$ controls an exponential flux of radiation} \label{step1}
	
	In the regular region $\mathcal{R}$, the combination of \eqref{RaychU2}\color{black} with \eqref{massVEinstein}  gives a focusing estimate  as follows (ignoring for now the $r$ weights).\begin{itemize}
		\item Integrating \eqref{RaychU2} in $u$ gives an estimate of the schematic form \begin{equation}
		\label{eq1}|\phi|^2 \lesssim \log(\kappa^{-1}).
		\end{equation} 
		\item Using \eqref{eq1}, \eqref{massVEinstein} gives an estimate of the schematic form	\begin{equation}\label{eq2}
 \partial_v \rho \gtrsim e^{|\phi|^2}  \cdot |D_v \phi|^2.
		\end{equation}
		\item  Integrating \eqref{eq2} in $v$ from\footnote{Note that the entire domain \color{black} of integration is contained in the regular region $\mathcal{R}$, see Section~\ref{preliminary} for more details.} the center where $\rho=0$, shows that $\rho$ controls  an exponential flux of radiation.
	\end{itemize}\color{black}

 We omitted various terms, some depending on the data on the outgoing cone $\{u_1\} \times [v_1,v_{max}]$, and $r$ factors.  
	The main novel ingredient through which we obtain this control is an a priori radiation flux estimate, see \eqref{ingoingphi}.
	
	\item \textit{The charge $Q$ is dominated by the Hawking mass $\rho$}
	
	By the Maxwell equation \eqref{ChargeVEinstein}, we also have an estimate on the charge $Q$,  using also \eqref{eq2} as follows \color{black} $$ |\partial_v Q| \lesssim |\phi| \cdot |D_v \phi| \ll \partial_v \rho,$$ which we integrate from the center where $Q=0$ and $\rho=0$. Thus, the charge is bounded  by the Hawking mass: $$ |Q|(u,v) \ll \rho(u,v).$$
	
	\item \textit{The consequence of focusing on the trapped region}
	
	In fact, the actual estimate which is obtained in step \ref{step1} is of the more specific form :
	$$ |Q|(u,v) \lesssim \rho(u,v) \cdot \int_{v_{\Gamma}(u)}^{+\infty} \kappa\cdot (1+|\phi|^2(u_1,v')) dv'.$$
	
	From the fact that $(u,v)$ is in the regular region/ apparent horizon, $\rho(u,v)	 \leq 2 r(u,v)$ (see Section~\ref{geometricframework}). Moreover, for any  small $\delta>0$, there exists $v$ such that \eqref{estimproof} is valid, using the blow up assumption \eqref{roughassumption}.
	
	
	Thus, combining the three inequalities, one can deduce $|Q|(u,v) \leq C(M,e,q_0,m^2) \cdot \delta \cdot r(u,v) < r(u,v),$ which then implies in the massless $m^2=0$ case, from \eqref{Radius3} and for $v$ large enough, with $(u,v)$ in the regular region: \begin{equation} \label{contradictoutline}
	\partial_u(r \partial_v r)(u,v) <0.
	\end{equation}	In turn, if $(u,v) \in \A$, this estimate directly implies that $(u,u+\eta) \times \{v\} \subset \T$ for some small $\eta>0$.  	\end{enumerate}

This is essentially the content of Theorem \ref{mainestimate}, with an estimate of the charge proven Lemma \ref{chargestimatelemma}. The massive case $ m^2 \neq 0$ can be also be treated: we then need an additional estimate for the massive term provided by Lemma \ref{massestimatelemma}.
\subsection{Two elementary calculus lemmata}
We start with two elementary computations, stated here for convenience. The first one is simply a one-dimensional functional inequality, which is important to handle the potential blow up in \eqref{ingoingphi}. The second one is a simple second order polynomial equation, which is useful to sort out the right smallness of $\delta$ required to apply Theorem \ref{mainestimate} and Lemma \ref{massestimatelemma}.
\begin{lem} \label{easylemma}
	
	Assume that for some $u>u_1$, $v > v_1$, $\{u\} \times [v_1,v] \subset \R \cup \A$ and $\{u_1\} \times [v_1,v] \subset \T \cup \A$. 
	
	Then, defining $r_1:=r(u_1,v_1)>0$, for all non-negative functions \color{black} $f$, we have the following estimate \begin{equation} \label{calculus1}
	\int_{v_1}^{v}  r(u,v') \log(\frac{r^{-1}(u,v')}{r^{-1}(u_1,v')}) f(u,v') dv'\leq 	r_1 \int_{v_1}^{v} f(u,v') dv'.\end{equation}

	\begin{proof}
		Since  $\{u_1\} \times [v_1,v] \subset \T \cup \A$, $r(u_1,v') \leq r_1$ for all $v_1 \leq v' \leq v$, and since $\partial_u r \leq 0$, $r(u,v) \leq r(u_1,v) \leq r_1$.
		Also, since  $\{u\} \times [v_1,v] \subset \R \cup \A$, for all $v_1 \leq v' \leq v$, $r(u,v') \leq r(u,v) \leq r_1 $.

		As a consequence of the inequalities, we have the estimate, defining $x(v')=\frac{r(u,v')}{r_1} \in (0,1]$: for all $v_1 \leq v' \leq v$: $$ r(u,v') \log(\frac{r^{-1}(u,v')}{r^{-1}(u_1,v')}) \leq r_1 \cdot  \frac{r(u,v')}{r_1} \log(\frac{r_1}{r(u,v')})= r_1 \cdot x \log(x^{-1}).$$
		
		The function $x \rightarrow x \log(x^{-1})$ is increasing on $(0,e^{-1})$ and decreasing on $(e^{-1},1]$, with a maximum at $x=e^{-1}$, whose value is $e^{-1} \leq 1$. This gives \eqref{calculus1} immediately.
	\end{proof}
\end{lem}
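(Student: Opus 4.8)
The plan is to reduce the integral inequality \eqref{calculus1} to a pointwise bound on the integrand, which in turn reduces to a one-variable calculus fact. Since $f \geq 0$, it suffices to prove that
\[
r(u,v') \log\!\left(\frac{r^{-1}(u,v')}{r^{-1}(u_1,v')}\right) \leq r_1 \qquad \text{for every } v' \in [v_1,v],
\]
after which multiplying by $f(u,v') \geq 0$ and integrating over $[v_1,v]$ yields the result immediately. Note first that the left-hand logarithm equals $\log\big(r(u_1,v')/r(u,v')\big)$, which is the form I would work with.

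First I would extract the key monotonicity consequence $r(u,v') \leq r_1$ from the two hypotheses together with the no--anti-trapped condition $\partial_u r < 0$. Indeed, $\{u_1\} \times [v_1,v] \subset \T \cup \A$ gives $\partial_v r(u_1,\cdot) \leq 0$, so $r(u_1,\cdot)$ is non-increasing and $r(u_1,v') \leq r(u_1,v_1) = r_1$; since $u > u_1$ and $\partial_u r < 0$, this propagates to $r(u,v') \leq r(u_1,v') \leq r_1$. Equivalently, one may use $\{u\} \times [v_1,v] \subset \R \cup \A$, i.e.\ that $r(u,\cdot)$ is non-decreasing, to write $r(u,v') \leq r(u,v) \leq r(u_1,v) \leq r_1$.

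Next I would use the bound $r(u_1,v') \leq r_1$ a second time, now inside the logarithm, to get $\log\big(r(u_1,v')/r(u,v')\big) \leq \log\big(r_1/r(u,v')\big)$. Introducing the dimensionless variable $x := r(u,v')/r_1 \in (0,1]$, the integrand is then controlled by $r_1 \, x \log(x^{-1})$, and the final step is the elementary observation that $x \mapsto x\log(x^{-1})$ is maximized on $(0,1]$ at $x = e^{-1}$, with maximal value $e^{-1} \leq 1$. Hence $x\log(x^{-1}) \leq 1$ and the integrand is bounded by $r_1 f(u,v')$ pointwise.

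There is no genuine difficulty in this argument, as the title ``elementary calculus'' already signals; the only point that demands attention is correctly tracking the three opposite monotonicity directions (non-increasing in $v'$ along $\{u_1\}$, non-decreasing in $v'$ along $\{u\}$, and strictly decreasing in $u$ everywhere) so as to conclude $r(u,v') \leq r_1$. Once that bound is in hand, the estimate follows from the single-variable optimization of $x\log(x^{-1})$, which is what makes the potential logarithmic divergence in \eqref{ingoingphi} harmless when this lemma is applied later.
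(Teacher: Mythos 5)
Your proposal is correct and follows essentially the same route as the paper: derive $r(u,v') \leq r(u_1,v') \leq r_1$ from the trapped/regular hypotheses and $\partial_u r<0$, bound the logarithm by $\log\bigl(r_1/r(u,v')\bigr)$, and conclude via the elementary maximization of $x \mapsto x\log(x^{-1})$ on $(0,1]$ at $x=e^{-1}$. The only (immaterial) difference is that you propagate $r(u,v')\leq r(u_1,v')$ directly at each $v'$, whereas the paper passes through $r(u,v')\leq r(u,v)\leq r(u_1,v)\leq r_1$ using the regular-region monotonicity; both are valid and you note the alternative yourself.
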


\begin{lem} \label{easylemma2}
	If $0<\delta <\frac{1}{32m^2 r_1 (1+ m^2r_1^2)}$, then	$$	m^2  \cdot \sqrt{ 8 r_1^3 \cdot \delta}  \cdot (1+ \sqrt{\frac{2\delta}{r_1}})  < \frac{1}{2}.$$
\end{lem}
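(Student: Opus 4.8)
The plan is to treat the left-hand side as the sum of a term linear in $\sqrt{\delta}$ and a term quadratic in $\sqrt{\delta}$, and to bound each using the single hypothesis on $\delta$. Concretely, I would set $A := m^2\sqrt{8 r_1^3\,\delta}$ and $B := \sqrt{2\delta/r_1}$, so that the quantity to be estimated is exactly $A(1+B) = A + AB$. The first, and decisive, observation is that the product $AB$ simplifies: since $AB = m^2\sqrt{8 r_1^3\delta\cdot 2\delta/r_1} = m^2\sqrt{16 r_1^2\delta^2} = 4m^2 r_1\,\delta$, both $A^2 = 8m^4 r_1^3\delta$ and $AB = 4m^2r_1\delta$ are \emph{linear} in $\delta$. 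This is what makes the hypothesis, which controls $\delta$ by a single quantity, directly exploitable on both pieces at once.

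Next, introducing the dimensionless parameter $t := m^2 r_1^2 > 0$, I would insert the bound $\delta < \tfrac{1}{32 m^2 r_1(1+m^2 r_1^2)}$ into each piece. A short computation gives $A^2 < \tfrac{t}{4(1+t)}$ and $AB < \tfrac{1}{8(1+t)}$, hence, taking square roots in the first, $A + AB < \tfrac12\sqrt{\tfrac{t}{1+t}} + \tfrac1{8(1+t)}$. The remaining task is purely a one-variable inequality in $t$.

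The final, and only slightly delicate, step is to verify that this last expression is bounded by $\tfrac12$ for every $t>0$. Here the naive termwise estimates $\tfrac12\sqrt{t/(1+t)} < \tfrac12$ and $\tfrac1{8(1+t)} \le \tfrac18$ are too lossy — they only yield $\tfrac58$ — so I must exploit that the two terms cannot both be large: when $t$ is large the square-root term approaches $\tfrac12$ while the second term is small, and conversely for small $t$. I would make this quantitative through the substitution $s := \sqrt{t/(1+t)} \in [0,1)$, under which $\tfrac1{1+t} = 1-s^2$; the target inequality $\tfrac{s}{2} + \tfrac{1-s^2}{8} \le \tfrac12$ then clears to $s^2 - 4s + 3 \ge 0$, i.e. $(s-1)(s-3)\ge 0$, which holds on all of $[0,1]$. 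This algebraic reduction is the crux of the argument, and the point where the precise constant $32$ in the hypothesis is calibrated; everything else is bookkeeping.
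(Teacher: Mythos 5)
Your proof is correct, but it follows a genuinely different route from the paper's. The paper substitutes $y=\sqrt{2\delta/r_1}$ and observes that the left-hand side equals $2m^2r_1^2\,y(1+y)$ exactly, so the claim is \emph{equivalent} to $y(1+y)<\tfrac{1}{4m^2r_1^2}$; since $y\mapsto y(1+y)$ is increasing on $y>0$, this amounts to $y<y_+$, the positive root of the associated quadratic, and the paper concludes by rationalizing that root and bounding it below, $y_+\ge \frac{m^{-2}r_1^{-2}}{4\sqrt{1+m^{-2}r_1^{-2}}}$, which is precisely where the constant $32$ comes from. You instead never solve a quadratic: you split the expression as $A+AB$, use the structural fact that both $A^2$ and $AB$ are linear in $\delta$ to insert the hypothesis termwise, and then close the resulting one-variable inequality $\tfrac12\sqrt{t/(1+t)}+\tfrac1{8(1+t)}\le\tfrac12$ via the substitution $s=\sqrt{t/(1+t)}$, reducing to $(s-1)(s-3)\ge 0$ on $[0,1]$. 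The paper's argument is tighter in spirit (it identifies the exact threshold $y_+$ and shows the hypothesis sits below it), which makes the provenance of the constant transparent; your argument trades that exactness for avoiding the root computation, at the cost of having to exploit the compensation between the two terms --- as you note, naive termwise bounds only give $\tfrac58$ --- which your $s$-substitution handles cleanly. Both are complete, and your strict inequality survives since $\delta>0$ makes the termwise bounds strict.
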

\begin{proof} Set $y=\sqrt{\frac{2\delta}{r_1}}$: if $y_{\pm}$ are the roots of the polynomial second degree equation 	$y (1+y) =\frac{1}{4m^2 r_1^2}$, we must obtain $y< y_+$. We see by standard methods that $y_+= \frac{-1+\sqrt{1+ m^{-2} r_1^{-2}}}{2} = \frac{m^{-2} r_1^{-2}}{2(1+\sqrt{1+ m^{-2} r_1^{-2}})} \geq  \frac{m^{-2} r_1^{-2}}{4\sqrt{1+ m^{-2} r_1^{-2}}}.$
	
	Therefore, it is sufficient that $y=\sqrt{\frac{2\delta}{r_1}} <\frac{1}{4 m^2r_1^2\sqrt{1+ m^{-2} r^{-2}_1}}$, or equivalently $\delta <\frac{1}{32m^2 r_1 (1+ m^2r_1^2)}$.
\end{proof}
\subsection{An ingoing a priori estimate on $\phi$ in the entire space-time} \label{aprioriingoing}

Now, we establish a elementary focusing estimate, which relates the scalar field to the flux of ingoing radiation, quantified by $\kappa$, only using the Raychaudhuri equation. This estimate is important for section \ref{keyestimatesection}.
\begin{lem} For all $v \in \RR$, $u_1 \leq u \leq u_{\Gamma}(v)$, the following estimate is true:
	\begin{equation} \label{ingoingphi}
	|\phi|^2(u,v) \leq 	2|\phi|^2(u_1,v) +  2\log( \frac{\kappa^{-1}(u,v)}{\kappa^{-1}(u_1,v)}) \cdot  \log(\frac{r^{-1}(u,v)}{r^{-1}(u_1,v)}).
	\end{equation} 
	
\end{lem}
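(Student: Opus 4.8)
The plan is to integrate the gauge-invariant pointwise bound for $|\phi|$ along the ingoing direction and control the resulting flux by Cauchy--Schwarz, arranging the weights so that the two integrals that appear are exactly the logarithmic factors of $\kappa^{-1}$ and $r^{-1}$. Since both sides of \eqref{ingoingphi} are manifestly gauge-invariant, I would work throughout with $|\phi|$ and $|D_u\phi|$ and never introduce the potential $A_\mu$.

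First I would recall the gauge-invariant estimate from section \ref{geometricframework}, which gives, for fixed $v$ and $u_1 \leq u \leq u_\Gamma(v)$,
\begin{equation*}
|\phi|(u,v) \leq |\phi|(u_1,v) + \int_{u_1}^{u} |D_u\phi|(u',v)\, du'.
\end{equation*}
The entire task then reduces to bounding the flux $\int_{u_1}^{u} |D_u\phi|\, du'$. Recall that the no anti-trapped assumption yields $\partial_u r < 0$ and $\kappa > 0$ throughout, so the weight $\frac{r}{|\partial_u r|} = -\frac{r}{\partial_u r}$ is positive and the manipulations below are legitimate on the ingoing segment $[u_1,u]\times\{v\}$, which lies in $\mathcal{Q}^+$ since $u \leq u_\Gamma(v)$.

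Next I would apply Cauchy--Schwarz with the splitting $|D_u\phi| = \bigl(\tfrac{r}{|\partial_u r|}\bigr)^{1/2}|D_u\phi| \cdot \bigl(\tfrac{|\partial_u r|}{r}\bigr)^{1/2}$, obtaining
\begin{equation*}
\left(\int_{u_1}^{u} |D_u\phi|\, du'\right)^2 \leq \left(\int_{u_1}^{u} \frac{r}{|\partial_u r|}|D_u\phi|^2\, du'\right)\cdot\left(\int_{u_1}^{u} \frac{|\partial_u r|}{r}\, du'\right).
\end{equation*}
The first factor is exactly $\int_{u_1}^{u}\partial_u(\log(\kappa^{-1}))\, du' = \log\bigl(\kappa^{-1}(u,v)/\kappa^{-1}(u_1,v)\bigr)$ by the Raychaudhuri equation \eqref{RaychU2}, and the second factor is $-\int_{u_1}^{u}\partial_u(\log r)\, du' = \log\bigl(r^{-1}(u,v)/r^{-1}(u_1,v)\bigr)$ since $|\partial_u r| = -\partial_u r$. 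Substituting these two identities and then using $(a+b)^2 \leq 2a^2 + 2b^2$ in the gauge-invariant bound yields \eqref{ingoingphi} directly.

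There is no serious obstacle here: the argument is a clean combination of the fundamental theorem of calculus applied to \eqref{RaychU2}, the elementary identity for $\partial_u(\log r)$, and one Cauchy--Schwarz step. The only point requiring a moment's care is verifying that the weight $\tfrac{r}{|\partial_u r|}$ is the correct one to reproduce $\log(\kappa^{-1})$ — which is precisely the content of the logarithmic Raychaudhuri form \eqref{RaychU2} — and checking that the integrals remain finite, which holds because we integrate away from the center (where $r \to 0$) on a segment ending at $u \leq u_\Gamma(v)$.
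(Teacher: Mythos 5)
Your proposal is correct and is essentially identical to the paper's own proof: the same gauge-invariant pointwise bound, the same Cauchy--Schwarz splitting with weight $\frac{r}{|\partial_u r|}$ identified via the logarithmic Raychaudhuri equation \eqref{RaychU2}, the second factor integrating to $\log\bigl(r^{-1}(u,v)/r^{-1}(u_1,v)\bigr)$, and the final squaring via $(a+b)^2 \leq 2a^2 + 2b^2$. No gaps; nothing further is needed.
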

\begin{proof}

	We estimate $\phi(u,v)$ with respect to $\phi(u_1,v)$. Using Cauchy Schwarz and the ingoing Raychaudhuri equation \eqref{RaychU2}: \begin{equation*} \begin{split}
&	|\phi|(u,v)\\ \leq\ & 	|\phi|(u_1,v) + \int_{u_1}^{u}|D_u \phi|(u',v) du' \leq 	|\phi|(u_1,v) + (\int_{u_1}^{u} \frac{r |D_u \phi|^2}{|\partial_u r|}(u',v) du')^{\frac{1}{2}} (\int_{u_1}^{u} \frac{|\partial_u r|}{r}(u',v)du')^{\frac{1}{2}} \\ \leq\ & |\phi|(u_1,v) + ( \log( \frac{\kappa^{-1}(u,v)}{\kappa^{-1}(u_1,v)} ))^{\frac{1}{2}} \cdot  (\log(\frac{r^{-1}(u,v)}{r^{-1}(u_1,v)}))^{\frac{1}{2}},\end{split}
	\end{equation*} an estimate we can square, using $(a+b)^2 \leq 2a^2+2b^2$, which immediately gives \eqref{ingoingphi}.
\end{proof}
\subsection{The key estimate on a late rectangle with a vertex on the centre} \label{keyestimatesection}
Now, we get to the heart of the proof of Theorem \ref{maintheorem}: we establish focusing estimates, which will later be revealed, in section \ref{halfdiamondsection}, to be incompatible with the Cauchy horizon closing off the space-time. We will work on causal rectangles of the form $J^{-}(p) \cap J^+(q)$ as in Figure \ref{Figrect}, for $p=(u,v) \in \A$, and $q=(u_1,v_{\Gamma}(u)\color{black}) \in \T \cup \A$.
\begin{thm} \label{mainestimate}
	Let $(u,v) \in \A $. We define  $r_1:= r(u_1,v_{\Gamma}(u))$. We make the following assumptions on the causal rectangle $J^{-}(p) \cap J^{+}(q)= [u_1,u] \times  [v_{\Gamma}(u),v]$ for $p=(u,v)$ and $q=(u_1,v_{\Gamma}(u))$, as in Figure \ref{Figrect}:
	
	\begin{enumerate}
		
		\item $(u_1,v_{\Gamma}(u) )\in \T \cup \A$, therefore for all $v_{\Gamma}(u) \leq v' \leq v$, $(u_1,v')\in \T \cup \A$. 			\item For all $v_{\Gamma}(u) \leq v' < v$,  $(u,v') \in \R \cup \A $.
		
		\item We have the following (gauge invariant) estimate on the past outgoing boundary of the rectangle: \begin{equation} \label{mainassumption}
		\int_{v_{\Gamma}(u)}^{v} \kappa(u_1,v') \cdot (1+|\phi|^2(u_1,v')) dv' \leq \delta.
		\end{equation}
	\end{enumerate}
	
	Then, there exists $\delta_1(q_0,m^2,r_1)>0$, which we can choose to be $\delta_1 = \min \{ \frac{1}{4 q_0^2 r_1}, \frac{1}{32m^2 r_1 (1+ m^2r_1^2)}\}$ and such that, if $\delta  \leq \delta_1$, we have $\partial_u( r \partial_v r)(u,v)<0$. Therefore, a small future ingoing segment emanating from $(u,v)$ is included in the trapped region: there exists $\eta>0$ such that  $(u,u+\eta) \times \{v\} \subset \mathcal{T}$.
\end{thm}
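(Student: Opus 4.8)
The plan is to reduce everything to the single pointwise inequality $\frac{Q^2}{r^2} + m^2 r^2 |\phi|^2 < 1$ at the top vertex $p=(u,v)$, since by the second form of \eqref{Radius3} this is exactly equivalent to $\partial_u(r\partial_v r)(u,v) < 0$ (as $\Omega^2 > 0$). Granting this, the final conclusion is a soft continuity argument: setting $g(u') := (r\partial_v r)(u',v)$, we have $g(u) = 0$ because $p \in \A$, while $g'(u) = \partial_u(r\partial_v r)(u,v) < 0$, so $g(u') < 0$ on a small interval $(u, u+\eta)$; since $r > 0$ there, this forces $\partial_v r(u',v) < 0$, i.e.\ $(u,u+\eta)\times\{v\} \subset \T$.

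First I would harvest a lower bound for the Hawking mass along the outgoing cone $\{u\}\times[v_\Gamma(u),v]$. On this cone $(u,v') \in \R \cup \A$, so $\partial_v r \geq 0$ and every term on the right-hand side of \eqref{massVEinstein} is non-negative; dropping all but the first and integrating from the center, where $\rho = 0$ by \eqref{reg2Gamma}, gives $\int_{v_\Gamma(u)}^{v} \frac{r^2}{\kappa}|D_v\phi|^2\,dv' \leq 2\rho(u,v) = r(u,v)$, the last equality because $p \in \A$. Next I would integrate the Maxwell equation \eqref{ChargeVEinstein} from the center, where $Q=0$, obtaining $|Q|(u,v) \leq q_0 \int_{v_\Gamma(u)}^{v} r^2 |\phi|\,|D_v\phi|\,dv'$, and split this by Cauchy--Schwarz against the weight $\kappa$ into the product of $\big(\int r^2|D_v\phi|^2/\kappa\big)^{1/2} \leq \sqrt{r(u,v)}$ and $\big(\int r^2|\phi|^2\kappa\,dv'\big)^{1/2}$.

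The heart of the argument is estimating the second factor $\int_{v_\Gamma(u)}^{v} r^2|\phi|^2\kappa(u,v')\,dv'$, where the danger is that $|\phi|(u,\cdot)$ is the field on the outgoing cone rather than on the reference cone $\{u_1\}$ where the flux \eqref{mainassumption} is controlled. Here I would first extract one factor $r(u,v') \leq r(u,v)$, then insert the a priori ingoing bound \eqref{ingoingphi} for $|\phi|^2(u,v')$. The resulting two pieces are tamed as follows: the piece carrying $|\phi|^2(u_1,v')$ is bounded directly using $r \leq r_1$ and $\kappa(u,\cdot) \leq \kappa(u_1,\cdot)$, a consequence of $\partial_u(\kappa^{-1})\geq 0$ in \eqref{RaychU}; the piece carrying the double logarithm $\log(\kappa^{-1}(u)/\kappa^{-1}(u_1))\log(r^{-1}(u)/r^{-1}(u_1))$ is handled by applying Lemma \ref{easylemma} to absorb the $r$-logarithm, and then by the elementary observation that $y\log(y^{-1}) \leq e^{-1}$ with $y = \kappa(u,\cdot)/\kappa(u_1,\cdot)$ converts $\kappa(u,\cdot)\log(\kappa^{-1}(u)/\kappa^{-1}(u_1))$ into $\kappa(u_1,\cdot)$. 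Both pieces thereby collapse onto $\int \kappa(u_1,v')(1+|\phi|^2(u_1,v'))\,dv' \leq \delta$, yielding $\int r^2|\phi|^2\kappa\,dv' \leq 4\,r(u,v)\,r_1\,\delta$. Combining the two Cauchy--Schwarz factors gives $|Q|(u,v) \leq 2q_0\, r(u,v)\sqrt{r_1\delta}$, so $Q^2/r^2 \leq 4q_0^2 r_1\delta \leq 1$ once $\delta \leq (4q_0^2 r_1)^{-1}$, which settles the massless case.

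For the massive case one must additionally control $m^2 r^2|\phi|^2(u,v)$ at the top vertex. The plan is to bound $r|\phi|(u,v)$ by integrating from the center, using $r\phi|_\Gamma = 0$ and $|\partial_v(r|\phi|)| \leq |\partial_v r|\,|\phi| + r|D_v\phi|$, and estimating the two contributions by the same fluxes as above, producing a bound of the schematic form $m^2 r^2|\phi|^2(u,v) \lesssim m^2\sqrt{r_1^3\delta}\,(1+\sqrt{\delta/r_1})$; Lemma \ref{easylemma2} then selects exactly the threshold $\delta \leq \frac{1}{32 m^2 r_1(1+m^2 r_1^2)}$ making this less than $\tfrac12$, so that $\frac{Q^2}{r^2} + m^2 r^2|\phi|^2 < 1$ and \eqref{Radius3} closes the argument. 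I expect the main obstacle to be precisely the third paragraph: the interplay whereby the logarithmic growth of $|\phi|$ encoded in \eqref{ingoingphi} is exactly compensated by the smallness of $\kappa$ through $y\log(y^{-1}) \leq e^{-1}$ and Lemma \ref{easylemma}, so that the entire flux reduces to the single controlled quantity $\delta$ on the cone $\{u_1\}$ — this is the concrete mechanism behind the heuristic that $\rho$ controls an exponential flux of radiation.
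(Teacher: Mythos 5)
Your charge estimate and your closing continuity argument are essentially the paper's own proof (Lemma \ref{chargestimatelemma} plus the final paragraph of section \ref{keyestimatesection}): the same Cauchy--Schwarz splitting of \eqref{ChargeVEinstein} against the weight $\kappa$, the same mass-flux bound from \eqref{massVEinstein} integrated from the center, and the same absorption of the double logarithm of \eqref{ingoingphi} via Lemma \ref{easylemma} and $y\log(y^{-1})\leq e^{-1}$. (One cosmetic slip there: your constants give $Q^2/r^2\leq 4q_0^2r_1\delta$, which at the threshold $\delta=(4q_0^2r_1)^{-1}$ yields only $\leq 1$, whereas the conclusion needs strict inequality; tracked carefully the bound is $2q_0^2r_1\delta\leq \tfrac12$, as in the paper, so this fixes itself.)

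The genuine gap is in your massive case. You propose to integrate $\partial_v(r|\phi|)$ from the center and to estimate the two contributions $\int|\partial_v r|\,|\phi|\,dv'$ and $\int r|D_v\phi|\,dv'$ ``by the same fluxes as above''. The second term is fine, but the first is \emph{not} controlled by those fluxes, and the failure is structural. Every flux you actually control on the cone $\{u\}\times[v_{\Gamma}(u),v]$ carries a factor of $r(u,v')$: the mass flux controls $r^2|D_v\phi|^2/\kappa$, and the bound \eqref{ingoingphi2} controls $\int r\kappa|\phi|^2$. That weight is indispensable, because the a priori bound \eqref{ingoingphi} for $|\phi|^2(u,v')$ contains $\log\bigl(r(u_1,v')/r(u,v')\bigr)$, which diverges as $v'\to v_{\Gamma}(u)$ (there $r(u,\cdot)\to 0$ while $r(u_1,\cdot)$ stays near $r_1>0$); only an accompanying factor $r(u,v')$ allows Lemma \ref{easylemma}, i.e.\ the $x\log(x^{-1})\leq e^{-1}$ mechanism, to absorb it. The term $\int|\partial_v r|\,|\phi|\,dv'$ has no such weight: after $\partial_v r\leq\kappa(u,\cdot)\leq\kappa(u_1,\cdot)$ you face $\int\kappa\,|\phi|(u,v')\,dv'$, whose estimate through \eqref{ingoingphi} leaves the uncontrolled integral $\int\kappa(u_1,v')\,\log^{1/2}\bigl(r(u_1,v')/r(u,v')\bigr)dv'$; and if instead you pair by Cauchy--Schwarz so as to invoke the controlled flux $\int r\kappa|\phi|^2\,dv'$, you must pay $\int\frac{\kappa}{r}(u,v')\,dv'\geq\int\partial_v\log r\,(u,v')\,dv'=+\infty$, which diverges logarithmically at the center (here $\kappa\geq\partial_v r$ since $0\leq 1-\tfrac{2\rho}{r}\leq 1$ on the cone). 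The repair is exactly the paper's Lemma \ref{massestimatelemma}: transport $r^2|\phi|^2$ rather than $r|\phi|$, i.e.\ integrate $\partial_v(r^2|\phi|^2)=2r\partial_v r\,|\phi|^2+2r^2\Re(\bar{\phi}D_v\phi)$, so that \emph{both} resulting terms carry the factor $r(u,v')$: the first is bounded via $\partial_v r\leq\kappa$ by $2\int r\kappa|\phi|^2\leq 4r_1\delta$, the second by the charge-type estimate $2\int r^2|\phi||D_v\phi|\leq\sqrt{8r_1^3\delta}$ already proven in the charge lemma. This produces precisely the bound $m^2\sqrt{8r_1^3\delta}\,(1+\sqrt{2\delta/r_1})$ you quoted, after which Lemma \ref{easylemma2} closes the argument; so your thresholds are right, but the quantity you chose to integrate cannot be estimated by the available data.
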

We start with the main ingredient of the proof of Theorem \ref{mainestimate}: the control of the charge, in particular near the center. The following lemma is probably the most important result in the present paper:
\begin{lem} \label{chargestimatelemma}
	Under the assumption of Theorem \ref{mainestimate}, we have the following estimate at the top vertex $(u,v)$: \begin{equation} \label{Qcontrol}
	\frac{Q^2}{r^2}(u,v) \leq 2 r_1 \cdot  q_0^2 \cdot \delta < \frac{1}{2},
	\end{equation} where for the last inequality, we chose $\delta <   \frac{1}{4 r_1 \cdot q_0^2} $.
\end{lem}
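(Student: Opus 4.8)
The plan is to bound $Q$ by integrating the Maxwell equation \eqref{ChargeVEinstein} along the outgoing ray $\{u\}\times[v_\Gamma(u),v]$, starting from the centre, where $Q=0$ by the regularity condition \eqref{reg2Gamma}. This yields
$$|Q|(u,v)\le |q_0|\int_{v_\Gamma(u)}^{v} r^2\,|\phi|\,|D_v\phi|(u,v')\,dv'.$$
Writing the integrand as $\big(r\kappa^{-1/2}|D_v\phi|\big)\cdot\big(r\kappa^{1/2}|\phi|\big)$ and applying Cauchy--Schwarz separates the problem into an ingoing radiation flux $\int \tfrac{r^2}{\kappa}|D_v\phi|^2$ and a weighted field integral $\int r^2\kappa\,|\phi|^2$, both taken over the same outgoing ray.

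The flux factor is controlled by the Hawking mass. Along the ray we are in $\R\cup\A$, so $\partial_v r\ge 0$ and the mass/charge terms in \eqref{massVEinstein} are non-negative, whence $\partial_v\rho\ge \tfrac{r^2}{2\kappa}|D_v\phi|^2$ (recall $\kappa>0$ since there are no anti-trapped surfaces). Integrating from the centre, where $\rho=0$, and using that $(u,v)\in\A$ forces $\rho(u,v)=\tfrac{r}{2}(u,v)$, I get $\int_{v_\Gamma(u)}^{v}\tfrac{r^2}{\kappa}|D_v\phi|^2\,dv'\le 2\rho(u,v)=r(u,v)$.

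The remaining, genuinely delicate, factor is $\int r^2\kappa\,|\phi|^2(u,v')\,dv'$, because the smallness hypothesis \eqref{mainassumption} lives on the $\{u_1\}$-ray, whereas this integral lives on the $\{u\}$-ray. Here I would invoke the a priori ingoing bound \eqref{ingoingphi} to replace $|\phi|^2(u,v')$ by $2|\phi|^2(u_1,v')$ plus a logarithmic remainder controlled by $\log\!\big(\tfrac{\kappa^{-1}(u,v')}{\kappa^{-1}(u_1,v')}\big)\log\!\big(\tfrac{r^{-1}(u,v')}{r^{-1}(u_1,v')}\big)$. The hypothesis $(u_1,v')\in\T\cup\A$ together with $\partial_u r<0$ gives $r(u,v')\le r(u_1,v')\le r_1$ (and $r(u,v')\le r(u,v)$ along the regular ray), while \eqref{RaychU} gives the monotonicity $\kappa(u,v')\le\kappa(u_1,v')$. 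Hence the leading term is bounded by $r_1\,r(u,v)\int\kappa(u_1,v')|\phi|^2(u_1,v')\,dv'\lesssim r_1\,r(u,v)\,\delta$ via \eqref{mainassumption}. The logarithmic remainder is precisely what Lemma \ref{easylemma} is built for: peeling off one factor $r(u,v')\log\!\big(r^{-1}(u,v')/r^{-1}(u_1,v')\big)$ absorbs a power of $r$ at the cost of the constant $r_1$, and the surviving $\kappa$-logarithm is tamed by the elementary inequality $x\log(1/x)\le e^{-1}$ applied to $x=\kappa^{-1}(u_1,v')/\kappa^{-1}(u,v')\in(0,1]$, which reduces it once more to $\int\kappa(u_1,v')\,dv'\le\delta$.

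Combining the three ingredients produces $Q^2(u,v)\lesssim q_0^2\,r(u,v)\cdot r_1\,r(u,v)\,\delta$, that is $\tfrac{Q^2}{r^2}(u,v)\lesssim r_1 q_0^2\delta$; a careful accounting of the constants through Lemma \ref{easylemma} and the inequality $x\log(1/x)\le e^{-1}$ should sharpen this to the claimed $\tfrac{Q^2}{r^2}(u,v)\le 2r_1 q_0^2\delta$, whereupon the choice $\delta<\tfrac{1}{4r_1 q_0^2}$ gives $<\tfrac12$. The main obstacle is exactly this transfer step: the radiation flux is naturally anchored on the $\{u\}$-ray while the quantitative smallness is only assumed on the $\{u_1\}$-ray, and the passage between them must be carried out without losing control of the logarithmic factors $\log\kappa^{-1}$ and $\log r^{-1}$, which may blow up near the centre. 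It is to handle these two logarithms that the calculus lemma and the monotonicity of $\kappa^{-1}$ and $r$ are indispensable.
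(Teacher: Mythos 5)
Your proposal is correct and follows essentially the same route as the paper's proof: integrating \eqref{ChargeVEinstein} from the centre where $Q=0$, the Cauchy--Schwarz split into the flux $\int \frac{r^2}{2\kappa}|D_v\phi|^2 \leq \rho(u,v) = \frac{r(u,v)}{2}$ (via \eqref{massVEinstein} and $(u,v)\in\A$) and the weighted integral $\int r^2\kappa|\phi|^2$, and the transfer of the latter to the $\{u_1\}$-ray via \eqref{ingoingphi}, Lemma \ref{easylemma}, the monotonicity of $\kappa^{-1}$ from \eqref{RaychU}, and the elementary bound $x\log(1/x)\le e^{-1}$. All the key ingredients and the final accounting leading to $\frac{Q^2}{r^2}(u,v) \le 2 r_1 q_0^2 \delta$ coincide with the paper's argument.
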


\begin{proof}
	Using \eqref{ChargeVEinstein}, we have $|\partial_v Q| \leq |q_0| \cdot r^2  \cdot |\phi| \cdot |D_v \phi|$ which we integrate on $\{u\} \times [v_{\Gamma}(u),v]$, in the notations of Theorem \ref{mainestimate}. Using the fact that $Q(u,v_{\Gamma}(u))=0$, and the Cauchy-Schwarz inequality, we get \begin{equation} \label{Q1}
	|Q|(u,v) \leq |q_0| \int_{v_{\Gamma}(u)}^{v} r^2 |\phi| |D_v \phi|(u,v')dv' \leq |q_0| ( \int_{v_{\Gamma}(u)}^{v}\frac{ r^2}{2\kappa} |D_v \phi|^2)^{\frac{1}{2}} ( \int_{v_{\Gamma}(u)}^{v} 2r^2 \kappa |\phi|^2(u,v') dv')^{\frac{1}{2}}.
	\end{equation}  
	Since $\{u\} \times [v_{\Gamma}(u),v] \subset \R \cup \A$ by assumption, $\partial_v r(u,v') \geq 0$ for all $v' \in [v_{\Gamma}(u),v]$. Thus, by \eqref{massVEinstein}, we have $\frac{ r^2}{2\kappa} |D_v \phi|^2(u,v') \leq \partial_v \rho(u,v')$ for all $v' \in [v_{\Gamma}(u),v]$, hence, combining with \eqref{Q1} and using 
	 $\rho(u,v_{\Gamma}(u)) \geq 0$, we get: \begin{equation} \label{Q2}
	|Q|(u,v) \leq |q_0|  \int_{v_{\Gamma}(u)}^{v}r^2   |\phi|  |D_v \phi|(u,v') dv' \leq   |q_0|  \cdot \rho^{\frac{1}{2}}(u,v) \cdot  ( \int_{v_{\Gamma}(u)}^{v} 2r^2 \kappa |\phi|^2(u,v') dv')^{\frac{1}{2}}.
	\end{equation}  
	Now we estimate the term under the square-root, using \eqref{ingoingphi} to control $|\phi|^2(u,v')$: \begin{equation*} \begin{split}
&	\int_{v_{\Gamma}(u)}^{v} 2r^2 \kappa |\phi|^2(u,v') dv' \leq 2r(u,v) 	\int_{v_{\Gamma}(u)}^{v} r \kappa |\phi|^2(u,v') dv' \\ \leq\ &  4 r(u,v)\int_{v_{\Gamma}(u)}^{v} r(u,v') \kappa(u,v') |\phi|^2(u_1,v') dv' \\ +  &4 r(u,v)\int_{v_{\Gamma}(u)}^{v} r(u,v')\log(\frac{r^{-1}(u,v')}{r^{-1}(u_1,v')}) \kappa(u,v') \log( \frac{\kappa^{-1}(u,v')}{\kappa^{-1}(u_1,v')})dv' \\\leq\ & 4r_1 \cdot r(u,v) \cdot \left( \int_{v_{\Gamma}(u)}^{v} \kappa(u_1,v') |\phi|^2(u_1,v') dv'+ \int_{v_{\Gamma}(u)}^{v} \kappa(u,v') \cdot  \log( \frac{\kappa^{-1}(u,v')}{\kappa^{-1}(u_1,v')})dv'\right),
	\end{split}
	\end{equation*} where we used $r(u,v') \leq r(u,v)$ for the first inequality (since $\{u\} \times [v_{\Gamma}(u),v] \subset \R \cup \A$) and for the last inequality:  $r(u,v') \leq r_1$ for the first term (since $r(u,v) \leq r_1$, see the proof of Lemma \ref{easylemma}), and \eqref{calculus1} with $f(v')= \kappa(u,v') \log( \frac{\kappa^{-1}(u,v')}{\kappa^{-1}(u_1,v')})$ for the second term.
	
	Now, notice that $f(v')= \kappa(u_1,v') \cdot \frac{\log(x)}{x}$ for $x=\frac{ \kappa^{-1}(u,v')}{ \kappa^{-1}(u_1,v')}$. Then, we use the fact that $ \frac{\log(x)}{x} \leq e^{-1} \leq 1$ for any $x \in [1,+\infty)$, applied 
	to $x=  \frac{\kappa^{-1}(u,v')}{\kappa^{-1}(u_1,v')} \geq 1$. Therefore: \begin{equation} \label{ingoingphi2} \begin{split}
&	\int_{v_{\Gamma}(u)}^{v} 2r^2 \kappa |\phi|^2(u,v') dv' \leq 	2 r(u,v)\int_{v_{\Gamma}(u)}^{v} r \kappa |\phi|^2(u,v') dv' \\\leq\ &   4r_1 \cdot r(u,v) \cdot  \int_{v_{\Gamma}(u)}^{v} \kappa(u_1,v') (1+|\phi|^2(u_1,v')) dv' \leq 4 r_1 \cdot  \delta \cdot r(u,v),\end{split}
	\end{equation}where we used \eqref{mainassumption} in the last inequality. Thus, squaring \eqref{Q2}, and dividing by $r^2$ we get \begin{equation} \label{Q3}
	\frac{Q^2(u,v)}{r^2(u,v)} \leq q_0^2 \frac{(  \int_{v_{\Gamma}(u)}^{v}r^2   |\phi|  |D_v \phi|(u,v') dv')^2}{r^2(u,v)} \leq q_0^2 \cdot \rho(u,v) \cdot \frac{4r_1 \delta}{r(u,v)} =  2 r_1\cdot \delta \cdot  q_0^2  \cdot \frac{2\rho(u,v) }{r(u,v)} = 2 r_1\cdot \delta \cdot  q_0^2,
	\end{equation}
	
	where for the last inequality, we used $\frac{2\rho(u,v) }{r(u,v)}=1$, since $(u,v) \in \mathcal{A}$. This concludes the proof.
\end{proof}

In particular, Lemma \ref{chargestimatelemma} provides immediately a proof of Theorem \ref{mainestimate} in the massless case $m^2=0$, as in this case, we see by \eqref{Radius3} that $\partial_u(r \partial_v r)(u,v)=-\frac{\Omega^2}{4} (1-\frac{Q^2}{r^2})(u,v)< -\frac{\Omega^2(u,v)}{8}<0$, since $\Omega^2(u,v)>0$.

Now, we turn to the crucial estimate to handle the massive term, when $m^2 \neq 0$.

\begin{lem} \label{massestimatelemma}
	Under the assumption of Theorem \ref{mainestimate}, we have the following estimate in the top vertex $(u,v)$: \begin{equation}
	m^2 r^2 |\phi|^2(u,v) \leq  m^2  \cdot \sqrt{ 8 r_1^3 \cdot \delta}  \cdot (1+ \sqrt{\frac{2\delta}{r_1}})  < \frac{1}{2},
	\end{equation} where for the last inequality, we took $\delta < \frac{1}{32m^2 r_1 (1+ m^2r_1^2)}$.
\end{lem}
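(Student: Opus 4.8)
The plan is to bound the product $r^2|\phi|^2$ at the top vertex $(u,v)$ directly, by integrating its $v$-derivative along the outgoing segment $\{u\}\times[v_\Gamma(u),v]$ starting from the center. The second (strict) inequality of the statement is exactly the conclusion of Lemma \ref{easylemma2} applied to the chosen range $\delta<\frac{1}{32m^2 r_1(1+m^2r_1^2)}$, so the entire content lies in the first inequality, namely $r^2|\phi|^2(u,v)\le \sqrt{8r_1^3\delta}+4r_1\delta$, which factors as $\sqrt{8r_1^3\delta}\,(1+\sqrt{2\delta/r_1})$. The key observation is that at the center one has $r^2|\phi|^2=0$ by the regularity condition \eqref{reg2Gamma} (since $r\phi_{|\Gamma}=0$), so that $r^2|\phi|^2(u,v)=\int_{v_\Gamma(u)}^v \partial_{v'}(r^2|\phi|^2)\,dv'$.

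First I would differentiate: writing $\partial_{v'}(|\phi|^2)=2\Re(\bar\phi\, D_{v'}\phi)$ (the gauge term drops out of the real part, so no differentiability issue for $|\phi|$ arises) and using $\partial_{v'}r\ge 0$ on the segment (assumption 2 of Theorem \ref{mainestimate}), one obtains the pointwise bound $\partial_{v'}(r^2|\phi|^2)\le 2r\,\partial_{v'}r\,|\phi|^2+2r^2|\phi|\,|D_{v'}\phi|$. This splits the estimate into two integrals, $I_1=\int_{v_\Gamma(u)}^v r\,\partial_{v'}r\,|\phi|^2$ and $I_2=\int_{v_\Gamma(u)}^v r^2|\phi|\,|D_{v'}\phi|$, both already essentially controlled in the proof of Lemma \ref{chargestimatelemma}.

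For $I_1$, using $\partial_{v'}r=\kappa(1-\tfrac{2\rho}{r})\le\kappa$ — valid since $\rho\ge 0$, which follows from $\rho_{|\Gamma}=0$ together with $\partial_v\rho\ge 0$ via \eqref{massVEinstein} — reduces it to $\int r\kappa|\phi|^2$, and this is bounded by $2r_1\delta$, which is precisely the intermediate estimate appearing inside \eqref{ingoingphi2}. For $I_2$, which is exactly the quantity estimated in the charge lemma, I would apply Cauchy--Schwarz as in \eqref{Q2}, bound $\int\frac{r^2}{2\kappa}|D_{v'}\phi|^2\le\rho(u,v)=\tfrac{r(u,v)}{2}\le\tfrac{r_1}{2}$ using \eqref{massVEinstein} and $(u,v)\in\A$, and bound $\int 2r^2\kappa|\phi|^2\le 4r_1^2\delta$ using \eqref{ingoingphi2} with $r(u,v)\le r_1$; this yields $I_2\le\sqrt{2}\,r_1^{3/2}\delta^{1/2}$. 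Adding $2I_1\le 4r_1\delta$ and $2I_2\le\sqrt{8r_1^3\delta}$ gives the first inequality, and Lemma \ref{easylemma2} closes the argument. The main subtlety is not any hard estimate — all the analytic work is inherited from Lemma \ref{chargestimatelemma} — but rather justifying the vanishing of the boundary term at the center and routing the massive term through the already-established flux bound \eqref{ingoingphi2} instead of re-deriving it.
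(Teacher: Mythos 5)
Your proposal is correct and follows essentially the same route as the paper: integrate $\partial_v(r^2|\phi|^2)$ from the center (where it vanishes by \eqref{reg2Gamma}), split via $\partial_v(|\phi|^2)=2\Re(\bar\phi D_v\phi)$ into the two integrals, bound the flux term by $\partial_v r\le\kappa$ and \eqref{ingoingphi2}, and bound the mixed term by the Cauchy--Schwarz estimate of Lemma \ref{chargestimatelemma} (your re-derivation is literally the computation behind \eqref{Q3}, which the paper simply cites), before closing with Lemma \ref{easylemma2}. The only difference is presentational — you spell out the justification of $\rho\ge 0$ and redo the \eqref{Q2}--\eqref{Q3} step rather than quoting it — so there is nothing substantive to add.
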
 \begin{proof}
We integrate  $\partial_v(r^2 |\phi|^2)$ on $\{u\} \times [v_{\Gamma}(u),v]$. Using the fact that $r^2 |\phi|^2(u,v_{\Gamma}(u))=0$, we get $$ r^2 |\phi|^2(u,v) \leq 	2\int_{v_{\Gamma}(u)}^{v} r^2 |\phi||D_v \phi|(u,v')dv'+ 2\int_{v_{\Gamma}(u)}^{v} r \partial_v r  |\phi|^2(u,v')dv' ,$$ where we used the identity $\partial_v(|\phi|^2)=2 \Re(\bar{\phi} D_v \phi)$. For the first term, we 
use \eqref{Q3} which we proved in Lemma \ref{chargestimatelemma} and for the second term, the inequality $\partial_v r(u,v') = (1-\frac{2\rho}{r}) \kappa(u,v') \leq \kappa(u,v')$ which holds for all $v_{\Gamma}(u)  \leq v' \leq v$ by \eqref{murelation} and because $\{u\} \times [v_{\Gamma}(u),v] \subset \R$, hence $\partial_v r(u,v') \geq 0$: thus, we get $$ r^2 |\phi|^2(u,v) \leq 2\sqrt{2} \cdot \sqrt{r_1} \cdot r(u,v) \cdot \sqrt{\delta}	+ 2\int_{v_{\Gamma}(u)}^{v} r \kappa |\phi|^2(u,v')dv' \leq  r_1^{\frac{3}{2}} \cdot \sqrt{8\delta}	+ 2\int_{v_{\Gamma}(u)}^{v} r \kappa |\phi|^2(u,v')dv',$$ 
and we already proved, see \eqref{ingoingphi2}, that $2\int_{v_{\Gamma}(u)}^{v} r \kappa |\phi|^2(u,v')dv' \leq 4 r_1 \cdot \delta$. Thus, combining everything: $$ m^2 r^2 |\phi|^2(u,v) \leq m^2  \cdot \sqrt{ 8 r_1^3 \cdot \delta}  \cdot (1+ \sqrt{\frac{2\delta}{r_1}})< \frac{1}{2},$$ where we chose $\delta<\frac{1}{32m^2 r_1 (1+ m^2r_1^2)}$ for the last estimate, by Lemma \ref{easylemma2}. This concludes the proof.
\end{proof}

In the case $m^2 \neq 0$, choosing  $\delta < \delta_1$ with $\delta_1$ defined in the statement of Theorem \ref{mainestimate}, the combination of Lemma \ref{chargestimatelemma} and Lemma \ref{massestimatelemma}  concludes the proof of Theorem \ref{mainestimate} since we have $\partial_u(r \partial_v r)(u,v)=-\frac{\Omega^2}{4} (1-\frac{Q^2}{r^2}-m^2 r^2 |\phi|^2)(u,v)<0$ by \eqref{Radius3}. To finish the proof, notice that $u \rightarrow r \partial_v r(u,v)$ is a $C^1$ function on $[u,u_{\Gamma}(v))$, and since $r \partial_v r(u,v)=0$ there exists $\eta>0$ such that for all $u' \in (u,u+\eta)$, $r \partial_v r(u',v)<0$, thus $\partial_v r(u',v)<0$, which proves that $(u,u+\eta) \times \{v\} \subset \T$.

\subsection{Existence of arbitrarily late half-diamonds in the regular region} \label{halfdiamondsection}
We start with \color{black} a geometric result: we construct half-diamonds with specific causal properties, and on which we will later apply the key estimate of section \ref{keyestimatesection}. To carry out this construction, we work by contradiction, assuming that $\mathcal{S}^1_{\Gamma} \cup \mathcal{CH}_{\Gamma} \cup \mathcal{S}^2_{\Gamma} \cup  \mathcal{S}  = \emptyset$ i.e.\ we assume by contradiction that the Cauchy horizon closes off the space-time in $b_{\Gamma}$ as depicted in Figure \ref{Fig3}.
\begin{prop}\label{halfdiamond} Assume, as in Theorem \ref{maintheorem}, that there exists $u_0 < \uend$ such that, for all \\$   u\in [u_0,\uend)$, there exists $v(u)$ such that $(u,v(u)) \in \T$. Assume also $\mathcal{S}^1_{\Gamma} \cup \mathcal{CH}_{\Gamma} \cup \mathcal{S}^2_{\Gamma} \cup  \mathcal{S}  = \emptyset$.
	
	Then, for all $v\geq v(u_0)$\color{black}, there exists $\uA\in[u_0, u_{\Gamma}(v))$ such that $(\uA,v) \in \A$ and \begin{enumerate}
		\item \label{appstat1}The future ingoing cone emanating from  $(\uA,v)$ lies in the regular region, i.e.\ $(\uA, u_{\Gamma}(v)]  \times \{v\} \subset \R$.
		\item \label{appstat2} The past outgoing cone emanating from  $(\uA,v)$ lies in the marginally regular region, i.e.\ $\{\uA\} \times [v_{\Gamma}(\uA), v]  \subset~ \R \cup ~\A$.
		\item \label{appstat3}\color{black}$(\uA,v)$ approaches $b_{\Gamma}$ as $v \rightarrow +\infty$: denoting $u(b_{\Gamma})$, the $u$ coordinate of $b_{\Gamma}$, we have \begin{equation} \label{apparentbgamma}
		\lim_{v \rightarrow +\infty} \uA= u(b_{\Gamma}).
		\end{equation} Therefore, we also have, given that $\Gamma$ is time-like, c.f.\ Figure \ref{Fig3}: \begin{equation} \label{apparentbgamma2}
		\lim_{v \rightarrow +\infty} v_{\Gamma}(\uA)= +\infty.
		\end{equation}
	\end{enumerate} 
\end{prop}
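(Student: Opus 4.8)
The goal is to construct, for all sufficiently large $v$, an apparent horizon point $(u,v)\in\A$ whose future ingoing segment stays regular, whose past outgoing segment is marginally regular, and whose $u$-coordinate converges to $u(b_\Gamma)$. The plan is to work under the standing contradiction hypothesis $\mathcal{S}^1_{\Gamma} \cup \mathcal{CH}_{\Gamma} \cup \mathcal{S}^2_{\Gamma} \cup \mathcal{S} = \emptyset$, which forces the geometry of Figure \ref{Fig3}: the Cauchy horizon $\CH \cup \mathcal{S}_{i^+}$ and the center endpoint $b_\Gamma$ share the same $u$-coordinate, $u(b_\Gamma)=\uend$. I would fix a late $v$ and scan along the ingoing ray $\{v\}$ from the center outward. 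Since $\Gamma\subset\R$, points near the center are regular ($\partial_v r>0$); by the trapped-neighborhood assumption there exist trapped points $(u,v(u))\in\T$ arbitrarily close to $\CH$. The idea is to define $\uA$ as the infimum (scanning in $u$) of the apparent-horizon crossing, i.e. roughly $\uA:=\sup\{\,u'\le u_\Gamma(v): (u'',v)\in\R \text{ for all } u''\in(u', u_\Gamma(v)]\,\}$, so that everything to its future along $\{v\}$ is regular.

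First I would establish statement \ref{appstat1}. Define $\uA$ to be the smallest $u$-value such that the half-open ingoing segment $(u,u_\Gamma(v)]\times\{v\}$ lies entirely in $\R$. This is well-defined because the set of $u'$ with $(u',v)\in\R$ is open (as $\partial_v r$ is continuous and $\R=\{\partial_v r>0\}$) and contains a neighborhood of the center by $\Gamma\subset\R$; taking the connected component abutting $u_\Gamma(v)$ and letting $\uA$ be its left endpoint gives $(\uA,u_\Gamma(v)]\times\{v\}\subset\R$, hence statement \ref{appstat1}. By continuity of $\partial_v r$, the endpoint itself satisfies $\partial_v r(\uA,v)=0$, i.e.\ $(\uA,v)\in\A$, provided $\uA>u(b_\Gamma)$ is an interior point rather than a limit on the boundary. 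One must check that this left endpoint is genuinely an apparent-horizon point and not a point where the ray exits $\mathcal{Q}^+$; here I would use that for $v$ large there must be a trapped point on $\{v\}$ (so the regular region does not fill the whole ray) together with the regularity of the center to sandwich $\uA$ strictly between $u(b_\Gamma)$ and $u_\Gamma(v)$.

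Next, statement \ref{appstat2}: the past outgoing cone $\{\uA\}\times[v_\Gamma(\uA),v]\subset\R\cup\A$. Since $(\uA,v)\in\A$, I would use the Raychaudhuri equation \eqref{RaychV}, which gives $\partial_v(\partial_v r/\Omega^2)\le 0$, so $\partial_v r/\Omega^2$ is non-increasing in $v$ along the outgoing ray $\{\uA\}$; as it vanishes at the top $v$ and starts positive near the center (where $\Gamma\subset\R$), it stays non-negative for all $v'\le v$, giving $\partial_v r(\uA,v')\ge 0$, i.e.\ $(\uA,v')\in\R\cup\A$. This monotonicity is exactly why $\A$ is the natural outer boundary of $\R$ in this model, and it is the cleanest ingredient. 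The footnote's remark that the whole half-diamond is then regular follows by applying \eqref{RaychV} along each ingoing ray issuing from this outgoing segment.

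The main obstacle is statement \ref{appstat3}, the convergence $\lim_{v\to+\infty}\uA=u(b_\Gamma)$. The upper bound $\uA\le u(b_\Gamma)+o(1)$ I would obtain by monotonicity/continuity: as $v$ increases the regular half-diamonds cannot shrink arbitrarily toward the center without contradicting $\Gamma\subset\R$ and the closing-off geometry, so $\uA$ cannot have a limit point strictly to the past of $u(b_\Gamma)$; any such limit point would produce an apparent-horizon curve terminating in the \emph{interior} of $\mathcal{Q}^+$ away from $b_\Gamma$, which under $\mathcal{S}^1_{\Gamma} \cup \mathcal{CH}_{\Gamma} \cup \mathcal{S}^2_{\Gamma} \cup \mathcal{S} = \emptyset$ would have to re-enter the regular region and contradict the trapped-neighborhood hypothesis near $\CH$. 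For the reverse inequality $\liminf \uA\ge u(b_\Gamma)$ I would argue that $\uA\le u_\Gamma(v)<\uend=u(b_\Gamma)$ cannot drift away from $u(b_\Gamma)$ because the trapped neighborhood of $\CH$ forces trapped points at $u$-values converging up to $\uend$, pinching $\uA$ against $u(b_\Gamma)$. The delicate point is ruling out oscillation of $\uA$ as $v\to+\infty$; I expect this requires a careful limiting/compactness argument on the connected components of $\A$ exploiting that $\Gamma$ is timelike and that $\CH$ is null and ingoing (Figure \ref{Fig3}), so that the only admissible accumulation locus for $(\uA,v)$ is $b_\Gamma$. Finally, \eqref{apparentbgamma2} is immediate from \eqref{apparentbgamma}: since $\Gamma$ is timelike and $\uA\to u(b_\Gamma)=u_\Gamma$-coordinate of the center endpoint, the corresponding $v_\Gamma(\uA)$ must run to $+\infty$, as $v_\Gamma$ is the $v$-coordinate where the outgoing ray $\{\uA\}$ meets $\Gamma$ and this meeting point slides up toward $b_\Gamma$.
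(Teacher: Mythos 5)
Your definition of $\uA$ and your arguments for statements \ref{appstat1} and \ref{appstat2} coincide with the paper's proof: the paper takes $\uA:=\max\{u'\in(u_0,u_\Gamma(v)):\partial_v r(u',v)=0\}$, which is exactly the left endpoint of your regular component abutting the center; non-emptiness of the set of zeros comes from the intermediate value theorem between the trapped point $(u_0,v)$ (trapped for every $v\ge v(u_0)$ by \eqref{RaychV}) and the regular center; and statement \ref{appstat2} follows, as you say, from the monotonicity of $\partial_v r/\Omega^2$ in $v$. Two slips in your write-up: your displayed formula should be an $\inf$ rather than a $\sup$, and the non-degeneracy condition ``$\uA>u(b_\Gamma)$'' has the orientation backwards --- one always has $\uA\le u_\Gamma(v)<u(b_\Gamma)$, and the point to rule out is that the regular component fills the whole ray, which is exactly what the trapped point $(u_0,v)$ prevents.

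The genuine weakness is statement \ref{appstat3}: as written you leave it open, deferring ``oscillation'' to an unspecified compactness argument on the components of $\A$, and you swap the roles of your two bounds (the bound you call ``upper'' --- no accumulation point strictly before $u(b_\Gamma)$ --- is in fact the lower bound; the actual upper bound is trivial since $u_{\mathcal{A}^+}(v)\le u_\Gamma(v)<u(b_\Gamma)$). But no compactness is needed, and your own pinching idea closes the proof once you insert the one missing step, namely propagation of trappedness in $v$: for each fixed $u<\uend=u(b_\Gamma)$, the Raychaudhuri equation \eqref{RaychV} propagates $(u,v(u))\in\T$ to $(u,v)\in\T$ for all $v\ge v(u)$ (such points lie in $\mathcal{Q}^+$ in the closing-off geometry), and since all of $(u_{\mathcal{A}^+}(v),u_\Gamma(v)]\times\{v\}$ is regular this forces $u_{\mathcal{A}^+}(v)\ge u$; combined with $u_{\mathcal{A}^+}(v)<u(b_\Gamma)$ this yields $\lim_{v\to+\infty}u_{\mathcal{A}^+}(v)=u(b_\Gamma)$ with no oscillation issue at all. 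The paper's route is a variant of the same mechanism: \eqref{RaychV} makes $v\mapsto u_{\mathcal{A}^+}(v)$ non-decreasing, so a limit $u_l$ exists, and if $u_l<u(b_\Gamma)$ then the whole region $\{u>u_l,\ v\ge v(u_0)\}$ would be regular by definition of $u_{\mathcal{A}^+}$, contradicting the trapped-neighborhood hypothesis. Either repair makes your proof complete and essentially identical to the paper's.
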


\begin{proof}  By assumption, $\CH \cup \mathcal{S}_{i^+} \neq \emptyset$ and since $\mathcal{S}^1_{\Gamma} \cup \mathcal{CH}_{\Gamma} \cup \mathcal{S}^2_{\Gamma} \cup  \mathcal{S}  = \emptyset$, then  $u(b_{\Gamma})=\uend$. By the trapped neighborhood assumption, for all $u_0 \leq u<u(b_{\Gamma})$, there exists $v(u)$  such that $(u,v(u)) \in \T$, and therefore, by the monotonicity of the Raychaudhuri equation \eqref{RaychV}, $\{u\} \times [v(u),+\infty) \subset \T$.
	
		\begin{figure}
		\begin{center}		
			\includegraphics[width=50mm, height=76.4 mm]{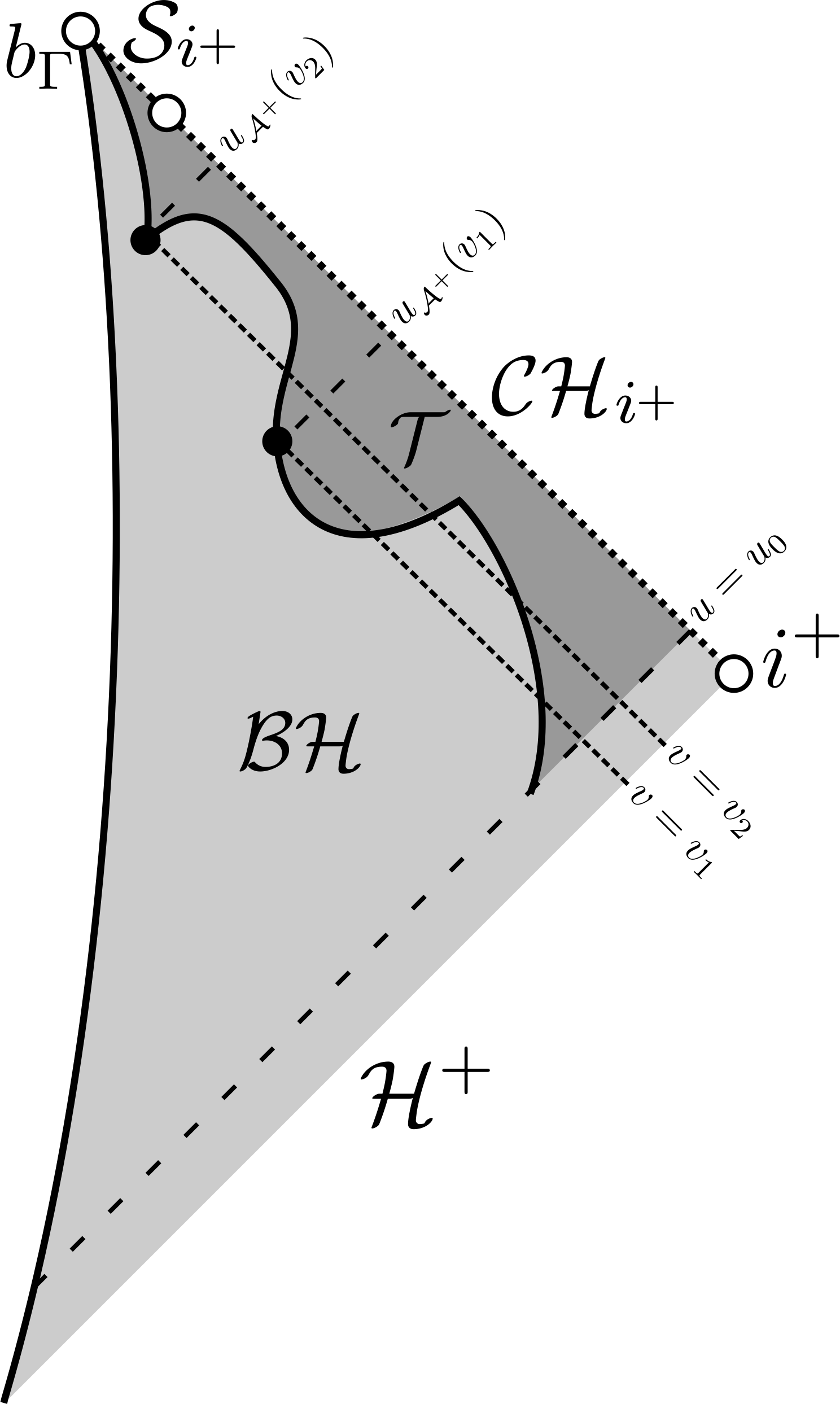}
		\end{center}
		\caption{Illustration of the proof of Proposition~\ref{halfdiamond} \color{black}} 
		\label{Fig5}
	\end{figure}
	
	For any $v \geq v(u_0)$, we know that $(u_0,v) \in \T$.  	
	Also $(u_{\Gamma}(v),v) \in \Gamma \subset \R$ so, because $u \rightarrow \partial_v r(u,v)$ is continuous on $[u_{0},u_{\Gamma}(v)]$, there exists $u'\in  (u_{0},u_{\Gamma}(v))$ such that $\partial_v r(u',v)=0$, by the intermediate value theorem. Hence $(u',v) \in \A$.
	Define $\uA:= \max \{ u' \in (u_{0},u_{\Gamma}(v)), \partial_v r(u',v)=0 \}$. By continuity, $\uA$ is well defined and $(\uA,v) \in \A$. By definition (and continuity again), for all $u'\in (\uA,u_{\Gamma}(v)]$, $\partial_v r(u',v)>0$ thus $(\uA,u_{\Gamma}(v)] \times \{v\} \subset \R$.
	
	$v\rightarrow \uA$ is a non-decreasing function with values on $[u_0,u(b_{\Gamma})]$: indeed, if $v< v'$ then by the monotonicity of the Raychaudhuri equation \eqref{RaychV}, $\partial_v r(\uA,v') \leq 0$, thus $\uA \leq  u_{\mathcal{A}^+}(v')  $. This means that $ \uA$ converges to a limit value $u_l \in [u_0,u(b_{\Gamma})]$ as $v\rightarrow+\infty$. We claim that $u_l=u(b_{\Gamma})$: if not, then the region $\{ u>u_l, v \geq  v(u_0)\} \subset \R$ by definition, obviously contradicting the trapped neighborhood assumption.
	
	Therefore, we chose $u=\uA$ and the claims \ref{appstat1} and \ref{appstat3} are satisfied, and only claim \ref{appstat2} remains. 
	
	For Claim \ref{appstat2}, notice $  \{\uA\}  \times [v_{\Gamma}(u),v] \subset \R \cup \A$, using again the monotonicity of the Raychaudhuri equation \eqref{RaychV}.
	
\end{proof} 
Now, combining Theorem \ref{mainestimate} and Proposition \ref{halfdiamond}, we prove Theorem \ref{maintheorem}:
\begin{cor} \label{above}
	Assume, as in Theorem \ref{maintheorem}, there exists $(u_1,v_1) \in \T$, with $(u_1,+\infty) \in \CH$, such that $$ \int_{v_1}^{+\infty} \kappa(u_1,v) (1+|\phi|^2(u_1,v)) dv < +\infty,$$ and that there exists $u_0 < \uend$ such that, for all $u_0  \leq u < \uend$, there exists $v(u)$ such that $(u,v(u)) \in \T$. Then $\mathcal{S}^1_{\Gamma} \cup \mathcal{CH}_{\Gamma} \cup  \mathcal{S}^2_{\Gamma} \cup  \mathcal{S}  \neq \emptyset$.
\end{cor}
\begin{proof}
	First, define $r_1=r(u_1,v_1)$ and then $\delta_1(q_0,m^2,r_1)$ by the expression given in the statement of Theorem \ref{mainestimate}. Using the integrability assumption, we see that there exists $v_0>v_1$ large enough such that, if $v\geq v_0$, $$ \int_{v}^{+\infty} \kappa(u_1,v) (1+|\phi|^2(u_1,v)) dv < \delta_1(q_0,m^2,r_1).$$
	
Then we proceed by contradiction: assume that $\mathcal{S}^1_{\Gamma} \cup \mathcal{CH}_{\Gamma} \cup \mathcal{S}^2_{\Gamma}  \cup  \mathcal{S}  = \emptyset$: then, for all $v> v_0$ large enough (recall that $v_0$ is defined right above)\color{black}, by Proposition \ref{halfdiamond}, there exists $u_{\A}\color{black}(v)>u_1$ such that $(u_{\A}\color{black}(v),v) \in \A$ , $[u, u_{\Gamma}(v)]  \times \{v\} \subset \R$, $\{u\} \times [v_{\Gamma}(u), v]  \subset \R \cup \A$ and $v_{\Gamma}(u)>v_1$, i.e.\ a causal rectangle $J^{-}(p) \cap J^{+}(q)$ as in Figure \ref{Figrect}, with $p=(u,v)$, $q =(u_1,v_{\Gamma}(u))$.

We choose such a $V>v_0>v_1$ large enough so that $v_{\Gamma}( u_{\A}\color{black}(V))>v_0>v_1$ (this is possible by the Claim \ref{appstat3} of Proposition  \ref{halfdiamond}). Therefore, we have in particular that $$ \int^{+\infty}_{v_{\Gamma}( u_{\A}\color{black}(V))} \kappa(u_1,v) (1+|\phi|^2(u_1,v)) dv < \delta_1(q_0,m^2,r_1).$$

	Notice also that $r(u_1,v_{\Gamma}(u_{\A}\color{black}(V))) \leq r_1$ as $v_{\Gamma}(u_{\A}\color{black}(V))>v_1$ and $\{u_1\} \times [v_1,\infty) \subset \T$, thus,\\ $\delta_1(q_0,m^2,r_1) \leq  \delta_1(q_0,m^2,r(u_1,v_{\Gamma}(u_{\A}\color{black}(V))))$ as $x \rightarrow \delta_1(q_0,m^2,x)$ is a decreasing function, which also implies that $$ \int^{+\infty}_{v_{\Gamma}( u_{\A}\color{black}(V))} \kappa(u_1,v) (1+|\phi|^2(u_1,v)) dv < \delta_1(q_0,m^2,r(u_1,v_{\Gamma}(u_{\A}\color{black}(V)))).$$
	
	Thus, the assumptions of Theorem \ref{mainestimate} are satisfied on the rectangle $[u_1,u] \times  [v_{\Gamma}(u),V\color{black}]$. 
	
	Hence $(u,u+\eta) \times \{V\color{black}\} \subset \T$ which contradicts $(u, u_{\Gamma}(V\color{black})]  \times \{V\color{black}\} \subset \R$ and the corollary is proven.
\end{proof}

The proof of Corollary~\ref{above} thus also concludes the proof of Theorem \ref{maintheorem}. 

\subsection{Proof of Theorem \ref{maincorollary} and propagation of the mass blow-up} \label{propagationsection}

We now start with the stronger assumptions of Theorem \ref{maincorollary}. We will simply prove that those assumptions imply the assumptions of Theorem \ref{maintheorem}, which we can then apply. For this, we need to use a result from \cite{Moi4}, which propagates the blow up of the Hawking mass, providing $\phi$ and $Q$ are controlled:
\begin{lem}[Propagation of the Hawking mass blow up, \cite{Moi4}] \label{massblowuplemma}
	If there exists $u_1 < u_{\CH} $ (recall the definition of $u_{\CH}$ from Section~\ref{preliminary}) \color{black} such that $$ \lim_{v \rightarrow +\infty} \rho(u_1,v)=+\infty,$$  $v_1 \in \RR$ large enough and a constant $D>0$ such that for all $v \geq v_1$: \begin{equation} \label{phiinitblowup}
	|\phi|^2(u_1,v)+ |Q|(u_1,v) \leq D \cdot |\log(\rho)|(u_1,v);
	\end{equation} 
	then, for all $ u_1 \leq u_2 < u_{\CH}\color{black}$, $$ \lim_{v \rightarrow +\infty} \rho(u_2,v)=+\infty.$$
	
	Thus, for all $u_1 \leq u < u_{\CH}\color{black}$, there exists $v(u)$ such that $(u,v(u)) \in \T$.	
\end{lem}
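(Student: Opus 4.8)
The plan is to isolate the genuinely dynamical content---propagating $\lim_{v\to+\infty}\rho(u_1,v)=+\infty$ from the single cone $\{u_1\}$ to every later cone $\{u_2\}$ with $u_1\le u_2<u_\infty(\CH)$---and to dispatch the final ``therefore'' by a soft argument. For the latter: since $(u_1,v_1)\in\T$, the Raychaudhuri equation \eqref{RaychV} (giving $\partial_v r/\Omega^2$ non-increasing in $v$) forces $\{u_1\}\times[v_1,+\infty)\subset\T$, so with $r_1:=r(u_1,v_1)$ one has $r(u_1,v)\le r_1$ for $v\ge v_1$; combined with $\partial_u r<0$ this yields $r(u,v)\le r_1$ near $\CH$, consistent with the fact (Theorem \ref{Kommemi}) that $r$ extends to a strictly positive value on $\CH$. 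Consequently, once $\rho(u_2,v)\to+\infty$ is known, $\frac{2\rho(u_2,v)}{r(u_2,v)}\ge\frac{2\rho(u_2,v)}{r_1}\to+\infty$, so $1-\frac{2\rho}{r}(u_2,v)<0$ for $v$ large; by the characterization \eqref{characttrapped} this gives $(u_2,v)\in\T$, which is exactly the concluding statement with $v(u_2)$ any sufficiently large value.

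Everything thus reduces to propagating the blow-up, which I would run through the $u$-transport equation \eqref{massUEinstein}. In the trapped region $\partial_v r<0$, so the principal term $\frac{-2r^2\Omega^2}{\partial_v r}|D_u\phi|^2$ is non-negative and can only reinforce the mass; the sole obstruction to monotonicity is the lower-order term $\big(\frac{m^2}{2}r^2|\phi|^2+\frac{Q^2}{r^2}\big)\partial_u r\le 0$. Integrating at fixed $v$ and discarding the favourable term gives $\rho(u_2,v)\ge\rho(u_1,v)-\int_{u_1}^{u_2}\big(\frac{m^2}{2}r^2|\phi|^2+\frac{Q^2}{r^2}\big)|\partial_u r|\,du'$. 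Since $\int_{u_1}^{u_2}|\partial_u r|\,du'=r(u_1,v)-r(u_2,v)\le r_1$ is bounded and, granted that the hypothesis \eqref{phiinitblowup} can be propagated off $\{u_1\}$, the prefactor grows at most like $|\log\rho|^2$, the loss is $o(\rho(u_1,v))$ and hence $\rho(u_2,v)\to+\infty$. I would organize this as a continuity argument in $u$: the set of $u_2$ for which the blow-up holds on all of $[u_1,u_2]$ is non-empty ($u_1$ lies in it) and is both open and closed, because blow-up on a cone forces trapped-ness there---which justifies the favourable sign of the principal term---and this in turn propagates the blow-up to slightly larger $u$.

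The main obstacle is precisely the hypothesis used implicitly above: one must propagate the logarithmic control $|\phi|^2+|Q|\lesssim|\log\rho|$ off the single cone $\{u_1\}$ into the whole region adjacent to $\CH$, and secure a lower bound $r\ge r_{\min}>0$ there, so that the prefactors $\frac{m^2}{2}r^2|\phi|^2+\frac{Q^2}{r^2}$ stay negligible against $\rho$ and the trapped criterion genuinely bites. This is the technical heart, and is where I would lean on \cite{Moi4}: it requires characteristic energy estimates relating the fluxes $\int r|D_u\phi|^2$ and $\int r|D_v\phi|^2$ to the mass through the Raychaudhuri equations \eqref{RaychU}, \eqref{RaychV} and the Klein--Gordon equations \eqref{Field2}, \eqref{Field3}, all kept inside a bootstrap that preserves the logarithmic-in-$\rho$ growth. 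I expect the delicate interplay to be between the dynamical charge $Q$---whose variation is governed by \eqref{chargeUEinstein}, \eqref{ChargeVEinstein}---and the blue-shift amplification of the ingoing radiation as $v\to+\infty$, which is the very mechanism driving mass inflation and which must be shown not to destroy the control on $\phi$ and $Q$.
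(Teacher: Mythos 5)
Your skeleton coincides with the paper's: the soft ``therefore'' (mass blow-up plus the boundedness of $r$ makes each cone eventually trapped, via \eqref{characttrapped} and the monotonicity of \eqref{RaychV}), the $u$-integration of \eqref{massUEinstein} at fixed $v$ with the favourable sign of the principal term in the trapped region, the bound $\int_{u_1}^{u_2}|\partial_u r|\,du\le r_1$, the need for a lower bound $r\ge r_{\min}>0$ (available since $u_2<u_{\infty}(\CH)$ and $r$ extends strictly positively on $\CH$ by Theorem \ref{Kommemi}), and a bootstrap in $u$. But there is a genuine gap exactly where you flag it: you never prove that the control \eqref{phiinitblowup} propagates off $\{u_1\}$, and instead ``lean on \cite{Moi4}''. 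Without that step the argument is circular, since the error term $\bigl(\frac{m^2}{2}r^2|\phi|^2+\frac{Q^2}{r^2}\bigr)|\partial_u r|$ can only be declared $o(\rho(u_1,v))$ once one knows that $|\phi|$ and $|Q|$ grow at most logarithmically (or at most like $\rho^{\alpha}$, $\alpha<\frac12$) on the \emph{whole} rectangle, which is precisely what must be shown. Moreover, your sketch of how to fill it --- characteristic energy estimates through the Klein--Gordon equations \eqref{Field2}, \eqref{Field3} and a blue-shift analysis --- points in the wrong direction: the paper closes the bootstrap by a short, self-contained trick using no equations beyond the ones you already invoked. By \eqref{murelation}, the favourable principal term of \eqref{massUEinstein} \emph{is} (up to constants and the factor $2\rho-r$) the flux density $\frac{r|D_u\phi|^2}{|\partial_u r|}$; hence under a bootstrap of the form $|\phi|^2+Q^2\le C\rho^{2\alpha}$ one gets $\frac{r|D_u\phi|^2}{2|\partial_u r|}\lesssim \rho^{2\alpha-1}+\partial_u\log\rho$, and integrating in $u$ and applying Cauchy--Schwarz against $\int_{u_1}^{u_2}\frac{2|\partial_u r|}{r}\,du=2\log\frac{r(u_1,v)}{r(u_2,v)}$ (bounded thanks to the $r$ bounds) yields the gauge-invariant estimate $|\phi|(u_2,v)\lesssim|\phi|(u_1,v)+\bigl(\log\rho(u_2,v)\bigr)^{1/2}$; then \eqref{chargeUEinstein} gives $|Q|\lesssim\log\rho$. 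This recovers the bootstrap with room to spare, and the same computation yields the quantitative propagation $\rho(u,v)\ge\rho(u_1,v)\bigl(1-C'\rho^{2\alpha-1}(u_1,v)\bigr)$ from which the blow-up follows.

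Two smaller points. First, your continuity argument is formulated on the limit statement ``$\rho(u_2,v)\to+\infty$'', whose openness and closedness in $u_2$ are not directly workable; the bootstrap must be run on the quantitative bounds (in the paper: $|\phi|^2+Q^2\le C\rho^{2\alpha}$ and $\frac{-\partial_v r}{\Omega^2}\ge\eta$) over rectangles $[u_1,u_2]\times[v_1,+\infty)$, with the blow-up read off the resulting estimate afterwards. Second, $(u_1,v_1)\in\T$ is not an assumption of the lemma; the eventual trapped-ness of the cone $\{u_1\}$ must itself be deduced (easily) from the mass blow-up and the boundedness of $r$ in the black hole interior before \eqref{RaychV} can be used to propagate it.
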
 \begin{proof} This statement is proven in \cite{Moi4} but we give a streamlined version of the argument for the benefits of the reader.

First, fix $u_2< u_{\CH}$. Note that, by definition (see Theorem~\ref{Kommemi}), for $v_1$ large enough, there exist constants $0<r_0^{-}<r_0^{+}$ such that $r_0^{-}<r(u,v)<r_0^{+}$ for all $(u,v)\in [u_1,u_2] \times [v_1,+\infty]$.  Fix  $\alpha=\frac{1}{4}$ \color{black} and $\eta_0>\eta>0$, we bootstrap in the region $[u_1,u_B]\times [v_1,+\infty)$  and $u_B \in [u_1,u_2]$: \begin{equation} \label{B1mblowup}
|\phi|^2+ Q^2 \leq \rho^{2\alpha},
\end{equation} 
\begin{equation} \label{B2mblowup}
\frac{-\partial_v r}{\Omega^2}  \geq \eta.
\end{equation} From the assumptions of the lemma and \eqref{RaychV}, it is clear that the set of spacetime points at which \eqref{B1mblowup} and \eqref{B2mblowup} are true is non-empty, for a small $\eta>0$  and taking $v_1$ larger if necessary. \color{black}

Then, using \eqref{massUEinstein} together with bootstrap \eqref{B1mblowup}, we have for some $C'(C,M,e,m^2,r_0^+,r_0^{-})>0$, 
$$ \partial_u \rho \geq \frac{2r^2 \Omega^2}{-\partial_v r}  |D_u \phi|^2 -C' \cdot \rho^{2\alpha} \cdot |\partial_u r|  \geq  -C' \cdot \rho^{2\alpha}  \cdot |\partial_u r|,$$ where for the last lower bound, we just used $\frac{-\partial_v r}{\Omega^2}\geq0$, as a soft consequence of \eqref{B2mblowup}. Since $0<\alpha<\frac{1}{2}$, it is clear that $$ \partial_u( \rho^{1-2\alpha})(u,v) \geq -(1-2\alpha) \cdot C' \cdot |\partial_u r|(u,v).$$

Thus, integrating we get that for all \color{black}  $(u,v)\in [u_1,u_B] \times [v_1,+\infty)$: \begin{align*} \rho^{1-2\alpha}(u,v) & \geq \rho^{1-2\alpha}(u_1,v) -  (1-2\alpha )C'  \cdot r(u_1,v)  \\ & \geq     \rho^{1-2\alpha}(u_1,v) -  (1-2\alpha )C'  \cdot r_+(M,e) \\ & \geq     \rho^{1-2\alpha}(u_1,v) -  C'', \end{align*} where $C''(C,M,e,m^2,r_0^+,r_0^{-})>0$. Note that to get the second inequality we have used $r(u_1,v) \leq r(0,v)= r_{|\mathcal{H}^+}(v) \leq r_+(M,e)$. From there we get $$  \rho(u,v)  \geq    \rho(u_1,v) \cdot [1 -  C'' \cdot  \rho^{2\alpha-1}(u_1,v)]^{\frac{1}{1-2\alpha}}. $$ Now remembering that $\lim_{v\rightarrow+\infty } \rho(u_1,v) =+\infty$ and Taylor-expanding gives us the following inequality, since $2\alpha-1<0$,  for some $C'''>0$
\color{black} \begin{equation} \label{massblowupest}
\rho(u,v) \geq \rho(u_1,v) \cdot (1-C'''\color{black} \cdot \rho^{2\alpha-1}(u_1,v)) .
\end{equation}

From this we obtain the blow up of the mass. Thus, there exists $v'_1>v_1$ such that for all $u_1 \leq u\leq u_B$, $v\geq v'_1$, $2\rho(u,v)>r_+$ thus $(u,v) \in \T$. Therefore, by \eqref{RaychV}, that $\frac{-\partial_v r}{\Omega^2}(u,v) \geq \eta_1>0$ for all $v\geq v_1'$ and $u_1 \leq u\leq u_B$, defining $\eta_1:=\sup_{u \in [u_1,u_2]}\frac{-\partial_v r}{\Omega^2}(u,v_1')$. Thus we retrieve bootstrap \eqref{B2mblowup} if $0<\eta<\eta_1$.

Now we need to retrieve bootstrap \eqref{B1mblowup}. For this, consider \eqref{massUEinstein} and write, under bootstrap \eqref{B1mblowup} and \eqref{B2mblowup} $$   \frac{r^2 \cdot |\partial_v r|}{2\Omega^2} \cdot |D_u \phi|^2(u,v) \leq C' \cdot|\partial_u r| \rho^{2 \alpha}(u,v)+ \partial_u \rho(u,v),$$ which is also equivalent, using \eqref{murelation} to \begin{equation}\label{E1}
\frac{ r|D_u \phi|^2}{2|\partial_u r|}(u,v) \leq \frac{ C'  \cdot|\partial_u r| \rho^{2 \alpha}(u,v)}{2\rho(u,v)-r(u,v)}+ \frac{\partial_u \rho(u,v)}{2\rho(u,v)-r(u,v)} \leq C'  \cdot  |\partial_u r|\ \rho^{-1+2 \alpha}(u,v)+ \partial_u \log(\rho)(u,v),
\end{equation}  where we have used $2\rho(u,v) -r(u,v) \geq \rho(u,v)$ on $[u_1,u_B] \times [v_1,+\infty]$ for $v_1$ large enough, since $\rho$ tends to $+\infty$.

Thus, we get, integrating \eqref{E1}, also using \eqref{massblowupest}: for all $u_1 \leq u\leq u_B$ \begin{equation} \label{E2}
\int_{u_1}^{u} \frac{ r|D_u \phi|^2}{2|\partial_u r|}(u',v) du' \lesssim   \rho^{-1+2 \alpha}(u_1,v)+  \log( \frac{\rho(u,v)}{\rho(u_1,v)}).
\end{equation} We can now use Cauchy-Schwarz to estimate $\phi$ as follows \color{black} \begin{equation}\begin{split}\label{E3}
 |\phi(u,v)- e^{-iq_0 \int_{u_1}^{u} A_u(u',v) du'}\phi(u_1,v)| \leq \int_{u_1}^{u} |D_u \phi|(u',v) du'\\ \leq (\int_{u_1}^{u} \frac{ r|D_u \phi|^2}{2|\partial_u r|}(u',v) du')^{\frac{1}{2}} (\int_{u_1}^{u} \frac{ 2|\partial_u r|}{r}(u',v) du')^{\frac{1}{2}}, \end{split}
\end{equation} which gives, combining \eqref{E2} and \eqref{E3} $$ |\phi|(u,v)\lesssim |\phi|(u_1,v)+   (\log( \frac{\rho(u,v)}{\rho(u_1,v)}) )^{\frac{1}{2}} \lesssim \log( \rho(u,v))^{\frac{1}{2}},$$ where we used \eqref{phiinitblowup} and \eqref{massblowupest}. A similar estimate is true for $Q$, using \eqref{chargeUEinstein}  so  we get for some constant $C>0$ (independent of $v_1$) and for all $u_1 \leq u \leq u_2$, $v\geq v_1$: $$ |\phi|^2(u,v) + Q^2(u,v) \leq C \log(\rho(u,v)) .$$ Since $\lim_{v \rightarrow +\infty} \log(\rho) \rho^{-2\alpha}(u,v)=0$, and taking $v_1$ larger if necessary, \color{black} we have  improved \color{black}  bootstrap \eqref{B1mblowup}  which concludes the proof. \color{black}


\end{proof}
Later, we will use this lemma to obtain a trapped neighborhood of $\CH$. Now we prove Theorem \ref{maincorollary}:

\begin{cor} \label{easycoro}
	For initial data as in Theorem \ref{Kommemi}, assume there exists an outgoing future cone emanating from $(u_1,v_1) \in \T$ with $ (u_1,+\infty) \in \CH$ and $C>0$ such that for all $v \geq v_1$: \begin{equation} \label{corollaryassumption3}
	|\phi|(u_1,v)+ |Q|(u_1,v) \leq C \cdot |\log(\rho)|,
	\end{equation} and the Hawking mass blows up:\begin{equation} \label{corollaryassumption4}
	\lim_{v \rightarrow+\infty} \rho(u_1,v) =+\infty,
	\end{equation} then \eqref{integralkappafinite} is satisfied and there exists $u_0 < \uend$ such that, for all $u_0  \leq u < \uend$, there exists $v(u)$ such that $(u,v(u)) \in \T$.
\end{cor}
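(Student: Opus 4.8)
The plan is to show that the stronger hypotheses \eqref{corollaryassumption3}--\eqref{corollaryassumption4} imply the two hypotheses of Theorem \ref{maintheorem}, namely the integrability \eqref{integralkappafinite} on the cone $\{u_1\}$ and the existence of a trapped neighborhood of $\CH$; Theorem \ref{maintheorem} then delivers $\mathcal{S}^1_{\Gamma} \cup \mathcal{CH}_{\Gamma} \cup \mathcal{S}^2_{\Gamma} \cup \mathcal{S} \neq \emptyset$. I would establish the trapped neighborhood by invoking the propagation Lemma \ref{massblowuplemma}, and the integrability by a direct estimate in which the mass blow-up tames the degenerate lapse $\kappa$.

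For the trapped neighborhood, I would apply Lemma \ref{massblowuplemma}, whose hypotheses are met: \eqref{corollaryassumption4} is the blow-up of $\rho$ on $\{u_1\}$, and \eqref{corollaryassumption3} is (more than) enough to run the bootstrap \eqref{B1mblowup}--\eqref{B2mblowup} of that lemma, which only needs $|\phi|^2+Q^2=o(\rho^{2\alpha})$ for some $\alpha<\tfrac12$, while \eqref{corollaryassumption3} even yields $|\phi|^2+Q^2\lesssim|\log\rho|^2$. The lemma directly provides a trapped point $(u,v(u))\in\T$ on every outgoing cone with $u_1\le u<u_\infty(\CH)$. If $\mathcal{S}_{i^+}\neq\emptyset$, for the remaining parameters $u_\infty(\CH)\le u<\uend$ the cone $\{u\}$ terminates on $\mathcal{S}_{i^+}$, where $r\to 0$ by Theorem \ref{Kommemi}; since $r(u,\cdot)>0$ in the interior yet $r(u,v)\to 0$ as $v\to+\infty$, the map $v\mapsto r(u,v)$ cannot be non-decreasing, so $\partial_v r<0$ somewhere and this cone too meets $\T$. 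Taking $u_0=u_1$ (note $u_1<u_\infty(\CH)\le\uend$) furnishes the trapped neighborhood.

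For the integrability, I would first observe that $(u_1,v)\in\T$ for all $v\ge v_1$: indeed $\partial_v r(u_1,v_1)<0$, and \eqref{RaychV} makes $\partial_v r/\Omega^2$ non-increasing in $v$, so $\partial_v r(u_1,\cdot)<0$ throughout. On $\T$ one has $2\rho>r>0$, so \eqref{murelation} gives $\kappa=\partial_v r/(1-2\rho/r)=r\,|\partial_v r|/(2\rho-r)$ and hence
\[
\kappa(u_1,v)\bigl(1+|\phi|^2(u_1,v)\bigr)=|\partial_v r|(u_1,v)\cdot\frac{r\,(1+|\phi|^2)}{2\rho-r}(u_1,v).
\]
Using $r(u_1,v)\le r_1$, the bound $|\phi|^2\le C^2|\log\rho|^2$ coming from \eqref{corollaryassumption3}, and $\rho(u_1,v)\to+\infty$ from \eqref{corollaryassumption4}, the last factor is at most $r_1(1+C^2|\log\rho|^2)/(2\rho-r_1)\to 0$, hence is bounded, say by $1$ for $v\ge V$. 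Since $r(u_1,\cdot)$ is monotone and bounded, $\int_{v_1}^{+\infty}|\partial_v r|(u_1,v)\,dv=r_1-\lim_{v\to+\infty}r(u_1,v)\le r_1<+\infty$; together with the (continuity and hence) boundedness of the integrand on the compact piece $[v_1,V]$, this yields \eqref{integralkappafinite}.

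I expect the integrability to be the delicate step, not the trapped neighborhood. The lapse $\kappa$ degenerates to $0$ as one approaches $\CH$, and the content of the estimate is that it degenerates fast enough --- like $1/\rho$ --- to overcome the logarithmic growth $|\phi|^2\sim|\log\rho|^2$ that \eqref{corollaryassumption3} permits. This is precisely where the mass blow-up enters, via the identity $\kappa=r\,|\partial_v r|/(2\rho-r)$, which trades the finite total variation of $r$ against $\rho\to+\infty$. A secondary point to handle with care is the portion $u_\infty(\CH)\le u<\uend$ when $\mathcal{S}_{i^+}\neq\emptyset$, dealt with above through the monotonicity and positivity of $r$.
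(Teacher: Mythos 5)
Your proposal is correct and follows essentially the same route as the paper: the integrability \eqref{integralkappafinite} is obtained from the identity $\kappa = |\partial_v r|/\bigl(\tfrac{2\rho}{r}-1\bigr)$ on the trapped cone, with the mass blow-up absorbing the logarithmic growth allowed by \eqref{corollaryassumption3} and the total variation of $r$ bounded by $r(u_1,v_1)$, while the trapped neighborhood comes from Lemma \ref{massblowuplemma} together with the $r\to 0$ argument on cones terminating on $\mathcal{S}_{i^+}$. Your explicit remark that \eqref{corollaryassumption3} (which controls $|\phi|$, not $|\phi|^2$, by $|\log\rho|$) still suffices to run the bootstrap of Lemma \ref{massblowuplemma} is a point the paper passes over silently, and is a welcome clarification rather than a deviation.
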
 \begin{proof}
To prove \eqref{integralkappafinite}, we use \eqref{murelation} as $\kappa= \frac{ |\partial_v r|}{ \frac{2\rho}{r}-1}$. Then, we get, using \eqref{corollaryassumption3} that $$  \kappa  \cdot (1+|\phi|^2) \leq  |\partial_v r| \cdot \frac{1+C^2 \cdot [\log(\rho)]^2}{ \frac{2\rho}{r}-1},$$ and by \eqref{corollaryassumption4}, there exists $C'>0$ such that $$ \frac{1+C^2 \cdot [\log(\rho(u_1,v))]^2}{ \frac{2\rho(u_1,v)}{r}-1} \leq C',$$ for all $v$ large enough. Then, we have $$ \int_{v_1}^{+\infty}  \kappa  \cdot (1+|\phi|^2)(u_1,v')dv' \leq C' \int_{v_1}^{+\infty}( -\partial_v r) (u_1,v') dv' \leq C'  \cdot r(u_1,v_1) <+\infty.$$  

For the trapped neighborhood, we divide the proof in two cases: if $\mathcal{S}_{i^+}=\emptyset$ (recall the definition of $\mathcal{S}_{i^+}$ from Theorem~\ref{Kommemi}), then for all $u<u_{\CH}$, there exists $v(u)$ such that $(u,v(u)) \in \T$  by Lemma \ref{massblowuplemma} and $u_{\CH}=\uend$ so there is nothing more to do. If $\mathcal{S}_{i^+} \neq \emptyset$, for every $u_{\CH}<u<\uend$, $r(u,v) \rightarrow 0$ as $v \rightarrow +\infty$. Since for all $(u,v) \in \mathcal{Q}^+-\Gamma$, $r(u,v)>0$, $\{u\} \times [v_0,+\infty) \nsubseteq \R \cup \A$ for any $v_0>v_{\Gamma}(u)$ so there exists $v(u)$ such that $(u,v(u)) \in \T$ which concludes the proof, choosing $u_0>u_{\CH}$.
\end{proof}

Thus, the assumptions of Theorem \ref{maintheorem} are satisfied and this concludes the proof of Theorem \ref{maincorollary}.

\section{Proof of Theorem \ref{theoremevent}} \label{finalsection}

\subsection{The logic of the proof of Theorem \ref{theoremevent}}

In this section, we state three more steps to prove Theorem \ref{theoremevent}, using Theorem \ref{maincorollary}: thus, we only assume polynomial decay of the scalar field on the event horizon. This allows us to invoke a result from \cite{Moi4} which proves that there exists a neighborhood of $\CH \cup \mathcal{S}_{i^+}$ inside the trapped region and that, moreover, either \eqref{roughassumption} is satisfied, or $\CH$ is ``of static type'', meaning it is an isometric copy of the Reissner--Nordstr\"{o}m Cauchy horizon (step \ref{step4}). Then, we prove that a static type $\CH$ cannot be connected to $\mathcal{S}_{i+}$ (step \ref{step5}), or to $b_{\Gamma}$  (step \ref{step6}), which is sufficient to obtain the conclusion $\mathcal{S}^1_{\Gamma} \cup \mathcal{CH}_{\Gamma} \cup  \mathcal{S}^2_{\Gamma} \cup  \mathcal{S}  \neq\emptyset.$ (recall that all these boundary components and their notations were defined in Theorem~\ref{Kommemi}).		\begin{enumerate}
	\item \textit{Starting from assumptions on the event horizon: the dichotomy of \cite{Moi4}, section \ref{recallprevioussection}.} \label{step4}
	
	We use the main result of \cite{Moi4}: under the assumptions on the event horizon stated in Theorem \ref{theoremevent}, there exists a trapped neighborhood of $\CH \cup \mathcal{S}_{i^+}$ and one can classify the Cauchy horizon $\CH $: \begin{enumerate}
		\item \label{static} $\CH$ is of static type: then $\phi$, $r-r_-(M,e)$, $Q-e$, $\varpi-M$ extend to zero on $\CH$, meaning that $\CH$ is an isometric copy of the Reissner--Nordstr\"{o}m Cauchy horizon with parameters $M$ and $e$.
		
		\item  \label{dynamical} $\CH$ is of dynamical type: for all outgoing light cone $\{u\}\times [v_0,+\infty)$ to the future of the event horizon we have the following asymptotic estimates, in the gauge \eqref{gauge1} and for $v_0$  large enough: \begin{equation} \label{kappaoutline}
		\kappa(u,v) \lesssim C(u) \cdot e^{-\alpha(M,e) \cdot v}
		\end{equation} \begin{equation} \label{phioutline}
		|\phi|(u,v) \lesssim C(u) \cdot v.
		\end{equation}
		\item  \label{mixed} $\CH$ is of mixed type: there exists an outgoing light cone $\{u_1\}\times [v_0,+\infty)$ such that the asymptotic estimates \eqref{kappaoutline} and \eqref{phioutline} are true on any outgoing light cone to the future of  $\{u_1\}\times [v_0,+\infty)$.
	\end{enumerate}
	
	Now: either possibility \ref{dynamical} or possibility \ref{mixed} hold, in which case there exists a trapped outgoing cone on which \eqref{corollaryassumption}, \eqref{corollaryassumption2} are true thus $\mathcal{S}^1_{\Gamma} \cup \mathcal{CH}_{\Gamma} \cup  \mathcal{S}^2_{\Gamma} \cup  \mathcal{S}  \neq \emptyset$ by Theorem \ref{maincorollary}, or possibility \ref{static} holds, and the Cauchy horizon $\CH$ is of static type. The rest of the discussion will thus focus on the case where $\CH$ is of static type.
	
	In the rest of the proof, we establish, using radically different techniques, that in the latter case [possibility~\ref{static}], $\mathcal{S}^1_{\Gamma} \cup \mathcal{CH}_{\Gamma} \cup  \mathcal{S}^2_{\Gamma} \cup  \mathcal{S}  \neq \emptyset$ is still true. The core of the argument relies on the impossibility to connect a static Cauchy horizon to the other boundary components in presence when $\mathcal{S}^1_{\Gamma} \cup \mathcal{CH}_{\Gamma} \cup  \mathcal{S}^2_{\Gamma} \cup  \mathcal{S}  = \emptyset$.

	\item \textit{Impossibility to connect a static $\CH$ to a non-trivial $\mathcal{S}_{i+}$, section \ref{prooftheoremeventsection} as well.} \label{step5}
	
	We establish that, if $\CH$ is of static type, $\mathcal{S}_{i+}=\emptyset$: this follows from a more general result (c.f. Corollary \ref{Siplusempty} and Remark \ref{rcontinuityremark}) that we establish: if there exists a trapped neighborhood of the end-point of $\CH$ and $\mathcal{S}_{i+} \neq \emptyset$, then the area-radius $r$ must extend to a continuous function on that neighborhood, including at the end-point where $r=0$. The continuity of $r$ results from the wave equation \eqref{Radius3} with a regular right-hand-side (because we are in the trapped region) and the propagation of singularities. This is incompatible with a static Cauchy horizon, on which $r$ is a strictly positive constant: $r \equiv r_-(M,e)>0$.
	
	To complete the proof, the remaining task is to show, using a standard bootstrap method, that there exist a trapped neighborhood of the end-point of $\CH$, if $\CH$ is of static type (Proposition \ref{rigidboundsprop}).
	
	\item \textit{Impossibility to connect  a static $\CH$ to $b_{\Gamma}$, section \ref{prooftheoremeventsection}.} \label{step6}
	
	To conclude the argument, we must show that it is impossible to have $\mathcal{S}^1_{\Gamma} \cup \mathcal{CH}_{\Gamma} \cup  \mathcal{S}^2_{\Gamma} \cup  \mathcal{S}  \cup \mathcal{S}_{i+}=\emptyset$ and $\CH$ of static type. This follows directly from the estimates of Proposition \ref{rigidboundsprop}, which in particular conclude that there exists a space-time rectangle below the Cauchy horizon where $r$ is lower bounded, which is incompatible with the presence of a center $\Gamma$ where $r=0$.
\end{enumerate}

\subsection{The classification of the Cauchy horizon proven in \cite{Moi4}} \label{recallprevioussection}

Now, we set the preliminaries to the proof of Theorem \ref{theoremevent}. For this, we recall a theorem proven in \cite{Moi4}, under the same assumptions as Theorem \ref{theoremevent}: the main result is precisely the existence of a trapped neighborhood of $\CH$ together with some rigidity results, which are related to the blow up or the finiteness of the Hawking mass.

\begin{thm} [Classification of the Cauchy horizon,  Theorem 3.5 in \cite{Moi4}\color{black}]\label{classificationprevioustheorem}
	Under the assumptions of Theorem \ref{theoremevent}, then 
	$\CH \neq \emptyset$ and  there exists a neighborhood of $\CH$ inside the trapped region: for all $u<u_{\CH}$, there exists $v(u) \in \RR$ such that $\{u\} \times [v(u),+\infty) \subset \mathcal{T}$. Moreover, there is an alternative between two possibilities: \begin{enumerate}
		\item \label{alternative1} $\CH$ is of dynamical or mixed type in the language of \cite{Moi4} and as a consequence, there exists $u_1<u_{\CH}$, $\alpha(M,e)>0$, $C(u_1)>0$, $v_1 >0$ such that $(u_1,v_1) \in \T$ and for all $v \geq v_1$: \begin{equation}
		\kappa(u_1,v) \leq C \cdot e^{-\alpha v},
		\end{equation}
		\begin{equation}
		|\phi|(u_1,v) \leq C \cdot v^{1-s}.
		\end{equation} In particular, \eqref{integralkappafinite} and the other assumptions of Theorem \ref{maintheorem} are satisfied.
		
		\item  \label{alternative2} $\CH$ is of static type in the language of \cite{Moi4}, then $r-r_-(e,M)$, $\phi$, $D_u \phi$, $\varpi-M$, $Q-e$, $\partial_v \log(\Omega^2) -2K_-$ all extend continuously to $0$ on $\CH$, where $2K_-(M,e):= \frac{2}{r_-^2}[M-\frac{e^2}{r_-}]<0$ is the surface gravity of the Reissner--Nordstr\"{o}m Cauchy horizon\color{black}. Moreover, there exists $u_1<u_{\CH}$, $C(u_1)>1$, $v_1(u_1)>0$ such that $(u_1,v_1) \in \T$ and we have the following estimates, for all $v \geq v_1$: \begin{equation}\label{rigidity1}
		C^{-1} \cdot e^{2.01 K_- v}\leq 	\Omega^2(u_1,v) \leq  C \cdot e^{1.99K_- v}, 	\end{equation}  \begin{equation} \label{rigidity2}
		|\phi|(u_1,v)+|D_v \phi|(u_1,v) \leq  C \cdot v^{-s},
		\end{equation}  \begin{equation} \label{rigidity6}
		|\varpi(u_1,v)-M| +	|Q(u_1,v)-e|+	|r(u_1,v)-r_-(M,e)| \leq  C \cdot v^{1-2s},
		\end{equation}
		\begin{equation} \label{rigidity7}
		|\partial_v \log(\Omega^2)(u_1,v)-2K_-|\leq  C \cdot v^{1-2s},
		\end{equation}  	\begin{equation} \label{rigidity9}
		|\kappa^{-1}(u_1,v)-1| \leq  C \cdot v^{1-2s},
		\end{equation} 
		\begin{equation} \label{rigidity11}
		|\partial_v r|(u_1,v) \leq  C \cdot v^{-2s},	\end{equation}	\begin{equation} \label{rigidity12}
		\int_{v}^{+\infty}|\partial_v r|(u_1,v') dv' \geq  C^{-1} \cdot v^{-p}.	\end{equation}

	\end{enumerate}
\end{thm}
Now, we prove a corollary, which is an easy consequence of Theorem \ref{classificationprevioustheorem} but is not, strictly speaking, proven in \cite{Moi4}. This corollary proves the trapped neighborhood assumption in Theorem \ref{maintheorem}, under the assumptions of Theorem \ref{theoremevent}.
\begin{cor} \label{trappedcor}
	Under the assumptions of Theorem \ref{theoremevent}, there exists $u_0 < \uend$ such that, for all \\$u_0  \leq
	~ u <~ \uend$, there exists $v(u)$ such that $(u,v(u)) \in \T$.
\end{cor}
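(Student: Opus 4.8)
The plan is to reduce this corollary directly to the trapped-neighborhood statement already contained in Theorem~\ref{classificationprevioustheorem}, and then to deal separately with the segment $\mathcal{S}_{i^+}$ (when it is nonempty) by a soft monotonicity argument exploiting that $r$ tends to $0$ there. The argument is, in fact, essentially identical to the trapped-neighborhood part of the proof of Corollary~\ref{easycoro}, the only difference being that the trapped neighborhood of $\CH$ is now supplied by Theorem~\ref{classificationprevioustheorem} (valid under the event-horizon assumptions of Theorem~\ref{theoremevent}) rather than by Lemma~\ref{massblowuplemma}.

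First I would invoke Theorem~\ref{classificationprevioustheorem}: under the assumptions of Theorem~\ref{theoremevent} there is a neighborhood of $\CH$ inside the trapped region, i.e.\ for every $u<u_{\CH}$ there exists $v(u)$ with $\{u\}\times[v(u),+\infty)\subset\T$; in particular $(u,v(u))\in\T$. This already settles every outgoing cone whose $u$-coordinate is strictly less than $u_{\CH}$.

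Next I would split according to whether $\mathcal{S}_{i^+}$ is empty. If $\mathcal{S}_{i^+}=\emptyset$, then $u_{\CH}=\uend$ and the previous step is the whole statement, so one may take any $u_0<\uend$. If $\mathcal{S}_{i^+}\neq\emptyset$, then $u_{\CH}<\uend$ and only the cones with $u_{\CH}\leq u<\uend$ remain. For such $u$ the outgoing cone $\{u\}$ reaches $\mathcal{S}_{i^+}$ as $v\to+\infty$, and $r$ extends continuously to zero on $\mathcal{S}_{i^+}$ by Theorem~\ref{Kommemi}; hence $r(u,v)\to 0$. If the whole cone were contained in $\R\cup\A$ then $\partial_v r(u,\cdot)\geq 0$ would make $r(u,\cdot)$ non-decreasing, hence bounded below by its value at any starting parameter, which is strictly positive since we are away from $\Gamma$ --- contradicting $r(u,v)\to 0$. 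Thus the cone must contain a point with $\partial_v r<0$, i.e.\ a point of $\T$. Choosing $u_0<u_{\CH}$ then yields the conclusion for all $u_0\leq u<\uend$.

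There is no genuine obstacle in this corollary: all the analytic content is carried by Theorem~\ref{classificationprevioustheorem}, and the supplementary step for $\mathcal{S}_{i^+}$ is purely causal and monotone. The one point deserving minor care is that the endpoint value $u=u_{\CH}$ belongs to $\mathcal{S}_{i^+}$ when the latter is nonempty, so it is handled by the $r\to 0$ argument rather than by the trapped-neighborhood statement; this is why I organize the case split around the (non)emptiness of $\mathcal{S}_{i^+}$.
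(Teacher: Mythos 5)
Your proof is correct and follows essentially the same route as the paper: the paper proves this corollary by observing that the trapped-neighborhood part of the proof of Corollary \ref{easycoro} goes through verbatim once Theorem \ref{classificationprevioustheorem} (rather than Lemma \ref{massblowuplemma}) supplies the trapped neighborhood of $\CH$, which is exactly your case split on $\mathcal{S}_{i^+}$ combined with the $r \to 0$ monotonicity argument on cones reaching $\mathcal{S}_{i^+}$. The only immaterial difference is that in the case $\mathcal{S}_{i^+} \neq \emptyset$ the paper chooses $u_0 > u_{\CH}$, so it never needs to treat the junction cone $u = u_{\CH}$, whereas you include it via the $r \to 0$ argument (legitimate under the paper's parametrization $\mathcal{S}_{i^+} = [u_{\infty}(\CH), u_{\infty}(\CH \cup \mathcal{S}_{i^+})) \times \{v_{\infty}\}$ of section \ref{preliminary}).
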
 \begin{proof}
If $\mathcal{S}_{i^+}=\emptyset$ then we apply Theorem~\ref{classificationprevioustheorem} and the result follows. If $\mathcal{S}_{i^+} \neq \emptyset$, then we argue exactly as in the proof of Corollary \ref{easycoro}:for every $u_{\CH}<u<\uend$, $r(u,v) \rightarrow 0$ as $v \rightarrow +\infty$, thus $\{u\} \times [v_0,+\infty) \nsubseteq \R \cup \A$ for any $v_0>v_{\Gamma}(u)$ so there exists $v(u)$ such that $(u,v(u)) \in \T$.
\end{proof}
\subsection{The proof of Theorem \ref{theoremevent}}  \label{prooftheoremeventsection}
We are now ready to prove Theorem \ref{theoremevent}, i.e.\ that $\mathcal{S}^1_{\Gamma} \cup \mathcal{CH}_{\Gamma} \cup  \mathcal{S}^2_{\Gamma} \cup  \mathcal{S}  \neq \emptyset$ under assumptions on the event horizon. First, we establish stability estimates, with data on a static Cauchy horizon and an outgoing cone. Those estimates are extremely strong, a consequence of the rigidity of Cauchy horizons of static type.  
\begin{prop} \label{rigidboundsprop}
	Assume that for some $u_1 < u_2 < u_{\CH}$, $r-r_-(e,M)$, $\phi$, $D_u \phi$, $\varpi-M$, $Q-e$ extend continuously to $0$ on $\CH \cap [u_1,u_2]$ and moreover estimates \eqref{rigidity1}, \eqref{rigidity2}, \eqref{rigidity6}, \eqref{rigidity7}, \eqref{rigidity9}, \eqref{rigidity11} are true on $\{u_1\} \times [v_1,+\infty) \subset \T$ for some $v_1>1$ and a constant $C>0$. Then there exists $v_1'=v_1'(C,M,e,q_0,m^2)>v_1$ such that the following are true: \begin{enumerate}
		\item The rectangle belongs to the space-time: $[u_1,u_2] \times [v_1',+\infty) \subset \mathcal{Q}^+$.
		\item The rectangle is trapped: $[u_1,u_2] \times [v_1',+\infty) \subset \T$ and the following estimate is true for all $u \in [u_1,u_2]$: 	\begin{equation} \label{lambda.main.est}
		-r\partial_v r(u,v_1') \geq A >0,
		\end{equation} where $A(C,M,e,q_0,m^2,p)>0$ is a constant independent of $u_1$ and $u_2$.

		\item Moreover, for all $\epsilon>0$, there exists $\tilde{v}(\epsilon,C,M,e,q_0,m^2)$ such that, if $v'_1> \tilde{v}$, the following estimates are true for all $(u,v) \in [u_1,u_2] \times [v_1',+\infty)$: 
		\begin{equation} \label{repsilon}
		|r(u,v)-r_-(M,e)| \leq \epsilon.
		\end{equation}
		
		Then, choosing $\epsilon \leq \frac{r_-}{2}$, there exists $v'_1(C,M,e,q_0,m^2)$ such that for all $u_1 <u_2<u_{\CH}$, $[u_1,u_2] \times [v_1',+\infty) \cap \Gamma = \emptyset$.
	\end{enumerate}
\end{prop}

\begin{proof}   To fix the ideas, without loss of generality, we fix $v_1=2$.
	In the proof, we consider characteristic initial data on $$\CH \cap \{u_1\leq u \leq u_2 \}	 \cup  \{u_1\} \times [2,+\infty).$$ satisfiying \eqref{rigidity1}-\eqref{rigidity12}. \\Whenever necessary, we will restrict the domain of evolution to $\{ v\geq v_1'\}$ where $v_1'(C,M,e,q_0,m^2,u_1) \geq v_1=2$ will be chosen adequately. The $v$-gauge will be chosen as \eqref{gauge1}. The $u$-gauge will be chosen later (note that \eqref{gauge2} does not make sense, as we consider a ``local problem'' on a spacetime rectangle that does not a priori intersect $\Gamma$).
	\color{black}
	
For some $D'(C,M,e,q_0,m^2)>0$ to be determined later, we formulate the following bootstrap assumptions: 	\color{black}  \begin{equation} \label{bootstrap2}
	|\phi|+	|D_v \phi| \leq 4 C  \cdot v^{-s},
	\end{equation} \begin{equation} \label{bootstrap3}
	|D_u \phi| \leq D'  \cdot \Omega^2 \cdot v^{-s}\color{black},\end{equation} \begin{equation} \label{bootstrap5}
	|Q|\leq 10 |e|,
	\end{equation}
	\begin{equation} \label{bootstrap7}
	r \geq \frac{r_-}{2},
	\end{equation} 
	\begin{equation} \label{bootstrap8}
	|\partial_v r| \leq 4 C \cdot  v^{-2s},
	\end{equation} 	 	\begin{equation}  \label{bootstrap9}
3 K_-(M,e)\leq  \partial_v \log(\Omega^2) \leq K_-(M,e),
\end{equation}
\begin{equation}  \label{bootstrap10}
 \frac{1}{2} \leq \kappa^{-1} \leq 2,
\end{equation}
\begin{equation}  \label{bootstrap11}
\int_{u_1}^{u_B} \Omega^2 du \leq e^{K_- v}.
\end{equation}
	
Note that \eqref{bootstrap2}-\eqref{bootstrap11} are all invariant with respect to the choice of $u$-gauge.
	We define the bootstrap set \begin{align*}
\begin{split}\mathcal{B} := &\{ u_B \geq u_1,\ [u_1,u_B] \times [v_1',+\infty) \subset \mathcal{Q}^+\\ &\text{ and }  \eqref{bootstrap2}-\eqref{bootstrap11} \text{ are satisfied for all } (u,v) \in [u_1,u_B] \times [v_1',+\infty) \}\end{split}
	\end{align*}
	
	\color{black}
	
	Note that, because of \eqref{rigidity1}- \eqref{rigidity11}, $\mathcal{B}$ \color{black} is non-empty, for $v'_1$ large enough. We will show that $u_B=u_2$, by proving that  \eqref{bootstrap2}-\eqref{bootstrap11} are satisfied with a smaller constant. 
	
	\color{black}
	Now integrate \eqref{RaychU}, using the boundedness of $r$ and bootstrap \eqref{bootstrap3}, \eqref{bootstrap10}\color{black}: $|\partial_u (\log\color{black}(\kappa^{-1}))| \lesssim  |\partial_u r| \cdot \color{black} v^{-2s} , $ which we can integrate to get, for $v'_1$ large enough \begin{equation} \label{kappaeq}
	|\kappa^{-1}-1| \leq C \cdot  v^{1-2s}+ \tilde{D}(M,e,q_0,m^2,C) \cdot v^{-2s} \leq  \frac{1}{100},
	\end{equation}  In particular, bootstrap \eqref{bootstrap10} is retrieved. Note that \eqref{kappaeq} also shows that $\kappa^{-1}$ extends to $1$ on $\CH$.
	
	\color{black}
	Integrate \eqref{Omega} in $u$, using the bound $|\partial_{u} \partial_v \log(\Omega^2)| \lesssim \Omega^2 $ which comes from \eqref{kappaeq}, and bootstraps \eqref{bootstrap3}, \eqref{bootstrap5}, \eqref{bootstrap7}, \eqref{bootstrap8}, we get  by \eqref{bootstrap11} \color{black} \begin{equation} \label{dvomega}
	|	\partial_v \log(\Omega^2)(u,v)-	\partial_v \log(\Omega^2)(u_1,v)\color{black}| \lesssim e^{K_- v}\color{black}\leq \frac{|K_-|}{100}, \end{equation} where we took $v'_1$ large enough. In particular,  bootstrap \eqref{bootstrap9} is retrieved, in view of \eqref{rigidity7} (again for $v'_1$ large enough). 
	
	Now define $\Omega^2_{CH}(u,v):= \frac{\Omega^2(u,v)}{\Omega^2(u_1,v)}$. Since \eqref{dvomega} shows that $\partial_v \log(\Omega^2_{CH}(u,v))$ is integrable, it means that $\Omega^2_{CH}(u,v)$ admits an extension $\Omega^2_{CH}(u)>0$ to $\CH$, and moreover \begin{equation}\label{om.est}
|	\log( \frac{\Omega^2_{CH}(u,v)}{\Omega^2_{CH}(u)})| \lesssim e^{K_- v}.
	\end{equation}

	\color{black}

	In parallel, note that by \eqref{Radius}, we have \begin{equation}\label{RadiusN}
\frac{-\partial_u (\partial_v r)}{-\partial_u r}= \frac{\kappa}{r} [ \frac{2\varpi}{r}-\frac{Q^2}{r^2}+ m^2 r^2 |\phi|^2].
	\end{equation}
	Recall that, by assumption, $(r,\varpi,Q,\phi)$ extends to $(r_-(M,e),M,e,0)$ on $\CH$, thus $\frac{-\partial_u (\partial_v r)}{-\partial_u r}$ extends to $2K_-=[\frac{2M}{r_-^2}-\frac{Q^2}{r_-^3}]<0$. Note that the above also shows that $\frac{-\partial_u r}{\Omega^2(u_1,v)}$ extends to a finite non-zero limit (namely $\Omega^2_{CH}(u)$) on $\CH$. Hence the above extension of $\frac{-\partial_u (\partial_v r)}{-\partial_u r}$ is also equivalent to the fact that the quantity 
$$-\partial_u (\frac{\partial_v r(u,v)-\partial_v r(u_1,v)}{\Omega^2(u_1,v)})$$ extends to $2K_-\cdot \Omega^2_{CH}(u)<0$, hence $  \Omega^2(u_1,2) \cdot\frac{\partial_v r(u,v)-\partial_v r(u_1,v)}{\Omega^2(u_1,v)}$ also extends to a finite  function $F(u)\geq0$ on $\CH$ (note that we multiplied $\frac{\partial_v r(u,v)-\partial_v r(u_1,v)}{\Omega^2(u_1,v)}$ by the constant  $\Omega^2(u_1,2)$ so that $F(u)$ is gauge-independent). Now choose the $u$-gauge to be \begin{equation}
\Omega^2_{CH}(u)=  F(u)+1.
\end{equation} In view of the above, we have that $\partial_u \log(\Omega^2_{CH})$ is constant on $\CH$, more precisely: \begin{equation}
\partial_u \log(\Omega^2_{CH}) \equiv 2K_- \cdot \Omega^2(u_1,2)<0,
\end{equation} from which we obtain the following estimate for some $D(u_1)>0$: \begin{align}\label{proper.time}
& \int_{u_1}^{u_B} \Omega^2_{CH}(u) du \leq \frac{\Omega^2_{CH}(u_1)}{2|K_-| \Omega^2(u_1,2)}= D(u_1),
\end{align}

Combining \eqref{proper.time} with \eqref{om.est}, we get, using \eqref{rigidity1} \begin{equation} \label{omega.estimate}
 \int_{u_1}^{u_B}\Omega^2(u,v) du  \leq  D(M,e,u_1) \cdot \Omega^2(u_1,v) \leq e^{1.99K_- v} \leq  \frac{1}{2} e^{K_- v},
\end{equation} for $v_1'(C,M,e,q_0,m^2,u_1)$ large enough, which retrieves bootstrap  \eqref{bootstrap11}.
	\color{black}

Then, using \eqref{kappaeq}, we write $|\partial_u r| = \frac{\kappa^{-1}}{4} \cdot \Omega^2 \leq  \Omega^2,$ which we integrate using \eqref{rigidity6}  and  \eqref{omega.estimate}\color{black} \begin{equation} \label{rmainestimate}
|r(u,v)-r_-| \leq 2 C \cdot v^{1-2s},
\end{equation} where we took $v'_1$  large enough: bootstrap \eqref{bootstrap7} is then retrieved.

Now we integrate \eqref{Field2} using \eqref{kappaeq}, bootstraps \eqref{bootstrap2}, \eqref{bootstrap3}, \eqref{bootstrap5}, the $r$ boundedness, and we get, for some constant $D(M,e,q_0,m^2,C\color{black})>0$ and using the upper bound $\partial_v \log(\Omega^2)  \leq 1.99 K_-$: $$|D_v (r D_u \phi)| \leq D \cdot \Omega^2  \cdot v^{-s}\color{black} = \frac{D  \cdot v^{-s}\color{black}}{-\partial_v \log(\Omega^2)} \cdot (-\partial_v \Omega^2) \leq \frac{D \cdot v^{-s}\color{black}}{1.99 |K_-|} \cdot (-\partial_v \Omega^2) ,$$ which we integrate from the future $\{v=+\infty\}$ where $D_u \phi_{|\CH} =0$ by assumption, also using bootstrap \eqref{bootstrap7}  and integration by parts\color{black}: \begin{equation} \label{Duestim}
|D_u \phi| \leq \frac{2  D \cdot v^{-s}\color{black}}{ 1.99 |K_-| \cdot r_-} \Omega^2.
\end{equation} This  retrieves \color{black} bootstrap \eqref{bootstrap3}, for $D' >  \frac{2 D}{ 1.99 |K_-| \cdot r_-} $. Additionally, one can integrate this estimate in $u$, using also \eqref{kappaeq}, \eqref{omega.estimate} \color{black}  to obtain: \begin{equation} \label{phiestim}
|\phi| \leq 2 C \cdot v^{-s},
\end{equation} where we took $v_1'$ large enough. Using \eqref{chargeUEinstein}, with bootstrap \eqref{bootstrap2} and \eqref{Duestim}, we have $|\partial_u Q| \lesssim  \Omega^2\color{black} $, thus  by  \eqref{omega.estimate} \color{black} \begin{equation} \label{Qmainestimate}
|Q(u,v)-e| \leq 2 C \cdot v^{1-2s},
\end{equation} where we took $v'_1$  large enough. This retrieves bootstrap \eqref{bootstrap5}.


Then using the upper and lower bounds on $r$, the upper bond on $Q$, \eqref{lambdamainestimate}, we see using \eqref{Field2} that \\$ |D_u (rD_v \phi)| \lesssim  \Omega^2 \cdot v^{-s}\color{black}$, which we can integrate, using \eqref{rigidity2}, \eqref{omega.estimate}\color{black}, and the largeness of $v'_1$ to get \begin{equation} \label{dvphi}
|D_v \phi|(u,v) \leq 2 C \cdot v^{-s},
\end{equation} which we combine with \eqref{phiestim} to close bootstrap \eqref{bootstrap2}.

	Now, we integrate \eqref{Radius3} in the $u$ direction, using \eqref{rigidity11}: \begin{equation} \label{lambdamainestimate} \begin{split}
| r \partial_v r (u,v) -r \partial_v r (u_1,v)| \leq e^{1.98 K_- v}, \\ | r \partial_v r (u,v)| \leq 2 r_- \cdot C  \cdot v^{-2s}+ e^{1.98 K_- v}\leq 4 r_- \cdot C  \cdot v^{-2s} .\end{split}
\end{equation} where we took $v'_1$ large enough. \eqref{lambdamainestimate} together with bootstrap \eqref{bootstrap7} allows us to retrieve bootstrap \eqref{bootstrap8}.

Thus, we have closed all the bootstraps and the first claim of the Proposition follows.

Finally, we estimate $\rho$ using \eqref{massUEinstein} as $|\partial_u \rho| \lesssim  \Omega^2 \color{black}$, using \eqref{lambdamainestimate}, \eqref{Duestim}, \eqref{kappaeq}, \eqref{phiestim}, \eqref{omega.estimate}\color{black} and the upper bounds on $r$ and $Q$: integrating, we obtain, for $v_1'$ large enough $$ |\rho(u,v) -\rho(u_1,v)| \leq e^{1.98 K_- v}.$$

Recall also that $\varpi = \rho+ \frac{Q^2}{2r}$, thus combining with \eqref{rmainestimate} and \eqref{Qmainestimate}, the upper-lower bounds on $r$ and $Q$, and \eqref{rigidity6}, we also get, taking $v_1'$ large enough \begin{equation} \label{mestimate}
|\varpi(u,v)-M| \leq 2C \cdot v^{1-2s}.
\end{equation}

From \eqref{rmainestimate}, \eqref{Qmainestimate}, \eqref{mestimate}, the third claim of the Proposition follows immediately.

Recall that $\{u_1\} \times [v_1,+\infty) \subset \T$ by assumption. By \eqref{lambdamainestimate} we have for all $(u,v)\in [u_1,u_2]\times[v_1',+\infty)$: $$-r\partial_v r(u,v) \geq r|\partial_vr|(u_1,v) -e^{1.98 K_- v}. $$

Now, note that by \eqref{rigidity12} there exists $C'>0$ and a sequence $v_n \rightarrow +\infty$ such that $$ r|\partial_v r |(u_1,v_n) \geq C' \cdot (v_n)^{-p-1}.$$ Therefore, there exists $N$ (independent of $u_1$ and $u_2$) large enough   (so that the term $C' \cdot (v_n)^{-p-1}$ dominates $e^{1.98 K_- v_n}$) such that for all $n \geq N$ and  for all $u \in [u_1,u_2]$: \begin{equation} \label{lambdanew}
-r\partial_v r (u,v_n) \geq \frac{ C' }{2}\cdot (v_n)^{-p-1}>0.
\end{equation}
Therefore, $[u_1,u_2] \times \{v_N\} \subset \T$. By the monotonicity of \eqref{RaychV}, $[u_1,u_2] \times [v_N,+\infty) \subset \T$. Therefore, choosing $v_1'>v_N$, the entire space-time rectangle is trapped. Choosing $v_1'=v_n$ for some $n \geq N$ gives \eqref{lambda.main.est} and the second claim of the Proposition follows.
	
\end{proof}
Using Proposition \ref{rigidboundsprop}, we now show that a static Cauchy horizon cannot be connected to $\mathcal{S}_{i^+}$.
\begin{cor} \label{Siplusempty}
	Assume that alternative \ref{alternative2} holds, i.e.\ $\CH$ is of static type. Then $\mathcal{S}_{i^+}=\emptyset$.
\end{cor}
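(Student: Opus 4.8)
The plan is to argue by contradiction: assume $\mathcal{S}_{i^+} \neq \emptyset$. Then the future endpoint of $\CH$ has $u$-coordinate $u_{\CH} = u_\infty(\CH) < \uend$, so the outgoing ray $\{u_{\CH}\} \times (v_{\Gamma}(u_{\CH}), +\infty)$ consists of interior points of $\mathcal{Q}^+$, and $u_{\CH}$ is the left endpoint of $\mathcal{S}_{i^+}$. The whole point is that a static Cauchy horizon rigidly fixes the area-radius to the strictly positive constant $r_-(M,e)$ up to and including its future endpoint, whereas $\mathcal{S}_{i^+}$ demands that $r \to 0$ at exactly that endpoint; these are incompatible.

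First I would feed the static-type hypothesis into Proposition \ref{rigidboundsprop}. Alternative \ref{alternative2} of Theorem \ref{classificationprevioustheorem} furnishes the rigidity bounds \eqref{rigidity1}, \eqref{rigidity2}, \eqref{rigidity6}, \eqref{rigidity7}, \eqref{rigidity9}, \eqref{rigidity11} on some cone $\{u_1\}$ with $u_1 < u_{\CH}$, together with the continuous vanishing of $r - r_-$, $\phi$, $D_u\phi$, $\varpi - M$ and $Q - e$ on $\CH$; hence the hypotheses of Proposition \ref{rigidboundsprop} hold for every $u_2 \in (u_1, u_{\CH})$. The decisive feature is that the resulting estimate \eqref{repsilon} (equivalently \eqref{rmainestimate}) holds on $[u_1,u_2] \times [v_1',+\infty)$ with $v_1'$ and the constants depending only on the black-hole parameters and the data on $\{u_1\}$, and crucially \emph{not} on $u_2$. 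Choosing $\epsilon \leq r_-/2$, this gives $r(u,v) \geq r_-/2$ for all $u \in [u_1,u_{\CH})$ and all $v \geq v_1'$.

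Next I would let $u \uparrow u_{\CH}$. For each fixed finite $v \geq v_1'$ (enlarging $v_1'$ if necessary so that $v_1' > v_{\Gamma}(u_{\CH})$) the map $u \mapsto r(u,v)$ is continuous on $[u_1,u_{\CH}]$, since $(u_{\CH},v)$ is an interior point of $\mathcal{Q}^+$; the uniform lower bound therefore survives the limit, yielding $r(u_{\CH},v) \geq r_-/2 > 0$ for every $v \geq v_1'$ (indeed $r(u_{\CH},v) \to r_-(M,e)$ by the sharper bound \eqref{rmainestimate} and $s > \tfrac{3}{4}$). But $u_{\CH}$ is the left endpoint of $\mathcal{S}_{i^+}$, on which Theorem \ref{Kommemi} asserts that $r$ extends continuously to $0$, so $\lim_{v \to +\infty} r(u_{\CH},v) = 0$. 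This contradicts $r(u_{\CH},v) \geq r_-/2$, and we conclude $\mathcal{S}_{i^+} = \emptyset$.

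The subtle step, and the main obstacle, is the matching of these two limiting values at the shared corner $(u_{\CH},+\infty)$: one must guarantee that the $r \to 0$ behaviour of $\mathcal{S}_{i^+}$ genuinely reaches its left endpoint, rather than merely holding on the open segment. This is where the trapped neighborhood of the endpoint provided by Corollary \ref{trappedcor} is used: on such a neighborhood $r$ solves the wave equation \eqref{Radius3}, whose right-hand side $\tfrac{\Omega^2}{4}(1 - Q^2/r^2 - m^2 r^2 |\phi|^2)$ is integrable up to the corner because $\Omega^2$ decays exponentially in $v$ (by \eqref{rigidity1}) while $Q$ and $|\phi|$ stay controlled in the trapped region. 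Integrating twice along characteristics then shows that $r^2$, and hence $r$, extends continuously across the corner, so its value there is single-valued and cannot equal both $r_-(M,e) > 0$ and $0$. Controlling the electromagnetic term $\Omega^2 Q^2/r^2$ where $r$ becomes small along $\mathcal{S}_{i^+}$ is the only genuinely technical point, and it is handled by the quantitative trapped-region bounds rather than by any naive pointwise estimate.
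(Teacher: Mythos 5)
Your proposal follows the paper's skeleton: contradiction, Proposition \ref{rigidboundsprop} with constants independent of $u_2$ to propagate $r\geq r_-/2$ up to the cone $\{u_{\CH}\}$ by continuity in $u$, and then a clash with the vanishing of $r$ forced by $\mathcal{S}_{i^+}$. But the step in your main argument that actually produces the contradiction --- ``$u_{\CH}$ is the left endpoint of $\mathcal{S}_{i^+}$, on which Theorem \ref{Kommemi} asserts that $r$ extends continuously to $0$, so $\lim_{v\to+\infty}r(u_{\CH},v)=0$'' --- is not something the a priori classification gives you. Kommemi's statement controls $r$ only on cones strictly inside $\mathcal{S}_{i^+}$ (i.e.\ $u>u_{\CH}$); at the transition point, the future endpoint of $\CH$, nothing is known a priori (``$r$ extends as a strictly positive function on $\CH$, \emph{except maybe at its future endpoint}'', and $\mathcal{S}_{i^+}$ emanates from, \emph{but does not include}, that endpoint --- the closed bracket in the coordinate identification of section 2.1 should not be read as a claim about $r$ there). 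Since $\partial_u r<0$ gives no information in the needed direction, the limit $r_{CH}(u):=\lim_{v\to\infty}r(u,v)$ could a priori jump from $\approx r_-$ to $0$ exactly at $u_{\CH}$; ruling out that jump is the entire content of the corollary (cf.\ Remark \ref{rcontinuityremark} and the footnote in the outline). So your main argument, as stated, assumes what is to be proven. You do recognize this in your final paragraph, so everything hinges on whether your proposed repair works.

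As written, it does not. You justify integrability of the right-hand side of \eqref{Radius3} up to the corner by invoking the exponential decay \eqref{rigidity1} of $\Omega^2$ and ``control of $Q$ and $|\phi|$ in the trapped region''. But \eqref{rigidity1} and every quantitative bound produced by Proposition \ref{rigidboundsprop} live on $[u_1,u_2]\times[v_1',\infty)$ with $u_2<u_{\CH}$ (extended to $u=u_{\CH}$ only by continuity): the bootstrap there integrates from future data on the static $\CH$ (e.g.\ $D_u\phi_{|\CH}=0$), so it cannot be run for $u>u_{\CH}$, i.e.\ underneath $\mathcal{S}_{i^+}$ --- which is precisely where the rectangle $[u_{\CH},u_{\CH}+\eta]\times[v_1,\infty)$ of the corner-continuity argument must sit. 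In that region, which approaches an $r=0$ boundary, no decay of $\Omega^2$ and no bound on $Q$ or $\phi$ is available. The paper's argument is deliberately softer and avoids all of this: since $r_{CH}^2(u_{\CH})-r_{CH}^2(u_{\CH}+\eta)\geq 0$ by monotonicity in $u$, only an \emph{upper} bound on the double integral is needed, so the terms $-\tfrac{\Omega^2 Q^2}{4r^2}$ and $-\tfrac{m^2r^2|\phi|^2\Omega^2}{4}$ are simply discarded by their favorable sign, and the remaining bound $\iint\tfrac{\Omega^2}{4}\lesssim\eta$ follows from the Raychaudhuri monotonicity \eqref{RaychV} --- in the trapped region $v\mapsto|\partial_v r|/\Omega^2$ is non-decreasing, whence $\int_{v_1}^{\infty}\Omega^2(u,v)\,dv\leq \tfrac{\Omega^2(u,v_1)}{|\partial_v r|(u,v_1)}\,r(u,v_1)$ --- combined with smoothness of the solution on the compact slice $[u_{\CH},u_{\CH}+\eta]\times\{v_1\}$, with no asymptotic information beyond $u_{\CH}$ whatsoever. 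Your closing remark that controlling $\Omega^2Q^2/r^2$ where $r\to0$ is ``handled by the quantitative trapped-region bounds'' is therefore backwards on both counts: no such bounds exist there, and none are needed.
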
\begin{proof}

Suppose for the sake of contradiction that $\mathcal{S}_{i^+} \neq \emptyset$.

Set $\epsilon>0$ and take $u_1$ as in the statement of Theorem \ref{classificationprevioustheorem}, alternative \ref{alternative2}. By Proposition \ref{rigidboundsprop}, there exists $v_1'(\epsilon)$ such that for all $(u,v) \in [u_1,u_{\CH}) \times [v_1',+\infty)$,  \eqref{lambda.main.est} and \eqref{repsilon} hold for some $A>0$. Additionally, by continuity of the functions $u \rightarrow r(u,v)$ and $u \rightarrow -r\partial_v r(u,v)$ 
for every fixed $v \geq v_1'$, we also have that \eqref{repsilon} holds for all $(u,v) \in [u_1,u_{\CH}] \times [v_1',+\infty)$ and \eqref{lambda.main.est} holds for  all $u \in [u_1,u_{\CH}]$. In particular, taking $\epsilon$ smaller if necessary, we have $r(u,v) \geq \frac{r_-}{2}$ for all $(u,v) \in [u_1,u_{\CH}] \times [v_1',+\infty)$. Also, as a consequence of \eqref{lambda.main.est}, $(u_{\CH},v_1')\in \T$.

Since $\T$ is an open set in the topology of $\mathcal{Q}^+$, there exists $\eta>0$ such that $[u_{\CH}, u_{\CH}+\eta ] \times \{v_1'\} \subset \T$, thus by the monotonicity of the Raychaudhuri equation \eqref{RaychV} that  $[u_{\CH}, u_{\CH}+\eta ] \times [v_1',+\infty)\subset ~\T$. 

For every fixed $u \in [u_{\CH}, u_{\CH}+\eta ]$, $v \rightarrow r(u,v)$ has a limit $r_{CH}(u)\geq 0$ as $v \rightarrow +\infty$, by monotonicity. Now, we integrate \eqref{Radius3} in both $u$ and $v$ over  the space-time rectangle $[u_{\CH}, u_{\CH}+\eta ] \times [v_1',+\infty]$ to get a ``four points'' estimate: \begin{equation*} \begin{split}
\frac{r^2(u_{\CH}+\eta,v_1')}{2}+\frac{r^2_{CH}(u_{\CH})}{2}-\frac{r^2(u_{\CH},v_1')}{2}-\frac{r^2_{CH}(u_{\CH}+\eta)}{2}\\= \int_{u_{\CH}}^{u_{\CH}+\eta} \int_{v_1'}^{+\infty} \frac{\Omega^2}{4}(1-\frac{Q^2}{r^2}- m^2 r^2 |\phi|^2)(u',v') du' dv'  \leq \int_{u_{\CH}}^{u_{\CH}+\eta} \int_{v_1'}^{+\infty} \frac{\Omega^2(u',v')}{4} du' dv'   \\\leq \int_{u_{\CH}}^{u_{\CH}+\eta} \frac{\Omega^2(u',v_1')}{4 |\partial_v r|(u',v_1')} \int_{v_1'}^{+\infty}  |\partial_v r|(u',v')dv' du'   \leq \int_{u_{\CH}}^{u_{\CH}+\eta} \frac{\Omega^2(u',v_1') \cdot r(u',v_1')}{4 |\partial_v r|(u',v_1')}  du' \leq D \cdot \eta.
\end{split}
\end{equation*} where, from the second line to the third, we used the monotonicity of the Raychaudhuri equation \eqref{RaychV} and for the last inequality we set $D:= \| \frac{\Omega^2 \cdot r}{4 |\partial_v r|} \|_{L^{\infty}([u_{\CH}, u_{\CH}+\eta ] \times \{v_1'\})}<+\infty$ because the solution is smooth. Since, by monotonicity, we know $r_{CH}^2(u_{\CH}+\eta) \leq r_{CH}^2(u_{\CH})$, the previous estimate also implies $$ |r_{CH}^2(u_{\CH}+\eta)-r_{CH}^2(u_{\CH})| \leq 2 C \cdot \eta+ |r^2(u_{\CH}+\eta,v_1')-r^2(u_{\CH},v_1')|.$$ Thus, by continuity of the function $u \rightarrow  r^2(u,v_1')$ on  $[u_{\CH}, u_{\CH}+\eta ]$, we see that $$ \lim_{\eta \rightarrow 0} r_{CH}^2(u_{\CH}+\eta)=  r_{CH}^2(u_{\CH}) \geq \frac{r_-}{2}.$$

This contradicts that for $\eta$ small enough, $r_{CH}^2(u_{\CH}+\eta)=0$, by definition of $\mathcal{S}_{i^+}$. Thus, $\mathcal{S}_{i^+}=\emptyset$.
\end{proof}
\begin{rmk} \label{rcontinuityremark}
	Note that we proved a result of independent interest: if $\mathcal{S}_{i^+}\neq \emptyset$ and there exists a trapped neighborhood $\mathcal{V}$ of the end-point of $\CH$ in $\mathcal{Q}^+$, then $r$ extends to a continuous function on $\mathcal{V} \cap (\CH \cup \mathcal{S}_{i^+})$, in particular $r$ must be continuous at the end-point of $\CH$. Since we just used $\mathbb{T}_{v v} \geq 0$ and $\mathbb{T}_{u v} \geq 0$, the result is true in general for any matter model which satisfies the null energy condition.
\end{rmk}

We now conclude the proof of Theorem \ref{theoremevent}. Under the assumptions of Theorem \ref{theoremevent}: either alternative \ref{alternative1} holds or alternative \ref{alternative2} holds, by Theorem \ref{classificationprevioustheorem}. If alternative \ref{alternative1} holds, then the assumptions of Theorem \ref{maintheorem} are satisfied, also using Corollary \ref{trappedcor}, thus $\mathcal{S}^1_{\Gamma} \cup \mathcal{CH}_{\Gamma} \cup  \mathcal{S}^2_{\Gamma} \cup  \mathcal{S}  \neq \emptyset$. If alternative \ref{alternative2} holds, then $\mathcal{S}_{i^+}=\emptyset$ by Corollary \ref{Siplusempty}. Assume by contradiction that $\mathcal{S}^1_{\Gamma} \cup \mathcal{CH}_{\Gamma} \cup  \mathcal{S}^2_{\Gamma} \cup  \mathcal{S}  = \emptyset$. Then, $\CH$ closes off the space-time, and its endpoint is $b_{\Gamma}$, thus $u_{\CH}=u(b_{\Gamma})$. By Theorem \ref{classificationprevioustheorem}, alternative \ref{alternative2}, the assumptions of Proposition \ref{rigidboundsprop} are satisfied thus there exists $v_1' \in \RR$ such that $[u_1,u_{\CH}) \times [v_1',+\infty) \cap \Gamma = \emptyset$. This obviously leads to a contradiction, thus $\mathcal{S}^1_{\Gamma} \cup \mathcal{CH}_{\Gamma} \cup  \mathcal{S}^2_{\Gamma} \cup  \mathcal{S}  \neq \emptyset$. Therefore, the proof Theorem \ref{theoremevent} is complete.

\appendix
\section{Cauchy horizons can close-off a two-ended space-time} \label{twoendedA}
In this section, we give a brief sketch of the proof of Theorem \ref{twoendedtheorem}, mostly based on the strategy of \cite{Mihalisnospacelike}.
\begin{thm} 
	For some $s>\frac{1}{2}$ and $\epsilon>0$, we assume that the right event horizon $\mathcal{H}_1^+$ is a future-affine-complete outgoing cone and that  for all $v \geq v_0$ \begin{equation} 
	|\phi|_{|\mathcal{H}^+_1}(v) + |D_v \phi|_{|\mathcal{H}^+_1}(v) \leq \epsilon \cdot  v^{-s},	\end{equation} and that the following red-shift estimates hold:
	\begin{equation} |D_u \phi|(u,v_0) \leq \epsilon \cdot |\partial_u r|(u,v_0),
	\end{equation} for all $u \leq u_0$	 Additionally, assume that a sub-extremal Reissner--Nordstr\"{o}m black hole is approached on $\mathcal{H}^+_1$, i.e. \begin{equation}
	0<	 \limsup_{v \rightarrow +\infty} \frac{ |Q|_{|\mathcal{H}^+_1}(v)}{r_{|\mathcal{H}^+_1}(v)} <1.
	\end{equation}
	We make the same assumptions on the left event horizon $\mathcal{H}_2^+$. Then there exists an $\epsilon_0>0$ such that, if $\epsilon<\epsilon_0$, then the Penrose diagram is the one of Figure \ref{Figure5}, i.e.\ $\mathcal{CH}_1^+ \cup\mathcal{CH}_2^+$ close off the interior and $\mathcal{S}=\emptyset$.
\end{thm}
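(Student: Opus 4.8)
The plan is to adapt the strategy of Dafermos \cite{Mihalisnospacelike} to the Einstein--Maxwell--Klein--Gordon setting, proving that for sufficiently small scalar field the two ingoing-outgoing pairs of characteristics emanating from the event horizons give rise to Cauchy horizons that meet at a bifurcation sphere where $r>0$, with no $r=0$ singularity forming in between. The key structural difference from the one-ended case is the \emph{absence of a center} $\Gamma$: the relevant domain is a causal rectangle $[u_0,u_{\CH}) \times [v_0, v_{\CH})$ bounded to the past by the two event horizons $\mathcal{H}_1^+ = \{v=v_0\}$ and $\mathcal{H}_2^+ = \{u=u_0\}$, and the entire obstruction in the one-ended proof --- the lower bound $r=0$ on $\Gamma$ that the focusing estimates contradict --- is simply not present. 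Consequently, the natural argument is constructive rather than by contradiction: propagate smallness of the scalar field from both horizons into the interior via a bootstrap, and show directly that $r$ stays bounded below by a positive constant all the way up to the bifurcation sphere.

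First I would set up the bootstrap on a region of the form $[u_0, u_1] \times [v_0, v_1]$, bootstrapping estimates of exactly the type appearing in Proposition \ref{rigidboundsprop}: the near-constancy of the modified mass $\varpi \approx M$ and charge $Q \approx e$, the sub-extremal trapping condition $\frac{Q^2}{r^2} > \frac{1}{2}$, the exponential decay of $\Omega^2$ governed by the Cauchy horizon surface gravity $2K_-$, polynomial decay $|\phi| + |D_v \phi| \lesssim \epsilon\, v^{-s}$ on outgoing cones and the mirror bound in $u$ on ingoing cones, together with red-shift-type control $|D_u \phi| \lesssim \Omega^2$ (and its $v$-analogue). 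The initial data for the bootstrap come from the two sets of horizon assumptions. The dynamics are driven by the two Raychaudhuri equations \eqref{Rayching}, \eqref{RaychV}, the wave equation \eqref{Radius3} for $r$, the equation \eqref{Omega3} for $r\Omega^2$, and the Maxwell equations \eqref{chargeUEinstein}, \eqref{ChargeVEinstein}. Once one establishes that $1 - \frac{2\varpi}{r} + \frac{Q^2}{r^2} = \sigma(M,e) < 0$ to leading order --- which uses only the sub-extremality $0<|e|<M$ and the fact that $r$ stabilizes near $r_-(M,e)$ --- the trapping $\partial_v r < 0$ and $\partial_u r<0$ hold throughout, and $r$ is monotone in both null directions, hence extends continuously to a positive limit $r_{\CH} \geq \frac{r_-}{2}>0$ at the bifurcation sphere.

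The crucial quantitative input, as in \cite{Mihalisnospacelike}, is that the \emph{loss} of $r$ across the interior is controlled by the smallness parameter $\epsilon$. Integrating \eqref{Radius3} over the rectangle in both variables produces a four-point estimate for $r^2$ whose right-hand side is the integral of $\frac{\Omega^2}{4}(1 - \frac{Q^2}{r^2} - m^2 r^2|\phi|^2)$; since $\Omega^2$ decays exponentially in both $u$ and $v$ once trapping is established, this integral is bounded by $C\epsilon^2$ (or more simply by a universal constant times the size of the coordinate rectangle, which is itself $O(r_-)$). The point is that by choosing $\epsilon < \epsilon_0(M_i, Q_i)$ small enough, the total decrease of $r$ from its horizon value $\approx r_+$ down to $r_{\CH}$ can be kept strictly bounded away from $r_+$, so that $r_{\CH}>0$, ruling out $\mathcal{S}=\emptyset$ being violated, i.e.\ forcing $\mathcal{S}=\emptyset$. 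The bifurcate Cauchy horizon $\mathcal{CH}_1^+ \cup \mathcal{CH}_2^+$ then closes off the space-time exactly as in Figure \ref{Figure5}.

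The main obstacle I anticipate is \textbf{closing the bootstrap for the charge and the scalar field simultaneously in two null directions}, since unlike the one-ended case there is no center at which $Q$ and $r\phi$ vanish to anchor the estimates --- instead one must propagate the smallness inward from both horizons and control the coupled system where $Q$ feeds back into the wave equation \eqref{Field2} through the term $\frac{iq_0 Q \Omega^2}{4r}\phi$ and into \eqref{Radius3} through $\frac{Q^2}{r^2}$. The resolution is that all error terms carry a factor of $\Omega^2$ or of $\epsilon$, and the exponential decay of $\Omega^2$ (inherited from the Reissner--Nordstr\"om surface gravity) makes every inward-directed integral convergent and proportional to $\epsilon$; this is precisely the mechanism that already works in Proposition \ref{rigidboundsprop} for the static one-ended case, and since the argument of \cite{Mihalisnospacelike} is ``easily transposable,'' I expect the charged Klein--Gordon corrections to be absorbable without any genuinely new idea, only additional bookkeeping for the $m^2$ and $q_0$ terms.
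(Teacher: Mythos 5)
Your proposal correctly identifies the structural setting (no center $\Gamma$, a causal rectangle bounded to the past by the two event horizons, \cite{Mihalisnospacelike} as the template), but the mechanism you propose has two genuine gaps. First, the bootstrap you describe cannot close: the estimates you want to propagate are exactly those of Proposition \ref{rigidboundsprop} ($\varpi \approx M$, $Q \approx e$, $\Omega^2 \sim e^{2K_- v}$, $|D_u\phi| \lesssim \Omega^2$), but these are \emph{rigidity} estimates valid only for a Cauchy horizon of \emph{static} type. For the very solutions in question, the Hawking mass generically blows up at $\mathcal{CH}_1^+ \cup \mathcal{CH}_2^+$ (mass inflation; this is recalled explicitly in section \ref{twoended}), so the bound $|\varpi - M| \leq \epsilon$ fails near the Cauchy horizons no matter how small the data. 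Second, your quantitative claim that the four-point integral $\iint \frac{\Omega^2}{4}\bigl(1 - \frac{Q^2}{r^2} - m^2 r^2 |\phi|^2\bigr)\,du\,dv$ is $O(\epsilon^2)$ (or controlled by the coordinate area) is false: already for the \emph{exact} Reissner--Nordstr\"{o}m interior this integral equals $\frac{r_+^2 - r_-^2}{2}$, an order-one quantity, because the area-radius genuinely drops from $r_+$ to $r_-$ across the interior --- the drop is not small, only its excess over the Reissner--Nordstr\"{o}m value is. What makes the conclusion true is not smallness of this integral but its sign structure: once $Q^2/r^2 > 1$ the integrand becomes negative. Capturing that requires a monotonicity statement, not a smallness estimate, and your proposal contains none.

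The paper's proof is organized around precisely that missing ingredient. It does not redo the interior analysis: it (i) quotes \cite{Moi} for the non-emptiness of both Cauchy horizons together with stability estimates (the ``hard work''), (ii) observes that the Cauchy stability argument (Proposition 9.1 of \cite{Mihalisnospacelike}) adapts immediately, and (iii) isolates the \emph{single} place where the Einstein equations enter Dafermos's lower bound on $r$ (Theorem 7 of \cite{Mihalisnospacelike}), namely the monotonicity \eqref{lastestimate}, $\partial_u \log(|\partial_v r|) \leq \frac{|\partial_u r|}{r}$, equivalently that $-\partial_v (r^2)$ is non-increasing in $u$ in the region where $\partial_u r<0$, $\partial_v r <0$. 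By \eqref{Radius} this inequality acquires the extra term $\frac{\Omega^2}{|\partial_v r|}\bigl(\frac{1}{4r} - \frac{Q^2}{4r^3} - \frac{m^2 r |\phi|^2}{2}\bigr)$ in the charged massive model, and the whole new content of the proof is that this term is \emph{negative} in the region of interest, because Cauchy stability guarantees $\frac{Q^2}{r^2}>1$ there. Integrated in $v$, the monotonicity says the drop of $r^2$ along late outgoing cones is dominated by the drop along an early cone, which Cauchy stability keeps close to its Reissner--Nordstr\"{o}m value; this is what bounds $r$ below, with no need to control $\varpi$ or $\Omega^2$ up to the Cauchy horizon. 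Without this lemma (or an equivalent substitute compatible with mass inflation), your direct bootstrap cannot produce the lower bound on $r$.
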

\begin{proof}
	The hard work, already established in \cite{Moi}, is to prove first the local result $\mathcal{CH}_1^+ \neq \emptyset$, $\mathcal{CH}_2^+ \neq \emptyset$ together with stability estimates. The Cauchy stability argument of \cite{Mihalisnospacelike} can be immediately adapted (Proposition 9.1 in \cite{Mihalisnospacelike}). Now, one must generalize Theorem 7 in \cite{Mihalisnospacelike} which states that $r$ is lower bounded, implying immediately the result. The only place where the Einstein equations are used in this result, is in equation (36), page 747, to establish that, in some region \begin{equation} \label{lastestimate}
	\partial_u \log(|\partial_v r|) \leq \frac{|\partial_u r|}{r}.
	\end{equation} This implication is not true in our model, as by \eqref{Radius}, $\partial_{u}\partial_{v}r =\frac{- \Omega^{2}}{4r}-\frac{\partial_{u}r\partial_{v}r}{r}
	+\frac{ \Omega^{2}}{4r^{3}} Q^2 +  \frac{m^{2}r }{4} \Omega^2 |\phi|^{2}$ thus, $$ \partial_u \log(|\partial_v r|) \leq \frac{|\partial_u r|}{r} +\frac{ \Omega^{2}}{|\partial_v r|} \cdot (\frac{1}{4r}-\frac{Q^2}{4r^3} -\frac{m^{2}r |\phi|^{2}}{2}),$$ and this estimate is valid in the region of interest, where $\partial_u r <0$ and $\partial_v r <0$. However, the new term does not create any problem as, by Cauchy stability, one still has $\frac{1}{4r}-\frac{Q^2}{4r^3} -\frac{m^{2}r |\phi|^{2}}{2}<0$, which the analog of equation (36). Therefore \eqref{lastestimate} is true and from there, one can follow the argument of \cite{Mihalisnospacelike} to the letter.
\end{proof}

\end{document}